

\documentclass[final]{dmtcs-episciences} 

\usepackage[english]{babel}

\usepackage[utf8]{inputenc}
\usepackage[T1]{fontenc}

\usepackage{amssymb}
\usepackage{amsmath}
\usepackage{amsthm}

\usepackage[vlined,ruled,linesnumbered]{algorithm2e}

\usepackage[numbers]{natbib}
\usepackage{graphicx}
\usepackage{microtype}
\usepackage{paralist}
\usepackage{ifdraft}

\newcommand{\leo}[1]{#1}

\newcommand{\leojournal}[1]{#1}
 \newcommand{\markj}[1]{#1}
\newcommand{\mjnew}[1]{#1}

\newtheorem{theorem}{Theorem}
\newtheorem{definition}{Definition}
\newtheorem{lemma}{Lemma}
\newtheorem{observation}{Observation}
\newtheorem{corollary}{Corollary}

\def\noqed{\expandafter\def\csname endproof\endcsname{\medskip}\expandafter\def\csname endclaimproof\endcsname{\medskip}}
\def\qedhere{\qed\aftergroup\noqed}

\ifx\claim\undefined
  \def\claimqedsymbol{$\lrcorner$}
  \def\claimqed{\hfill\claimqedsymbol}
  \def\claimqedhere{\claimqed\aftergroup\noqed}
  \newtheorem{claim}{Claim}
  \newenvironment{claimproof}{{\noindent\textbf{Proof:}\expandafter\def\csname qedsymbol\endcsname{\claimqedsymbol}}}{\claimqed} 
\fi

\makeatletter 
\newcommand\nobreakpar{\par\nobreak\@afterheading} 
\makeatother

\usepackage[capitalise, nameinlink]{cleveref}

\bibliographystyle{alpha}

\title{Embedding phylogenetic trees  in networks of low treewidth}




\author{Leo van Iersel\affiliationmark{1} 
\and Mark Jones\affiliationmark{1}
\and Mathias Weller\affiliationmark{2}}

\affiliation{Delft Institute of Applied Mathematics, Delft University of Technology, The Netherlands\\
CNRS, LIGM (UMR 8049), Université Gustave Eiffel, Champs-sur-Marne, France}

\newcommand{\mjrev}[1]{{#1}}







\keywords{fixed-parameter tractability, treewidth, phylogenetic tree, phylogenetic network, display graph, tree containment, embedding} 

\usepackage{xcolor}
\colorlet{darkgreen}{green!50!black}
\colorlet{dg}{darkgreen}
\colorlet{medgray}{gray!75}
\colorlet{lightgray}{gray!30}
\colorlet{bagcol}{gray!70}
\colorlet{pastcol}{gray!20}
\definecolor{linkcol}{rgb}{0,0,0.4} 
\definecolor{citecol}{rgb}{0.5,0,0} 


\usepackage{tikz}
\usetikzlibrary{arrows.meta,shapes,positioning,calc,decorations.pathreplacing,decorations.markings,decorations.pathmorphing,patterns}
\pgfdeclarelayer{background2}
\pgfdeclarelayer{background}
\pgfdeclarelayer{foreground}
\pgfsetlayers{background2,background,main,foreground}

\tikzstyle{hidden}=[opacity=0]
\tikzstyle{fade}=[opacity=0.2]
\tikzstyle{nonsol}=[dashed]

\tikzstyle{bold}=[draw, line width=2pt]
\tikzstyle{optional}=[dashed]
\tikzstyle{path}=[decorate, decoration={snake, amplitude=.6mm}]

\tikzstyle{small}=[inner sep=2pt]
\tikzstyle{tiny}=[inner sep=1.7pt]
\tikzstyle{textnode}=[inner sep=0pt]

\tikzstyle{triangle}=[draw, regular polygon, regular polygon sides=3]
\tikzstyle{vertex}=[circle, draw, fill=white]
\tikzstyle{reti}=[vertex, fill=black]
\tikzstyle{leaf}=[vertex, rectangle]
\tikzstyle{leaf2}=[vertex, regular polygon, regular polygon sides=3]

\tikzstyle{smallvertex}=[vertex, small]
\tikzstyle{smallleaf}=[leaf, inner sep=3.3pt]
\tikzstyle{smallleaf2}=[leaf2, inner sep=1.7pt]
\tikzstyle{smalltriangle}=[triangle, inner sep=1.5pt]
\tikzstyle{smallreti}=[reti, small]

\tikzstyle{tinyvertex}=[vertex, tiny]

\tikzstyle{normal}=[smallvertex, fill=black]

\tikzstyle{edge}=[draw,-]
\tikzstyle{matching}=[edge,line width=3pt]
\tikzstyle{arc}=[draw,arrows={-Latex[length=6pt]}]
\tikzstyle{boldarc}=[draw, bold, arrows={-Latex[length=10pt]}]
\tikzstyle{revarc}=[draw, arrows={Latex[length=6pt]-}]
\tikzstyle{boldrevarc}=[draw, bold, arrows={Latex[length=10pt]-}]

\tikzstyle{bag}=[bagcol, line width=15pt]
\tikzstyle{past}=[draw=gray, fill=pastcol]



\newcommand{\nextnode}[5][vertex]{\node[small#1] (#2) at ($(#3)+(#4)$) {} edge[revarc,#5] (#3);}

\newcommand{\mybox}[2]{
  \noindent\begin{tikzpicture}
    \node[minimum width=\linewidth-5pt, draw, rounded corners, text width=\linewidth-15pt] (a){#2};
    \node[fill=white, xshift=1em, anchor=west] at (a.north west) {#1};
  \end{tikzpicture}%
}

\newcommand{\myboxprobdef}[6][Question]{
  \label{#6}%
  \noindent\mybox{%
    \ifthenelse{\equal{#4}{}}{}{{#4}\ifthenelse{\equal{#5}{}}{}{ ({#5})}}
  }{%
    \begin{compactdesc}
      \item [Input:] {#2}
      \item [#1:] {#3}
    \end{compactdesc}
  }
}

\definecolor{linkcol}{rgb}{0,0,0.4} 
\definecolor{citecol}{rgb}{0.5,0,0} 
\hypersetup{
bookmarksopen=false,
pdftoolbar=false,
pdfmenubar=true,
pdfhighlight=/O,
colorlinks=true,
pdfpagemode=UseNone,
pdfpagelayout=SinglePage,
pdffitwindow=true,
linkcolor=linkcol,
citecolor=citecol,
urlcolor=linkcol
}

\newcommand{\T}{\mathcal{T}}

\newcommand{\Y}{\mathcal{Y}}

\newcommand{\pathSet}{\mathcal{P}}

\newcommand{\past}{\textsc{past}\xspace}
\newcommand{\future}{\textsc{future}\xspace}
\newcommand{\pleft}{\textsc{left}\xspace}
\newcommand{\pright}{\textsc{right}\xspace}
\newcommand{\true}{\textsc{true}\xspace}
\newcommand{\false}{\textsc{false}\xspace}
\let\emb\phi
\let\emphmore\emph

\newcommand{\Nin}{N_{\textsc{in}}}
\newcommand{\Tin}{T_{\textsc{in}}}

\newcommand{\NinTin}{D_\textsc{in}(\Nin, \Tin)}
\newcommand{\Disp}[1]{D(N_{#1}, T_{#1})}

\newcommand{\gecs}{\theta} 

\def\emptyset{\varnothing}

\begin{document}
\publicationdata{vol. 25:2 }{2023}{4}{10.46298/dmtcs.10116}{2022-10-03; 2022-10-03; 2023-05-16}{2023-06-21}
\maketitle

\begin{abstract}
  Given a rooted, binary phylogenetic network and a rooted, binary phylogenetic tree, can the tree be embedded into the network?
  This problem, called \textsc{Tree Containment}, arises when validating networks constructed by phylogenetic inference methods.
  We present the first algorithm for (rooted) \textsc{Tree Containment} using the treewidth~$t$ of the input network~$N$ as parameter,
  showing that the problem can be solved in $2^{O(t^2)}\cdot|N|$~time and space.
\end{abstract}

\section{Introduction}

\subsection{Background: phylogenetic trees and networks}

\looseness=-1
Phylogenetic trees and networks are graphs used to represent evolutionary relationships. In particular, a \emphmore{rooted phylogenetic network} is a directed acyclic graph with distinctly labelled leaves, a unique root and no indegree-1 outdegree-1 vertex. Here, we will only consider rooted binary phylogenetic networks, which we will call \emph{networks} for short. The labels of the leaves (indegree-1 outdegree-0 vertices) can, for example, represent a collection of studied biological species, and the network then describes how they evolved from a common ancestor (the root, a unique indegree-0 outdegree-2 vertex). Vertices with indegree~2 and outdegree~1 are called \emphmore{reticulations} and represent events where lineages combine, for example 
the emergence of new hybrid species. All other vertices have indegree~1 and outdegree~2. A network without reticulations is a \emphmore{phylogenetic tree}.

\subsection{The \textsc{Tree Containment} problem}

\looseness=-1
The evolutionary history of a small unit of hereditary information (for example a gene, a fraction of a gene or (in linguistics) a word) can often be described by a phylogenetic tree. This is because at each reticulation,
each unit is inherited from only one parent.
Hence, if we trace back the evolutionary history of such a hereditary unit in the network, we see that its phylogenetic tree can be embedded in the network. This raises the fundamental question: given a phylogenetic network and a phylogenetic tree, can the tree be embedded into the network? This is called the \textsc{Tree Containment} problem (see \cref{fig:TC}).
To formalize this problem, we say that a network $N$ \emphmore{displays} a tree $T$ if some subgraph 
of $N$ is a subdivision of $T$.

\myboxprobdef%
{phylogenetic network $\Nin$ and tree $\Tin$, both on the same set of leaf labels}%
{Does $\Nin$ display $\Tin$?}%
{\textsc{Tree Containment}}{TC}{def:TCshort}%

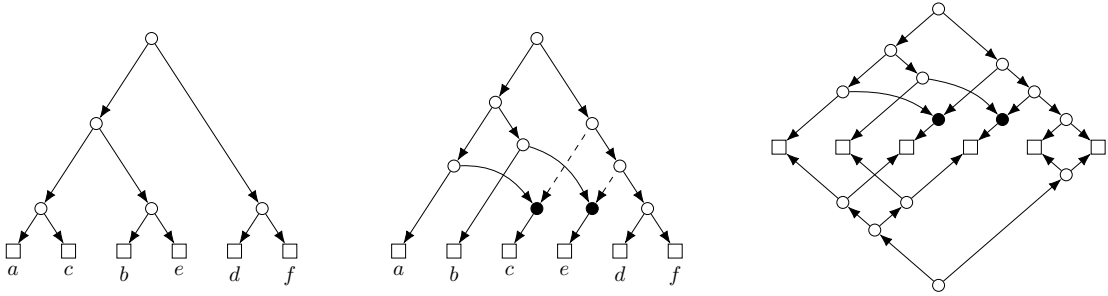
\begin{figure}[t]
  \centering
  \scalebox{.8}{\begin{tikzpicture}[xscale=.65, yscale=-1]
    \node[smallvertex] (root) at (0,0) {};
    \nextnode{0}{root}{45:4}{}
    \nextnode[leaf, label=below:$f$]{f}{0}{45:1}{}
    \nextnode[leaf, label=below:$d$]{d}{0}{135:1}{}

    \nextnode{1}{root}{135:2}{}
    \nextnode{10}{1}{45:2}{}
    \nextnode[leaf, label=below:$b$]{b}{10}{135:1}{}
    \nextnode[leaf, label=below:$e$]{e}{10}{45:1}{}
    \nextnode{11}{1}{135:2}{}
    \nextnode[leaf, label=below:$a$]{a}{11}{135:1}{}
    \nextnode[leaf, label=below:$c$]{c}{11}{45:1}{}
  \end{tikzpicture}}
  \hfill
  \scalebox{.8}{\begin{tikzpicture}[xscale=.65, yscale=-1]
    \node[smallvertex] (root) at (0,0) {};
    \nextnode{0}{root}{45:2}{}
    \nextnode{00}{0}{45:1}{}
    \nextnode{000}{00}{45:1}{}
    \nextnode[leaf, label=below:$f$]{f}{000}{45:1}{}
    \nextnode[reti]{001}{00}{135:1}{nonsol}
    \nextnode[leaf, label=below:$e$]{e}{001}{135:1}{}
    \nextnode[leaf, label=below:$d$]{d}{000}{135:1}{}
    \nextnode[reti]{01}{0}{135:2}{nonsol}
    \nextnode[leaf, label=below:$c$]{d}{01}{135:1}{}

    \nextnode{1}{root}{135:1.5}{}
    \nextnode{10}{1}{45:1}{}
    \nextnode[leaf, label=below:$b$]{b}{10}{135:2.5}{}
    \nextnode{11}{1}{135:1.5}{}
    \nextnode[leaf, label=below:$a$]{a}{11}{135:2}{}

    \foreach \u/\v in {10/001, 11/01} \draw[arc] (\u) to[bend right=20] (\v);
  \end{tikzpicture}}
  \hfill
  \scalebox{.8}{\begin{tikzpicture}[xscale=.75, yscale=-.65]
    \node[smallvertex] (root) at (0,0) {};
    \nextnode{0}{root}{45:2}{}
    \nextnode{00}{0}{45:1}{}
    \nextnode{000}{00}{45:1}{}
    \nextnode[leaf]{f}{000}{45:1}{}
    \nextnode[reti]{001}{00}{135:1}{}
    \nextnode[leaf]{e}{001}{135:1}{}
    \nextnode[leaf]{d}{000}{135:1}{}
    \nextnode[reti]{01}{0}{135:2}{}
    \nextnode[leaf]{c}{01}{135:1}{}

    \nextnode{1}{root}{135:1.5}{}
    \nextnode{10}{1}{45:1}{}
    \nextnode[leaf]{b}{10}{135:2.5}{}
    \nextnode{11}{1}{135:1.5}{}
    \nextnode[leaf]{a}{11}{135:2}{}

    \foreach \u/\v in {10/001, 11/01} \draw[arc] (\u) to[bend right=20] (\v);

    \nextnode{ac}{a}{45:2}{arc}
    \nextnode{be}{b}{45:2}{arc}
    \nextnode{df}{d}{45:1}{arc}
    \nextnode{acbe}{ac}{45:1}{arc}
    \nextnode{Trt}{acbe}{45:2}{arc}
    
    \foreach \u/\v in {ac/c, be/e, df/f, acbe/be, Trt/df} \draw[arc] (\u) -- (\v);
\end{tikzpicture}}
  \caption{%
  \textbf{Left}: a phylogenetic tree $T$.
  \textbf{Middle}: a phylogenetic network $N$ displaying $T$ (solid lines indicate an embedding of $T$; black nodes indicate reticulations).
  \textbf{Right}: the display graph $D(N,T)$ of $N$ and $T$ (see \cref{sec:chal})
  with the network part drawn on top and the tree part drawn on the bottom.
  Note that vertices of the display graph are not labelled. In the figure, the leaves (square vertices) are ordered in the same way as in~$N$.}
  \label{fig:TC}
\end{figure}
\noindent

\subsubsection{Motivation}

Apart from being a natural and perhaps one of the most fundamental questions regarding phylogenetic networks, the \textsc{Tree Containment} problem has direct applications in phylogenetics. The main application is the validation of phylogenetic network inference methods. After constructing a network, one may want to verify whether it is consistent with the phylogenetic trees. For example, if a heuristic method is used to generate a network for a genomic data set, and tree inference methods are used to generate trees for each gene, then the quality of the produced network can be assessed by computing the fraction of the gene trees that can be embedded into it. In addition, one may want to find the actual embeddings for visualisation purposes and/or to assess the importance of each network arc.

However, our main motivation for studying \textsc{Tree Containment} is that it is a first step towards the wider application of treewidth based approaches in phylogenetics (see \cref{sec:tw,sec:chal}). The techniques we develop are not exclusively designed for \textsc{Tree Containment} but intended to be useful also for other problems such as \textsc{Network Containment}~\citep{janssen2021cherry} and \textsc{Hybridization Number}~\citep{bordewich2007computing,van2016hybridization,van2016kernelizations,van2013quadratic}. The former is the natural generalization of \textsc{Tree Containment} in which we have two networks as input and want to decide whether one can be embedded into the other. It can, in particular, be used to decide whether two networks are isomorphic. In the latter problem, \textsc{Hybridization Number}, the input consists of a set of phylogenetic trees, and the aim is to construct a network with at most~$k$ reticulations that embeds each of the input trees. Although this will certainly be non-trivial, we expect that at least part of the approach we introduce here can be applied to those and other problems in phylogenetics.
We also believe our approach may have application to problems outside of phylogenetics, involving the reconciliation of multiple related graphs. 

\subsection{Treewidth}\label{sec:tw}

The parameters that are most heavily used for phylogenetic network problems (see eg. \cref{sec:previous work}) are the reticulation number and the level.
This is true not only for \textsc{Tree Containment} but more generally in the phylogenetic networks literature.
Although these parameters are natural, their downside is that they are not necessarily much smaller than the input size.
This is why we study a different parameter here.

The \emphmore{treewidth} of a graph measures its tree-likeness (see definition below), similarly to the reticulation number and level. In that sense, it is also a natural parameter to consider in phylogenetics, where networks are often expected to be reasonably tree-like. A major advantage of treewidth is that it is expected to be much smaller than the reticulation number and level.
In particular, 
there exist classes of networks for which the treewidth is at most a constant factor times the square-root of the level (see~\citep{kelk2018treewidth} for an example).
Moreover, a broad range of advanced techniques have been developed for designing FPT algorithms for graph problems when the parameter is the treewidth~\citep{bodlaender1988dynamic,cygan2011solving,BCKN15,eiben2021measuring}.
For these reasons, the treewidth has recently been studied for phylogenetics problems \citep{Janssen2019Treewidth,Kelk2016MSOL,kelk2018treewidth,scornavacca2021treewidth} and related width parameters have been proposed~\citep{berry2020scanning}. However, using treewidth as parameter for phylogenetic problems poses major challenges and, therefore, there are still few algorithms in phylogenetics that use treewidth as parameter \mjnew{(see~\cref{sec:chal})}.

\looseness=-1
\markj{One of the most famous results in treewidth is Courcelle's Theorem~\citep{COURCELLE199012,ARNBORG1991308}, which states, informally, that graph problems expressible in monadic second-order logic (MSOL) can be solved in linear time on graphs of bounded treewidth. This makes MSOL formulations  a powerful tool for establishing FPT results.
For practical purposes, it is often preferable to establish a concrete algorithm, since the running times derived via Courcelle's theorem are dominated by a tower of exponentials of size bounded in the treewidth.}

It will be convenient to define a \emphmore{tree decomposition} of a graph~$G=(V,E)$ as a rooted tree, where each vertex of the tree is called a \emph{bag} and is assigned a partition $(P,S,F)$ of~$V$, \mjnew{where $S$ is a separator between $P$ and $F$}.
We will refer to~$S$ as the \emph{present} of the bag. The set~$P$ is equal to the union of the presents of all 
descendant bags (minus the elements of $S$) and we refer to it as the \emph{past} of the bag. The set~$F=V\setminus (S\cup P)$ is referred to as the \emph{future} of the bag.
For each edge of the graph, there is at least one bag for which both endpoints of the edge are in the present of the bag. Finally, for each~$v\in V$, the bags that have~$v$ in the present form a non-empty connected subtree of the tree decomposition. 
The \emph{width} of a tree decomposition is one less than the maximum size of any bag's present and the \emph{treewidth} $tw(G)$ of a graph~$G$ is the minimum width of any tree decomposition of $G$. The treewidth of a phylogenetic network or other directed graph is the treewidth of the underlying undirected graph.

Our dynamic programming works with \emphmore{nice} tree decompositions,
in which the root is assigned $(V,\emptyset,\emptyset)$ and each bag assigned $(P,S,F)$ has exactly one of four types:
\emph{Leaf} bags have $P=S=\emptyset$ (hence~$F=V$) and have no child,
\emph{Introduce} bags have a single child assigned $(P,S\setminus\{z\},F\cup\{z\})$ for some $z\in S$,
\emph{Forget} bags have a single child assigned $(P\setminus\{z\},S\cup\{z\},F)$ for some $z\in P$, and
\emph{Join} bags have two children  
assigned $(L,S,F \cup R)$ and $(R,S, F \cup L)$ respectively, where $(L,R)$ is a partition of $P$.
When the treewidth is bounded by a constant, \citep{bodlaender1996linear} showed that a minimum-width tree decomposition can be found in linear time and~\citep{Kloks1994Treewidth} showed that a nice tree decomposition of the same width can be obtained in linear time.
Regarding approximation,
it is known that, for all graphs~$G$, tree decompositions of width $O(tw(G))$ can be computed in time single-exponential in $tw(G)$~\citep{cygan2015parameterized,bodlaender2016c,Kor21}
and tree decompositions of width 
$O(tw(G)\sqrt{\log tw(G)})$ can be computed in polynomial time \citep{feige08}.

\subsection{Challenges}\label{sec:chal}

One of the main challenges of using treewidth as parameter in phylogenetics is that the central goal in this field is to infer phylogenetic networks and, thus, the network is not known \emph{a priori} so a tree decomposition cannot be constructed easily. A possible strategy to overcome this problem is to work with the \emphmore{display graph} (see \cref{fig:TC}). Consider a problem taking as input a set of trees, such as \textsc{Hybridization Number}. Then, the \emph{display graph} of the trees is obtained by taking all trees and identifying leaves with the same label. Now we have a graph in the input and hence we can compute a tree decomposition. Moreover, in some cases, there is a strong relation between the treewidth of the display graph and the treewidth of an optimal network~\citep{grigoriev2014low,kelk2018treewidth,Janssen2019Treewidth}.

A few instances of exploiting (tree decompositions of) the display graph of input networks for algorithm design have been published.
Famously, Bryant and Lagergren~\citep{BryantLagergren2006} designed MSOL formulations solving the \textsc{Tree Consistency} problem on display graphs,
which have been improved by
a concrete dynamic programming on a given tree decomposition of the display graph~\citep{baste2017}.
Kelk et al.~\citep{Kelk2016MSOL} also developed MSOL formulations on display graphs for multiple incongruence measures on trees, based on so-called ``agreement forests''.

For the \textsc{Tree Containment} problem,
MSOL formulations acting on the display graph have been used to prove fixed-parameter tractability with respect to the treewidth~\citep{Janssen2019Treewidth}.
Analogously to the work of Baste et al.~\citep{baste2017} for \textsc{Tree Consistency},
we develop in this manuscript a concrete dynamic programming algorithm for \textsc{Tree Containment} acting on display graphs.

\looseness=-1
\textsc{Tree Containment} is conceptually similar to \textsc{Hybridization Number} in the sense that the main challenge is to decide which tree vertices correspond to which vertices of the other trees (for \textsc{Hybridization Number}) or network vertices (for \textsc{Tree Containment}). However, \textsc{Hybridization Number} is even more challenging since the network may contain vertices that do not correspond to any 
input
vertex~\citep{van2016hybridization}. Therefore, \textsc{Tree Containment} is a natural first problem to develop techniques for, aiming at extending them to \textsc{Hybridization Number} and other problems in phylogenetics in the long run.

That being said, solving \textsc{Tree Containment} parameterized by treewidth poses major challenges itself. Even though the general idea of dynamic programming on a tree decomposition is clear, its concrete use for \textsc{Tree Containment} is severely complicated by the fact that the tree decomposition does not know the correspondence between tree vertices and network vertices.
For example, when considering a certain bag of the tree decomposition, a tree vertex that is in the present of that bag may have to be embedded into a network vertex that is in the past or in the future. It may also be necessary to map vertices from the future of the tree to the past of the network and vice versa. Therefore, it will not be possible to ``forget the past'' and ``not worry about the future''. In particular, this makes it much more challenging to bound the number of possible assignments for a given bag. We will do this by bounding the number of ``time-travelling'' vertices by a function of the treewidth. We will describe these challenges in more detail in \cref{sec:approach}.

\subsection{Previous work}\label{sec:previous work}
\textsc{Tree Containment} was shown to be NP-hard~\citep{kanj2008seeing}, even for 
tree-sibling, time-consistent, regular networks~\citep{van2010locating}.
On the positive side, polynomial-time algorithms were found for other restricted classes,
including tree-child networks~\citep{gambette2016,gambette2018solving,gunawan2018solving,gunawan2019compression,van2010locating,weller2018linear}.
The first non-trivial FPT algorithm for \textsc{Tree Containment} on general networks had running time~$O(2^{k/2}n^2)$,
where the parameter~$k$ is the number of reticulations in the network~\citep{kanj2008seeing}.
Another algorithm was proposed by~\citep{gunawan2016program} with the same parameter, but it is only shown to be FPT for a restricted class of networks.
Since the problem can be split into independent subproblems at non-leaf cut-edges~\citep{van2010locating},
the parameterization can be improved to the largest number of reticulations in any biconnected component (block),
also called the \emph{level} of the network.
Further improving the parameterization,
the maximum number~$t^*$ of ``unstable component-roots'' per biconnected component was considered and
an algorithm (working also in the non-binary case) was found with running time $O(3^{t^*}|N||T|)$~\citep{weller2018linear}.
Herein, a parameterization ``improves'' over another if the first is provably smaller than (a function of) the second in any input network.

\looseness=-1
Several generalizations and variants of the \textsc{Tree Containment} problem have been studied.
The more general \textsc{Network Containment} problem asks to embed a \emph{network} in another and
has been shown to be solvable in polynomial time on a restricted network class~\citep{janssen2021cherry}.
When allowing multifurcations and non-binary reticulations, two variants of \textsc{Tree Containment} have been considered:
In the \textsc{firm} version, each non-binary node (``polytomy'') of the tree has to be embedded in a polytomy of the network whereas,
in the \textsc{soft} version, polytomies may be ``resolved'' into binary subtrees in any way~\citep{bentert2018tree}.
Finally, the unrooted version of \textsc{Tree Containment} was also shown to be NP-hard but fixed-parameter tractable when the parameter is the reticulation number (the number of edges that need to be deleted from the network to obtain a tree) \citep{van2018unrooted}.
While this version of the problem is also known to be fixed-parameter tractable with respect to the treewidth of the network~\citep{Janssen2019Treewidth},
the work does not explicitly describe an algorithm and the implied running time depends on Courcelle's theorem~\citep{Cou97} which makes practical implementation virtually impossible.

\looseness=-1
Since the notion of ``display'' closely resembles that of ``topological minor'' (with the added constraint that the embedding must respect leaf-labels),
\textsc{Tree Containment} can be understood as a special case of a variant of the well-known \textsc{Topological Minor Containment} (TMC) problem
for directed graphs.
TMC is known to be NP-complete in general by reduction from \textsc{Hamiltonian Cycle}
and previous algorithmic results focus on the undirected variant, parameterized by the \emph{size}~$h$ of the sought topological minor~$H$
(corresponding to the input tree for \textsc{Tree Containment}).
In particular, undirected TMC can be solved in $f(h)n^{O(1)}$~time~\citep{GKMW11,FLP+20}.
However, the dependency of the function $f$ on $h$ makes such algorithms impractical for all but small values of $h$. By contrast, in \textsc{Tree Containment} the input tree may be assumed to typically have a size comparable to the overall input size.
In the directed case, even the definition of ``topological minor'' has been contested~\citep{GHK+16} and we are aware of little to no algorithmic results.
In \textsc{Tree Containment}, part of the embedding of the host tree in the guest network is fixed by the leaf-labeling.
If the node-mapping is fixed for \emph{all} nodes of the host, then directed TMC generalizes the \textsc{Disjoint Paths} problem~\citep{FHW80},
which is NP-complete for 2 paths or in case the host network is acyclic.
Indeed, one can show \textsc{Tree Containment} to be NP-hard in a similar fashion~\citep{kanj2008seeing}.

\subsection{Our contribution}
\looseness=-1
In this paper, we present
an FPT-algorithm for \textsc{Tree Containment} parameterized by the treewidth of the input network. Our algorithm is one of the first (constructive) FPT-algorithms for a problem in phylogenetics parameterized by treewidth. We believe that this is an important development as the treewidth can be much smaller than other parameters such as reticulation number and level which are easier to work with. We see this algorithm as an important step towards the wide application of treewidth-based methods in phylogenetics.

\section{Preliminaries}

\subsection{Reformulating the problem}

\looseness=-1
A key concept throughout this paper will be \emphmore{display graphs}~\citep{BryantLagergren2006}, which are the graphs formed from the union of a phylogenetic tree  and a phylogenetic network by identifying leaves with the same labels. 
Throughout this paper we will let $\Nin$ and $\Tin$ denote the respective input network and tree in our instance of {\sc Tree Containment}.
The main object of study will be the ``display graph'' of $\Nin$ and $\Tin$.
For the purposes of our dynamic programming algorithm, we will often consider graphs that are not exactly this display graph,
but may be thought of as roughly corresponding to subgraphs of it (though they are not exactly subgraphs; see \cref{sec:isolabelling}).
In order to incorporate such graphs as well,
we will define display graphs in a slightly more general way than that usually found in the literature.
In particular, we allow for the two ``sides'' of a display graph to be disconnected,
and for some leaves to belong to one side but not the other.

\begin{definition}[display graph]
A \emph{display graph} is a 
directed acyclic graph $D = (V,A)$, with specified subsets $V_T,V_N\subseteq V$ such that $V_T\cup V_N = V$, satisfying the following properties:
\begin{compactitem}
    \item The graph $T := D[V_T]$ is an out-forest;
    \item Every vertex has in- and out-degree at most $2$ and total degree at most $3$;
    \item Any vertex in $V_N\cap V_T$ has out-degree $0$ and in-degree at most $1$ in both $T$ and $N := D[V_N]$.
\end{compactitem}
Herein, we call $T$ the \emph{tree side} and $N$ the \emph{network side} of $D$ and
we will use the term~$D(N,T)$
to denote a display graph with network side $N$ and tree side $T$.
\end{definition}

Given a phylogenetic network~$\Nin$ and phylogenetic tree~$\Tin$ with the same leaf-label set,
we define $\NinTin$ to be the display graph formed by taking the disjoint union of $\Nin$ and $\Tin$ and
identifying pairs of leaves that have the same label.
We note that, while the leaves of $\Nin$ and $\Tin$ were originally labelled, this labelling does not appear in $\NinTin$. Labels are used to construct $\NinTin$, but in the rest of the paper we will not need to consider them. Indeed, such labels are relevant to the \textsc{Tree Containment} problem only insofar as they establish a relation between the leaves of $\Tin$ and $\Nin$, and this relation is now captured by the structure of $\NinTin$.

We now reformulate the \textsc{Tree Containment} problem in terms of an \emphmore{embedding function} on a display graph.
Unlike the standard definition of an embedding function (see, e.g., \citep{van2010locating}),
which is defined for a phylogenetic network~$N$ and tree~$T$, our definition of an embedding function applies directly to the display graph $D(N,T)$.
Because of our more general definition of display graphs, our definition of an embedding function will also be more general than that found in the literature. The key idea of an embedding function remains the same, however:
it shows how a subdivision of $T$ may be viewed as a subgraph of $N$.

\begin{definition}[embedding function]\label{def:embed}
Let $D$ be a display graph with network side $N$ and tree side $T$, and
let $\pathSet(N)$ denote the set of all directed paths in $N$.
An \emph{embedding function on $D$} is a function $\emb:V(T)\cup A(T) \to V(N) \cup \pathSet(N)$ such that:
\begin{compactenum}[(a)]
    \item\label{it:node&path}
        for each $u \!\in\! V(T)$, $\emb(u) \!\in\! V(N)$ and,
        for each $uv \!\in\! A(T)$, $\emb(uv)$ is a directed $\emb(u)$-$\emb(v)$-path in $N$;
    \item\label{it:injective}
        for any distinct $u,v \in V(T)$, $\emb(u)\neq \emb(v)$;
    \item\label{it:leaves} for any $u \in V(T) \cap V(N)$, $\emb(u) = u$;
    \item\label{it:paths disjoint}
        the paths $\{\emb(uv) \mid uv\in A(T)\}$ are arc-disjoint;
    \item\label{it:shared nodes}
        for any distinct $p,q\in A(T)$, $\emb(p)$ and $\emb(q)$ share a vertex $z$
        only if $p$ and $q$ share a vertex $w$ with $z = \emb(w)$;
\end{compactenum}
\end{definition}
Note that the standard definition of an embedding of a phylogenetic tree~$T$
into a phylogenetic network~$N$ (see e.g.~\citep{van2010locating}) coincides with the definition
of an embedding function on $D(N,T)$.
Property~(\ref{it:shared nodes}) ensures that, while the embeddings of arcs $uv,vw_1,vw_2$ can all meet at $\emb(v)$, the embeddings of different tree arcs cannot otherwise meet. (In particular, the path $\emb(uv)$ cannot end at a reticulation that is also an internal vertex of $\emb(u'v')$, something that is otherwise allowed by properties~(\ref{it:node&path})--(\ref{it:paths disjoint}).)

\begin{lemma}\label{lem:display equiv}
  A phylogenetic network~$\Nin$ displays a phylogenetic tree~$\Tin$ if and only if there is an embedding function on $\NinTin$.
\end{lemma}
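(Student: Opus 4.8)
The plan is to unfold both definitions and verify that they describe the same combinatorial object. Recall that ``$\Nin$ displays $\Tin$'' means there is a subgraph $S\subseteq\Nin$ and an isomorphism $\iota$ from some subdivision of $\Tin$ onto $S$ that respects leaf-labels; equivalently, $S$ is obtained from a copy of $\Tin$ by replacing each arc with a directed path, where branch-vertices and leaves keep their identity and the leaf of $\Tin$ labelled $x$ is sent to the leaf of $\Nin$ labelled $x$. Since in $\NinTin$ precisely the identified leaves constitute $V_N\cap V_T$, this leaf-labelling condition translates exactly into property~(\ref{it:leaves}).

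For the forward direction, I would take such a subgraph $S$ and define $\emb$ directly from $\iota$: for each $u\in V(\Tin)$ set $\emb(u)=\iota(u)\in V(\Nin)$, and for each arc $uv\in A(\Tin)$ let $\emb(uv)$ be the directed path in $S$ (hence in $\Nin$) onto which the subdivision sends $uv$. Properties~(\ref{it:node&path}) and~(\ref{it:leaves}) are immediate from the construction. Injectivity~(\ref{it:injective}) follows because $\iota$ is an isomorphism, so distinct vertices of $\Tin$ have distinct images. Arc-disjointness~(\ref{it:paths disjoint}) and the shared-node condition~(\ref{it:shared nodes}) both follow from the fact that, in a subdivision, the interiors of the paths replacing distinct arcs are pairwise disjoint and can meet only at a common branch-vertex $\iota(w)$, where the corresponding arcs share the endpoint $w$.

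For the converse, given an embedding function $\emb$, I would let $S$ be the subgraph of $\Nin$ formed by the union of all paths $\emb(uv)$ with $uv\in A(\Tin)$, together with their vertices, and claim that $S$ is a subdivision of $\Tin$ respecting leaf-labels. By~(\ref{it:injective}) the vertices $\emb(u)$, $u\in V(\Tin)$, are distinct, and by~(\ref{it:leaves}) the leaves are carried to the correctly-labelled leaves of $\Nin$. The key computation is on degrees: if $z$ is an internal vertex of $\emb(uv)$, then $z$ cannot equal any $\emb(w)$, for otherwise $\emb(uv)$ and an arc-image $\emb(q)$ ending at $z$ would share $z$, so~(\ref{it:shared nodes}) and~(\ref{it:injective}) force $w\in\{u,v\}$ and hence $z$ to be an endpoint, a contradiction. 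Combined with arc-disjointness~(\ref{it:paths disjoint}), this shows every internal vertex of a path has in- and out-degree exactly $1$ in $S$, while each $\emb(u)$ inherits the local arc-structure of $u$ in $\Tin$. This is exactly the statement that $S$ is a subdivision of $\Tin$, so $\Nin$ displays $\Tin$.

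I expect the converse to be the main obstacle, specifically verifying that the union of the embedding paths is a genuine subdivision rather than a more tangled subgraph. The delicate point is ruling out a path passing through an interior vertex that coincides with some unrelated $\emb(w)$, or two paths crossing transversally at an interior vertex; property~(\ref{it:shared nodes}) --- reinforced by the remark following \cref{def:embed} about reticulations --- is precisely what forbids this, and I would lean on it to pin down the degree of every vertex of $S$. Additional care is needed for the vertices in $V_N\cap V_T$ (the identified leaves), whose in-/out-degree constraints in the display-graph definition must be checked to be consistent with their role as leaves of the subdivision.
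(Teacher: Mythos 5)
Your proposal is correct and follows essentially the same route as the paper's proof: build $\emb$ from the subdivision isomorphism in one direction, and take the union of the paths $\emb(uv)$ as the subdivision in the other. The paper dismisses the converse with ``it can be verified that $\Tin'$ is a subdivision of $\Tin$''; your degree analysis via properties~(\ref{it:injective}), (\ref{it:paths disjoint}) and~(\ref{it:shared nodes}) supplies exactly the verification the paper omits.
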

\begin{proof}
Suppose first that $\Nin$ displays $\Tin$, that is,
there is a subgraph~$\Tin'$ of $\Nin$ that is a subdivision of $\Tin$.
Then, every vertex of $\Tin$ corresponds to a vertex of $\Tin'$ and
every arc $uv$ in $\Tin$ corresponds to a directed path in $\Tin'$
between the vertices corresponding to $u$ and $v$.
Let $\emb$ denote this correspondence relation.
Then, \eqref{it:leaves} of \cref{def:embed} follows from the definition of $\NinTin$ while
the other properties follow from $\emb$ being an isomorphism (of a subdivision of $\Tin$ into $\Tin'$).

Conversely, suppose that there is an embedding function~$\emb$ on $\NinTin$ and
let $\Tin'$ be the subgraph formed by the arcs of $\Nin$ that are part of some path $\emb(uv)$ for an arc $uv$ in $T$.
Then, it can be verified that $\Tin'$ is a subdivision of $\Tin$.
\end{proof}

\noindent
In light of \cref{lem:display equiv},
we may henceforth view \textsc{Tree Containment} as the following problem:

\myboxprobdef[Task]%
{phylogenetic network $\Nin$ and phylogenetic tree $\Tin$ with the same leaf-label set}%
{Find an embedding function on $\NinTin$.}%
{\textsc{Tree Containment}}{TC}{def:TC}%

\subsection{Overview of our approach}\label{sec:approach}

We study \textsc{Tree Containment} parameterized by the treewidth of the input network $\Nin$.
A key tool will be the following theorem
from~\citep{Janssen2019Treewidth}. In this theorem, the display graph $D_{\textsc{in}}(N_u,T_u)$ for unrooted~$N_u$ and~$T_u$ is defined analogously to $\NinTin$ for rooted $N_{\textsc{in}}, T_{\textsc{in}}$ -- that is, it is the (undirected) graph derived from the disjoint union of~$N_u$ and~$T_u$ by identifying leaves with the same label.

\begin{theorem}[\citep{Janssen2019Treewidth}]\label{thm:tw bound}
Let~$N_u$ and~$T_u$ be an unrooted binary phylogenetic network and tree, respectively, with the same leaf-label set. If $N_u$ displays $T_u$ then $tw(D_{\textsc{in}}(N_u,T_u))\leq 2tw(N_u) + 1$.
\end{theorem}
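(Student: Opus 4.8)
The plan is to start from an optimal tree decomposition of $N_u$ and to thicken each bag with a controlled number of vertices of the tree side, using the displaying hypothesis to guarantee that only one extra vertex per bag-element is ever needed. Concretely, since $N_u$ displays $T_u$, I would first fix a subgraph $T_u'$ of $N_u$ that is a subdivision of $T_u$. This subdivision marks two kinds of vertices of $N_u$: the \emph{branch vertices}, one for each vertex of $T_u$, and the \emph{subdivision vertices}, namely the internal vertices of the paths of $N_u$ realizing the edges of $T_u$. The enabling structural observation, which I would record at the outset, is that these realizing paths are internally vertex-disjoint and that none of their internal vertices is a branch vertex; this is immediate from $T_u'$ being a subdivision of $T_u$. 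It is exactly this disjointness that lets me charge every vertex of $N_u$ to at most one vertex of $T_u$.

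Next I would define an assignment $\mu\colon V(N_u)\to V(T_u)\cup\{\bot\}$ by mapping each branch vertex to the corresponding vertex of $T_u$; mapping, for each edge $e$ of $T_u$, all subdivision vertices on the realizing path of $e$ to one fixed endpoint of $e$ (chosen to be a non-leaf when $e$ is incident to a leaf); and mapping every leaf (a shared vertex of the display graph) and every unused vertex of $N_u$ to $\bot$. By the disjointness above, $\mu$ is well defined and sends each vertex of $N_u$ to at most one vertex of $T_u$, and for every internal tree vertex $x$ the preimage $X_x:=\mu^{-1}(x)$ is a connected vertex set of $N_u$ (its branch vertex together with some of the internally-disjoint paths leaving it). I would then take a tree decomposition of $N_u$ of width $tw(N_u)$ with bags $\{B_y\}$ and replace each bag by $B_y':=B_y\cup\{\mu(v)\mid v\in B_y,\ \mu(v)\neq\bot\}$. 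Since $\mu$ is a function, $|\{\mu(v)\mid v\in B_y\}|\le|B_y|$, so $|B_y'|\le 2|B_y|\le 2(tw(N_u)+1)$, which already gives the desired width $2tw(N_u)+1$.

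It then remains to verify that $(\,\{B_y'\}\,)$ is a valid tree decomposition of $D_{\textsc{in}}(N_u,T_u)$. Coverage of the network vertices and edges is inherited directly; every internal tree vertex appears because its branch vertex lies in some bag; and the connectivity condition for a tree vertex $x$ follows from the standard fact that the bags meeting the connected set $X_x$ induce a connected subtree of the decomposition. The step I expect to be the main obstacle — and the only place where ``$N_u$ displays $T_u$'' is genuinely used — is covering the \emph{tree edges}: for an edge $xz$ of $T_u$ I must produce a single bag containing both $x$ and $z$. Here I would argue that the realizing path of $xz$ contains two adjacent network vertices $a\in X_x$ and $b\in X_z$ (a branch vertex and the last subdivision vertex, or the two branch vertices when the edge is realized by a single edge of $N_u$); any bag $B_y$ containing the network edge $ab$ then contains $\mu(a)=x$ and $\mu(b)=z$ in $B_y'$. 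The case where $z$ is a leaf is identical, noting that $z$ is itself the network vertex $b$ and is therefore already present in $B_y$. Assembling these verifications yields $tw(D_{\textsc{in}}(N_u,T_u))\le 2tw(N_u)+1$.
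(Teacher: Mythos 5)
The paper does not actually prove this statement---it is imported as \cref{thm:tw bound} from \citep{Janssen2019Treewidth} and used as a black box---so there is no in-paper proof to compare against; I can only assess your argument on its own. It is correct and self-contained: the charging map $\mu$ is well defined precisely because the realizing paths of the subdivision $T_u'$ are internally vertex-disjoint and avoid branch vertices; each preimage $\mu^{-1}(x)$ is connected (a union of path-segments all attached to the branch vertex of $x$), so the standard fact about bags meeting a connected set gives the connectivity axiom for the augmented decomposition; tree-edge coverage is witnessed by the bag covering the last network edge of the realizing path; and the bag-size accounting $|B_y'|\le 2|B_y|$ yields width at most $2\,tw(N_u)+1$. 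The only loose end is the degenerate case where an edge of $T_u$ joins two leaves (i.e.\ $T_u$ has exactly two taxa), where ``choose a non-leaf endpoint'' is impossible; that case is handled trivially by adding one fixed leaf to every bag, so it does not affect the bound. This is essentially the argument of the cited reference.
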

%
By \cref{thm:tw bound}, we suppose that the display graph $\NinTin$ has treewidth at most $2k+1$, where $k$ is the treewidth of the underlying undirected graph~$N_u$ of~$\Nin$, as otherwise $tw(\NinTin) = tw(D_{\textsc{in}}(N_u,T_u)) > 2k+1$ and $N_u$ does not display the unrooted version~$T_u$ of~$\Tin$, implying that~$\Nin$ does not display~$\Tin$.

\looseness=-1
As is often the case for treewidth parameterizations, we will proceed via a dynamic programming on a tree decomposition, in this case a tree decomposition of $\NinTin$.
Recall that we view a bag $(P,S,F)$ in the tree decomposition as partitioning the vertices of $\NinTin$ into past, present and future.
A typical dynamic programming approach is to store, 
for each bag,
some set of information about the present, while forgetting most information about the past, and not yet caring about what happens in the future. The resulting information is stored in a ``signature'', and the algorithm works by calculating which signatures are possible on each bag, in a bottom-up manner.
This approach is complicated by the fact that the sought-for embedding of $\Tin$ into $\Nin$ may not map the past/present/future of $\Tin$ into the past/present/future (respectively) of $\Nin$. Vertices from the past of $\Tin$ may be embedded in the future of $\Nin$, or vice versa.
Thus, we have to store more information than we might at first think. 
In particular, it is not enough to store information about which present vertices of $\Tin$ are embedded in which present vertices of $\Nin$
(indeed, depending on the bag, it may be that none of them are).
As such, our notion of a ``signature'' has to track how vertices from the past of $\Tin$ are embedded in the present and future of $\Nin$, and which vertices from the past of $\Nin$ contain vertices from the present or future of $\Tin$. Vertices of the past which are mapped to vertices of the past, on the other hand, can mostly be forgotten about.

\subsection{An informal guide to (compact) signatures}\label{sec:signatureOverview}

Roughly speaking, a \emph{signature}~$\sigma$ for a bag~$(P,S,F)$ in the tree decomposition of $\NinTin$
consists of the following items (see \cref{fig:containment struct} for an example):
\begin{enumerate}
  \item a display graph $D(N,T)$, some of whose vertices correspond (isomorphically) to $S\subseteq V(\NinTin)$,
    and the rest of which are labeled \past or \future
    (which we may think of as vertices corresponding to some vertex of $\NinTin$ in $P$ or $F$, respectively).
    We use a function~$\iota$ \mjnew{on $V(D(N,T))$} to capture both this correspondence and labelling,
    \mjnew{where~$\iota$ maps each vertex to an element of $S$ or a label from $\{\past, \future\}$.} We emphasize here that $D(N,T)$ is distinct from the input display graph $\NinTin$.
  \item an embedding~$\emb$ of $T$ in $N$ such that, for no arc~$uv$ of $T$,
    all of \mjnew{$V(\emb(uv))\cup\{u,v\}$} have the same label $y \in \{\past, \future\}$ under $\iota$.
\end{enumerate}
Signatures may be seen as ``partial embedding functions'' on parts of~$\NinTin$ in a straightforward way.
In particular, we call~$\sigma$ \emph{valid} for $(P,S,F)$ if, roughly speaking, $\emb$~corresponds (via $\iota$) to something
that can be extended to an embedding function on the subgraph of $\NinTin$ induced by the vertices~$P\cup S$ introduced below $(P,S,F)$.
In our dynamic programming algorithm, we build valid signatures for a bag~$x$ from valid signatures of the child bag(s) of $x$
(in particular, validity 
of a signature for $x$ is characterized by the validity of certain signatures for the child bag(s)).

Since the definition of a signature does not guarantee any bound on the number of vertices labeled \past or \future, iterating over all signatures for a bag~$(P,S,F)$ (in order to check their validity) exceeds FPT time.
Therefore we will instead consider ``compact'' signatures, whose number and size are bounded in the width~$|S|$ of the bag~$(P,S,F)$.
If $\NinTin$ admits an embedding function~$\emb^*$, then a compact signature corresponding to this embedding function exists.
In the following, we informally describe the compaction process for this hypothetical solution~$\emb^*$,
thus giving a rough idea of the definition of a ``compact'' signature.
At all times, the (tentative) signature will contain a display graph $D(N,T)$ (initially $D(N,T) = \NinTin$),
and an embedding function of $T$ into~$N$ (initially~$\emb^*$).
For a more complete description of our approach, see \cref{sec:filling} and,
for an illustration, see \cref{fig:SigExample}. 

\begin{figure}[t]
  \tikzstyle{embed}=[line width=2.2pt]%
  \tikzstyle{hl}=[gray, fill=lightgray, rounded corners=8pt]%
  \centering
  \begin{tikzpicture}[yscale=.6,xscale=.7]
    \node[smallvertex] (Nrt) {};
    \foreach [count=\j from 0] \leaves in {{B/A,C/D},{F/E,G/H}} {
      \nextnode{N\j}{Nrt}{-150+120*\j:3}{revarc,embed}
      \foreach [count=\i from 0] \lone/\ltwo in \leaves {
        \nextnode{N\j\i}{N\j}{-135+90*\i:1}{revarc,embed}
        \nextnode{N\j\i\i}{N\j\i}{-135+90*\i:1}{revarc,embed}
        \nextnode[reti]{Nr\j\i}{N\j\i\i}{-45-90*\i:1}{revarc,embed}
        \nextnode[leaf]{\lone}{Nr\j\i}{-90:.9}{revarc,embed}
        \node[smallleaf] (\ltwo) at ($(\lone)-(1.44-2.88*\i,0)$) {} edge[revarc,embed] (N\j\i\i);
        \nextnode{T\j\i}{\lone}{-135+90*\i:1}{arc}
        \draw[arc] (T\j\i) -- (\ltwo);
      }
      \nextnode{T\j}{T\j0}{-45:2}{arc}
      \draw[arc] (T\j) -- (T\j1);
      \foreach \u/\v in {0/1, 1/0} \draw[arc] (N\j\u) -- (Nr\j\v) {};
    }
    \nextnode{Trt}{T0}{-30:3}{arc}
    \draw[arc] (Trt) -- (T1);
    \node at ($(N0)+(180:1)$) {$F$};
    \node at ($(T1)+(0:1)$) {$P$};
    \node at ($(Nrt)+(-90:1.2)$) {$\Nin$};
    \node at ($(Trt)+(90:1.2)$) {$\Tin$};
    \begin{pgfonlayer}{background}      
      \draw[hl, dashed]
          ($(Nrt)+(90:.5)$) --
          ($(N0)+(135:.4)$) -- ($(N000)+(135:.4)$) -- ($(A)+(180:.5)$) -- ($(T0)+(-90:.6)$) -- ($(D)+(0:.5)$) -- ($(N011)+(45:.4)$)
          -- ($(N0)+(-10:.6)$) -- ($(Nrt)+(-90:.6)$) -- ($(N1)+(-170:.6)$) --
          ($(N100)+(180:.5)$) -- ($(Nr10)+(-90:.4)$) -- ($(Nr11)+(-90:.4)$) -- ($(N111)+(0:.5)$) -- ($(N1)+(45:.4)$) -- cycle;
      \draw[hl, dashed]
          ($(T1)+(-90:.6)$) -- ($(T10)+(135:.3)+(-135:.4)$) -- ($(T10)+(135:.3)+(45:.4)$) --
          ($(T1)+(90:.7)$) -- ($(T11)+(45:.3)+(135:.4)$) -- ($(T11)+(45:.3)+(-45:.4)$) -- cycle;
    \end{pgfonlayer}
  \end{tikzpicture}
  \hfill
  \begin{tikzpicture}[yscale=.6,xscale=.7]
    \node[smallvertex] (Nrt) {};
    \nextnode{N0}{Nrt}{-150:3}{revarc,embed}
    \nextnode{N1}{Nrt}{-30:3}{revarc,embed}
    \foreach [count=\i from 0] \lone/\ltwo in {F/E, G/H} {
      \nextnode{N1\i\i}{N1}{-135+90*\i:2}{revarc,embed}
      \nextnode[reti]{Nr1\i}{N1\i\i}{-45-90*\i:1}{revarc,embed}
      \nextnode[leaf]{\lone}{Nr1\i}{-90:.9}{revarc,embed}
      \node[smallleaf] (\ltwo) at ($(\lone)-(1.44-2.88*\i,0)$) {} edge[revarc,embed] (N1\i\i);
      \nextnode{T1\i}{\lone}{-135+90*\i:1}{arc}
      \draw[arc] (T1\i) -- (\ltwo);
    }
    \nextnode{T1}{T10}{-45:2}{arc}
    \draw[arc] (T1) -- (T11);

    \nextnode{Trt}{T1}{-150:3}{arc}
    \nextnode{T0}{Trt}{150:3}{revarc}
    \node[anchor=west] at ($(N1)+(110:1)$) {\future};
    \node[anchor=east] at ($(T1)+(150:1.4)$) {\past};
    \begin{pgfonlayer}{background}      
      \path[hl]
          ($(Nrt)+(90:.5)$) -- ($(N0)+(-150:.3)+(-150-90:.4)$) -- ($(N0)+(-150:.3)+(-150+90:.4)$)
          -- ($(Nrt)+(-90:.6)$) -- ($(N1)+(-170:.6)$) --
          ($(N100)+(180:.5)$) -- ($(Nr10)+(-90:.4)$) -- ($(Nr11)+(-90:.4)$) -- ($(N111)+(0:.5)$) -- ($(N1)+(45:.4)$) -- cycle;
      \path[hl]
        ($(T1)+(-90:.6)$) -- ($(T10)+(135:.3)+(-135:.4)$) -- ($(T10)+(135:.3)+(45:.4)$) --
        ($(T1)+(90:.7)$) -- ($(T11)+(45:.3)+(135:.4)$) -- ($(T11)+(45:.3)+(-45:.4)$) -- cycle;
      \node[draw=white, fill=lightgray, rounded corners=3pt, inner sep=8pt,label=above:{\future}] at (T0) {};
    \end{pgfonlayer}
  \end{tikzpicture}
  \caption{%
  \textbf{Left}:
  An example of a display graph $\NinTin$ for which $\Nin$ displays $\Tin$ as witnessed by the embedding function $\emb$ that is indicated by bold edges.
  Highlighting with dashed border represents the sets $P$ and $F$, for some bag $(P,S,F)$ in a tree decomposition of $\NinTin$.
  \textbf{Right}:
  A representation of the (compact) signature for $(P,S,F)$ derived from this solution.
  Vertices labelled $\past$ or $\future$ are highlighted in gray without border.}
  \label{fig:SigExample}
\end{figure}
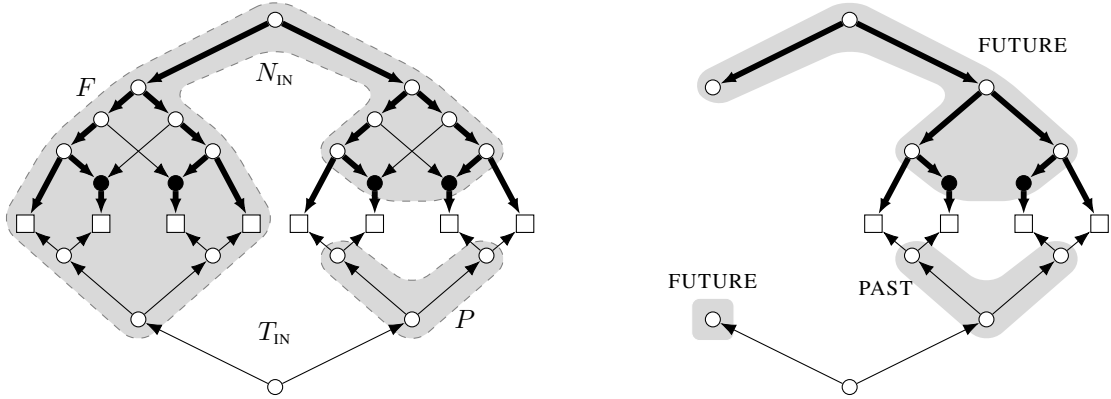

\begin{compactdesc}
  \item[Step~1]
    After initialization with $\emb^*$,
    we assign a label~\future to all vertices of $F$,
    and a label~\past to all vertices in $P$
    (Observe that no vertex labeled~\past will be adjacent to a vertex labeled~\future, since $S$ separates $P$ from $F$ in $\NinTin$).
    Then, we ``forget'' which vertices of $\NinTin$ the vertices labelled $\past$ or $\future$ correspond to.
    Our preliminary signature now contains 
    (1) a display graph~$D(N,T)$ whose vertices are either labelled $\future$ or $\past$ or correspond (isomorphically) to vertices in~$S\subseteq V(\NinTin)$
    (we refer the reader to \cref{sec:isolabelling} for a more formal description), as well as
    (2) an embedding function for $D(N,T)$ into $N$.

  \item[Step~2] \looseness=-1
    We now simplify the structure of the preliminary signature.
    The main idea is that, 
    if $a$ is an arc of $T$ with both endpoints labelled \past and
    all vertices in the path $\emb(a)$ are also labelled \past,
    then we can safely forget $a$ and all the arcs in $\emb(a)$.
    Intuitively, the information that $a$ will be embedded in $\emb(a)$  does not have any effect
    on the possible solutions one could construct on the part of $\NinTin$ that is ``above'' the bag~$(P,S,F)$.
    Similarly, we can forget any arc $a$ of $T$ whose endpoints, as well as every vertex in $\emb(a)$,
    are assigned the label \future.
    Intuitively, this is because this information should have no bearing on whether a solution exists with this signature
    for $\NinTin$ restricted to $P\cup S$.
    In a similar way, we forget any vertex $u\in V(T)$ and its embedding $\emb(u)$ if they are assigned the same label,
    provided that all their incident arcs can also be forgotten.
    We will call the vertices and arcs fulfilling these conditions ``redundant'' and we remove them from our tentative signature. 
    We can also safely delete the vertices and arcs of $N$ that are labelled $y \in \{\past, \future\}$ but are not part of the image of $\emb$.
    As a result, we now have that for any remaining vertex $u \in T$,
    either one of $\{u, \emb(u)\}$ is labelled \past and the other labelled \future,
    or some vertex element of $S$ must appear as either one of $\{u,\emb(u)\}$, a neighbor of $u$, or a vertex in the path $\emb(a)$
    for an incident arc $a$ of $u$.
    Thus, we have ``forgotten'' all the aspects of the embedding except those that involve vertices from the present in some way,
    or those where the embedding ``time-travels'' between the past and future
    (see \cref{sec:restriction} for a more formal description of this process).
 
  \item[Step~3]
    Finally, we may end up with long paths of vertices with in-degree and out-degree~$1$ that are labelled \past or \future in $N$
    (for example, if $u$ and $v$ are labelled \past, then $\emb(uv)$ may be a long path in~$N$ with all vertices labelled \future).
    Such long paths do not contain any useful information to us, we therefore compress these by suppressing vertices with in-degree and out-degree~$1$
    (This gives the \emph{compact signature}, see \cref{sec:compact}).
\end{compactdesc} 

\colorlet{pastcol}{gray!60!white}
\begin{figure}[tb]
    \centering
    \scalebox{.8}{%
    \begin{tikzpicture}[xscale=.9, yscale=.9]
        \tikzstyle{tofut}=[];
        \tikzstyle{iota}=[dotted];
        \tikzstyle{emb}=[gray, line width=6pt];
        \foreach \i/\n in {0/,10/In}{
            \pgfmathtruncatemacro{\fadeintensity}{(13-\i)}
            \tikzstyle{fade}=[opacity=\fadeintensity/10];
            \node[smallvertex] (Nrt\n) at (-\i,0) {};
            \nextnode{N0\n}{Nrt\n}{-45:1}{revarc}
            \nextnode{N00\n}{N0\n}{-135:1}{revarc}
            \nextnode{N01\n}{N0\n}{-45:1}{revarc}
            \nextnode[leaf]{la\n}{N00\n}{-100:1.32}{revarc}
            \nextnode[reti]{N001\n}{N00\n}{-45:1}{revarc}
            \nextnode[leaf]{lb\n}{N001\n}{-90:.6}{revarc}
            
            \nextnode[vertex, fade]{Nx\n}{Nrt\n}{45:1.1}{arc, fade}
            \nextnode[vertex, fade]{Nr\n}{Nx\n}{45:.9}{arc, fade}
            \nextnode[vertex, fade]{Nq\n}{Nx\n}{-47:1}{revarc, fade}
            \nextnode[vertex, fade]{Ny\n}{Nq\n}{-47:1}{revarc, fade}
            \nextnode[reti, fade]{Nz0\n}{Ny\n}{1,-.6}{revarc, fade}
            \nextnode[vertex, fade]{Nz1\n}{Ny\n}{.2,-.6}{revarc, fade}
            \node[smallvertex, fade] (Nz\n) at ($(Nrt\n)+(-135:1.53)$) {} edge[arc, fade] ($(Nz\n)+(-90:.6)$) edge[arc, fade] ($(Nz\n)+(-120:.6)$);
            
            \draw[arc, tofut, fade] (Nr\n) to ($(Nr\n)+(-47:1)$);
            \draw[arc, tofut, fade] (Nq\n) to[bend left=10] ($(Nq\n)+(-15:1.5)$);
            \foreach \i in {30,-30} \draw[arc, tofut, fade] (Nz1\n) -- ++(-90+\i:.5);
            \foreach \i/\s in {0/arc,120/revarc} \draw[\s, tofut, fade] (Nz0\n) -- ++(-90+\i:.5);
        
            \draw[arc, tofut] (N01\n) -- ++(-60:.7);
            \draw[arc, tofut] (Nrt\n) -- ++(-135:1.5);
            \draw[revarc, tofut] (Nrt\n) -- ++(45:1);
            \draw[arc] (N01\n) -- (N001\n);
        
            \nextnode{T00\n}{la\n}{-45:1}{arc}
            \nextnode{T0\n}{T00\n}{-60:1.5}{arc}
            \node[smallvertex] (Tr\n) at ($(T0\n)+(-150:1)$) {};
            \node[smallvertex] (Tx\n) at ($(T0\n)+(30:.9)$) {};
            \nextnode[vertex, fade]{Tz\n}{Tx\n}{.6,.6}{revarc, fade}

            \draw[arc] (T00\n) -- (lb\n);
            \draw[arc, tofut] (T0\n) -- ++(30:.8);
            \draw[revarc, tofut] (T0\n) -- ++(-150:1);
            \draw[arc, tofut, fade] (Tx\n) -- ++(100:.6);
            \foreach \i in {30,-30} \draw[arc, tofut, fade] (Tz\n) -- ++(90+\i:.5);
            \draw[arc, tofut, fade] (Tr\n) --++(130:.6);
            
            \foreach [count=\i from 0] \x in {3.2,4.2}
                \node[smallleaf] (l\i\n) at ($(la\n)+(\x,0)$) {}
                    edge[revarc] ($(l\i\n)+(100:.6)$) edge[revarc] ($(l\i\n)+(-70:.6)$);
            
            \coordinate (Nroot\n) at ($(Nrt\n)+(0,1)+(45:2)$);
            \coordinate (NTleft\n) at ($(Nroot\n)+(-130:6.7)$);
            \coordinate (NTright\n) at ($(Nroot\n)+(-50:6.7)$);
            \coordinate (Npast\n) at ($(Nrt\n)+(-140:1.1)$);
            \coordinate (NfutUp\n) at ($(Nrt\n)+(45:.7)$);
            \coordinate (NfutDn\n) at ($(N01\n)+(-45:.7)$);
            
            \coordinate (Tpast\n) at ($(T0\n)+(-135:1)$);
            \coordinate (Troot\n) at ($(Tpast\n)+(-90:.2)$);
            \coordinate (Tfut\n) at ($(T0\n)+(45:.8)$);
            
            \begin{pgfonlayer}{background}
                \draw[gray, thick, fill=lightgray, rounded corners] (Nroot\n) -- (NTleft\n) -- (NTright\n) -- cycle;
                \draw[gray, thick, fill=lightgray, rounded corners] (Troot\n) -- (NTleft\n) -- (NTright\n) -- cycle;
                
                \draw[gray, rounded corners, fill=pastcol] (Npast\n) -- ($(NTleft\n)+(.3,0)$) -- ($(NTleft\n)+(2,0)$) -- cycle;
                \draw[gray, rounded corners, fill=pastcol] (Tpast\n) -- ($(NTleft\n)+(.3,0)$) -- ($(NTleft\n)+(2,0)$) -- cycle;
                \draw[gray, rounded corners, fill=pastcol] ($(NfutUp\n)+(135:.2)$) -- ++(50:1.66) -- ($(NTright\n)+(-.8,.4)$) -- ($(NTright\n)+(-4.5,.4)$) -- ++(60:1.3) -- cycle;
                \draw[gray, rounded corners, fill=pastcol] ($(Tfut\n)+(-90:.6)$) -- ($(NTright\n)+(-.9,-.4)$) -- ($(NTright\n)+(-4.5,-.4)$) -- cycle;
            \end{pgfonlayer}
        }
        
        \begin{pgfonlayer}{background}
            \draw[emb] (Ny.center) -- (Nq.center) -- (Nx.center) -- (Nrt.center) -- (N0.center) -- (N00.center) -- (la.center);
            \draw[emb] (Nz0.center) -- (Ny.center) -- (Nz1.center);
            \draw[emb] (N00.center) -- (N001.center) -- (lb.center);
            \draw[emb] (Nr.center) -- (Nx.center);
            \draw[emb] (Nr.center) -- ++(-47:1);
            \draw[emb] (Nz0.center) -- ++(-90:.5);
            \foreach \i in {30,-30} \draw[emb] (Nz1.center) -- ++(-90+\i:.5);
            
            \draw[emb] (Tz.center) -- (Tx.center) -- (T0.center) -- (T00.center) -- (la.center);
            \draw[emb] (T00.center) -- (lb.center);
            \draw[emb] (Tz.center) -- (Tx.center);
            \draw[emb] (Tr.center) -- (T0.center);
            \draw[emb] (Tr.center) --++(130:.6);
            \draw[emb] (Tx.center) -- ++(100:.6);
            \foreach \i in {30,-30} \draw[emb] (Tz.center) -- ++(90+\i:.5);
            \foreach \i in {0,1}
                \draw[emb] ($(l\i.center)+(-70:.6)$) -- (l\i.center) -- ($(l\i.center)+(100:.6)$);
        \end{pgfonlayer}
        
        \node at ($(NrootIn)+(-170:.7)$) {$\Nin$};
        \node at ($(TrootIn)+(170:.7)$) {$\Tin$};
        \node at ($(Nroot)+(-170:.7)$) {$N$};
        \node at ($(Troot)+(170:.7)$) {$T$};
        \node at ($(Nrt)+(-4,1)$) {\LARGE$\iota$};
        \node[gray!80!black] at ($(N0)-(.6,0)$) {\LARGE $\emb$};
        \node[fill=pastcol, rounded corners] (past) at ($(Nroot)+(170:2)$) {\past};
        \node[fill=pastcol, rounded corners] (past2) at ($(Troot)+(10:3)$) {\past};
        \node[fill=pastcol, rounded corners] (future) at ($(Nroot)+(10:2)$) {\future};
        
        \foreach \u/\b in {Nrt/1, N0/1, N00/1, N01/1, N001/1, la/1, lb/1, T0/-1, T00/-1}
            \draw[thick,iota] (\u) to[bend right=\b*10] ($(\u)-(10,0)$);
        \draw[line width=1.5pt,iota] ($(NTleft)+(1.2,.5)$) to[bend left=10] (past);
        \draw[line width=1.5pt,iota] ($(NTright)+(-2,-.7)$) to[bend left=10] (past2);
        \draw[line width=1.5pt,iota] ($(NTright)+(-2,1)$) to[bend right=10] (future);
    \end{tikzpicture}
    }
    \caption{Example of a signature of a bag $(P,S,F)$. 
    The $S$-part of $D(\Nin,\Tin)$ is solid while the non-$S$ part is faded.
    The embedding $\emb$ (right, \leo{indicated with} gray edge-highlight) maps $T$ into $N$. The dotted arcs labelled $\iota$ show the isomorphism between part of $D(N,T)$ and $S\subseteq V(D(\Nin,\Tin))$. Note that the part of $D(N,T)$ that is not mapped to $S$ is not necessarily isomorphic to anything in $D(\Nin,\Tin)$.
    \label{fig:containment struct}}
\end{figure}
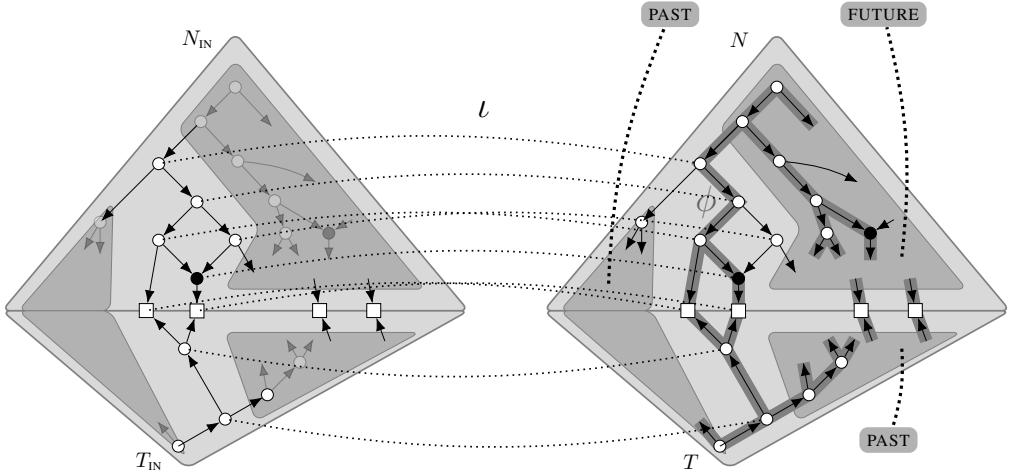

\subsection{Bounding the number of signatures}

We now outline the main arguments for why the number of possible (compact) signatures for a given bag~$(P,S,F)$ can be bounded in~$|S|$.
Such a bound on the number of signatures ensures that the running time of the algorithm is FPT,
because the number of calculations required for each bag is bounded by a function of the treewidth.

The main challenge is to bound the size of the display graph $D(N,T)$ in a given signature for $(P,S,F)$. Once such a bound is achieved, this immediately implies upper bounds (albeit quite large) for the number of possible display graphs and the number of possible embeddings, and hence on the number of possible signatures.
We will focus here on bounding the size of the tree part $T$.
Once a bound is found for $|T|$ it is relatively straightforward to use that to give a bound on $|N|$ (because the arcs of $N$ that are not used by the embedding of $T$ into $N$ are automatically deleted, unless they are themselves incident to a vertex in $S$, and because isolated vertices are deleted and long paths suppressed).

It can be seen that a vertex $u \in V(T)$ is redundant (and so would be deleted from the signature) unless one of the following properties holds:
\begin{inparaenum}[(1)]
\item $u \in S$, 
\item $\emb(u) \in S$,
\item $u$ is incident to an element of $S$
\item for some arc $a$ incident to $u$, the path $\emb(a)$ contains a vertex from $S$ or
\item $u$ and $\emb(u)$ have different labels from $\{\past, \future\}$.
\end{inparaenum}
Essentially if none of (1)--(4) holds, then all the vertices mentioned in those properties have the same label as either $u$ or $\emb(u)$, using the fact that $S$ separates the vertices labelled \past from the vertices labelled \future. If $u$ and $\emb(u)$ have the same label, then all these vertices have the same label, which is enough to show that $u$ is redundant.
It remains to bound the number of vertices satisfying one of these properties.
For the first four properties, it is straightforward to find a bound in terms of $|S|$.
The vertices satisfying the final property are ``time-travelling'' (in the sense that either $u$ is labelled \past and $\emb(u)$ \future, or $u$ is labelled \future and $\emb(u)$ \past).
Because of the bounds on the other types of vertices, it is sufficient to provide a bound on the number of \emph{lowest} time-travelling vertices in $T$.

To see the intuition why this bound should hold: consider some full solution on the original input, i.e. an embedding function on $\NinTin$, 
and suppose $u \in V(\Tin)$ is a lowest tree vertex for which $u \in P, \emb(u) \in F$ (thus in the corresponding signature, $u$ has label \past and $\emb(u)$ has label \future). Let $x \in V(\Nin)\cap V(\Tin)$ be some leaf descendant of $u$. Then there is path in $\Tin$ from $u$ to $x$, and a path in $\Nin$ from $\emb(u)$ to $\emb(x)=x$. Thus $\NinTin$ has an (undirected) path from $u$ to $\emb(u)$. As this is a path between a vertex in $P$ and a vertex in $F$, some vertex on this path must be in $S$ (since $S$ separates $P$ from $F$). Such a path must exist for every lowest time-travelling vertex $u$, and these paths are distinct. The existence of these paths can then be used to bound the number of lowest time-travelling vertices.

\section{Filling in the details}\label{sec:filling}

\subsection{Tracking the identity of vertices}\label{sec:isolabelling}

In the following, consider a bag~$(P,S,F)$ and
recall that the sought solution may embed parts of the past of $T$ in the future of $N$ or vice versa.
Thus, as previously indicated, a signature for $(P,S,F)$ will have to track information
about more vertices than those in $S$.
This leads to the following complexities in how we talk about the identity of vertices in signatures.

\looseness=-1
For the display graph $D(N,T)$ in a signature for $(P,S,F)$,
we want some vertices to correspond to specific vertices in $S\subseteq V(\NinTin)$
while, for other vertices we want to express the fact that they correspond to a vertex of $P$ or $F$,
without specifying which one.
Moreover, as our dynamic programming proceeds up the tree decomposition and we move from one signature to another, vertices will change between these two states.
When we move from a child bag~$(P_c,S_c,F_c)$ to a parent bag~$(P_p,S_p,F_p)$ and
a vertex~$x$ of $\NinTin$ ``leaves the bag'' (that is, $x\in S_c\setminus S_p$) then 
we want to ``forget'' that a vertex $v$ corresponds to $x$ (instead labelling $v$ \past),
while still tracking its adjacencies and embedding information.
Similarly, a vertex labelled \future may, at some point, 
come to correspond to a particular vertex in $S$.

In order to keep this information straight, we have to be careful in how we talk about the identity of these vertices.
In particular, the vertices of the display graph~$D(N,T)$ are different from those in $V(\NinTin)=P\uplus S\uplus F$,
but the signature carries a ``correspondence'' function~$\iota$ that, on some subgraph of $D(N,T)$,
acts as an isomorphism into the ``$S$-part'' of $\NinTin$ while, on other vertices,
acting as a labelling that assigns some vertices the label~\past and others the label~\future.
As such, we refer to $\iota$ as an ``isolabelling''.

\looseness=-1
In the next definition, we consider an arbitrary set $\Y$ of labels, rather than the set $\{\past, \future\}$.
While we will often have $\Y=\{\past,\future\}$, we also use isolabelings in other contexts besides signatures.

\begin{definition}[isolabelling]\label{def:isolabelling}
Let $\Y$ be a set of labels
and let $S\subseteq V(\NinTin)$.
An \emph{$(S,\Y)$-isolabelling}
on a display graph $D(N,T)$ is a function $\iota: V(D(N,T))\to S \cup \Y$ such that $S$ is a subset of the image of $\iota$
($\iota$ is ``surjective onto $S$'')
\footnote{Every $v\in S$ has a $u \in V(D(N,T))$ with $\iota(u) = v$,
but not every $y \in \Y$ may have a $u \in V(D(N,T))$ with $\iota(u) = y$.}
and, for any $u,v \in V(D(N,T))$ with $\iota(u), \iota(v) \in S$: 
\begin{compactenum}[(a)]
    \item\label{it:NT match}
        $\iota(u) \in V(\Nin)$ only if $u \in V(N)$ and $\iota(u) \in V(\Tin)$ only if $u \in V(T)$,
    \item\label{it:injectiveisol}
        $\iota(u) = \iota(v)$ only if $u = v$, and
    \item\label{it:isomorph}
        the arc $uv$ is in $D(N,T)$ if and only if $\iota(u)\iota(v)$ is in $\NinTin$.
\end{compactenum}
\end{definition}

\subsection{Signatures and containment structures}\label{sec:containmentStructure}

Before formally defining a signature for a bag $(P,S,F)$, let us remark that,
later in the paper, we will also discuss constructs which are similar in construction to signatures,
but with slightly different vertex and label sets (``partial solutions'' and ``reconciliations'').
For this reason, we define a more general structure, called an $(S,\Y)$-containment structure,
that encapsulates all these notions.
In the definition that follows, a signature for a bag $(P,S,F)$ corresponds to an $(S,\Y)$-containment structure,
where $S$~is the present of the bag $(P,S,F)$ and $\Y$ is the label set~$\{\past, \future\}$. 
For an intuitive understanding of a containment structure, the most important features are the display graph,
embedding function and isolabelling.

\begin{definition}[containment structure]
\label{def:containment struct}
    Let $\Y$ be a set of labels 
    and let $S\subseteq V(\NinTin)$.
    An \emph{$(S,\Y)$-containment structure} is a tuple $(D(N,T), \emb, \iota)$ consisting of\nobreakpar
    \begin{compactitem}
        \item a  display graph $D(N,T)$,  
        \item an embedding function $\emb$ on $D(N,T)$, and
        \item an $(S,\Y)$-isolabelling
        $\iota:V(D(N,T))\to 
        S \cup \Y$.
    \end{compactitem}
    In addition, each vertex~$u$ of $D(N,T)$ should satisfy the following properties:
    \begin{compactenum}[(a)]
        \item if $\iota(u)\in S$, then $u$ has the same in- and out-degree in $D(N,T)$ as $\iota(u)$ in $\NinTin$.
        \item if $u\in T$ and
            $\iota(u) \neq \iota(\emb(u))$,
            then $u$ has $2$ out-arcs in $D(N,T)$.
    \end{compactenum}
\end{definition}

\noindent
See \cref{fig:containment struct} for an example of an $(S,\{\past, \future\})$-containment structure $(D(N,T),\emb,\iota)$.
The last two properties of \cref{def:containment struct} are used for bookkeeping, and to help bound the size of the display graph.

\begin{definition}[signature]\label{def:signature}
Let $(P,S,F)$ be a bag in the tree decomposition of $\NinTin$.
Then any $(S, \{\past, \future\})$-containment structure is called a
\emph{signature for $(P,S,F)$}.
\end{definition}

Through most of the paper, $\Y = \{\past, \future\}$ and $S$ may be referred to as the ``present'' of the bag $(P,S,F)$.
However, our setup also allows us to talk about ``partial solutions'', where $S$ is replaced by $P \cup S$
(essentially merging the past and the present)
and only the label set $\Y=\{\future\}$ (corresponding to vertices in $F$) is used.
In some instances, we will use additional auxiliary labels; in particular when considering Join bags in \cref{sec:valid sigs}, we use the labels~\pleft and \pright to distinguish between the pasts of different bags.
Finally, \cref{def:containment struct} also allows capturing a solution for an instance of \textsc{Tree Containment}.
Indeed, if we replace the set $S$ with $V(\NinTin)$ and require that no vertex is mapped to a label of $\Y$,
then $D(N,T)$ is isomorphic to $\NinTin$ and the embedding function~$\emb$ of $T$ into $N$
also describes an embedding of $\Tin$ into $\Nin$.

\begin{lemma}\label{lem:TC equiv}
$(\Nin,\Tin)$ is a \textsc{Yes}-instance of {\sc Tree Containment}
if and only if
there is a $(V(\NinTin), \Y)$-containment structure $(D(N,T), \emb, \iota)$ with ${\iota}^{-1}(\Y) = \emptyset$.
\end{lemma}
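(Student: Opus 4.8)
The plan is to establish the equivalence in \cref{lem:TC equiv} by showing that a full containment structure with no vertices labelled by $\Y$ is, up to isomorphism, exactly the same data as an embedding function on $\NinTin$, and then invoke \cref{lem:display equiv} to connect this to the \textsc{Tree Containment} question. The key observation driving both directions is that when $S = V(\NinTin)$ and ${\iota}^{-1}(\Y) = \emptyset$, the isolabelling $\iota$ is forced to be an isomorphism from $D(N,T)$ onto $\NinTin$ (respecting the tree/network sides), so the embedding function $\emb$ on $D(N,T)$ transports directly to an embedding function on $\NinTin$, and conversely.

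First I would prove the forward direction. Suppose $(\Nin,\Tin)$ is a \textsc{Yes}-instance. By \cref{lem:display equiv} there is an embedding function $\emb^*$ on $\NinTin$. I then construct the desired containment structure by taking $D(N,T) := \NinTin$ itself (with $V_N := V(\Nin)$ and $V_T := V(\Tin)$), setting $\iota$ to be the identity map $V(\NinTin) \to V(\NinTin) = S$, and setting $\emb := \emb^*$. Since $\iota$ is the identity, its image is all of $S = V(\NinTin)$, so it is trivially surjective onto $S$ and ${\iota}^{-1}(\Y) = \emptyset$. Properties \eqref{it:NT match}--\eqref{it:isomorph} of \cref{def:isolabelling} hold because the identity map respects sides and adjacencies. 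For the two properties of \cref{def:containment struct}: the degree condition holds since $u$ and $\iota(u)=u$ are literally the same vertex, and the out-arc condition is vacuous because $\iota(u) = u = \iota(\emb(u))$ can fail only when $\emb(u) \neq u$, but then $\iota(\emb(u)) = \emb(u) \neq u = \iota(u)$; here I should check that whenever $\emb(u) \neq u$ for $u \in V(T)$, the vertex $u$ indeed has two out-arcs in $T$ — this is where I would use that $\NinTin$ comes from a binary tree, so $u$ being a non-leaf (the only case where $\emb(u)$ can differ from $u$, since leaves are fixed by property \eqref{it:leaves}) has out-degree $2$.

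Conversely, suppose we have a $(V(\NinTin),\Y)$-containment structure $(D(N,T),\emb,\iota)$ with ${\iota}^{-1}(\Y)=\emptyset$. Then $\iota$ maps every vertex into $S = V(\NinTin)$, it is surjective onto $S$ by definition of isolabelling, and it is injective by \eqref{it:injectiveisol}; hence $\iota$ is a bijection. Property \eqref{it:isomorph} then makes $\iota$ a graph isomorphism $D(N,T) \to \NinTin$, and \eqref{it:NT match} guarantees it respects the tree and network sides. I would then define $\emb' := \iota \circ \emb \circ {\iota}^{-1}$ on $V(\Tin) \cup A(\Tin)$ and verify that $\emb'$ satisfies the five properties of \cref{def:embed}; each property transports along the isomorphism $\iota$ essentially by naturality, since $\iota$ carries paths to paths, preserves arc-disjointness and the leaf-fixing condition \eqref{it:leaves}. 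By \cref{lem:display equiv}, the existence of $\emb'$ certifies that $\Nin$ displays $\Tin$, i.e.\ $(\Nin,\Tin)$ is a \textsc{Yes}-instance.

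The main obstacle I anticipate is the bookkeeping in transporting the embedding function across $\iota$ in the converse direction — in particular, confirming that $\iota$ sends directed $\emb(u)$-$\emb(v)$-paths in $N$ to directed paths in $\Nin$ (property \eqref{it:node&path}) and that the arc-disjointness and shared-vertex conditions \eqref{it:paths disjoint}--\eqref{it:shared nodes} are preserved. This is genuinely routine because $\iota$ is a side-preserving isomorphism, but it requires care to state the path-correspondence cleanly. A secondary subtlety is the leaf-fixing property \eqref{it:leaves}: I must confirm that the isomorphism $\iota$ identifies each shared leaf $u \in V(N) \cap V(T)$ of $D(N,T)$ with the correspondingly-shared leaf of $\NinTin$, so that $\emb'(\iota(u)) = \iota(u)$ follows from $\emb(u) = u$; this relies on the structural constraints in the \emph{display graph} definition forcing shared vertices to be exactly the identified leaves, and on $\iota$ respecting $V_N \cap V_T$.
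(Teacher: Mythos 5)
Your proposal is correct and follows essentially the same route as the paper's proof: one direction takes $D(N,T)=\NinTin$ with the identity isolabelling and checks the degree/out-arc conditions using binarity, and the other uses surjectivity plus injectivity of $\iota$ to obtain a side-preserving isomorphism along which $\emb$ transports to an embedding function on $\NinTin$, with \cref{lem:display equiv} supplying the link to displaying.
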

\begin{proof}
First, suppose $(D(N,T), \emb, \iota)$ is a $(V(\NinTin), \Y)$-containment structure with ${\iota}^{-1}(\Y) = \emptyset$.
Then, for every vertex $u$ of $D(N,T)$, $\iota(u)$ is a vertex of $\NinTin$.
Moreover, for every vertex $v$ of $\NinTin$, there is $u \in V(D(N,T))$ such that $v = \iota(u)$ (see \cref{def:containment struct}).
Thus, $\iota$ is a bijective function between $V(D(N,T))$ and $V(\NinTin)$ and,
by \cref{def:isolabelling}\eqref{it:isomorph}, $uv$ is an arc in $D(N,T)$ if and only if $\iota(u)\iota(v)$ is an arc in $\NinTin$.
Thus, $\iota$ is an isomorphism, implying that $D(N,T)$ is isomorphic to $\NinTin$.
Moreover $N$ is isomorphic to $\Nin$ and $T$ is isomorphic to $\Tin$ (as $\iota$ maps vertices of $N$ to vertices of $\Nin$ and vertices of $T$ to vertices of $\Tin$ by \cref{def:isolabelling}\eqref{it:NT match}).
As $\emb$ is an embedding function on $D(N,T)$,
combining $\iota$ with $\emb$ yields an embedding function on $\NinTin$.

Conversely, suppose that there is an embedding function $\emb$ on $\NinTin$.
Let $\iota$ be the identity function on $V(\NinTin)$ and note that,
by \cref{def:isolabelling}, $\iota$ is a $(V(\NinTin), \Y)$-isolabelling on $V(\NinTin)$ and ${\iota}^{-1}(\Y) = \emptyset$.
Then, $(\NinTin, \emb, \iota)$ satisfies the first three conditions of an $(S,\Y)$-containment structure.
The remaining properties, concerning the degrees of vertices, follow from the fact that $\Nin$ and $\Tin$ are binary.
In particular, for any vertex $u$ in $\Tin$ with $\iota(u) \neq \iota(\emb(u))$, it holds that $u \neq \emb(u)$ and so $u$ is not a leaf of $\Tin$ and, thus, has out-degree~$2$ in $\NinTin$.
\end{proof}

\subsection{Formally defining the restriction}\label{sec:restriction}

While \cref{sec:signatureOverview} describes in broad strokes how a signature for a bag~$(P,S,F)$
could be derived from a solution for an instance of \textsc{Tree Containment}, we now make this process more precise. 
Recall that we first relabelled the vertices in $P$ and $F$ by \past and \future, respectively,
and then, ``redundant'' parts of the display graph were removed and the remaining long paths contracted.
Exactly which arcs and vertices should be removed is made precise in the notion of \emph{redundancy} defined below.

\begin{definition}[redundant]\label{def:redundant}
Let~$\chi := (D(N,T), \emb, \iota)$ be an $(S, \Y)$-containment structure, and let $y \in \Y$.
Then, we define the \emph{$y$-redundant} arcs and vertices of $D(N,T)$ as follows:
\begin{compactitem}
    \item A tree arc $uv \in A(T)$ is $y$-redundant if $\iota(u)=\iota(v) = y$ and $\iota(v') = y$ for all vertices $v'$ in the path $\emb(uv)$.
    \item A network arc $u'v' \in A(N)$ is $y$-redundant if $\iota(u')=\iota(v') = y$ and, 
   if~$u'v'$ is in the path $\emb(uv)$ for some tree arc~$uv\in A(T)$, 
    $uv$ is $y$-redundant.
    \item A tree vertex $v$ in $V(T)$ is $y$-redundant if
    $\iota(v) = \iota(\emb(v)) = y$, and
    $v$ and $\emb(v)$ are incident only to $y$-redundant arcs in $D(N,T)$.
    \item A network vertex $v'\in V(N)$ is $y$-redundant if
    $\iota(v') = y$ and $v'$ is incident only to $y$-redundant arcs and, 
   if $v'=\emb(v)$ for some  $v\in V(T)$,  $v$ is $y$-redundant.
\end{compactitem}
We say that an arc or vertex of $D(N,T)$ is \emph{redundant} if it is $y$-redundant for some $y \in \Y$.
When it is important to specify the containment structure~$\chi$, we say an arc or vertex is \emph{$y$-redundant with respect to $\chi$.}
\end{definition}
Just as we derived a signature from a solution in \cref{sec:signatureOverview}
by restricting our attention to a subset of vertices (in that case, the set $S$),
we can restrict any $(S,\Y)$-containment structure to an $(S',\Y)$-containment structure for any $S' \subseteq S$.
This will be a useful tool for deriving signatures for one bag from signatures for another bag,
as well as characterizing the ``validity'' of a signature.

\colorlet{pastcol}{gray!60!white}
\tikzstyle{pastarea}=[draw, gray, rounded corners, fill=pastcol]
\tikzstyle{treearea}=[gray, thick, fill=lightgray, rounded corners]
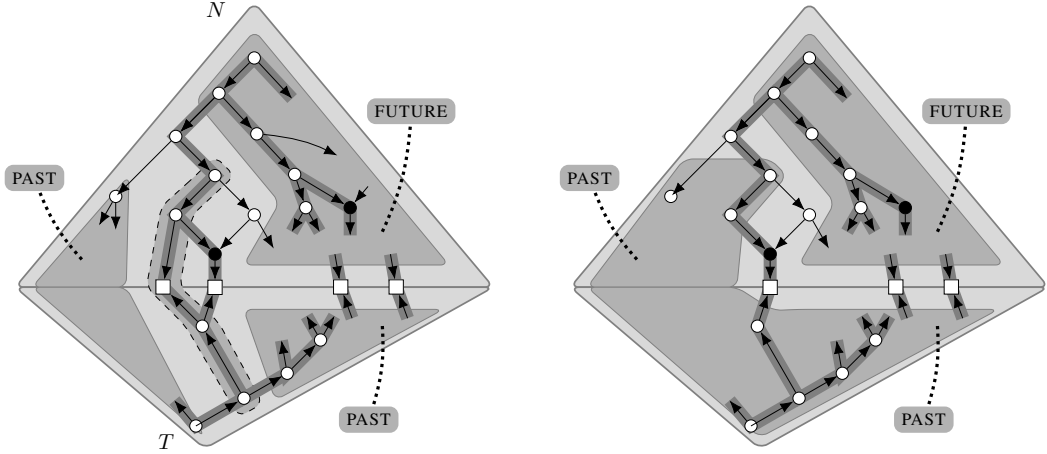
\begin{figure}[tb]
    \centering
    \scalebox{.82}{%
    \begin{tikzpicture}[xscale=.9, yscale=.9]
        \tikzstyle{tofut}=[];
        \tikzstyle{iota}=[dotted];
        \tikzstyle{emb}=[gray, line width=6pt];
        \foreach \i/\n in {0/b,10/}{
            \pgfmathtruncatemacro{\laopacity}{\i/10}
            \tikzstyle{fade}=[opacity=\laopacity]
            \node[smallvertex] (Nrt\n) at (-\i,0) {};
            \nextnode{N0\n}{Nrt\n}{-45:1}{revarc}
            \nextnode{N00\n}{N0\n}{-135:1}{revarc}
            \nextnode{N01\n}{N0\n}{-45:1}{revarc}
            \nextnode[leaf, fade]{la\n}{N00\n}{-100:1.32}{revarc, fade}
            \nextnode[reti]{N001\n}{N00\n}{-45:1}{revarc}
            \nextnode[leaf]{lb\n}{N001\n}{-90:.6}{revarc}
            
            \nextnode{Nx\n}{Nrt\n}{45:1.1}{arc}
            \nextnode{Nr\n}{Nx\n}{45:.9}{arc}
            \nextnode{Nq\n}{Nx\n}{-47:1}{revarc}
            \nextnode{Ny\n}{Nq\n}{-47:1}{revarc}
            \nextnode[reti]{Nz0\n}{Ny\n}{1,-.6}{revarc}
            \nextnode{Nz1\n}{Ny\n}{.2,-.6}{revarc}
            \node[smallvertex] (Nz\n) at ($(Nrt\n)+(-135:1.53)$) {} edge[arc, fade] ($(Nz\n)+(-90:.6)$) edge[arc, fade] ($(Nz\n)+(-120:.6)$);
            
            \draw[arc, tofut] (Nr\n) to ($(Nr\n)+(-47:1)$);
            \draw[arc, tofut, fade] (Nq\n) to[bend left=10] ($(Nq\n)+(-15:1.5)$);
            \foreach \i in {30,-30} \draw[arc, tofut] (Nz1\n) -- ++(-90+\i:.5);
            \draw[arc, tofut] (Nz0\n) -- ++(-90:.5);
            \draw[revarc, tofut, fade] (Nz0\n) -- ++(50:.5);
        
            \draw[arc, tofut] (N01\n) -- ++(-60:.7);
            \draw[arc, tofut] (Nrt\n) -- ++(-135:1.5);
            \draw[revarc, tofut] (Nrt\n) -- ++(45:1);
            \draw[arc] (N01\n) -- (N001\n);
        
            \nextnode{T00\n}{la\n}{-45:1}{arc, fade}
            \nextnode{T0\n}{T00\n}{-60:1.5}{arc}
            \node[smallvertex] (Tr\n) at ($(T0\n)+(-150:1)$) {};
            \node[smallvertex] (Tx\n) at ($(T0\n)+(30:.9)$) {};
            \nextnode{Tz\n}{Tx\n}{.6,.6}{revarc}

            \draw[arc] (T00\n) -- (lb\n);
            \draw[arc, tofut] (T0\n) -- ++(30:.8);
            \draw[revarc, tofut] (T0\n) -- ++(-150:1);
            \draw[arc, tofut] (Tx\n) -- ++(100:.6);
            \foreach \i in {30,-30} \draw[arc, tofut] (Tz\n) -- ++(90+\i:.5);
            \draw[arc, tofut] (Tr\n) --++(130:.6);
            
            \foreach [count=\i from 0] \x in {3.2,4.2}
                \node[smallleaf] (l\i\n) at ($(la\n)+(\x,0)$) {}
                    edge[revarc] ($(l\i\n)+(100:.6)$) edge[revarc] ($(l\i\n)+(-70:.6)$);
            
            \coordinate (Nroot\n) at ($(Nrt\n)+(0,1)+(45:2)$);
            \coordinate (NTleft\n) at ($(Nroot\n)+(-130:6.7)$);
            \coordinate (NTright\n) at ($(Nroot\n)+(-50:6.7)$);
            \coordinate (Npast\n) at ($(Nrt\n)+(-140:1.1)$);
            \coordinate (NfutUp\n) at ($(Nrt\n)+(45:.7)$);
            \coordinate (NfutDn\n) at ($(N01\n)+(-45:.7)$);
            
            \coordinate (Tpast\n) at ($(T0\n)+(-135:1)$);
            \coordinate (Troot\n) at ($(Tpast\n)+(-90:.2)$);
            \coordinate (Tfut\n) at ($(T0\n)+(45:.8)$);
            
            \node[fill=pastcol, rounded corners] (past\n) at ($(NTleft\n)+(80:2)$) {\past};
            \node[fill=pastcol, rounded corners] (past2\n) at ($(Troot\n)+(10:3)$) {\past};
            \node[fill=pastcol, rounded corners] (future\n) at ($(NTright\n)+(115:3.5)$) {\future};

            \draw[line width=1.5pt,iota] ($(NTleft\n)+(1.2,.5)$) to[bend left=10] (past\n);
            \draw[line width=1.5pt,iota] ($(NTright\n)+(-2,-.7)$) to[bend left=10] (past2\n);
            \draw[line width=1.5pt,iota] ($(NTright\n)+(-2,1)$) to[bend right=10] (future\n);
            
            \begin{pgfonlayer}{background}
                \draw[treearea] (Nroot\n) -- (NTleft\n) -- (NTright\n) -- cycle;
                \draw[treearea] (Troot\n) -- (NTleft\n) -- (NTright\n) -- cycle;
            \end{pgfonlayer}
        }

        \begin{pgfonlayer}{background}
            \draw[pastarea] (Npast) -- ($(NTleft)+(.3,0)$) -- ($(NTleft)+(2,0)$) -- cycle;
            \draw[pastarea] (Tpast) -- ($(NTleft)+(.3,0)$) -- ($(NTleft)+(2,0)$) -- cycle;
            \draw[pastarea] ($(NfutUp)+(135:.2)$) -- ++(50:1.66) -- ($(NTright)+(-.8,.4)$) -- ($(NTright)+(-4.5,.4)$) -- ++(60:1.3) -- cycle;
            \draw[pastarea] ($(Tfut)+(-90:.6)$) -- ($(NTright)+(-.9,-.4)$) -- ($(NTright)+(-4.5,-.4)$) -- cycle;
            
            \draw[pastarea] ($(Npastb)+(90:.3)$) -- ($(NTleftb)+(.3,0)$) -- ($(NTleftb)+(3,0)$) -- ($(N00b)+(-45:.3)$) -- ($(N0b)+(0:.3)$) -- ($(N0b)+(90:.3)$) -- cycle;
            \draw[pastarea] ($(NfutUpb)+(135:.2)$) -- ++(50:1.66) -- ($(NTrightb)+(-.8,.4)$) -- ($(NTrightb)+(-4.5,.4)$) -- ++(60:1.3) -- cycle;
            \draw[pastarea] (Tpastb) -- ($(NTleftb)+(.3,0)$) -- ($(NTleftb)+(3,0)$) -- ($(T00b)+(45:.4)$) -- ($(NTrightb)+(-4.5,-.4)$) -- ($(NTrightb)+(-.9,-.4)$) -- cycle;
            
            \draw[pastarea, draw=black, dashed] 
                ($(N0)+(180:.35)$) -- ($(N00)+(150:.3)$) -- ($(la)+(180:.3)$) -- ($(T00)+(180:.3)$) -- ($(T0)+(-105:.35)$) -- ($(T0)+(-15:.35)$) -- ($(T00)+(15:.3)$) -- ($(la)+(0:.3)$) -- ($(N00)+(0:.3)$) -- ($(N0)+(0:.35)$) -- ($(N0)+(90:.35)$) --cycle;
            \foreach \n in {,b}{
                \draw[emb] (Ny\n.center) -- (Nq\n.center) -- (Nx\n.center) -- (Nrt\n.center) -- (N0\n.center) -- (N00\n.center) -- (N001\n.center) -- (lb\n.center);
                \draw[emb] (Nz0\n.center) -- (Ny\n.center) -- (Nz1\n.center);
                \draw[emb] (Nr\n.center) -- (Nx\n.center);
                \draw[emb] (Nr\n.center) -- ++(-47:1);
                \draw[emb] (Nz0\n.center) -- ++(-90:.5);
                \foreach \i in {30,-30} \draw[emb] (Nz1\n.center) -- ++(-90+\i:.5);
            
                \draw[emb] (Tz\n.center) -- (Tx\n.center) -- (T0\n.center) -- (T00\n.center) -- (lb\n.center);
                \draw[emb] (Tz\n.center) -- (Tx\n.center) -- ++(100:.6);
                \draw[emb] (T0\n.center) -- (Tr\n.center) --++(130:.6);

                \foreach \i in {30,-30} \draw[emb] (Tz\n.center) -- ++(90+\i:.5);
                \foreach \i in {0,1}
                    \draw[emb] ($(l\i\n.center)+(-70:.6)$) -- (l\i\n.center) -- ($(l\i\n.center)+(100:.6)$);
            }
            \draw[emb] (N00.center) -- (la.center) -- (T00.center);
        \end{pgfonlayer}
        
        \node at ($(Nroot)+(-170:.7)$) {$N$};
        \node at ($(Troot)+(170:.7)$) {$T$};
    \end{tikzpicture}
    }
    \caption{Illustration of \cref{def:restrict} (except non-$\Y$ parts of $\iota$ and $\iota'$).
        \textbf{Left}: The example containment structure~$\psi$ of \cref{fig:containment struct}. The dashed area indicates $S'$ and $g$: for all $u$ in the dashed area, $g(\iota(u))=\past$ while $\iota(u)\notin\{\past,\future\}$.
        \textbf{Right}: The $g$-restriction of $\psi$.
        In accordance with \cref{def:restrict}, note how
        (a) arcs in the ``\past''-area of $N$ that are not mapped to by $\emb$ disappear in $N'$
        while arcs that are mapped by $\emb$ persist,
        (b) many of the arcs in $T$ are in the ``\past''-area, but embedded by $\emb$ into paths of $N$ that live (at least partially) in the ``\future''-area of $N$ and, therefore, also persist, and
        (c) a common leaf of $T$ and $N$ has been removed since all its incoming arcs have been deleted.
        Finally, as will be the case in the dynamic programming later on, the ``\past''- and ``\future''-areas never touch in $D(N',T')$.}
    \label{fig:restriction}
\end{figure}

\begin{definition}[restriction, see \cref{fig:restriction}]\label{def:restrict}
Let~$\chi = (D(N,T), \emb, \iota)$ be an $(S, \Y)$-containment structure,
let~$S' \subseteq S$, and
let~$g:S \cup \Y \to S' \cup \Y$
be some function such that for all $v \in S \cup \Y$, $g(v)=v$ if $v \in S'$ and $g(v) \in \Y$ otherwise.
Then, we call $g$ a \emph{restriction function}.
Further, the \emph{$g$-restriction} of $\chi$ is the tuple $(D(N',T'), \emb', \iota')$ constructed as follows:

With $\iota_g(u)$ abbreviating $g(\iota(u))$ for all $u \in V(D(N,T))$,
let $D(N',T')$ be the display graph derived from $D(N,T)$ 
by replacing the isolabelling $\iota$ with $\iota_g$,
and then exhaustively deleting redundant arcs and vertices (with respect to $(D(N,T), \emb, \iota_g)$) from $D(N,T)$.
Finally, we define $\iota'$ and $\emb'$ as the respective restrictions of $\iota_g$ to $D(N',T')$ and $\emb$ to $T'$.
\end{definition}

We will sometimes write $(S_1\to y_1, \dots, S_j \to y_j)$-restriction instead of $g$-restriction, where $y_1, \dots y_j$ are labels in $\Y$ and 
$g(u) := y_i$ for all $u \in S_i$, and $g(u):=u$ for all other $u$.
For example, we may write $(P\to \past)$-restriction when $g$ is the function that maps all vertices in $P$ to the label $\past$, and leaves other vertices unchanged.
If any $S_i$ is a singleton set~$\{z\}$, we permit ourselves to write $z\to y_i$ instead of $\{z\}\to y_i$.
In particular, we will see $(\pright\to \future)$-restrictions,
where the label $\pright$ is mapped to the label $\future$. 
Such restrictions may sometimes be used to ``merge'' labels.
Note that Steps~1 and 2 of the process described in \cref{sec:signatureOverview} is (roughly analogous to) the process of constructing the $(P\rightarrow \past, F\rightarrow \future)$-restriction of some $V(\NinTin, \emptyset)$-containment structure corresponding to a solution.

Before proving results relating to $g$-restrictions, we make the following observations about $g$-restrictions and redundant arcs and vertices, which are straightforward consequences of \cref{def:redundant} and, thus, we omit their proofs.

\begin{observation}\label{obs:redundantQ}
Any arc or vertex~$a$ of $D(N,T)$ is $y$-redundant with respect to $(D(N,T), \emb, \iota)$ if and only if $\iota(Q_a) = \{y\}$ for a set of vertices $Q_a$ that depends only on $D(N,T)$ and~$\emb$:
\begin{compactitem}
  \item If $a$ is a tree arc, then $Q_a = V(a)\cup V(\emb(a))$.
  \item If $a$ is a network arc, then $Q_a = Q_{a'}$ if $Q_a$ is part of a path $\emb(a')$, otherwise $Q_a = V(a)$.
  \item If $a$ is a tree vertex, then $Q_a = a \cup \emb(a) \cup \bigcup_{uv \in A(D(N,T)): a \in \{u,v\}}Q_{uv} $.
  \item If $a$ is a network vertex and $a = \emb(u')$, then $Q_a = Q_{u'} \cup \bigcup_{uv \in A(D(N,T)): a \in \{u,v\}}Q_{uv}$, otherwise  $Q_a = a \cup \bigcup_{uv \in A(D(N,T)): a \in \{u,v\}}Q_{uv}$.
\end{compactitem}
\end{observation}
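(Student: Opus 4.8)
The plan is to prove the biconditional ``$a$ is $y$-redundant with respect to $(D(N,T),\emb,\iota)$ $\Leftrightarrow$ $\iota(Q_a)=\{y\}$'' for all four kinds of object simultaneously, by unfolding \cref{def:redundant} one clause at a time and reading off which vertices the definition forces to receive label $y$. The key preliminary remark is that, although the four clauses of \cref{def:redundant} look mutually recursive, the dependencies in fact flow acyclically in the order tree arc $\to$ network arc $\to$ tree vertex $\to$ network vertex: a tree arc's redundancy constrains only $\iota$ on its endpoints and its embedding path; a network arc's redundancy refers at most to a single tree arc; a tree vertex's redundancy refers only to its incident arcs; and a network vertex's redundancy refers to its incident arcs together with (at most) one tree vertex. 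I would therefore induct following exactly this order, and in each case exhibit $Q_a$ as the union of the vertex sets whose $\iota$-image the definition pins to $y$, verifying along the way that this union is built only from $V(a)$, $\emb$, and previously-constructed $Q$-sets, and hence never mentions $\iota$ — which is precisely the ``$Q_a$ depends only on $D(N,T)$ and $\emb$'' assertion.

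For the two arc cases the verification is essentially a rewriting of the definition. A tree arc $uv$ is $y$-redundant exactly when $\iota$ maps $u$, $v$, and every vertex of $\emb(uv)$ to $y$, which is literally $\iota\big(V(uv)\cup V(\emb(uv))\big)=\{y\}$; this reads off $Q_{uv}=V(uv)\cup V(\emb(uv))$. For a network arc $u'v'$ I would first use arc-disjointness of the embedding paths (\cref{def:embed}\eqref{it:paths disjoint}) to note that \emph{at most one} tree arc $a'$ has $u'v'$ on its path $\emb(a')$, so that setting $Q_{u'v'}=Q_{a'}$ is well defined. If such an $a'$ exists then $u',v'\in V(\emb(a'))\subseteq Q_{a'}$, so the clause $\iota(u')=\iota(v')=y$ is already subsumed, and redundancy of $u'v'$ collapses to redundancy of $a'$, i.e.\ (by the tree-arc case) to $\iota(Q_{a'})=\{y\}$; otherwise redundancy is just $\iota(V(u'v'))=\{y\}$. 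Both outcomes match the stated $Q_{u'v'}$.

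The vertex cases are where the genuine care is required, and I expect them to be the main obstacle. The difficulty is that \cref{def:redundant} demands that a tree vertex $v$ have \emph{both} $v$ and its image $\emb(v)$ incident only to $y$-redundant arcs, whereas $Q_v$ is assembled from $v$, $\emb(v)$, and the $Q$-sets of the arcs incident to $v$; I must therefore argue that, once $\iota(v)=\iota(\emb(v))=y$ and all arcs incident to $v$ are $y$-redundant, every arc incident to $\emb(v)$ is forced to be $y$-redundant too (equivalently, that $Q_v$ already contains the relevant endpoints, so that $Q_v$ coincides with the set $Q_{\emb(v)}$ produced by the network-vertex clause). This is exactly where the embedding-function properties must be used: via \cref{def:embed}\eqref{it:paths disjoint} and \eqref{it:shared nodes}, together with the degree bookkeeping of a containment structure (\cref{def:containment struct}), one shows that each network arc incident to $\emb(v)$ is either an initial or terminal arc of the path $\emb(a)$ of some tree arc $a$ incident to $v$ — and is thus already enclosed in $Q_a\subseteq Q_v$ — or else is a ``free'' arc whose redundancy is controlled by the same degree constraints. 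Granting this, the two incidence conditions collapse to the single requirement $\iota(Q_v)=\{y\}$. The network-vertex case then follows by the same accounting, the extra conjunct ``if $v'=\emb(v)$ then $v$ is $y$-redundant'' being discharged by the already-proved tree-vertex case and recorded through the summand $Q_v$ in $Q_{v'}$; since every $Q$-set used here is again built purely from the display graph and $\emb$, the closing claim of the observation follows.
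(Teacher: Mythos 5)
The paper gives no proof to compare against: it declares \cref{obs:redundantQ} (together with \cref{obs:redundant tree}) a straightforward consequence of \cref{def:redundant} and omits the argument, so your proposal must stand on its own. Your handling of the two arc cases is correct (including the use of arc-disjointness to see that a network arc lies on at most one path $\emb(a')$, which makes $Q_a$ well defined), and the network-vertex case is also fine, since $Q_{\emb(u')}$ explicitly contains both $Q_{u'}$ and the $Q$-sets of all arcs incident to $\emb(u')$.

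The gap is in the tree-vertex case, exactly where you locate ``the main obstacle''. You must show that $\iota(Q_v)=\{y\}$ forces every arc incident to $\emb(v)$ to be $y$-redundant. Your argument covers the arcs of $\emb(v)$ that lie on a path $\emb(a)$ for some $a$ incident to $v$ (correctly: their $Q$-sets equal $Q_a\subseteq Q_v$), but the residual clause --- that a ``free'' arc at $\emb(v)$ ``is controlled by the same degree constraints'' --- is not an argument, and it is false. \cref{def:containment struct} forces $v$ to have two out-arcs only when $\iota(v)\neq\iota(\emb(v))$, which is precisely not the situation in which redundancy is at stake (there $\iota(v)=\iota(\emb(v))=y$), so $\emb(v)$ may have spare degree, i.e.\ an incident arc lying on no path $\emb(a)$ with $a$ incident to $v$. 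Concretely: let $T$ consist of the single arc $uv$ with $v$ a shared leaf, let $N$ have arcs $\emb(u)v$ and $x\,\emb(u)$, let $\emb(uv)$ be the one-arc path $\emb(u)v$, and let $\iota(u)=\iota(\emb(u))=\iota(v)=y$ while $\iota(x)=y'\neq y$. This is a legitimate containment structure; $Q_u=\{u,\emb(u),v\}$ as the observation defines it, so $\iota(Q_u)=\{y\}$, yet $u$ is not $y$-redundant because the free arc $x\,\emb(u)$ at $\emb(u)$ is not. So the direction ``$\iota(Q_a)=\{y\}$ implies $a$ is $y$-redundant'' fails for tree vertices, and no completion of your sketch can rescue the clause as written: one must either enlarge $Q_v$ by $\bigcup_{e:\,\emb(v)\in V(e)}Q_e$ (i.e.\ take $Q_v=Q_{\emb(v)}$, which is also what makes \cref{obs:redundant tree} consistent), or add a hypothesis excluding free arcs at $\emb(v)$. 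Your proof supplies neither.
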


\begin{observation}\label{obs:redundant tree}
If a tree arc $uv$ is $y$-redundant then so is every arc in $\emb(uv)$. 
A tree vertex $u$ is $y$-redundant if and only if $\emb(u)$ is $y$-redundant.
\end{observation}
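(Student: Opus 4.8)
The plan is to unfold \cref{def:redundant} directly, relying on exactly two structural properties of the embedding function to discharge the existential clauses that appear in the network-side definitions of redundancy. Both claims then follow by routine substitution; the only work is spotting where arc-disjointness and injectivity of $\emb$ are needed to prevent spurious obligations.

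For the first claim, I would suppose the tree arc $uv$ is $y$-redundant, so that $\iota(u)=\iota(v)=y$ and $\iota(w)=y$ for every vertex $w$ on the path $\emb(uv)$. Fixing an arbitrary arc $u'v'$ of $\emb(uv)$, I would check the two conditions for a network arc to be $y$-redundant. The first, $\iota(u')=\iota(v')=y$, is immediate since $u'$ and $v'$ are vertices on $\emb(uv)$. For the second, I would invoke arc-disjointness of the embedding paths (\cref{def:embed}\eqref{it:paths disjoint}): the arc $u'v'$ lies on $\emb(pq)$ for \emph{at most one} tree arc $pq$, and since it lies on $\emb(uv)$, that unique tree arc is $uv$ itself, which is $y$-redundant by hypothesis. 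Hence $u'v'$ is $y$-redundant.

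For the second claim I would treat the two directions separately. The direction ``$\emb(u)$ $y$-redundant $\Rightarrow$ $u$ $y$-redundant'' is immediate from \cref{def:redundant}: the definition of a $y$-redundant network vertex $\emb(u)$ already contains, via its last clause applied to $v=u$ (so that $\emb(u)=\emb(v)$), the requirement that $u$ itself be $y$-redundant. For the converse, suppose the tree vertex $u$ is $y$-redundant, so $\iota(u)=\iota(\emb(u))=y$ and both $u$ and $\emb(u)$ are incident only to $y$-redundant arcs. I would then verify the three conditions for $\emb(u)$ to be a $y$-redundant network vertex: $\iota(\emb(u))=y$ and ``$\emb(u)$ incident only to $y$-redundant arcs'' are given directly, and for the remaining clause I would use injectivity of $\emb$ (\cref{def:embed}\eqref{it:injective}): if $\emb(u)=\emb(v)$ for some tree vertex $v$, then $v=u$, which is $y$-redundant.

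The only point requiring care -- and the closest thing to an obstacle -- is that the network-side redundancy definitions quantify over ``some tree arc'' and ``some tree vertex'', which might appear to import conditions coming from unrelated tree elements. Arc-disjointness of the embedding paths and injectivity of $\emb$ are precisely what collapse these quantifiers to the single relevant tree element, ensuring no extraneous conditions arise; everything else is a direct reading of \cref{def:redundant}.
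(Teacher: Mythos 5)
Your proof is correct, and it supplies exactly the routine verification that the paper omits (the paper states this observation as a ``straightforward consequence'' of \cref{def:redundant} without proof). Your identification of arc-disjointness (\cref{def:embed}\eqref{it:paths disjoint}) and injectivity (\cref{def:embed}\eqref{it:injective}) as the two facts needed to collapse the existential clauses in the network-side definitions is precisely the right observation.
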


\begin{lemma}\label{cor:restrictionIsAStructure}
Let $\chi$ be an $(S, \Y)$-containment structure,
let $S'\subseteq S$, and
let $\chi'$ be the $g$-restriction of $\chi$ for some restriction function $g:S\cup\Y \to S'\cup\Y$.
Then, $\chi'$ is an $(S',\Y)$-containment structure.
\end{lemma}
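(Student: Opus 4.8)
The plan is to check, one clause at a time, that $\chi' = (D(N',T'), \emb', \iota')$ meets every requirement of \cref{def:containment struct}. Before starting, I would record one workhorse observation that does almost all of the work: by \cref{def:redundant}, every notion of $y$-redundancy requires the relevant $\iota_g$-labels to equal a single $y \in \Y$. Hence no vertex $u$ with $\iota_g(u)\in S'$, and no arc both of whose endpoints carry $S'$-labels under $\iota_g$, is ever redundant; and since a redundant vertex has, by definition, only redundant incident arcs, such an $S'$-labelled vertex keeps \emph{all} of its incident arcs in the restriction. I would also note that, because $g$ fixes $S'$ and maps everything else into $\Y$, we have $\iota_g(u)\in S'$ if and only if $\iota(u)\in S'$, in which case $\iota(u)=\iota_g(u)$.

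The display-graph axioms are immediate, since $D(N',T')$ is obtained from the display graph $D(N,T)$ purely by deleting arcs and vertices, and each defining property (the tree side being an out-forest, the in-/out-/total-degree bounds, and the conditions on shared leaves) is inherited by any such subgraph. For the embedding function $\emb'$, all of \cref{def:embed} except~\eqref{it:node&path} restricts directly from $\emb$ to the surviving tree arcs and vertices; the point to verify is that each surviving tree arc $uv\in A(T')$ still has its image path inside $N'$. This follows from \cref{def:redundant}: a network arc of $\emb(uv)$ is $y$-redundant only if $uv$ itself is $y$-redundant, so no arc of $\emb(uv)$ is deleted when $uv$ survives, whence its internal vertices survive too (they meet a surviving arc), and by \cref{obs:redundant tree} the endpoints $\emb(u),\emb(v)$ survive because $u,v$ do. Thus $\emb'(uv)=\emb(uv)$ remains a directed $\emb(u)$--$\emb(v)$-path in $N'$, and the remaining properties restrict from $\emb$.

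For the isolabelling $\iota'$, surjectivity onto $S'$ follows from the workhorse observation: each $v\in S'$ has a preimage $u$ under $\iota$, this $u$ has $\iota_g(u)=v\in S'$, and is therefore not redundant and survives. The match, injectivity and isomorphism clauses of \cref{def:isolabelling} then transfer from $\iota$ to $\iota'$ on vertices labelled in $S'$, using additionally that an arc joining two such vertices is never redundant, so it lies in $D(N',T')$ exactly when its $\NinTin$-image is present.

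Finally I would verify the two extra conditions of \cref{def:containment struct}. Condition~(a) is exactly where the workhorse observation is decisive: a vertex $u$ with $\iota'(u)\in S'$ retains every incident arc in the restriction, so its in- and out-degrees are unchanged from $D(N,T)$, where they already matched those of $\iota(u)$ in $\NinTin$. For condition~(b), suppose $u\in T'$ with $\iota'(u)\neq\iota'(\emb'(u))$; then $\iota_g(u)\neq\iota_g(\emb(u))$, which forces $\iota(u)\neq\iota(\emb(u))$, so condition~(b) for $\chi$ gives $u$ two out-arcs in $D(N,T)$. Neither can be redundant: a tree arc $uw$ out of $u$ is $y$-redundant only if $\iota_g$ equals $y$ on all of $V(uw)\cup V(\emb(uw))$, in particular on the path's start $\emb(u)$, which would give $\iota_g(u)=\iota_g(\emb(u))$ --- a contradiction. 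Hence both out-arcs persist into $D(N',T')$. The one place demanding genuine care is condition~(a): I must be certain that deletions triggered elsewhere never strip an arc at an $S'$-vertex, and this is precisely guaranteed by redundancy being a ``uniform-$\Y$-label'' phenomenon together with the fact that a redundant vertex carries off only redundant arcs, so the deletion is consistent and $S'$-neighbourhoods stay intact.
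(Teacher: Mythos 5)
Your proof is correct and follows essentially the same route as the paper's: both verify the five clauses of the containment-structure definition one by one, with the key points being that an $S'$-labelled vertex never loses an incident arc (giving the degree condition and surjectivity of $\iota'$) and that $\iota'(u)\neq\iota'(\emb'(u))$ forces $\iota(u)\neq\iota(\emb(u))$ and protects the out-arcs of $u$ via the fact that $\emb(a)$ starts at $\emb(u)$. Your explicit ``workhorse observation'' just makes up front what the paper's proof uses implicitly in each clause.
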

\begin{proof}
  To show that $\chi=(D(N,T),\emb,\iota)$ being an $(S,\Y)$-containment structure implies
  $\chi'=(D(N',T'),\emb',\iota')$ being an $(S',\Y)$-containment structure,
  we verify the five conditions of \cref{def:containment struct} individually.
  
  \begin{compactdesc}
    \item[$D(N',T')$ is a display graph.]
        As $D(N',T')$ was derived from $D(N,T)$ by deleting a subset of arcs and vertices,
        it is clear that $D(N',T')$ remains a display graph.
    \item[$\emb'$ is an embedding function.]
        As $\emb'$ is the restriction of $\emb$ to $T'$, and no arc in a path $\emb(uv)$ was deleted unless the arc $uv$ was deleted as well
        (and similarly for vertices $\emb(u)$),
        $\emb'$ is still a function that maps vertices of $T'$ to vertices of $N'$ and arcs of $T'$ to directed paths of $N'$.
        The other properties of an embedding function (such as all paths being arc-disjoint) follow immediately
        from the fact that these properties hold for $\emb$.
        Thus, $\emb'$ is an embedding function on $D(N',T')$.
    \item[$\iota'$ is an $(S',\Y)$-isolabelling.]
        Note that the properties of an isolabelling follow immediately from construction of $\iota'$,
        the fact that $g(v)=v$ for all $v\in S'$ and the fact that $\iota$ is an $(S,\Y)$-isolabelling.
        To see that the image of $\iota'$ in $\NinTin$ contains $S'$,
        we prove that, for every $z \in S'$, there is some $u \in V(D(N',T'))$ for which $\iota'(u) = z$.
        To see this, consider the vertex $u \in V(D(N,T))$ for which $\iota(u) = z$.
        Since $z \in S'$ we then have $g(\iota(u)) = g(z) = z \notin \Y$ and, thus, $u$ cannot become redundant in the construction of $\chi'$.
        Hence, $u$ is also a vertex of $D(N',T')$, and $\iota'(u) = g(\iota(u)) = z$ as required. 
    \item[Same degrees in $D(N',T')$ as in $\NinTin$.]
        Let $u\in V(D(N,T))$ with $\iota(u) \in S'$.
        Observe that no arc incident to $u$ is deleted when constructing $D(N',T')$ and,
        thus, $u$ has the same in-and out-degrees in $D(N',T')$ as it does in $D(N,T)$,
        that is, the same in- and out-degrees as $\iota(u) = \iota'(u)$.
    \item[Out-arcs in $D(N',T')$.]
        Let $u\in V(T')$ with $\iota'(u) \neq \iota'(\emb'(u))$.
        In particular, there is no $y \in \Y$ with $\iota'(u) = \iota'(\emb'(u)) = y$.
        Then, no arc incident to $u$ was deleted (as for any such arc $a$ the path $\emb(a)$ contains $\emb(u)$ by \cref{def:embed}\eqref{it:node&path}) and,
        so, $u$ has the same out-degree in $D(N',T')$ as in $D(N,T)$.
        Moreover, $u$ has out-degree~$2$ in $D(N,T)$ since $\iota(u) \neq \iota(\emb(u))$
        (otherwise $\iota'(u) = \iota'(\emb'(u))$ by construction of $\emb'$ and $\iota'$) and, thus, in $D(N',T')$.\qedhere
  \end{compactdesc}
\end{proof}

\begin{lemma}[transitivity of restrictions]\label{lem:transitivity-of-restrictions}
  Let $\chi$ be an $(S,\Y)$-containment structure for some $S$ and $\Y$,
  let $S''\subseteq S'\subseteq S$ and let
  $g':S\cup \Y \to S'\cup \Y$ and $g'':S'\cup \Y \to S'' \cup \Y$ be restriction functions.
  Let $\chi'$ be the $g'$-restriction of $\chi$ and
  let $\chi''$ be the $g''$-restriction of $\chi'$.
  Then, $\chi''$ is the $(g''\circ g')$-restriction of $\chi$.
\end{lemma}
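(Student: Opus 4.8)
The plan is to compare $\chi''$ directly against the $(g''\circ g')$-restriction $\tilde\chi=(D(\tilde N,\tilde T),\tilde\emb,\tilde\iota)$ of $\chi$ and show the two coincide. Writing $\iota_{g'}:=g'\circ\iota$ and $\iota_{g''\circ g'}:=(g''\circ g')\circ\iota=g''\circ\iota_{g'}$ for the two relabellings of $V(D(N,T))$, I would first observe that on every surviving vertex the isolabelling of $\chi''$ already equals $\iota_{g''\circ g'}$ (since $\iota'=\iota_{g'}$ on $D(N',T')$ and $\iota''=g''\circ\iota'$), and that both $\emb''$ and $\tilde\emb$ are merely the restriction of $\emb$ to the surviving tree part. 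Consequently the whole statement reduces to showing that forming $D(N'',T'')$ and forming $D(\tilde N,\tilde T)$ delete \emph{exactly the same} arcs and vertices of $D(N,T)$. Writing $R_\mu$, $R''$, and $R_\nu$ for the sets deleted when passing $\chi\to\chi'$, $\chi'\to\chi''$, and $\chi\to\tilde\chi$ respectively (so that $R''\subseteq D(N',T')=D(N,T)\setminus R_\mu$ and hence $R_\mu,R''$ are disjoint), it suffices to prove $R_\nu=R_\mu\sqcup R''$.

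By \cref{obs:redundantQ}, an element $a$ lies in $R_\mu$ (resp.\ $R_\nu$) iff $\iota_{g'}(Q_a^{D})$ (resp.\ $\iota_{g''\circ g'}(Q_a^{D})$) is a single label of $\Y$, where $Q_a^{D}$ is the vertex set associated to $a$ in the full graph $D(N,T)$; and $a\in R''$ iff $a$ survives $R_\mu$ and $\iota_{g''\circ g'}(Q_a^{D'})$ is a single $\Y$-label, where $Q_a^{D'}$ is computed in $D(N',T')$. The inclusion $R_\mu\subseteq R_\nu$ is immediate: if $\iota_{g'}(Q_a^{D})=\{z\}$ with $z\in\Y$, then $\iota_{g''\circ g'}(Q_a^{D})=\{g''(z)\}$, and $g''(z)\in\Y$ because $g''$ maps $\Y$ into $\Y$. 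The only nontrivial task that remains is to show that for every element $a$ surviving $R_\mu$ one has $a\in R_\nu\iff a\in R''$.

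For surviving $a$ this equivalence would follow at once from the identity $\iota_{g''\circ g'}(Q_a^{D})=\iota_{g''\circ g'}(Q_a^{D'})$, and this is where the main obstacle lies: the set $Q_a$ is graph-dependent for \emph{vertices}, since it unions the $Q$-sets of the arcs incident to $a$, and a surviving vertex may nonetheless have lost incident arcs to $R_\mu$. (For arcs there is no issue: a surviving tree arc keeps its embedding path and a surviving network arc keeps its defining tree arc, if any, so $Q_a^{D}=Q_a^{D'}$ directly; here I would use \cref{obs:redundant tree} to see that the arcs of $\emb(uv)$ survive exactly when $uv$ does.) The key observation that neutralises the obstacle is that each deleted incident arc $b$ of $a$ satisfies $a\in Q_b^{D}$ and $\iota_{g'}(Q_b^{D})=\{z\}$ for a single $z\in\Y$; hence $z=\iota_{g'}(a)$, so every vertex of $Q_b^{D}$, and in particular every vertex of $Q_a^{D}\setminus Q_a^{D'}$, already carries the label $\iota_{g''\circ g'}(a)=g''(z)$, which is present in $\iota_{g''\circ g'}(Q_a^{D'})$ because $a\in Q_a^{D'}$.

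I would package this as a structural induction over the recursive description of $Q$ in \cref{obs:redundantQ}, in the order tree arcs, network arcs, tree vertices, network vertices, proving $\iota_{g''\circ g'}(Q_a^{D})=\iota_{g''\circ g'}(Q_a^{D'})$ for all surviving $a$; the network-vertex case additionally needs the inductive fact that the tree vertex $u'$ with $\emb(u')=a$ survives exactly when $a$ does (\cref{obs:redundant tree}), so its contribution $Q_{u'}^{D}$ is already controlled by the tree-vertex case and satisfies $\iota_{g''\circ g'}(Q_{u'}^{D})=\iota_{g''\circ g'}(Q_{u'}^{D'})\subseteq\iota_{g''\circ g'}(Q_a^{D'})$. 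Granting this equality, a surviving $a$ satisfies $a\in R_\nu$ iff $\iota_{g''\circ g'}(Q_a^{D'})$ is a single $\Y$-label iff $a\in R''$, which together with $R_\mu\subseteq R_\nu$ and the disjointness of $R_\mu$ and $R''$ yields $R_\nu=R_\mu\sqcup R''$. As noted in the first paragraph, agreement of the deleted sets forces $D(N'',T'')=D(\tilde N,\tilde T)$ and then $\iota''=\tilde\iota$ and $\emb''=\tilde\emb$, so $\chi''=\tilde\chi$. The main subtlety to get right is precisely the graph-dependence of the vertex $Q$-sets and the label-matching argument that shows the ``extra'' vertices introduce no new labels.
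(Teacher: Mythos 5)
Your proposal is correct and follows essentially the same route as the paper: both reduce the claim to showing that the two-step and one-step constructions delete exactly the same arcs and vertices, using \cref{obs:redundantQ} together with the identity $g''\circ\iota_{g'}=\iota_{g''\circ g'}$ on surviving elements. The only difference is in handling the graph-dependence of the vertex $Q$-sets: the paper first settles the arcs and then only needs to treat isolated vertices, for which $Q_v$ collapses to a two-element set, whereas you keep the full recursive $Q_a$ and argue that the vertices contributed by deleted incident arcs all carry the label $\iota_{g''\circ g'}(a)$ and hence add nothing to the label set --- both are valid.
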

\begin{proof}
  First, we show that $g:=g''\circ g'$ is a restriction function.
  To this end, let $v\in S\cup\Y$.
  If $v\in S''$, then $v\in S'\cap S$ and, thus, $g(v) = g''(g'(v)) = g''(v) = v$ since both $g'$ and $g''$ are restriction functions.
  Otherwise, $v\notin S''$ and either $v\in S'$, in which case $g'(v) = v \notin S''$, or $g'(v) \in \Y$.
  In either case $g'(v)\notin S''$ and so $g(v)=g''(g'(v)) \in \Y$.

  In the following,
  let $\chi'=(D(N',T'),\emb',\iota')$,
  let $\chi''=(D(N'',T''),\emb'',\iota'')$ and
  let $\iota_{g'}$ be as described in \cref{def:restrict},
  that is, $\iota_{g'}(u) = g'(\iota(u))$ for all $u \in V(D(N, T))$
  (and analogously for $\iota'_{g''}$ and $\iota_g$).
  Then, by \cref{def:restrict}, $\iota' = \iota_{g'}$ and $\iota'' = \iota'_{g''}$.
  Note that, for any $u\in V(D(N',T'))$,
  \begin{equation}
      \iota'_{g''}(u) = g''(\iota'(u)) = g''(\iota_{g'}(u)) = g''(g'(\iota(u))) = g(\iota(u)) = \iota_g(u).
      \label{eq:g and g''}
  \end{equation}

  \medskip
  Next, we show that $\chi''$ equals the $g$-restriction $\chi^* = (D(N^*,T^*), \emb^*, \iota^*)$ of $\chi$.
  To this end, we first prove that $D(N^*,T^*) = D(N'',T'')$.
  As both display graphs are subgraphs of $D(N,T)$, it is enough to show that any arc or vertex is deleted in the construction of $D(N^*,T^*)$ from $D(N,T)$ if and only if it is deleted in the construction of $D(N',T')$ from $D(N,T)$ or of $D(N'',T'')$ from $D(N',T')$.

  \begin{compactdesc}
  \item[Arcs:]
  Let $uv$ be an arc of $D(N,T)$.
  If $uv\in A(T)$, then let $Q_{uv} := \{u,v\} \cup V(\emb(uv))$
  (that is, $Q_{uv}$ contains $u$, $v$ and all vertices in the path $\emb(uv)$).
  If $uv$ is an arc of $N$, then let $Q_{uv}$ be the set containing $u$ and $v$ together with any $u'$ and $v'$ for which $uv$ is in the path $\emb(u'v')$ (by \cref{def:embed}, there is at most one such arc $u'v'$).
  By Observation~\ref{obs:redundantQ}, $uv$ is deleted in the construction of $D(N^*,T^*)$
  if and only if
  there is some $y\in Y$ with $\iota_g(Q_{uv}) = \{y\}$.

  First, assume that $uv$ is in $D(N'',T'')$ but not in $D(N^*,T^*)$. 
  Since $D(N'',T'')$ is a subgraph of $D(N',T')$, we know that $uv$ is also in $D(N',T')$.
  If $uv\in A(T)$, then all vertices of $\emb'(uv)=\emb(uv)$ are still in $D(N',T')$ as $\emb'$ is an embedding function on $D(N',T')$.
  If $uv\in A(N)$ and $Q_{uv} = \{u,v,u',v'\}$ and any of $u'$ and $v'$ is not in $D(N',T')$,
  then the arc $u'v'\in A(T)$ was deleted in the construction of $D(N',T')$ and,
  by Observation~\ref{obs:redundant tree}, so was $uv$, contradicting $uv$ being in $D(N',T')$.
  Thus, all vertices of $Q_{uv}$ are in $D(N',T')$ and, by \eqref{eq:g and g''}, $\iota'_{g''}(Q_{uv}) = \iota_g(Q_{uv})= \{y\}$ for some $y\in\Y$.
  Then $uv$ is deleted in the construction of $D(N'',T'')$.

  Second, assume that $uv$ is in $D(N^*,T^*)$ but not in $D(N'',T'')$.
  If $uv$ is in $D(N',T')$ then $uv$ is redundant with respect to $(D(N',T'),\emb',\iota')$,
  that is, $Q_{uv} \subseteq V(D(N',T'))$ and, by Observation~\ref{obs:redundantQ},
  there is some $y\in \Y$ with $\{y\} = \iota'_{g''}(Q_{uv}) = \iota_g(Q_{uv})$
  (using \eqref{eq:g and g''}).
  But then, $uv$ is also deleted in the construction of $D(N^*, T^*)$.
  If $uv$ is not in $D(N',T')$, then $\iota_{g'}(Q_{uv}) = \{y\}$ for some $y \in \Y$,
  implying $\iota_{g}(Q_{uv}) = g(\iota(Q_{uv})) = g''(g'(\iota(Q_{uv})) = g''(\iota_{g'}(Q_{uv})) = g''(\{y\}) = \{g''(y)\}$. As $y\in\Y$ implies $g''(y) \in \Y$, we know that $uv$ is deleted in the construction of $D(N^*, T^*)$.

  \item[Vertices:]
  Since each arc of $D(N,T)$ is in $D(N^*,T^*)$ if and only if it is in $D(N'',T'')$,
  all vertices with at least one incident arc in $D(N^*,T^*)$ are in $D(N'',T'')$ and vice versa.
  It remains to consider the isolated vertices. 
  To this end, let $v\in V(D(N,T))$ with no incident arcs in $D(N^*,T^*)$ and
  let $Q_v := \{v, \emb(v)\}$ if $v\in V(T)$ and
  let $Q_v$ be the set containing $v$ and any $u$ such that $v = \emb(u)$ if $v\in V(N)$.
  By definition of redundant vertices, either both vertices of $Q_v$ are deleted in the construction of $D(N^*,T^*)$  or neither is. As $D(N^*,T^*)$ and $D(N'',T'')$ have the same arcs, we may assume neither element of $Q_v$ has any incident arcs.
  Then, $v$ is deleted in the construction of $D(N^*,T^*)$ if and only if $\iota_g(Q_v) = \{y\}$ for some $y \in \Y$.
  Now, if $Q_v$ intersects $V(D(N',T')$, then $Q_v\subseteq V(D(N',T'))$.
  Then, by \eqref{eq:g and g''}, $\iota'_{g''}(Q_v) = \iota_g(Q_v)$ implying that
  $v$ is in $D(N'',T'')$ if and only if it is in $D(N^*,T^*)$.
  Otherwise, $Q_v\cap V(D(N',T')) = \emptyset$ (in particular, $v \notin V(D(N',T'))$),
  implying $\iota_{g'}(Q_v) = \{y\}$ for some $y \in \Y$.
  As in the Arc-case, $\iota_g(Q_v) = \{g''(y)\} = \{y'\}$ for some $y'\in \Y$.
  Thus, $v$ is neither in $D(N^*,T^*)$ nor in $D(N'',T'')$.
  \end{compactdesc}
  Since $\emb^*$ is the restriction of $\emb$ to $T^*$ and
  $\emb''$ is the restriction of $\emb$ to $T''$ (via $\emb'$), we also have $\emb^* = \emb''$.
  By \eqref{eq:g and g''}, $\iota'_{g''}(u) = \iota_g(u)$ for any vertex $u$ in $D(N',T')$ and,
  thus, $\iota''(u) = \iota^*(u)$ for any vertex $u$ in $D(N'',T'')$
  (as $\iota''$ and $\iota^*$ are just restrictions of $\iota'_{g''}$ and $\iota_g$ respectively).
  Therefore, $\iota^* = \iota''$ and we have proved that $\chi^* = \chi''$, as required.
\end{proof}


\subsection{Well-behaved containment structures}\label{sec:wellBehaved}

At this point, we note some additional properties that it will be helpful to assume for $(S, \Y)$-containment structures.
These properties are summarized in the concept of ``well-behavedness'' and,
in what follows, we will restrict our attention to such containment structures.
While not directly implied by the definition of a containment structure,
the properties effectively ensure that containment structures behave ``as expected''. In particular, our dynamic programming algorithm will work with well-behaved signatures.

\begin{definition}\label{def:wellBehaved}
An $(S, \Y)$-containment structure $(D(N,T),\emb, \iota)$ is called \emph{well-behaved} if
\begin{compactenum}[(a)]
    \item\label{it:WBnoRedundancy} $D(N,T)$ contains no redundant arcs or vertices;
    \item\label{it:WBNoLabelCrossings} For all arcs $uv$ in $D(N,T)$ with $\iota(u),\iota(v)\in\Y$, we have $\iota(u)=\iota(v)$;
    \item\label{it:WBSPaths} For all $u,v\in V(D(N,T))$ such that
    $\iota(u), \iota(v)\in S$ and
    $D(N,T)$ has a $u$-$v$-path,
    $\NinTin$ has an $\iota(u)$-$\iota(v)$-path.
\end{compactenum}
\end{definition}

Here we prove a number of properties of well-behaved containment structures, that allow us to focus on them going forward.
\begin{lemma}\label{lem:top-level-WB}
  Let $\chi:=(D(N,T), \emb, \iota)$ be a $(V(\NinTin), \Y)$-containment structure with $\iota^{-1}(\Y) = \emptyset$.
  Then, $\chi$ is well-behaved.
\end{lemma}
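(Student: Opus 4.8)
The plan is to lean entirely on the hypothesis $\iota^{-1}(\Y) = \emptyset$, which states that no vertex of $D(N,T)$ carries a label from $\Y$; equivalently, $\iota$ maps every vertex into $S = V(\NinTin)$. I would then check the three conditions of \cref{def:wellBehaved} one by one, and I expect each to follow almost immediately from this single observation.

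For condition \eqref{it:WBnoRedundancy}, I would unwind \cref{def:redundant}: in every clause, an arc or vertex is declared $y$-redundant only when the relevant vertices $v$ satisfy $\iota(v) = y$ for some $y \in \Y$. Since $\iota^{-1}(\Y) = \emptyset$, no vertex satisfies $\iota(v) = y$ for any $y \in \Y$, so nothing is $y$-redundant for any $y$, and hence $D(N,T)$ contains no redundant arcs or vertices. Condition \eqref{it:WBNoLabelCrossings} is vacuous for the same reason: it ranges over arcs $uv$ with $\iota(u), \iota(v) \in \Y$, and no such arcs exist.

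The only condition carrying real content is \eqref{it:WBSPaths}. Here I would reuse the observation from the proof of \cref{lem:TC equiv} that, when $\iota^{-1}(\Y) = \emptyset$, the isolabelling $\iota$ is a bijection onto $V(\NinTin)$ and, by \cref{def:isolabelling}\eqref{it:isomorph}, an isomorphism from $D(N,T)$ onto $\NinTin$. Since $S = V(\NinTin)$, the hypotheses $\iota(u), \iota(v) \in S$ hold automatically for all vertices, and an isomorphism maps any $u$-$v$-path in $D(N,T)$ to an $\iota(u)$-$\iota(v)$-path in $\NinTin$ (applying \cref{def:isolabelling}\eqref{it:isomorph} arc by arc along the path), which is exactly what \eqref{it:WBSPaths} demands.

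I do not anticipate a genuine obstacle: the whole statement dissolves once one notices that the $\Y$-labels are absent, so that \eqref{it:WBnoRedundancy} and \eqref{it:WBNoLabelCrossings} become trivial and \eqref{it:WBSPaths} reduces to the fact that $\iota$ is an isomorphism. The only point worth stating carefully is that \cref{def:isolabelling}\eqref{it:isomorph} is an ``if and only if'' on arcs, which guarantees that $\iota$ transports paths faithfully and in the required orientation.
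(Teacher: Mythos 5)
Your proof is correct and follows essentially the same route as the paper's: the paper likewise observes that $\iota^{-1}(\Y)=\emptyset$ makes $\iota$ an isomorphism onto $\NinTin$ (giving condition~(c)) and renders conditions~(a) and~(b) immediate. You simply spell out the details the paper leaves implicit.
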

\begin{proof}
  Since $\iota^{-1}(\Y)=\emptyset$ and $S=V(\NinTin)$, we know that $\iota$ is an isomorphism between $D(N,T)$ and $V(\NinTin)$, implying \cref{def:wellBehaved}\eqref{it:WBSPaths},
  while~\eqref{it:WBnoRedundancy} and~\eqref{it:WBNoLabelCrossings} are direct consequences of $\iota^{-1}(\Y)=\emptyset$.
\end{proof}

\begin{lemma}\label{lem:WBrestriction}
  Let $\chi = (D(N,T), \emb, \iota)$ be a well-behaved $(S, \Y)$-containment structure and
  let $g$ be a restriction function such that,
  for all arcs $uv$ of $D(N,T)$ with $g(\iota(u)),g(\iota(v))\in\Y$, we have $g(\iota(u))=g(\iota(v))$.
  Then, the $g$-restriction~$\sigma$ of $\chi$ is also well-behaved.
\end{lemma}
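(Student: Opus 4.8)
The plan is to invoke \cref{cor:restrictionIsAStructure}, which already tells us that the $g$-restriction $\sigma = (D(N',T'),\emb',\iota')$ is an $(S',\Y)$-containment structure; it therefore only remains to verify the three conditions \eqref{it:WBnoRedundancy}--\eqref{it:WBSPaths} of \cref{def:wellBehaved}. I would dispatch the label-crossing condition~\eqref{it:WBNoLabelCrossings} and the path condition~\eqref{it:WBSPaths} first, since they follow almost immediately, and leave the no-redundancy condition~\eqref{it:WBnoRedundancy} for last, as it is the crux.

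For~\eqref{it:WBNoLabelCrossings}, recall that every arc $uv$ of $D(N',T')$ is also an arc of $D(N,T)$ (since $D(N',T')$ is obtained from $D(N,T)$ by deletions) and that $\iota'(w) = g(\iota(w))$ for every surviving vertex $w$. Hence, if $\iota'(u),\iota'(v)\in\Y$ then $g(\iota(u)),g(\iota(v))\in\Y$, and the hypothesis on $g$ gives $\iota'(u)=g(\iota(u))=g(\iota(v))=\iota'(v)$. For~\eqref{it:WBSPaths}, I would first observe that, since $g$ is the identity on $S'$ and sends everything outside $S'$ into $\Y$, the preimage $g^{-1}(S')$ equals $S'$; thus $\iota'(u)\in S'$ forces $\iota(u)=\iota'(u)\in S'\subseteq S$, and likewise for $v$. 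Any $u$-$v$-path in $D(N',T')$ is also a $u$-$v$-path in $D(N,T)$, so condition~\eqref{it:WBSPaths} for the well-behaved $\chi$ yields an $\iota(u)$-$\iota(v)$-path, i.e.\ an $\iota'(u)$-$\iota'(v)$-path, in $\NinTin$.

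The main obstacle is condition~\eqref{it:WBnoRedundancy}. The difficulty is that \cref{def:restrict} deletes arcs and vertices redundant with respect to the \emph{fixed} structure $(D(N,T),\emb,\iota_g)$, whereas~\eqref{it:WBnoRedundancy} demands the absence of arcs and vertices redundant with respect to the \emph{final} structure $\sigma$. My plan is to show these two notions of redundancy agree on everything that survives. The key tool is \cref{obs:redundantQ}, which characterizes $y$-redundancy of an arc or vertex $a$ as $\iota(Q_a)=\{y\}$ for a set $Q_a$ determined only by $D(N,T)$ and $\emb$. I would first handle arcs: for a surviving arc $a$, the set $Q_a$ is unchanged, and using \cref{obs:redundant tree} together with the fact (from \cref{obs:redundantQ}) that every network arc on a path $\emb(uv)$ shares its $Q$-set with $uv$, one checks that every vertex of $Q_a$ survives as well; since $\iota'$ agrees with $\iota_g$ on survivors and $a$ was not deleted, $\iota'(Q_a)$ is not a singleton subset of $\Y$, so $a$ is not redundant in $\sigma$.

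It then remains to rule out redundant surviving \emph{vertices}. A surviving vertex incident to a surviving arc cannot be redundant, because vertex-redundancy requires all incident arcs to be redundant and no surviving arc is redundant in $\sigma$; so the only danger is an isolated survivor. I would argue by contradiction: suppose a surviving tree vertex $v$ is redundant in $\sigma$. Then $v$ is isolated in $D(N',T')$ with $\iota'(v)=\iota'(\emb'(v))=y$, which by \cref{obs:redundant tree} guarantees $\emb(v)$ also survives (so $\emb'(v)=\emb(v)$) and forces every arc incident to $v$ or $\emb(v)$ in $D(N,T)$ to have been deleted, hence to be redundant with respect to $(D(N,T),\emb,\iota_g)$. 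But then $v$ already meets the definition of a redundant tree vertex there and would have been deleted, a contradiction. The network-vertex case is analogous, invoking the tree-vertex conclusion for the clause $v'=\emb(v)$, and the possibility of an isolated tree vertex with $\iota'(v)\neq\iota'(\emb'(v))$ is excluded by \cref{cor:restrictionIsAStructure}, since such a vertex must have two out-arcs. I expect~\eqref{it:WBnoRedundancy} to require by far the most care, precisely because of this mismatch between redundancy relative to the pre- and post-deletion structures.
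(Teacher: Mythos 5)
Your proposal is correct and follows essentially the same route as the paper's proof: verify the three conditions of \cref{def:wellBehaved} directly, with \eqref{it:WBNoLabelCrossings} coming from the hypothesis on $g$ and \eqref{it:WBSPaths} from the fact that $D(N',T')$ is a subgraph of $D(N,T)$ and $\iota'$ agrees with $\iota$ on vertices mapped into $S'$. The only difference is that for \eqref{it:WBnoRedundancy} the paper simply asserts that redundant elements are deleted by \cref{def:restrict}, whereas you carefully justify (via \cref{obs:redundantQ} and \cref{obs:redundant tree}) that redundancy with respect to the pre-deletion structure $(D(N,T),\emb,\iota_g)$ and with respect to the final structure $\sigma$ agree on surviving elements — a gap the paper leaves implicit and which your argument correctly closes.
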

Essentially, this condition says that relabelling according to $g$ does not immediately create any arcs $uv$ for which $u$ and $v$ are labelled with different elements of $\Y$.

\begin{proof}
  We will show that $\sigma = (D(N',T'), \emb', \iota' = \iota\circ g)$ satisfies \cref{def:wellBehaved}\eqref{it:WBnoRedundancy}-\eqref{it:WBSPaths}.
  
  \begin{compactdesc}
    \item[\eqref{it:WBnoRedundancy}]
      follows by \cref{def:restrict},
      as redundant arcs and vertices are deleted when constructing~$\sigma$.
    \item[\eqref{it:WBNoLabelCrossings}:]
      Let $uv$ be an arc of $D(N',T')$ with $\iota'(u),\iota'(v)\in\Y$.
      Then, by the condition of the lemma, $\iota'(u)=\iota'(v)$.
    \item[\eqref{it:WBSPaths}:]
      Let $u,v \in V(D(N',T'))$ such that $\iota'(u), \iota'(v) \notin\Y$
      (that is, $\iota'(u),\iota'(v)\in S'$ for $S':=\operatorname{img}(g)\setminus\Y$) and
      there is a $u$-$v$-path~$p$ in $D(N',T')$.
      Then, by \cref{def:restrict}, $\iota(u) = g(\iota(u))=\iota'(u)$ and, similarly, $\iota(v) = \iota'(v)$.
      Moreover, as $D(N',T')$ is a subgraph of $D(N,T)$, the latter also contains~$p$.
      Finally, since $\chi$ is well-behaved, there is a path in $\NinTin$ from $\iota(u)=\iota'(u)$ to $\iota(v)=\iota'(v)$.\qedhere
  \end{compactdesc}
\end{proof}

\subsection{Partial solution and valid signatures}\label{sec:partialSolution}

As with any dynamic programming algorithm, we need some way to decide which signatures are ``correct''
before we have actually found a solution.
As such, we need a notion of a ``partial solution''.
Much as we may think of a signature for a bag~$(P,S,F)$
as corresponding to the $(P\to\past,F\to\future)$-restriction of some solution,
we may think of a partial solution as the $(F\to\future)$-restriction of some solution.
That is, a partial solution is a $(P\cup S, \{\future\})$-containment structure
that roughly corresponds to what would happen if we took a solution and ``forgot'' some of the details about the vertices in $F$.
A partial solution is then a ``witness'' for a given signature for $(P,S,F)$
if that signature can in turn be derived from the partial solution by ``forgetting'' details about the vertices in $P$.
In this case we call the signature ``valid''.
This is defined precisely below.

\begin{definition}[partial solution, signature, valid]\label{def:partial-solution-signature}
Let $(P,S,F)$ be a bag in the tree decomposition of $\NinTin$.
Then
any $(P\cup S, \{\future\})$-containment structure is called \emph{$F$-partial solution} (or simply \emph{partial solution}) for $(P,S,F)$.
A well-behaved signature $\sigma$ for $(P,S,F)$ is called \emph{valid} if
$\sigma$ is the $(P\to\past)$-restriction of a well-behaved partial solution $\psi$ for $(P,S,F)$.
We call $\psi$ a \emph{witness} for $\sigma$.
\end{definition}

\begin{lemma}\label{cor:WB-signatures-exist}
  Let $(\Nin,\Tin)$ be a \textsc{Yes}-instance of \textsc{Tree Containment} and
  let $(P,S,F)$ be a bag in the tree decomposition of $\NinTin$.
  Then, there is a well-behaved $F$-partial solution $\psi$, and a valid signature $\sigma$ for $(P,S,F)$.
\end{lemma}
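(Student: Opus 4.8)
The plan is to obtain both objects from a single global solution by performing two successive restrictions, first pushing $F$ into the future and then pushing $P$ into the past. Since $(\Nin,\Tin)$ is a \textsc{Yes}-instance, \cref{lem:TC equiv} (taking the label set to be $\{\future\}$) yields a $(V(\NinTin),\{\future\})$-containment structure $\chi = (D(N,T),\emb,\iota)$ with $\iota^{-1}(\{\future\}) = \emptyset$; by \cref{lem:top-level-WB}, $\chi$ is well-behaved. I would then define $\psi$ to be the $(F\to\future)$-restriction of $\chi$ and $\sigma$ to be the $(P\to\past)$-restriction of $\psi$, and show that $\psi$ is the desired partial solution and $\sigma$ the desired valid signature.

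First I would check that $\psi$ is a well-behaved $F$-partial solution. By \cref{cor:restrictionIsAStructure}, the $(F\to\future)$-restriction of $\chi$ is a $(P\cup S,\{\future\})$-containment structure, which is exactly an $F$-partial solution in the sense of \cref{def:partial-solution-signature}. To invoke \cref{lem:WBrestriction} for well-behavedness, I must verify that the relabelling creates no arc whose endpoints carry distinct future-set labels; but this restriction introduces only the single label $\future$, so any two endpoints that receive a label are both labelled $\future$, and the hypothesis holds trivially.

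Next I would check that $\sigma$ is a valid, well-behaved signature witnessed by $\psi$. Applying \cref{cor:restrictionIsAStructure} to the $(P\to\past)$-restriction of $\psi$ gives an $(S,\{\past,\future\})$-containment structure, which is a signature for $(P,S,F)$ by \cref{def:signature}. The crux --- and the step I expect to be the only real obstacle --- is verifying the hypothesis of \cref{lem:WBrestriction} for this second restriction, namely that no arc of the display graph of $\psi$ ends up with one endpoint labelled $\past$ and the other labelled $\future$. Such an arc would join a vertex whose $\iota$-image lies in $P$ to a vertex whose $\iota$-image lies in $F$; since the display graph of $\psi$ is a subgraph of $D(N,T)\cong\NinTin$ (by \cref{def:restrict}, restrictions only delete arcs and vertices, never add them), this would produce an edge of $\NinTin$ between $P$ and $F$, contradicting the defining property of the tree decomposition that $S$ separates $P$ from $F$. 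Hence the hypothesis of \cref{lem:WBrestriction} is satisfied and $\sigma$ is well-behaved; being the $(P\to\past)$-restriction of the well-behaved partial solution $\psi$, it is valid by \cref{def:partial-solution-signature}, which completes the proof.
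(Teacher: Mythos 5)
Your proof is correct and follows essentially the same route as the paper: take a global well-behaved containment structure, form the $(F\to\future)$-restriction to get $\psi$, then the $(P\to\past)$-restriction to get $\sigma$. You are in fact slightly more careful than the paper's own (terser) proof, since you explicitly verify the hypothesis of \cref{lem:WBrestriction} for the second restriction via the separator property of the bag, which is needed to confirm that $\sigma$ is a \emph{well-behaved} signature as required by \cref{def:partial-solution-signature}.
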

\begin{proof}
  By \cref{lem:top-level-WB}, there is a well-behaved $(V(\NinTin), \Y)$-containment structure $\psi^*$
  with ${\iota}^{-1}(\Y) = \emptyset$.
  Clearly, $\psi^*$ is also a $(V(\NinTin), \emptyset)$-containment structure.
  Now, let $\psi$ be the $(F \to \future)$-restriction of $\psi^*$ and
  note that $\psi$ is an $F$-partial solution.
  Clearly, $x,y\in\{\future\}\Rightarrow x=y$ and, so, \cref{lem:WBrestriction} applies, showing that $\psi$ is well-behaved.
  Finally, let $\sigma$ be the $(P \to \past)$-restriction of $\psi$ which is valid since $\psi$ is well-behaved.
\end{proof}

\begin{lemma}\label{cor:top-level-sig}
  $(\Nin,\Tin)$ is a \textsc{Yes}-instance of \textsc{Tree Containment}
  if and only if
  there is a valid signature $\sigma := (D(N,T), \emb, \iota)$ for $(V(\NinTin), \emptyset, \emptyset)$
  with ${\iota}^{-1}(\future) = \emptyset$.
\end{lemma}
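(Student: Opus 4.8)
The plan is to prove both directions by relating the signature~$\sigma$ to a witnessing partial solution~$\psi$ and then invoking the equivalence already established in \cref{lem:TC equiv}. The key observation is that for the bag $(V(\NinTin),\emptyset,\emptyset)$ we have $P\cup S = V(\NinTin)$, so a partial solution here is exactly a $(V(\NinTin),\{\future\})$-containment structure. Hence, once we know that such a $\psi$ carries no \future-label, it is precisely the kind of object that \cref{lem:TC equiv} certifies to encode a solution. The whole proof thus reduces to transporting the condition $\iota^{-1}(\future)=\emptyset$ between $\sigma$ and its witness.

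For the forward direction, suppose $(\Nin,\Tin)$ is a \textsc{Yes}-instance. I would apply \cref{cor:WB-signatures-exist} to the bag $(V(\NinTin),\emptyset,\emptyset)$, obtaining a well-behaved partial solution~$\psi$ and a valid signature~$\sigma$ which is its $(P\to\past)$-restriction (with $P=V(\NinTin)$). Since $F=\emptyset$, the partial solution built in that proof is the $(F\to\future)$-restriction of a top-level structure whose isolabelling has empty label-preimage, so $\iota_\psi^{-1}(\future)=\emptyset$ and every vertex of~$\psi$ is mapped into $V(\NinTin)$. Relabelling all of $V(\NinTin)$ to~\past and then deleting redundant material leaves every surviving vertex labelled~\past, so the valid signature~$\sigma$ satisfies $\iota^{-1}(\future)=\emptyset$, as required.

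For the backward direction, let $\sigma=(D(N,T),\emb,\iota)$ be a valid signature for $(V(\NinTin),\emptyset,\emptyset)$ with $\iota^{-1}(\future)=\emptyset$. By \cref{def:partial-solution-signature}, $\sigma$ is the $(P\to\past)$-restriction of a well-behaved partial solution $\psi=(D(N_\psi,T_\psi),\emb_\psi,\iota_\psi)$, which is a $(V(\NinTin),\{\future\})$-containment structure. The crux is to show $\iota_\psi^{-1}(\future)=\emptyset$; once this holds, \cref{lem:TC equiv} taken with $\Y=\{\future\}$ immediately yields that $(\Nin,\Tin)$ is a \textsc{Yes}-instance. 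To establish the claim I would argue that no \future-labelled vertex of~$\psi$ can be deleted in forming~$\sigma$. Relabelling $V(\NinTin)\to\past$ does not change the set of vertices carrying the label~\future (these were never mapped into $V(\NinTin)$ in the first place). By \cref{obs:redundantQ}, \future-redundancy of an arc or vertex~$a$ is equivalent to $\iota(Q_a)=\{\future\}$, where $Q_a$ depends only on the display graph and~$\emb$; hence the set of \future-redundant objects is identical before and after the relabelling. Since $\psi$ is well-behaved it has no \future-redundant objects (\cref{def:wellBehaved}\eqref{it:WBnoRedundancy}), so neither does the relabelled structure, and a vertex labelled~\future can never be \past-redundant. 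Consequently every \future-labelled vertex of~$\psi$ survives into~$\sigma$ still labelled~\future, which contradicts $\iota^{-1}(\future)=\emptyset$ unless there were none to begin with. This gives $\iota_\psi^{-1}(\future)=\emptyset$ and finishes the argument.

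The main obstacle is exactly this last step: ruling out that the \future-part of the witness is silently ``absorbed'' into deletions during the restriction, which would decouple the condition on~$\sigma$ from the condition on~$\psi$ that \cref{lem:TC equiv} needs. The clean way around it is the observation that \future-redundancy is insensitive to relabelling vertices that were never~\future, combined with the well-behavedness of~$\psi$. The remaining bookkeeping --- verifying that $\psi$ with an empty \future-preimage meets the hypotheses of \cref{lem:TC equiv}, and tracing the trivial $F=\emptyset$ case in the forward direction --- is routine.
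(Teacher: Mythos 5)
Your proof is correct and follows essentially the same route as the paper's: the forward direction constructs the signature as the $(V(\NinTin)\to\past)$-restriction of a future-free partial solution (you route this through \cref{cor:WB-signatures-exist}, the paper does it directly), and the backward direction is the paper's exact argument that the relabelling $V(\NinTin)\to\past$ cannot create new \future-redundant objects, so well-behavedness of the witness forces its \future-preimage to be empty, after which \cref{lem:TC equiv} applies.
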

\begin{proof}
  By \cref{lem:TC equiv} and \cref{lem:top-level-WB},
  $(\Nin,\Tin)$ is a \textsc{Yes}-instance 
  if and only if
  there is a well-behaved $V(\NinTin), \Y)$-containment structure $\psi = (D(N,T), \emb, \iota)$ 
  with ${\iota}^{-1}(\{\Y\}) = \emptyset$, or equivalently
  a well-behaved $\emptyset$-partial solution  $\psi = (D(N,T), \emb, \iota)$ 
  with ${\iota}^{-1}(\{\future\}) = \emptyset$.
  First, suppose such a partial solution exists and
  let $\sigma := ((D(N',T'), \emb', \iota')$ be the $(V(\NinTin) \to \past)$-restriction of $\psi$.
  Then, $\sigma$ is a valid signature for $(V(\NinTin), \emptyset, \emptyset)$ and, by construction, ${\iota'}^{-1}(\future) = \emptyset$.
  (In fact $D(N',T')$ is the empty graph since all arcs and vertices of $D(N,T)$ become $\past$-redundant; but we do not use that fact here).

  For the converse, consider a valid signature $\sigma := (D(N,T), \emb, \iota)$ for $(V(\NinTin), \emptyset, \emptyset)$
  with ${\iota}^{-1}(\future) = \emptyset$.
  By \cref{def:partial-solution-signature}, 
  $\sigma$ is the $(V(\NinTin) \to \past)$-restriction of a well-behaved $\emptyset$-partial solution $\psi := ((D(N',T'), \emb', \iota')$.
  It remains to show that ${\iota'}^{-1}(\future) = \emptyset$. 
  Towards a contradiction, assume that $D(N',T')$ has a vertex~$u$ with $\iota'(u) = \future$.
  Since $u\notin\iota^{-1}(\future)=\emptyset$, $u$ is $\future$-redundant after applying $V(\NinTin) \to \past$.
  However, applying $V(\NinTin) \to \past$ cannot make a vertex $\future$-redundant that was not previously $\future$-redundant
  (as no new vertex gains the label~$\future$).
  Thus, $u$ is $\future$-redundant in $\psi$, contradicting \cref{def:wellBehaved}\eqref{it:WBnoRedundancy}.
\end{proof}

\noindent
\cref{lem:WBrestriction}, \cref{cor:WB-signatures-exist} and \cref{cor:top-level-sig} show that an instance 
  $(\Nin,\Tin)$ of \textsc{Tree Containment} is a \textsc{Yes}-instance
  if and only if
  there is a well-behaved valid signature $\sigma := (D(N,T), \emb, \iota)$ for the root bag $(V(\NinTin), \emptyset, \emptyset)$
  with ${\iota}^{-1}(\future) = \emptyset$.
Thus in order to solve an instance of \textsc{Tree Containment},
it is enough to decide for each bag~$(P,S,F)$ in the tree decomposition of $\NinTin$,
and for each well-behaved signature $\sigma$ for $(P,S,F)$, whether $\sigma$ is valid.

\subsection{Determining valid signatures}\label{sec:valid sigs}

With the formal definitions of valid signatures and restrictions in place,
we can now show how to determine whether a well-behaved signature for a bag~$(P,S,F)$ is valid,
assuming we know this for all signatures on the child bag(s).
As is common in dynamic programming techniques, we take advantage of the structure of a nice tree decomposition.
The following lemmas describe the exact conditions for Leaf, Forget and Introduce bags
while the additional terminology required for Join bags is deferred to \cref{sec:joinBags}.

\begin{lemma}\label{lem:leafBag}
  Let $(P,S,F)$ correspond to a Leaf bag in the tree decomposition i.e.\
  $P = S = \emptyset$, $F = V(\NinTin)$ and $(P,S,F)$ has no children.
  Let $\sigma := (D(N,T), \emb, \iota)$ be a well-behaved signature for $(P,S,F)$.
  Then, $\sigma$ is valid
  if and only if
  $\iota^{-1}(\past) = \emptyset$.
\end{lemma}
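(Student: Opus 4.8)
The plan is to unwind the definitions for the special case of a Leaf bag and to observe that, because $P = S = \emptyset$, the $(P\to\past)$-restriction acts trivially. First I would record what the relevant objects look like here. By \cref{def:partial-solution-signature}, an $F$-partial solution for this bag is a $(P\cup S, \{\future\})$-containment structure, i.e.\ an $(\emptyset, \{\future\})$-containment structure; since its present $P\cup S$ is empty and its only available label is \future, its isolabelling sends every vertex to \future. Likewise, the signature $\sigma$ is an $(\emptyset, \{\past,\future\})$-containment structure, so $\iota$ takes values in $\{\past,\future\}$ only.

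For the forward direction, suppose $\sigma$ is valid. Then $\sigma$ is the $(P\to\past)$-restriction of some well-behaved partial solution $\psi$, every vertex of which is labelled \future. Because $P=\emptyset$, the restriction function $g$ realising $(P\to\past)$ fixes every label, so the relabelled isolabelling $\iota_g$ still assigns \future to every vertex; and since $\psi$ is well-behaved it contains no redundant arcs or vertices, while relabelling by a label-preserving $g$ creates none. Hence nothing is deleted in \cref{def:restrict}, so $\sigma=\psi$ up to the formal widening of the label set, and in particular $\iota^{-1}(\past)=\emptyset$.

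For the converse, suppose $\iota^{-1}(\past)=\emptyset$, so every vertex of $\sigma$ is labelled \future. I would take the witness $\psi$ to be $\sigma$ itself, reinterpreted as an $(\emptyset,\{\future\})$-containment structure. This reinterpretation is legitimate: the conditions of \cref{def:isolabelling} and the degree conditions of \cref{def:containment struct} are all conditioned either on a vertex mapping into $S=\emptyset$, or (in the second degree condition) on $\iota(u)\neq\iota(\emb(u))$, which cannot occur since all labels equal \future; hence they hold vacuously. Well-behavedness of $\psi$ is inherited from $\sigma$ in the same way, the ``no label crossings'' condition holding because all labels coincide and the path condition being vacuous as $S=\emptyset$. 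Finally, exactly as above, the $(P\to\past)$-restriction of $\psi$ with $P=\emptyset$ deletes nothing and relabels nothing, so it equals $\sigma$; thus $\sigma$ is valid, witnessed by $\psi=\sigma$.

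I expect the only real care to lie in two bookkeeping checks, which form the (mild) crux of the argument: verifying that the $(P\to\past)$-restriction genuinely acts as the identity on a well-behaved structure when $P=\emptyset$ (resting on \cref{def:wellBehaved}, which guarantees there is no pre-existing redundancy to delete, together with the fact that a label-preserving $g$ introduces none), and confirming that an all-\future signature really does satisfy every clause required of a partial solution once the label set is contracted to $\{\future\}$. Both reduce to the single observation that the nontrivial conditions are triggered only by vertices mapping into the present $S$, which is empty for a Leaf bag.
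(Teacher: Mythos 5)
Your proposal is correct and follows essentially the same route as the paper's proof: in the forward direction you observe that an $F$-partial solution for this bag has no vertex labelled \past and the $(\emptyset\to\past)$-restriction cannot introduce such labels, and in the converse you take $\sigma$ itself as its own witness, noting that with $P=\emptyset$ the restriction neither relabels nor (by well-behavedness) deletes anything. The extra bookkeeping you flag (reinterpreting the label set as $\{\future\}$ and checking the degree conditions vacuously) matches what the paper does, just spelled out in more detail.
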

\begin{proof}
  Suppose first that $\sigma$ is valid.
  By \cref{def:partial-solution-signature},
  $\sigma$ is the $(P\to \past)$-restriction of a well-behaved $F$-partial solution (that is, $(P\cup S, \{\future\})$-containment structure)
  $\psi := D(N',T'), \emb', \iota')$.
  Then $\iota'(u) \neq \past$ for every vertex $u$ in $D(N',T')$ and, as $P = \emptyset$,
  this remains true after applying $P \to \past$.
  It follows that $\iota(u) \neq \past$ for every vertex $u \in D(N,T)$, as required.

  Conversely, suppose $\iota^{-1}(\past) = \emptyset$.
  Since $P = \emptyset$, the $(S, \{\past, \future\})$-containment structure $\sigma$ is also a $(P\cup S, \{\future\})$-containment structure,
  that is, $\sigma$ is an $F$-partial solution.
  To show that $\sigma$ is valid, we prove that $\sigma$ is the $(P\to \past)$-restriction of itself, that is, $\sigma$ is a witness for $\sigma$.
  To this end, observe that applying $P\to \past$ does not change the isolabelling and,
  by \cref{def:wellBehaved}, $\sigma$ contains no redundant arcs or vertices.
  Thus, applying $P\to \past$ and removing redundant arcs and vertices does not change~$\sigma$.
\end{proof}

\begin{lemma}\label{lem:forgetBag}
  Let $(P,S,F)$~correspond to a Forget bag in the tree decomposition with child bag~$(P',S',F)$,
  i.e. $P = P' \cup \{z\}$ and $S = S'\setminus \{z\}$ for some~$z\in S'$.
  Let $\sigma$ be a well-behaved signature for~$(P,S,F)$.
  Then, $\sigma$ is valid
  if and only if
  $\sigma$ is the $(z\to \past)$-restriction of a valid signature~$\sigma'$ for~$(P',S',F)$.
\end{lemma}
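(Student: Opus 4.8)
The plan is to exploit the fact that a Forget bag and its child share the same past-plus-present set, so they share the same notion of partial solution, and then to reduce both directions of the equivalence to the transitivity of restrictions. Concretely, since $P = P'\cup\{z\}$ and $S = S'\setminus\{z\}$ with $z\in S'$, we have $P\cup S = P'\cup S'$; hence every $(P'\cup S',\{\future\})$-containment structure is also a $(P\cup S,\{\future\})$-containment structure, i.e.\ a partial solution for $(P',S',F)$ is exactly a partial solution for $(P,S,F)$. The key algebraic identity is that composing the restriction functions $(P'\to\past)$ and then $(z\to\past)$ yields $(P\to\past)$: a vertex of $P'$ is sent to $\past$ and kept there, the vertex $z$ is fixed by the first map and sent to $\past$ by the second, while every vertex of $S$ and the label $\future$ are fixed; so the composition sends exactly $P=P'\cup\{z\}$ to $\past$ and fixes $S$. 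By \cref{lem:transitivity-of-restrictions}, the $(z\to\past)$-restriction of the $(P'\to\past)$-restriction of any partial solution $\psi$ thus equals the $(P\to\past)$-restriction of $\psi$.

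For the backward direction, suppose $\sigma$ is the $(z\to\past)$-restriction of a valid signature $\sigma'$ for $(P',S',F)$. By \cref{def:partial-solution-signature}, $\sigma'$ is the $(P'\to\past)$-restriction of some well-behaved partial solution $\psi'$ for $(P',S',F)$, and since $P'\cup S' = P\cup S$ this $\psi'$ is also a well-behaved partial solution for $(P,S,F)$. Applying the identity above, $\sigma$ is the $(P\to\past)$-restriction of $\psi'$; as $\sigma$ is well-behaved by hypothesis, $\psi'$ witnesses the validity of $\sigma$.

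For the forward direction, suppose $\sigma$ is valid, so it is the $(P\to\past)$-restriction of a well-behaved partial solution $\psi$ for $(P,S,F)$, which (since $P\cup S=P'\cup S'$) is equally a well-behaved partial solution for $(P',S',F)$. I would set $\sigma'$ to be the $(P'\to\past)$-restriction of $\psi$; then, by the identity above together with \cref{lem:transitivity-of-restrictions}, $\sigma$ is the $(z\to\past)$-restriction of $\sigma'$, and $\psi$ itself serves as a witness for $\sigma'$ provided $\sigma'$ is well-behaved. Hence it remains only to prove that $\sigma'$ is well-behaved. By \cref{lem:WBrestriction} it suffices to check that applying $(P'\to\past)$ to $\psi$ creates no arc whose endpoints receive distinct labels from $\{\past,\future\}$; since $\psi$ uses no $\past$ label, the only possible crossing is an arc joining a vertex $u$ with $\iota(u)\in P'$ (newly labelled $\past$) to a vertex $v$ with $\iota(v)=\future$.

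The hard part is ruling out such a $P'$-to-$\future$ arc, and this is the one place where the separator property of the tree decomposition enters. I would argue it via degrees: let $u$ be a vertex of $\psi$ with $\iota(u)\in P'$. By \cref{def:containment struct}(a), $u$ has the same in- and out-degree in the display graph as $\iota(u)$ has in $\NinTin$. Because $S'$ separates $P'$ from $F$ in $\NinTin$, every neighbour of $\iota(u)$ in $\NinTin$ lies in $P'\cup S'$; by surjectivity of $\iota$ onto $P'\cup S'$, its injectivity there, and \cref{def:isolabelling}(c), each such neighbour corresponds to a distinct neighbour of $u$ in the display graph, and these already account for the full in- and out-degree of $u$. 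Hence $u$ has no neighbour labelled $\future$, so no $P'$-to-$\future$ arc exists, \cref{lem:WBrestriction} applies, and $\sigma'$ is well-behaved, completing the proof.
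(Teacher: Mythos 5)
Your proof is correct and follows essentially the same route as the paper's: both directions reduce to the observation that a Forget bag and its child share the same set $P\cup S=P'\cup S'$ (hence the same partial solutions) together with \cref{lem:transitivity-of-restrictions} applied to the identity $(z\to\past)\circ(P'\to\past)=(P\to\past)$. You additionally verify that the constructed $\sigma'$ is well-behaved (via \cref{lem:WBrestriction} and the separator/degree argument), a step the paper's proof of this lemma leaves implicit but which it carries out explicitly in the analogous setting of \cref{cor:valid-recon-WB}; your argument for it is sound.
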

\begin{proof}
  Suppose first that $\sigma$ is the $(z\to \past)$-restriction of some valid signature $\sigma'$ for $(P',S',F)$.
  By \cref{def:partial-solution-signature}, $\sigma'$ has a witness $\psi'$
  (that is, $\psi'$ is a well-behaved $F$-partial solution for $(P',S',F)$ whose $P'\to\past$-restriction is $\sigma'$).
  By \cref{lem:transitivity-of-restrictions}, $\sigma$ is the $(P\to \past)$-restriction of $\psi'$, and so $\sigma$ is valid.

  For the converse, suppose that $\sigma$ is valid and
  let $\psi$ be a witness of $\sigma$
  (that is, $\psi$ is a well-behaved $F$-partial solution for $(P,S,F)$
  and $\sigma$ is the $(P\to \past)$-restriction of $\psi$).
  Then, the $(P'\to \past)$-restriction $\sigma'$ of $\psi$ is a valid signature for $(P',S',F)$ and,
  by \cref{lem:transitivity-of-restrictions}, the $(z\to \past)$-restriction of $\sigma'$ is the $(P\to \past)$-restriction of $\psi$, that is, $\sigma$.
\end{proof}

\begin{lemma}\label{lem:introduceBag}
  Let $(P,S,F)$~correspond to an Introduce bag in the tree decomposition with child bag~$(P,S',F')$,
  i.e.\ $S' = S\setminus \{z\}$ and $F' = F \cup \{z\}$ for some~$z\in S$.
  Let $\sigma$ be a well-behaved signature for~$(P,S,F)$.
  Then, $\sigma$ is valid 
  if and only if
  the $(z\to \future)$-restriction~$\sigma'$ of $\sigma$ is a valid signature for~$(P,S',F')$.
\end{lemma}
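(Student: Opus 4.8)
The plan is to prove both directions by relating a witness $\psi$ for $\sigma$ (a well-behaved $(P\cup S,\{\future\})$-partial solution for the parent bag) to a witness $\psi'$ for $\sigma'$ (a partial solution for the child bag) through the identity ``$\psi'$ is the $(z\to\future)$-restriction of $\psi$''. Unlike the Forget case (\cref{lem:forgetBag}), where parent and child share the same present-plus-past so that partial solutions coincide, here the partial solutions differ exactly in whether $z$ is tracked or labelled \future; this is what makes the two directions asymmetric. Throughout I would use the separator property of the tree decomposition: since $z\in F'$ and $S'$ separates $P$ from $F'$ in $\NinTin$, the vertex $z$ has no neighbour in $P$.

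\emph{Forward direction.} Assume $\sigma$ is valid with witness $\psi$, and let $\psi'$ be the $(z\to\future)$-restriction of $\psi$. Since $(P\cup S)\setminus\{z\}=P\cup S'$, \cref{cor:restrictionIsAStructure} gives that $\psi'$ is a $(P\cup S',\{\future\})$-containment structure, i.e.\ a partial solution for $(P,S',F')$, and \cref{lem:WBrestriction} shows it is well-behaved (its label set $\{\future\}$ is a singleton, so the non-crossing hypothesis is vacuous). In $\psi$ the vertex mapped to $z$ has, by \cref{def:isolabelling}\eqref{it:isomorph} and the separator property, no arc to a $P$-tracked vertex, hence after applying $P\to\past$ no arc to a \past-vertex in $\sigma$; \cref{lem:WBrestriction} then shows that $\sigma'$ is well-behaved. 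Finally, because $z\in S$ and $z\notin P$, the restriction functions $(z\to\future)$ and $(P\to\past)$ commute, so by \cref{lem:transitivity-of-restrictions} both $\sigma'$ (the $(z\to\future)$-restriction of the $(P\to\past)$-restriction $\sigma$ of $\psi$) and the $(P\to\past)$-restriction of $\psi'$ equal the same composite restriction of $\psi$. Thus $\sigma'$ is the $(P\to\past)$-restriction of the well-behaved $\psi'$, i.e.\ $\sigma'$ is valid.

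\emph{Backward direction (the hard part).} Assume $\sigma'$ is a valid signature, with witness $\psi'$. Since $\sigma'$ is well-behaved and is the $(z\to\future)$-restriction of $\sigma$, no arc of $\sigma$ can join the vertex mapped to $z$ to a \past-vertex (such an arc would survive the restriction as a forbidden \past--\future arc), so together with \cref{def:wellBehaved}\eqref{it:WBNoLabelCrossings} the \past-part of $\sigma$ attaches only to $S'$-tracked vertices. I would then build a witness $\psi$ for $\sigma$ by gluing: keep the $z$-, $S'$- and \future-parts of $\sigma$ unchanged, and replace the (possibly compressed) \past-part of $\sigma$ by the genuine $P$-tracked part supplied by $\psi'$, reconnecting it to $S'$ exactly as in $\psi'$. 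The gluing is consistent because both $\sigma$ and $\psi'$ restrict to $\sigma'$ on their common present $S'$, so their interfaces at $S'$ agree, and the separator property guarantees the reinstated past touches neither $z$ nor the future. One then checks that $\psi$ is a well-behaved $(P\cup S,\{\future\})$-containment structure whose $(P\to\past)$-restriction re-labels and re-compresses the borrowed $P$-part back into the \past-part of $\sigma$, recovering $\sigma$ exactly; hence $\sigma$ is valid.

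The main obstacle is precisely this construction. Making the gluing rigorous requires verifying that $\emb_\sigma$ and the embedding of $\psi'$ agree along the $S'$-interface and combine into a single embedding function on the merged display graph -- in particular that arc-disjointness and \cref{def:embed}\eqref{it:shared nodes} survive the seam -- that no redundant arcs or vertices and no \past--\future arcs are introduced, and that the path-consistency condition \cref{def:wellBehaved}\eqref{it:WBSPaths} holds for paths crossing from the reinstated past through $S'$ into the $z$- or future-part. The separator property, which decouples $z$ from $P$, is what keeps this amalgamation clean and is the crux of the argument.
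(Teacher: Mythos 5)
Your proposal follows essentially the same route as the paper: the forward direction is the paper's argument (both $\sigma'$ and the $(P\to\past)$-restriction of $\psi'$ are identified, via \cref{lem:transitivity-of-restrictions}, with the $(P\to\past,z\to\future)$-restriction of the witness $\psi$), and your backward direction is the paper's gluing of $\sigma$ with $\psi'$ along their common restriction $\sigma'$, with the separator property keeping the reinstated past away from $z$ and the future. Be aware, though, that the step you defer as ``one then checks'' --- that the glued object is a well-behaved $F$-partial solution and that its $(P\to\past)$-restriction is exactly $\sigma$ --- is where the paper spends several pages of case analysis (two claims, each with about a dozen sub-verifications), so what you have is a correct skeleton with all the right ideas rather than a complete proof.
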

\begin{proof}
  Suppose first that $\sigma$ is a valid signature and
  let $\psi$ be a witness for $\sigma$
  (that is, $\psi$ is a well-behaved $F$-partial solution for which $\sigma$ is the $(P\to \past)$-restriction).
  Let $\psi'$ be the $(z \to \future)$-restriction of $\psi$ which,
  by \cref{lem:WBrestriction}, is a well-behaved $F'$-partial solution. 
  Let $\sigma^*$ denote the $(P\to \past)$-restriction of $\psi'$ and note that $\sigma^*$ is a valid signature for $(P,S',F')$.
  By \cref{lem:transitivity-of-restrictions},
  $\sigma^*$ is also the $(P\to \past, \{z\}\to \future)$-restriction of $\psi$ and
  the $(z\to\future)$-restriction $\sigma'$ of $\sigma$ is also the $(P\to \past, \{z\}\to \future)$-restriction of $\psi$.
  Thus, $\sigma' = \sigma^*$ and so $\sigma'$ is valid, as required (see \cref{fig:IntroduceBagEasyDir.}).
 
  \begin{figure}[t]
     \centering
\includegraphics[scale = 0.75]{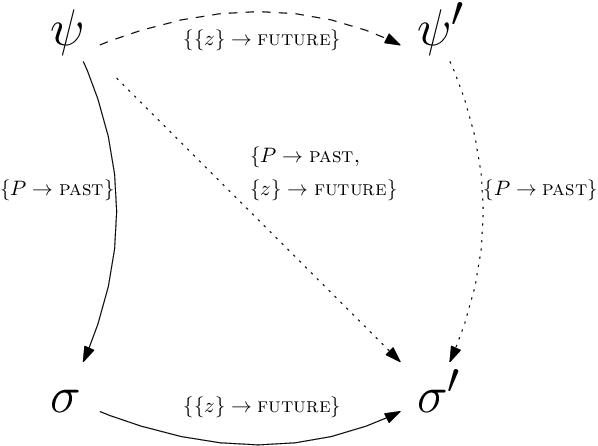}
     \caption{Illustration of proof for Introduce bags, validity of $\sigma$ implies validity of $\sigma'$. Solid lines are restriction relations that we may assume; the dashed line shows the construction of $\psi'$ from $\psi$; dotted lines are relations we can infer using transitivity.}
     \label{fig:IntroduceBagEasyDir.}
 \end{figure}
 
  For the converse,
  let $\sigma =: (D(N,T), \emb, \iota)$ and
  let $\sigma' =: (D(N',T'), \emb', \iota')$ be a valid signature for $(P,S',F')$.
  By \cref{def:partial-solution-signature},
  $\sigma'$ is the $(P\to \past)$-restriction of some well-behaved $F'$-partial solution $\psi' := (D(N_0',T_0'), \emb_0', \iota_0')$.
  In the following, $\sigma$, $\sigma'$ and $\psi'$ will guide us in constructing an $F$-partial solution $\psi$
  of which $\psi'$ is the $(z\to \future)$-restriction and, more importantly,
  of which $\sigma$ is the $(P\to \past)$-restriction.
  Then, this implies that $\sigma$ is valid.
 
  To begin the construction, let $V_z$ and $A_z$ denote the the set of vertices and arcs, respectively, that are in $D(N,T)$ but not in $D(N',T')$
  (that is, the vertices and arcs that are deleted when deriving $\sigma'$ from $\sigma$).
  Letting $g$ be the function $z \to \future$,
  we may assume that
  $g(\iota(v)) = \future$ for all vertices~$v$ in $V_z$ or $V(A_z)$ and
  $g(\iota(V(\emb(uv)))) = \{\future\}$ for any tree arc~$uv$ in $A_z$ as well.
  Similarly, let $V_p$ and $A_p$ be the vertices and arcs, respectively, that are deleted from $D(N_0', T_0')$ in the construction of $D(N',T')$. 
  Letting $h$ be the function $P\to \past$, we may assume that 
  $h(\iota(v)) = \past$ for all vertices~$v$ in $V_p$ or $V(A_p)$, and
  $h(\iota(V(\emb(uv)))) = \{\past\}$ for any tree arc~$uv$ in $A_p$ as well.
  
 We now construct the $F$-partial solution $\psi := (D(N_0,T_0), \emb_0, \iota_0)$ as follows.
 First, let $D(N_0, T_0)$ be the graph derived from $D(N_0',T_0')$ by adding all vertices and arcs of $V_z$ and $A_z$. 
 Equivalently, we may say we construct~$D(N_0,T_0)$ by adding the arcs and vertices of $V_p$ and $A_p$ to $D(N,T)$,
 or by adding the arcs and vertices of $V_z,V_p,A_z,A_p$ to $D(N',T')$.
 Let $\emb_0$ be defined as the 'union' of $\emb$ and $\emb_0'$ - that is, $\emb_0(uv) = \emb(uv)$ if $uv$ is an arc in $T$, and $\emb_0(uv) = \emb_0'(uv)$ if $uv$ is an arc in $T_0'$ (if $uv$ is an arc of both $T$ and $T_0'$, then these are the same, as $uv$ is in $T'$ and $\emb(uv) = \emb'(uv) = \emb_0'(uv)$).
 Similarly $\emb_0(u) = \emb(u)$ if $u$ is a vertex in $T$, and $\emb_0(u) = \emb_0'(u)$ if $u$ is a vertex in $T_0'$.
 Finally let $\iota_0:V(D(N_0,T_0))\to P\cup S \cup \{\future\}$ be defined as follows: $\iota_0(v) = \iota(v)$ if $v\in V_z$ or $\iota'(v) = \future$, $\iota_0(v) = \iota_0'(v)$ if $v \in V_p$ or $\iota'(v) = \past$, and $\iota_0(v) = \iota_0'(v) = \iota'(v) = \iota(v)$ otherwise. (Note that $\iota_0(v) = \future$ is possible for some vertices, if $\iota(v) = \future$, but $\iota_0(v) = \past$ is not possible as $\iota_0'(v) \neq \past$ for any $v$.)

 \begin{figure}[t]
     \centering
 \includegraphics[scale = 0.75]{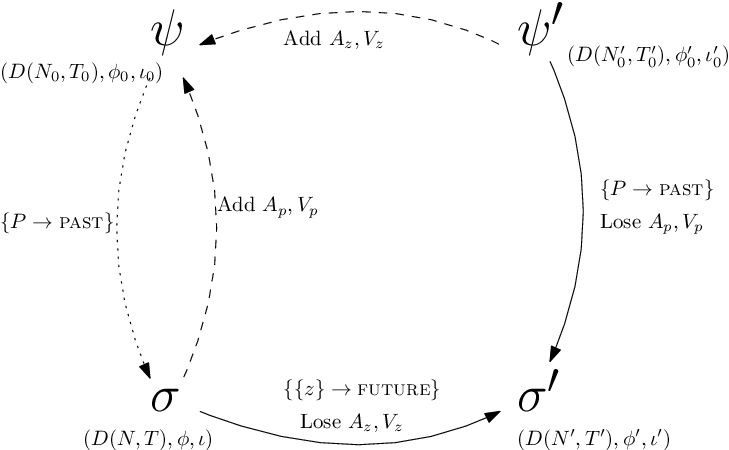}
     \caption{Illustration of proof for Introduce bags, validity of $\sigma'$ implies validity of $\sigma$. Solid lines are restriction relations that we may assume; the dashed lines show the construction of $\psi$ from $\psi'$ or $\sigma$ (we do not describe the construction of $\emb_0$ or $\iota_0$ in the figure, only the construction of $D(N_0,T_0)$ by adding arcs and vertices); the dotted line shows the relation we want to prove, that $\sigma$ is the $\{P\to \past\}$-restriction of $\psi$. }
     \label{fig:IntroduceBagConstructiveDir.}
 \end{figure}

 We note here that for any vertex $v$ in $D(N_0, T_0)$ with $\iota_0(v) \in S \cup \{\future\}$, $v$ is a vertex of $D(N,T)$ and $\iota_0(v) = \iota(v)$. Indeed, $v$ cannot be in $V_p$, nor can it hold that $\iota'(v) = \past$ if $v$ is in $D(N',T')$, as both cases would imply $\iota_0(v)=\iota_0'(v) \in P$. By construction, in all other cases $\iota_0(v) = \iota(v)$.
 Furthermore, for any arc $a$ incident to $v$ in $D(N_0,T_0)$, $a$ is an arc in $D(N,T)$ (since $a \in A_p$ would imply $\iota_0'(v)\in P$ and hence either $v \in V_p$ or $\iota'(v) = \past$), and thus $\emb_0(a)=\emb(a)$.
 A similar argument shows that for any vertex $v$ in $D(N_0,T_0)$ with $\iota(u) \in P$, $v$ is a vertex of $D(N_0',T_0')$, $\iota_0(v) = \iota_0'(v)$, and $\emb_0(a) = \emb_0'(a)$ for any arc $a$ in $D(N_0,T_0)$ incident to $v$. 
 
We will now show that $\psi$ is indeed a well-behaved $F$-partial solution; afterwards we will show that $\sigma$ is the $\{P\to \past\}$-restriction of $\psi$, from which it follows that $\sigma$ is valid. (A similar argument can be used to show that $\psi'$ is the $\{\{z\} \to \future\}$-restriction of $\psi$, but we will not need this fact so we do not prove it here.)

\begin{claim}
  $\psi$ is a well-behaved $F$-partial solution.
\end{claim}
\begin{claimproof}
We first show that $\psi$ satisfies one of the properties of a well-behaved $(P\cup S, \{\future\})$-containment structure

\begin{itemize}
    \item {\bf for any $u,v \in V(D(N_0, T_0))$ with $\iota_0(u), \iota_0(v) \in P \cup S$, if $D(N_0, T_0)$ has a path from $u$ to $v$ then there is a path from $\iota_0(u)$ to $\iota_0(v)$ in $\NinTin$:}
    Suppose for a contradiction that this is not the case, and let $u, v \in V(D(N_0,T_0))$ be such that $\iota_0(u), \iota_0(v) \in P \cup S$ and there is a path from $u$ to $v$ in $D(N_0,T_0))$, but no path from $\iota_0(u)$ to $\iota_0(v)$ in $\NinTin$, choosing $uv$ such that the length of the $u-v$ path is minimal. If $uv$ is an arc in $D(N_0,T_0)$, then $uv$ is an arc in $D(N,T)$ or $D(N_0', T_0')$, and the existence of a path from $\iota(u)$ to $\iota(v)$ in $\NinTin$ follows from the fact that $\sigma$ or $\sigma_0'$ is well-behaved.
    Otherwise, by choice of $u,v$ we may assume all internal vertices $u'$ on the path from $u$ to $v$ satisfy $\iota_0(u') = \future$. Then every arc on this path is an arc in $D(N,T)$ and so again, we have a path from $\iota_0(u) = \iota(u)$ to $\iota_0(v) = \iota(v)$ in $\NinTin$ by the fact that $\sigma$ is well-behaved.
     
    \item {\bf $D(N_0,T_0)$ is a display graph:} 
        \begin{itemize}
            \item {\bf $D(N_0, T_0)$ is acyclic: }  Suppose for a contradiction that $D(N_0,T_0)$ contains a directed cycle. If $\iota_0(u) \in P\cup S$ for some vertex $u$ in this cycle, then as shown above the existence of a path from $u$ to $u$ in $D(N_0,T_0)$ implies the existence of a path from $\iota_0(u)$ to $\iota_0(u)$ in $\NinTin$, contradicting the fact that $\NinTin$ is acyclic.
            
            It remains to consider the case that every vertex $z$ in this cycle has $\iota_0(z) = \future$. But in this case every arc of the cycle is also an arc in $D(N,T)$, and so $D(N,T)$ contains a cycle, a contradiction as $D(N,T)$ is a display graph.
            
            \item {\bf $T_0$ is an out-forest:} 
            We first observe that for every vertex $u$ in $D(N_0,T_0)$, either all its incident arcs in $D(N_0,T_0)$ are also arcs of $D(N,T)$, or they are all arcs of $D(N_0',T_0')$. Indeed, suppose for a contradiction that this is not the case, then there exist arcs $a \in A_p$, $a'\in A_z$ that share a vertex $u$. But by construction, $a \in A_p$ implies that $u\in V_p$ or $\iota'(u) = \past$ and so $\iota_0(u) \in P$, while $a' \in A_z$ implies $\iota_0(u) \in S \cup \{\future\}$, a contradiction.
            
            Now it remains to observe that every vertex in $T_0$ has in-degree at most $1$, as either all its incident arcs are in $T$ or they are all in $T_0'$. This, together with the fact that $D(N_0,T_0)$ is acyclic, implies that $T_0$ is an out-forest.
            
            \item {\bf Every vertex in $D(N_0, T_0)$ has in- and out-degree at most $2$ and total degree at most $3$:}
           As argued above, every vertex in $D(N_0,T_0)$ has all its incident arcs in $D(N,T)$ or in $D(N_0', T_0')$, and so this property follows from the fact that it holds for $D(N,T)$ and $D(N_0', T_0')$
           
            \item {\bf Any vertex in $V(N_0)\cap V(T_0)$ has out-degree $0$ and in-degree at most $1$ in each of $T_0$ and $N_0$:}
            Again, this follows from the fact that all incident arcs of a vertex in $D(N_0,T_0)$ belong to one of $D(N,T)$, $D(N_0', T_0')$.
        \end{itemize}
    
    \item {\bf $\emb_0$ is an embedding function on $D(N_0,T_0)$:}
    \begin{itemize}
        \item {\bf For each $u\in V(T_0)$, $\emb_0(u)\in V(N_0)$ and,
        for each arc $uv \in A(T_0)$, $\emb_0(uv)$ is a directed $\emb_0(u)$-$\emb_0(v)$-path in $N_0$:} This follows immediately from the construction of $\emb_0$ and the fact that $\emb$ and $\emb_0'$ are embedding functions.
        
        \item {\bf for any distinct $u,v \in V(T_0)$, $\emb_0(u)\neq \emb_0(v)$}: 
        Note that if a tree vertex $u$ is in $V_p$ then, by definition of $\past$-redundant, so is $\emb_0'(u)$. Similarly if $u \in V_z$ then $\emb(u) \in V_z$. So now for two vertices $u,v \in V(T_0)$, if $u \in V_p$ and $v \in V_z$ then $\emb_0(u) = \emb_0'(u) \in V_p$ and $\emb_0(v) = \emb(v) \in V_z$, and so $\emb_0(u) \neq \emb_0(v)$.
        Otherwise, $u,v$ are either both in $D(N,T)$ or both in $D(N_0', T_0')$, and $\emb_0(u) \neq \emb_0(v)$ follows from the fact that $\emb$ and $\emb_0'$ are embedding functions.
    
        \item {\bf for any $u \in V(T_0) \cap V(N_0)$, $\emb_0(u) = u$:}
        Follows immediately from the construction of $\emb_0$.
        \item {\bf the paths $\{\emb_0(uv) \mid uv\in A(T_0)\}$ are arc-disjoint:} 
        Observe that by construction that if a tree arc $uv$ is in $A_p$, then so is every arc in $\emb_0(uv) = \emb_0'(uv)$. Similarly if $uv$ is in $A_z$ then so is every arc in $\emb_0(uv) = \emb(uv)$.
        So consider two distinct tree arcs $uv, u'v' \in A(T_0)$.
        If $uv \in A_p, u'v'\in A_z$ then the arcs of the paths $\emb_0(uv), \emb_0(u'v')$ are in $A_p, A_z$ respectively, so $\emb_0(uv), \emb_0(u'v')$ are arc-disjoint. Otherwise, we may assume both $uv$ and $u'v'$ are arcs in one of $D(N,T), D(N_0', T_0')$, from which the claim follows by the fact that $\emb$ and $\emb_0'$ are embedding functions.
        
        \item {\bf for any distinct $p,q\in A(T_0)$, $\emb_0(p)$ and $\emb_0(q)$ share a vertex $z'$
        only if $p$ and $q$ share a vertex $w$ with $z' = \emb_0(w)$:} 
        Suppose $\emb_0(p)$ and $\emb_0(q)$ share a vertex $z'$. As argued previously, all incident arcs to $z'$ must be in one of $D(N,T), D(N_0',T_0')$. Thus in particular, $z'$ cannot have incident arcs from both $A_p$ and $A_z$. Then since $\emb_0(p)$ and $\emb_0(q)$ both contain arcs incident to $z'$, we must have that $p,q \notin A_p$ or $p,q \notin A_z$. Then either $\emb_0(p) = \emb(p)$ and $\emb_0(q)=\emb(q)$, or $\emb_0(p) = \emb_0'(p)$ and $\emb_0'(q)$. Then the claim follows from the fact that $\emb$ and $\emb_0'$ are embedding functions.
    \end{itemize}

    \item {\bf $\iota_0$ is a $(P\cup S, \{\future\})$-isolabelling:}
    \begin{itemize}
        \item {\bf For $u \in V(D(N_0, T_0))$ with $\iota_0(u) \neq \future$, $\iota_0(u) \in V(\Nin)$ only if $u \in V(N_0)$ and $\iota_0(u) \in V(\Tin)$ only if $u \in V(T_0)$ :}  Follows immediately from the construction of $\iota_0$.
        \item {\bf For $u,v \in V(D(N_0, T_0))$ with $\iota_0(u), \iota_0(v) \neq \future$, $\iota_0(u) = \iota_0(v)$ only if $u = v$:} 
        Suppose $u \neq v$ and $\iota_0(u), \iota_0(v) \neq \future$; we will show $\iota_0(u) \neq \iota_0(v)$. If $u \in V_p$ and $v \in V_z$, then by construction $\iota_0(u) = \iota(u) \in \{z, \future\}$ (and so in fact $\iota_0(u)=z$) and $\iota_0(v) = \iota_0'(v) \in P$. Thus $\iota_0(u) \neq \iota_0(v)$. Otherwise, we may assume $u$ and $v$ are both vertices in $D(N,T)$ or in $D(N_0', T_0')$. Then $\iota_0(u) \neq \iota_0(v)$ follows from the fact that $\iota$ and $\iota_0'$ are isolabelings.

        \item {\bf For $u,v \in V(D(N_0, T_0))$ with $\iota_0(u), \iota_0(v) \neq \future$, the arc $uv$ is in $D(N_0,T_0)$ if and only if $\iota_0(u)\iota_0(v)$ is in $\NinTin$ :} Follows immediately from the construction of $\iota_0$ and the fact that $\iota, \iota_0'$ are isolabelings.
        \item {\bf $\iota_0$ is surjective onto $P \cup S$:}
        Consider any $w \in P\cup S$. If $w \in P$, there is $u \in V(D(N_0', T_0'))$ with $\iota_0'(u) = w$. For such $u$ we have either  $u \in V_p$ or $\iota'(u) = \past$. In either case we have $\iota_0(u) = \iota_0'(u) = w$. Similarly if $w \in S$, there is $u \in V(D(N,T))$ with  $\iota(u) = w$ and either $\iota'(u) = \iota(u)$ (if $w \in S'$) or $u \in V_z$ or $\iota'(u) = \future$ (if $w = z$), and so $\iota_0(u) = \iota(u) = w$.
    \end{itemize}
    
    \item {\bf Each vertex $u$ with $\iota_0(u) \in P\cup S$ has the same in- and out-degree in $D(N_0,T_0)$ as $\iota_0(u)$ in $\NinTin$:}
    As previously shown, all incident arcs of $u$ in $D(N_0,T_0)$ belong to at least one of $D(N,T), D(N_0',T_0')$. Moreover if $\iota_0(u) = \iota(u)$ then all incident arcs are in $D(N,T)$ and if $\iota_0(u) = \iota_0'(u)$ then all incident arcs are in $D(N_0', T_0')$.
    Then this property follows from the fact that $\sigma$ and $\sigma_0'$ are containment structures.

    \item {\bf Each vertex $u$ of $T_0$ with $\iota_0(u) \neq \iota(\emb_0(u))$ has $2$ out-arcs in $D(N_0,T_0)$ :}
    In the case that $\iota_0(u) \in P \cup S$, this follows from previously-shown properties. We may assume $u$ is an internal vertex of $T_0$ (as otherwise $u \in V(T_0)\cap V(N_0)$ and $\emb_0(u) = u$). Then $\iota_0(u)$ is also an internal vertex of $\Tin$, and $u$ has the same in-and out-degree as $\iota_0(u)$. Thus in particular $u$ has out-degree $2$, as $T_0$ is binary.
    
    For the case that $\iota_0(u) = \future$, by construction $\iota_0(u) = \iota(u)$. Furthermore $\iota(\emb(u))\neq \future$ (as this would imply $\iota_0(\emb_0(u)) = \iota(\emb(u) = \future = \iota_0(u)$, a contradiction). Then as $\sigma$ is a $(S,\{\past, \future\})$-containment structure, $u$ has out-degree $2$ in $D(N,T)$, and therefore in $D(N_0,T_0)$.
    
    \medskip
    The above conditions show that $\psi$ is a $(P\cup S, \{\future\})$-containment structure, i.e. an $F$-partial solution. Next we show that $\psi$ is well-behaved:
    
    \item {\bf $D(N_0, T_0)$ contains no redundant arcs or vertices:} Suppose for a contradiction that $D(N_0,T_0)$ contains a $\future$-redundant arc or vertex $a$. We will show that such an arc or vertex is also redundant w.r.t $\sigma$, a contradiction as $\sigma$ is well-behaved.
    Consider the case that $a$ is an arc $uv$ Then $\iota_0(u) = \iota_0(v) = \future$. It follows by construction that $u,v$ are vertices of $D(N,T)$ and $\iota(u) = \iota(v) = \future$. In addition we have that $\iota_0(u') = \future$ for any vertex $u'$ on the path $\emb_0(uv) = \emb(uv)$, and hence $\emb(u') = \future$ for such $u'$. It follows that $uv$ is $\future$-redundant w.r.t $\sigma$, the desired contradiction. Essentially the same arguments can also be made for redundant network arcs and redundant tree or network vertices.

    \item {\bf For any $y,y' \in \{\future\}$ with $y \neq y'$, there is no arc $uv$ in $D(N_0, T_0)$ for which $\iota_0(u) = y$ and $\iota_0(v) = y'$:} This follows immediately from the fact that  $|\{\future\}| = 1$ so there are no such $y, y'$.
    
    \item {\bf For any $u,v,$ in $V(D(N_0, T_0))$ with $\iota_p(v), \iota_0(v) \in P\cup S$, if $D(N_0, T_0)$ has a path from $u$ to $v$ then $\NinTin$ has a path from $\iota_0(u)$ to $\iota_0(v)$:} This has already been shown, at the start of the proof for this claim.\claimqedhere
\end{itemize}
\end{claimproof}

We now have that $\psi$ satisfies all the conditions of a well-behaved $F$-partial solution.
Finally we need to show that $\sigma$ is the $\{P \to \past\}$-restriction of $D(N_0, T_0)$. 
    
\begin{claim}
    $\sigma$ is the $\{P \to \past\}$-restriction of $D(N_0, T_0)$
\end{claim}    
\begin{claimproof} 
    Let $\sigma'' = (D(N'',T''), \emb'', \iota'')$ denote the $\{P\to \past\}$-restriction of $\psi$.
    We show $\sigma = \sigma''$ for their three elements individually.

    {\bf Equality of $D(N'',T'')$ and $D(N,T)$}:
    To show this, it is enough to show that that $D(N'',T'')$ is $D(N_0,T_0)$ with the arcs of $A_p$ and vertices of $V_p$ removed, i.e. these arcs and vertices are exactly the ones that become redundant when constructing the $\{P \to \past\}$-restriction of $\psi$.  Since these were exactly the arcs and vertices that were added to $D(N,T)$ to produce $D(N_0,T_0)$, this is enough to show that $D(N'',T'') = D(N,T)$.
    
    Let $g$ be the restriction function $\{P\to \past\}$.
    We claim that for the isolabelling $\iota_u \circ g$, the redundant arcs are exactly those of $A_p$.
    
    Indeed, consider any arc $uv$ in $A_p$. Since $uv \in A_p$, it must have been made $\past$-redundant in the construction of $\sigma'$ from $\psi'$.
    So if $uv$ is a tree arc, then $\iota_0'(z) \in P$ for all $z$ in $\{u,v\}\cup V(\emb_0'(uv))$ (as all these vertices are labelled $\past$ after applying $\{P\to \past\}$).
    Then by construction of $\psi$, we have $\emb_0(uv) = \emb_0'(uv)$, and also $\iota_0(z) = \iota_0'(z)$ for all $z$ in $\{u,v\}\cup V(\emb_0'(uv))$ (since either $z \in V_p$ or $\iota'(z)=\past$).
    Thus $\iota_0(z) \in P$ for all $z$ in $\{u,v\}\cup V(\emb_0(uv))$, and so $uv$ becomes $\past$-redundant after applying $\{P\to \past\}$.
    For a network arc $uv$ in $A_p$, a similar argument holds, but we need to be careful if there is a tree arc $u'v'$ for which $uv$ is in $\emb(u'v')$ (in particular, we would have a problem if $u'v'$ is not an arc in $D(N',T')$, as $u'v'$ could conceivably prevent $uv$ from becoming $\past$-redundant). For such an arc $u'v'$, note that $u'v'$ must also be in $A_p$ (otherwise $u'v'$ is an arc in $D(N',T')$ but $\emb'(uv)=\emb_0(uv)$ is not a path in $D(N',T')$, a contradiction). It follows then that $\iota_0'(u'),\iota_0(v') \in P$. We also have that $\iota_0'(u),\iota_0(v) \in P$. So by a similar argument to tree arcs, we have that $uv$ is redundant after applying $\{P\to \past\}$ in $\psi$, as required.
    
    Conversely consider any redundant arc $uv$ in $D(N_0,T_0)$ after applying $\{P\to \past\}$. As $\psi$ is well-behaved, we may assume  any such arc is $\past$-redundant.
    Then $\iota_0(u), \iota_0(v) \in P$. 
    This implies among other things that $uv$ is not in $A_z$ (as that would require $\iota_0(u), \iota_0(v) \in F \cup \{\future\}$) so $uv$ is also an arc in $D(N_0', T_0')$.
    Similarly if $uv$ is a tree arc then all arcs in $\emb_0(uv)$ as also in $D(N_0', T_0')$, and if $uv$ is a network arc that is part of a path $\emb_0(u'v')$, then $u'v'$ is also an arc in $D(N_0,T_0')$.
    Furthermore $\iota_0'(z) = \iota_0(z)$ for any vertex in one of these arcs (as $\iota'(z)\neq \future$ and $\iota'(z) \notin V_z$).
    It is then easy to see that, just as $uv$ is redundant in $\psi$ after applying $\{P\to \past\}$, $uv$ is also redundant in $\psi'$ after applying $\{P\to \past\}$, and so $uv$ is in $A_p$.
    
   We have now shown that $A_p$ is exactly the set of arcs in $\psi$ that are redundant after applying $\{P\to \past\}$. We now consider the vertices.
    First consider a vertex $v \in V_p$. Then as $V_p$ is in $D(N_0',T_0')$ but not $D(N',T')$, all incident arcs of $v$ are in $A_p$. Thus $v$ becomes isolated after removing redundant arcs from $D(N,T)$. As $v \in V_p$, by construction $\iota_0(v)=\iota_0'(v) \in P$. Similarly this holds for any $v'$ such that $\emb_0'(v') = v$ or $\emb_0'(v)=v'$.  As such, $v$ is $\past$-redundant after applying $\{P \to \past\}$, as required.
    
    Conversely, suppose $v$ is $\past$-redundant w.r.t. $(D(N_0,T_0), \emb_0, \iota_0 \circ \{P \to \past\})$. Then $v$ is isolated after removing $A_p$, and $\iota_0(v), \iota_0(v') \in P$, for $v'$ any vertex such that $\emb_0(v)=v'$ or $\emb_0(v')=v$. These vertices are also in $D(N_0', T_0')$ (they cannot be in $V_z$ as that would require them being labelled with something in $\{z,\future\}$).
    Then by construction $\iota_0'(v) = \iota_0(v), \iota_0'(v') = \iota_0(v')$ are in $P$ as well, and as such $v$ is $\past$-redundant w.r.t $(D(N_0',T_0'), \emb_0', \iota_0' \circ g)$, and so $v \in V_p$, as required.

    {\bf Equality of $\emb''$ and $\emb$}:
    Here we use the fact that $D(N'',T'') = D(N,T)$. Consider any arc $uv$ in $T'' = T$.
    Then by construction, $\emb''(uv) = \emb_0(uv)$, and furthermore $\emb_0(uv) = \emb(uv)$ as $uv$ is an arc in $T$.
    Thus $\emb''(uv) = \emb(uv)$ for all arcs $uv$ in $T''=T$, and so $\emb'=\emb$.

    {\bf Equality of $\iota''$ and $\iota$}:
    Consider any vertex $u$ in $D(N'',T'') = D(N,T)$, and suppose first that $\iota_0(u) = \iota(u)$. Then $\iota_0(u) \notin P$ (as $\iota(u) \in S \cup \{\past, \future\}$), from which it follows that $\iota''(u) = \iota_0(u) = \iota(u)$.
    If on the other hand $\iota_0(u) \neq \iota(u)$, by construction of $\iota_0$ this can only happen if $\iota'(u) = \past$ (we do not need to consider the case $u \in V_p$ as we know $u$ is a vertex of $D(N,T)$). In this case $\iota_0(u) = \iota_0'(u)$, which must be in $P$ (as otherwise $\iota'(u) = \iota_0'(u) \neq \past$.) Then by construction $\iota''(u) = \past$, and also $\iota(u) = \iota'(u) = \past$.
    Thus $\iota''(u) = \iota(u)$ for all vertices $u$ in $D(N,T)$.
\end{claimproof}

As we have now shown that $\psi$ is a well-behaved $F$-partial solution and $\sigma$ is the $\{P\to \past\}$-restriction of $\sigma$, we have that $\sigma$ is valid, as required.
\end{proof}

\subsection{Validity for Join bags}\label{sec:joinBags}

In order to characterize validity for Join bags, we need to introduce a third type of $(S,\Y)$-containment structure.
In our notation we may think of a Join bag as being expressed by the tuple $(P,S,F)$ where $P$ can be decomposed into $L\cup R$, such that the ``left'' child bag is $(L, S, F\cup R)$ and the ``right'' child bag is $(R,S,F\cup L)$.
We want to characterize the validity of a signature $\sigma$ for a Join bag in terms of the validity of signatures for each of the child bags. The main idea is to find two signatures $\sigma_L$ and $\sigma_R$ (for different child bags) which are `compatible', in the sense that the two witnesses for these signatures can be combined, and such that the resulting partial solution is a witness for $\sigma$.
In order to facilitate this characterization, it will be useful to define a ``3-way'' analogue of a signature, called a \emph{reconciliation}, in which we use the labels $\{\pleft,\pright,\future\}$ instead of $\{\past, \future\}$. This is defined below.

\begin{definition}[reconciliation]
  Let $(L\cup R, S, F)$ be a Join bag in the tree decomposition of $\NinTin$
  with child bags $(L, S, F\cup R)$ and $(R,S,F \cup L)$.
  Then, we call an $(S, \{\pleft,\pright, \future\})$-containment structure
a  \emph{reconciliation} for $(L\cup R, S, F)$.
  Such a reconciliation $\mu$ for $(L\cup R, S, F)$ is called \emph{valid}
  if it is the $(L\to \pleft, R\to \pright)$-restriction of a well-behaved $F$-partial solution
  (i.e.\ a $(L\cup R \cup S, \{\future\})$-containment structure).
\end{definition}

\begin{lemma}\label{lem:joinBagOverall}
  Let $(L\cup R, S, F)$ be a Join bag with child bags $(L, S, F\cup R)$ and $(R,S,F \cup L)$,
  and let $\sigma$ be a signature for $(L\cup R, S, F)$.
  Then, $\sigma$ is valid
  if and only if
  there is a reconciliation $\mu$ for $(L\cup R, S, F)$
  and valid signatures $\sigma_L$ and $\sigma_R$ for $(L, S, F \cup R)$ and $(R, S, F \cup L)$, respectively, such that
  \begin{compactenum}[(a)]
    \item $\sigma$ is the $(\{\pleft, \pright\} \to \past)$-restriction of $\mu$,
    \item $\sigma_L$ is the $(\pleft \to \past, \pright \to \future)$-restriction of $\mu$, and 
    \item $\sigma_R$ is the $(\pright \to \past, \pleft \to \future)$-restriction of $\mu$.
  \end{compactenum} 
\end{lemma}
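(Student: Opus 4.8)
The plan is to prove both directions by manipulating restrictions of a single partial solution, using the transitivity of restrictions (\cref{lem:transitivity-of-restrictions}) to rewrite a composition of relabellings as one restriction, and \cref{lem:WBrestriction} to carry well-behavedness through each relabelling. Throughout I would regard every structure involved ($\psi$, $\mu$, $\sigma$, $\sigma_L$, $\sigma_R$) as a $(\cdot,\Y)$-containment structure over a common label set $\Y\supseteq\{\past,\future,\pleft,\pright\}$, in which only some labels are actually used; this makes each of $L\to\pleft$, $\{\pleft,\pright\}\to\past$, $\pright\to\future$, etc.\ a genuine restriction function, so that \cref{lem:transitivity-of-restrictions,lem:WBrestriction} apply verbatim.

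\textbf{Forward direction.} Suppose $\sigma$ is valid and let $\psi$ be a witness, i.e.\ a well-behaved $F$-partial solution whose $(L\cup R\to\past)$-restriction is $\sigma$. I would set $\mu$ to be the $(L\to\pleft,R\to\pright)$-restriction of $\psi$ (so $\mu$ is a reconciliation, in fact a valid one), and set $\sigma_L,\sigma_R$ to be the $(L\to\past,R\to\future)$- and $(R\to\past,L\to\future)$-restrictions of $\psi$. Conditions (a)--(c) then follow by composing the defining restriction functions and applying \cref{lem:transitivity-of-restrictions} (e.g.\ $(\{\pleft,\pright\}\to\past)\circ(L\to\pleft,R\to\pright)=(L\cup R\to\past)$ gives (a)). To see that $\sigma_L$ is valid I would exhibit the witness $\psi_L:=(R\to\future)$-restriction of $\psi$, a well-behaved $(F\cup R)$-partial solution (well-behaved by \cref{lem:WBrestriction}, whose hypothesis is trivial since the only target label is $\future$), and note $\sigma_L$ is its $(L\to\past)$-restriction by transitivity. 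The well-behavedness of $\sigma_L$ itself follows from \cref{lem:WBrestriction} once I check that $\psi$ has no arc between an $L$-vertex and an $R$-vertex and none between an $L$-vertex and a $\future$-vertex: the former holds because $S$ separates $L$ from $R$ in $\NinTin$ (no bag of the tree decomposition contains a vertex of $L$ and a vertex of $R$, so $\NinTin$ has no such arc) together with \cref{def:isolabelling}\eqref{it:isomorph}; the latter holds because such an arc would survive the $(L\cup R\to\past)$-restriction as a non-redundant $\past$-$\future$ arc, contradicting that the valid signature $\sigma$ is well-behaved (\cref{def:wellBehaved}\eqref{it:WBNoLabelCrossings}). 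The argument for $\sigma_R$ is symmetric.

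\textbf{Reverse direction, reduction.} Since the lemma only asks for $\mu$ to be \emph{a} reconciliation, the real content is to show that the given $\mu$ is in fact valid. Indeed, if $\mu$ is valid with witness $\psi$ (a well-behaved $F$-partial solution whose $(L\to\pleft,R\to\pright)$-restriction is $\mu$), then condition (a) together with \cref{lem:transitivity-of-restrictions} yields $\sigma=(L\cup R\to\past)$-restriction of $\psi$, so $\sigma$ is valid. Hence it suffices to \emph{construct} such a $\psi$ from the witnesses $\psi_L,\psi_R$ of the valid signatures $\sigma_L,\sigma_R$ and from the reconciliation $\mu$.

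\textbf{The main obstacle: merging the witnesses.} The construction would glue $\psi_L$ (which carries the detailed structure of $L$ and $S$, with $F\cup R$ collapsed to $\future$) and $\psi_R$ (detailed $R$ and $S$, with $F\cup L$ collapsed to $\future$) along their common $S$-part, reconciling their $\future$-regions, and using $\mu$ as the template that records how the $\pleft$-, $\pright$- and $\future$-parts fit together. I expect this merge to be the hard part of the proof. The delicate verifications are (i) that the glued object is a legal display graph, i.e.\ remains acyclic, keeps its tree side an out-forest, and respects the degree bounds; and (ii) that the combined embedding function is well defined, in particular arc-disjoint (\cref{def:embed}\eqref{it:paths disjoint}) and satisfying the shared-vertex condition (\cref{def:embed}\eqref{it:shared nodes}) across the future region --- the only region that both $\psi_L$ and $\psi_R$ describe, where embedding paths originating in $L$ and in $R$ can collide. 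This is exactly where the reconciliation $\mu$ (rather than $\sigma_L,\sigma_R$ alone) is indispensable: by keeping $\pleft$ and $\pright$ distinct, $\mu$ records consistently how embeddings entering the shared future from the $L$-side and the $R$-side interleave, so that I can combine them without producing coinciding arcs or spurious meeting points. After building $\psi$ I would confirm its well-behavedness via \cref{lem:WBrestriction} and verify, using the redundancy bookkeeping of \cref{obs:redundantQ,obs:redundant tree}, that its $(L\to\pleft,R\to\pright)$-restriction returns precisely $\mu$, completing the reduction.
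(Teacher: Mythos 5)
Your forward direction is correct and follows the paper's route exactly: the paper factors it into \cref{lem:joinBag} (take the $(L\to\pleft,R\to\pright)$-restriction of a witness $\psi$ for $\sigma$ as $\mu$) and \cref{lem:reconciliationToChildren} (exhibit $\psi_L:=(R\to\future)$-restriction of $\psi$ as a witness for $\sigma_L$), with the same uses of \cref{lem:transitivity-of-restrictions} and \cref{lem:WBrestriction} that you describe. Your reduction of the reverse direction to ``show $\mu$ is valid'' is also the paper's reduction (\cref{lem:joinBag} again).

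The gap is that the reverse direction is never actually proved: you correctly identify that one must merge the witnesses $\psi_L,\psi_R$ guided by $\mu$ into a single well-behaved $F$-partial solution $\psi$, and you correctly name the verifications this requires, but you stop at announcing the obstacle. Essentially all of the lemma's difficulty lives in that merge, and the paper spends several pages on it (\cref{lem:childrentoReconciliation}): it explicitly defines the glued display graph as $\mu$'s display graph with the deleted pieces $A_L,V_L$ of $\psi_L$ and $A_R,V_R$ of $\psi_R$ re-attached; it proves the crucial disjointness facts (e.g.\ $A_L'\cap A_R'=\emptyset$, and every vertex of the glued graph has \emph{all} its incident arcs inside one of $\mu$, $\psi_L$, $\psi_R$, which is what lets each embedding/degree/isolabelling condition be inherited from one constituent); it defines $\emb$ and $\iota$ piecewise and checks every clause of \cref{def:containment struct,def:embed,def:isolabelling,def:wellBehaved}; and it then runs the redundancy bookkeeping of \cref{obs:redundantQ} in both directions to show that the $(L\to\pleft,R\to\pright)$-restriction of $\psi$ returns exactly $\mu$. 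Without these steps, your sentence ``so that I can combine them without producing coinciding arcs or spurious meeting points'' is an assertion rather than an argument. A secondary point: \cref{lem:childrentoReconciliation} needs $\mu$ to be well-behaved, whereas the statement only gives you an arbitrary reconciliation satisfying (a)--(c); a complete write-up should say where that well-behavedness comes from rather than leaving it implicit.
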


To prove this lemma, we first show the following.

\begin{lemma}\label{cor:valid-recon-WB}
Any valid reconciliation is well-behaved.
\end{lemma}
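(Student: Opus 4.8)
The plan is to obtain well-behavedness of $\mu$ directly from \cref{lem:WBrestriction}. By definition, a valid reconciliation $\mu$ is the $(L\to\pleft, R\to\pright)$-restriction of some well-behaved $F$-partial solution $\psi = (D(N_0,T_0),\emb_0,\iota_0)$, that is, of a well-behaved $(L\cup R\cup S, \{\future\})$-containment structure. To fit this into the framework of \cref{lem:WBrestriction}, I would first reinterpret $\psi$ as a well-behaved $(L\cup R\cup S, \{\pleft,\pright,\future\})$-containment structure in which the labels $\pleft,\pright$ are simply unused; this changes neither the set of redundant arcs and vertices nor any of the conditions of \cref{def:wellBehaved}, so well-behavedness is preserved. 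The map $g$ sending $L\to\pleft$, $R\to\pright$, and fixing $S$ as well as every label, is then a genuine restriction function onto $S\cup\{\pleft,\pright,\future\}$, and its $g$-restriction of $\psi$ is exactly $\mu$. Thus it suffices to verify the hypothesis of \cref{lem:WBrestriction}: that no arc $uv$ of $D(N_0,T_0)$ has $g(\iota_0(u))$ and $g(\iota_0(v))$ equal to two \emph{distinct} labels of $\{\pleft,\pright,\future\}$.

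Verifying this hypothesis reduces to ruling out two kinds of arcs, since if both endpoints receive the same label (both in $L$, both in $R$, or both $\future$) there is nothing to check. The first bad case is an arc $uv$ with $\iota_0(u)\in L$ and $\iota_0(v)\in R$. As $L,R\subseteq L\cup R\cup S$ both lie in the present of $\psi$, \cref{def:isolabelling}\eqref{it:isomorph} would force $\iota_0(u)\iota_0(v)$ to be an arc of $\NinTin$; but in a nice tree decomposition the present $S$ of a Join bag separates the two children's pasts $L$ and $R$ (a vertex of $L$ appears in the present only within the left subtree and a vertex of $R$ only within the right subtree, so no bag holds both), hence no such arc of $\NinTin$ exists. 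This excludes all $L$--$R$ arcs.

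The remaining bad case is an arc between a present vertex with $\iota_0(u)\in L\cup R$ and a $\future$-labelled vertex; this is the step I expect to be the main obstacle, since the $\future$ label discards the identity of its endpoint and so \cref{def:isolabelling} constrains nothing directly. The key is the degree condition of \cref{def:containment struct}: a vertex $u$ with $\iota_0(u)\in L\cup R\subseteq P$ has the same in- and out-degree in $D(N_0,T_0)$ as $\iota_0(u)$ has in $\NinTin$. Since $S$ separates $P$ from $F$, every neighbour of $\iota_0(u)$ in $\NinTin$ lies in the present $L\cup R\cup S$. Combining the surjectivity of $\iota_0$ onto $L\cup R\cup S$, its injectivity there, and \cref{def:isolabelling}\eqref{it:isomorph}, the present-neighbours of $u$ in $D(N_0,T_0)$ are in bijection with the $\NinTin$-neighbours of $\iota_0(u)$. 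Therefore the full degree of $u$ is already accounted for by present-neighbours, leaving no incident arc to a $\future$-labelled vertex, which excludes this case as well.

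With both bad cases ruled out, the hypothesis of \cref{lem:WBrestriction} holds, and its conclusion gives that $\mu$, being the $g$-restriction of the well-behaved structure $\psi$, is itself well-behaved. The only genuinely delicate point is the degree-counting argument of the previous paragraph; the remaining steps are bookkeeping around the definitions of restriction and isolabelling together with the separation properties ($S$ separates $L$ from $R$ and $P$ from $F$) guaranteed by the nice tree decomposition.
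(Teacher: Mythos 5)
Your proposal is correct and follows essentially the same route as the paper: both reduce the claim to \cref{lem:WBrestriction} by verifying that no arc of the witness $\psi$ acquires two distinct labels from $\{\pleft,\pright,\future\}$, ruling out $L$--$R$ arcs via the separation properties of the tree decomposition and ruling out arcs from $L\cup R$ to a $\future$-labelled vertex via the degree condition of \cref{def:containment struct} combined with the surjectivity, injectivity, and isomorphism properties of the isolabelling. Your explicit reinterpretation of $\psi$ over the enlarged label set $\{\pleft,\pright,\future\}$ is a small formality the paper glosses over, but it does not change the argument.
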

\begin{proof}
 Let $\mu$ be a valid reconciliation for $(L\cup R,S,F)$, and let $\psi = (D(N,T), \emb, \iota)$  be a well-behaved $F$-partial solution for which $\mu$ is the $g$-restriction, where $g$ is the function $\{L \to \pleft, R \to \pright\}$.
 We first show that there is no arc $uv$ in $D(N,T)$ with $g(\iota(u)) = y, g(\iota(v)) = y'$ for any $y,y' \in \{\pleft, \pright, \future\}$ with $y\neq y'$. 
 Recall that by properties of a tree decomposition, there are no arcs between $F$ and $L\cup R$ in $\NinTin$, nor between $L$ and $R$.
 So now consider a vertex $u \in V(D(N,T))$; we will show 
 that if $g(\iota(u)) \in \{\pleft, \pright\}$ then $u$ has no neighbour $v$ with $g(\iota(v)) \in \{\pleft, \pright, \future\}\setminus\{g(\iota(u))\}$, which is enough to show the claim.
 If $\iota(u) \in L$, then all neighbours of $\iota(u)$ in $\NinTin$ are in $L \cup S$. Furthermore as $\psi$ is a well-behaved $(L\cup R \cup S, \{\future\})$-containment structure, the degree of $u$ in $D(N,T)$ is equal to the degree of $\iota(u)$ in $\NinTin$, and for each neighbour $v'$ of $\iota(u)$ in $\NinTin$, there is a neighbour $v$ of $u$ in $D(N,T)$ with $\iota(v)=v'$. 
 It follows that $\iota(v) \in L \cup S$ and thus $g(\iota(v)) \in S\cup \{\pleft\}$ for any neighbour of $u$ in $D(N,T)$, while $g(\iota(u)) = \pleft$.
 A similar argument shows that if $\iota(u)\in R$, then $g(\iota(v)) \in S\cup \{\pright\}$ for any neighbour of $u$ in $D(N,T)$ and $g(\iota(u)) = \pright$.
 Finally if $\iota(u)\in S\cup \{\future\}$, then $g(\iota(u)) = \iota(u) \notin \{\pleft, \pright\}$, and we are done.
 
 As no vertex labelled $\pleft$ or $\pright$ by $\iota \circ g$ has a neighbour with 
 a different label from~$\{\pleft,\pright,\future\}$, there are in fact no arcs $uv$ in $D(N,T)$ with $g(\iota(u)) = y, g(\iota(v)) = y'$ for any $y,y' \in \{\pleft, \pright, \future\}$ with $y\neq y'$.
We can therefore apply \cref{lem:WBrestriction} to see that $\sigma$ is well-behaved.
\end{proof}

We next motivate the definition of a reconciliation by showing that the validity of a signature for  $(L\cup R, S, F)$ can be characterized by the validity of reconciliations for  $(L\cup R, S, F)$.

\begin{lemma}\label{lem:joinBag}
Let $(L\cup R, S, F)$ be a Join bag with child bags $(L, S, F\cup R)$ and $(R,S,F \cup L)$, and let $\sigma$ be a signature for $(L\cup R, S, F)$.
Then $\sigma$ is valid if and only if there is a valid reconciliation $\mu$ for $(L\cup R, S, F)$ such that $\sigma$ is the $\{\{\pleft, \pright\} \to \past\}$-restriction of $\mu$.
\end{lemma}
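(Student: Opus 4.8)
The plan is to prove both directions using a single well-behaved $F$-partial solution $\psi$ as the common witness, and to relate the various restrictions through \cref{lem:transitivity-of-restrictions}. The one computation that drives everything is the composition identity
\[
(\{\pleft,\pright\}\to\past)\circ(L\to\pleft,\ R\to\pright) \;=\; (L\cup R\to\past),
\]
which holds because a vertex of $L$ is first relabelled $\pleft$ and then $\past$ (symmetrically for $R$), while vertices mapped into $S$ and the label $\future$ are left unchanged throughout. Note that $L\cup R$ is exactly the past $P$ of the bag $(L\cup R,S,F)$, so the right-hand side is precisely $(P\to\past)$.

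For the forward direction I would assume $\sigma$ is valid and fix a witness $\psi$: a well-behaved $F$-partial solution whose $(P\to\past)$-restriction is $\sigma$. I then set $\mu$ to be the $(L\to\pleft,\ R\to\pright)$-restriction of $\psi$. By \cref{cor:restrictionIsAStructure} this $\mu$ is an $(S,\{\pleft,\pright,\future\})$-containment structure, i.e.\ a reconciliation, and it is \emph{valid} by definition, since it is the $(L\to\pleft,\ R\to\pright)$-restriction of a well-behaved $F$-partial solution. It then remains to check that $\sigma$ is the $(\{\pleft,\pright\}\to\past)$-restriction of $\mu$, which follows from \cref{lem:transitivity-of-restrictions} together with the composition identity above: the $(\{\pleft,\pright\}\to\past)$-restriction of $\mu$ equals the $(P\to\past)$-restriction of $\psi$, namely $\sigma$.

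For the converse I would take a valid reconciliation $\mu$ with $\sigma$ its $(\{\pleft,\pright\}\to\past)$-restriction, together with a witness $\psi$ (a well-behaved $F$-partial solution) of which $\mu$ is the $(L\to\pleft,\ R\to\pright)$-restriction. Applying \cref{lem:transitivity-of-restrictions} with the same composition identity shows that $\sigma$ is the $(P\to\past)$-restriction of $\psi$. To conclude that $\sigma$ is a \emph{valid signature} it then remains only to show that $\sigma$ is well-behaved. For this I would use \cref{cor:valid-recon-WB} to see that $\mu$ is well-behaved, and then apply \cref{lem:WBrestriction} to the restriction function $(\{\pleft,\pright\}\to\past)$: its hypothesis holds because any arc of $\mu$ whose two endpoints both receive labels in $\{\past,\future\}$ must already have both endpoints labelled in $\{\pleft,\pright,\future\}$ (anything mapped into $S$ is unaffected), hence by \cref{def:wellBehaved}\eqref{it:WBNoLabelCrossings} those endpoints carry identical labels, and identical labels stay identical after merging $\pleft$ and $\pright$ into $\past$.

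The main obstacle is purely bookkeeping with the changing label sets: the witness $\psi$ lives over $\{\future\}$, the reconciliation $\mu$ over $\{\pleft,\pright,\future\}$, and the signature $\sigma$ over $\{\past,\future\}$, whereas \cref{lem:transitivity-of-restrictions} is stated for a fixed label set $\Y$. I would handle this by regarding all three objects over the common superset $\Y=\{\past,\pleft,\pright,\future\}$, each using only a subset of these labels. Adding unused labels to $\Y$ alters neither the containment-structure conditions of \cref{def:containment struct} nor, by \cref{obs:redundantQ}, which arcs and vertices are redundant (no vertex can be $y$-redundant for a label $y$ that never appears in the image of the isolabelling). Hence every restriction in sight is literally a restriction function over this fixed $\Y$, and \cref{lem:transitivity-of-restrictions} applies verbatim.
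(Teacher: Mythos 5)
Your proposal is correct and follows essentially the same route as the paper: construct $\mu$ as the $(L\to\pleft,\,R\to\pright)$-restriction of the witness $\psi$ in one direction, pull back to $\psi$ via the witness of $\mu$ in the other, and in both cases invoke \cref{lem:transitivity-of-restrictions} with the composition identity $(\{\pleft,\pright\}\to\past)\circ(L\to\pleft,\,R\to\pright)=(L\cup R\to\past)$. Your additional care about the changing label sets (working over a common $\Y$) and about verifying well-behavedness of $\sigma$ in the converse via \cref{cor:valid-recon-WB} and \cref{lem:WBrestriction} fills in details the paper leaves implicit, but does not change the argument.
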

\begin{proof}
 Suppose first that $\sigma$ is valid. Then there is a well-behaved $(L\cup R \cup S, \{\future\})$-containment structure $\psi$ (an $F$-partial solution) such that $\sigma$ is the $\{L\cup R \to \past\}$-restriction of $\psi$. Now let $\mu$ be the $\{L \to \pleft, R \to \pright\}$-restriction of $\psi$. By \cref{cor:restrictionIsAStructure}, $\mu$ is an $(S, \{\pleft, \pright, \future\})$-containment structure, and by construction $\mu$ is valid.
 Now let $\sigma'$ be the $\{\{\pleft, \pright\}\to \past\}$-restriction of $\mu$.
  Then transitivity implies that $\sigma'$ is also the $\{L\cup R \to \past\}$-restriction of $\psi$, that is $\sigma' = \sigma$.
  Thus $\sigma$ is the $\{\{\pleft, \pright\}\to \past\}$-restriction of $\mu$, as required.
  
  Conversely, suppose there is a reconciliation  $\mu$ for $(L\cup R, S, F)$ such that $\sigma$ is the $\{\{\pleft, \pright\} \to \past\}$-restriction of $\mu$.
  Then as $\mu$ is valid, there is a well-behaved $F$-partial solution $\psi$ such that $\mu$ is the $\{L\to \pleft, R\to \pright\}$-restriction of $\psi$. Then again by transitivity, $\sigma$ is also the $\{L\cup R\to \past\}$-restriction of $\psi$. Thus $\sigma$ is valid.
\end{proof}

Now we show how the validity of a reconciliation $\mu$ for $(L\cup R, S, F)$ can be characterized by the validity of signatures for the child bags.

\begin{lemma}\label{lem:reconciliationToChildren}
Let $(L\cup R, S, F)$ be a Join bag with child bags $(L, S, F\cup R)$ and $(R,S,F \cup L)$, and let $\mu$ be a well-behaved reconciliation for $(L\cup R, S, F)$.
Let  $\sigma_L$ be the $\{\pleft \to \past, \pright \to \future\}$-restriction of $\mu$, and $\sigma_R$ the $\{\pright \to \past, \pleft \to \future\}$-restriction of $\mu$.
If $\mu$ is valid, then $\sigma_L$ is a valid signature for $(L, S, F \cup R)$ and 
$\sigma_R$  is a valid signature for $(R, S, F \cup L)$.
\end{lemma}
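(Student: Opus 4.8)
The plan is to produce an explicit witness for $\sigma_L$ by restricting the partial solution that witnesses the validity of $\mu$, and then to verify the defining properties of a valid signature using transitivity of restrictions (\cref{lem:transitivity-of-restrictions}) together with the preservation of well-behavedness (\cref{lem:WBrestriction}). The argument for $\sigma_R$ will follow by symmetry.

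First, since $\mu$ is valid, there is a well-behaved $F$-partial solution $\psi = (D(N,T),\emb,\iota)$, i.e.\ a well-behaved $(L\cup R\cup S,\{\future\})$-containment structure, such that $\mu$ is the $(L\to\pleft,R\to\pright)$-restriction of $\psi$. I would take as candidate witness $\psi_L$, the $(R\to\future)$-restriction of $\psi$. By \cref{cor:restrictionIsAStructure}, $\psi_L$ is a $(L\cup S,\{\future\})$-containment structure, which is exactly an $(F\cup R)$-partial solution for the left child bag $(L,S,F\cup R)$ (the present $S$ is merged with the past $L$, and the future label $\future$ now also covers $R$, matching the child's future $F\cup R$). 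Its well-behavedness is immediate from \cref{lem:WBrestriction}: the hypothesis on the restriction function $(R\to\future)$ holds vacuously, since the label set $\{\future\}$ is a singleton and so cannot contain two distinct labels.

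Next I would identify $\sigma_L$ with a single restriction of $\psi$. By definition $\sigma_L$ is the $(\pleft\to\past,\pright\to\future)$-restriction of $\mu$, and $\mu$ is the $(L\to\pleft,R\to\pright)$-restriction of $\psi$; composing these two restriction functions and invoking \cref{lem:transitivity-of-restrictions} shows that $\sigma_L$ is the $(L\to\past,R\to\future)$-restriction of $\psi$. Applying \cref{lem:transitivity-of-restrictions} a second time to $\psi_L=(R\to\future)$-restriction of $\psi$ shows that the $(L\to\past)$-restriction of $\psi_L$ is likewise the $(L\to\past,R\to\future)$-restriction of $\psi$. Hence $\sigma_L$ is the $(L\to\past)$-restriction of $\psi_L$. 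Finally, $\sigma_L$ is itself well-behaved: since $\mu$ is well-behaved, \cref{def:wellBehaved}\eqref{it:WBNoLabelCrossings} tells us it has no arc joining two distinct labels of $\{\pleft,\pright,\future\}$, so relabelling by $(\pleft\to\past,\pright\to\future)$ creates no arc with one endpoint labelled \past and the other \future, and \cref{lem:WBrestriction} applies. With $\psi_L$ a well-behaved partial solution and $\sigma_L$ its $(L\to\past)$-restriction, \cref{def:partial-solution-signature} gives that $\sigma_L$ is valid for $(L,S,F\cup R)$. Interchanging the roles of $L$ and $R$ (and of $\pleft$ and $\pright$) yields the conclusion for $\sigma_R$.

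The step I expect to require the most care is the double use of transitivity: the restriction $(L\to\past,R\to\future)$ of $\psi$ arises from two compositions whose intermediate label sets differ a priori ($\{\pleft,\pright,\past,\future\}$ when routed through $\mu$, versus $\{\past,\future\}$ when routed through $\psi_L$). I would want to argue that these yield the same object, by noting that redundancy — and hence which arcs and vertices survive a restriction — depends only on the labels that actually appear in the image, namely $\past$ and $\future$, so that the unused labels $\pleft,\pright$ play no role. The remaining verifications (that $\psi_L$ meets the exact definition of an $(F\cup R)$-partial solution, and that the symmetric argument is literally valid) are routine once this label-set bookkeeping is pinned down.
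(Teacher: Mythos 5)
Your proposal is correct and follows essentially the same route as the paper: both take the witness $\psi$ for $\mu$, form $\psi_L$ as its $(R\to\future)$-restriction, and use \cref{lem:transitivity-of-restrictions} twice to identify $\sigma_L$ with the $(L\to\past)$-restriction of $\psi_L$. Your closing remark about the label-set bookkeeping in the two applications of transitivity is a fair observation about a detail the paper itself glosses over, but it does not change the argument.
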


\begin{proof}
 Suppose that $\mu$ is valid. Then there is a well-behaved $F$-partial solution $\psi$ such that $\mu$ is the $\{L\to \pleft, R\to \pright\}$-restriction of $\mu$.
 By construction and \cref{cor:restrictionIsAStructure}, $\sigma_L$ is an $(S, \{\past,\future\})$-containment structure, and thus a signature for $(L, S, F\cup R)$. 
 By transitivity, $\sigma_L$ is the $\{L\to \past, R\to \future\}$-restriction of $\psi$. We will show that $\sigma_L$ is a valid signature for $(L,S,F\cup R)$.
 
Let $\psi_L$ be the $\{R\to \future\}$-restriction of $\psi$. By construction, $\psi_L$ is an $F\cup R$-partial solution and by \cref{lem:WBrestriction} $\psi_L$ is well-behaved. Moreover by transitivity, the $\{L \to \past\}$-restriction of $\psi_L$ is also the $\{L\to \past, R\to \future\}$-restriction of $\psi$. That is, the $\{L \to \past\}$-restriction of $\psi_L$ is $\sigma_L$, and so $\sigma_L$ is valid for $(L, S, F\cup R)$.
(See \cref{fig:ReconciliationEasyDir}.)

\begin{figure}[t]
     \centering
\includegraphics[scale = 0.75]{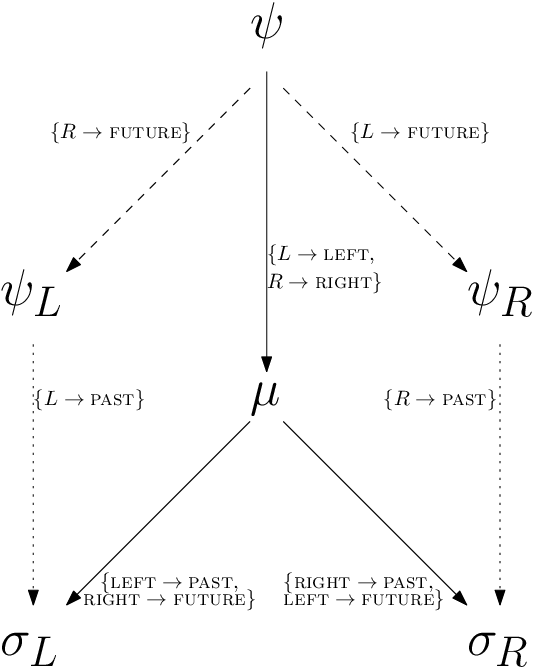}
     \caption{Illustration of proof for reconciliations on Join bags, validity of $\mu$ implies validity of $\sigma_L$ and $\sigma_R$. Solid lines are restriction relations that we may assume; the dashed lines shows the construction of $\psi_L$ and $\psi_R$ from $\psi$; dotted lines are relations we can infer using (multiple uses of) transitivity.}
     \label{fig:ReconciliationEasyDir}
 \end{figure}

A similar argument shows that $\sigma_R$ is a valid signature for $(R,S, F \cup R)$.
\end{proof}

\begin{lemma}\label{lem:childrentoReconciliation}
Let $(L\cup R, S, F)$ be a Join bag with child bags $(L, S, F\cup R)$ and $(R,S,F \cup L)$, and let $\mu$ be a well-behaved reconciliation for $(L\cup R, S, F)$.
Let  $\sigma_L$ be the $\{\pleft \to \past, \pright \to \future\}$-restriction of $\mu$, and $\sigma_R$ the $\{\pright \to \past, \pleft \to \future\}$-restriction of $\mu$.
If  $\sigma_L$ is a valid signature for $(L, S, F \cup R)$ and 
$\sigma_R$  is a valid signature for $(R, S, F \cup L)$, then $\mu$ is valid.
\end{lemma}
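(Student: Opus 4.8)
The plan is to establish the (converse) direction by explicitly constructing a well-behaved $F$-partial solution $\psi$ that witnesses the validity of $\mu$; that is, an $(L\cup R\cup S,\{\future\})$-containment structure of which $\mu$ is the $(L\to\pleft,R\to\pright)$-restriction. Since $\sigma_L$ and $\sigma_R$ are valid, \cref{def:partial-solution-signature} supplies witnesses $\psi_L$ (a well-behaved $(L\cup S,\{\future\})$-containment structure whose $(L\to\past)$-restriction is $\sigma_L$) and $\psi_R$ (similarly for $R$). I would assemble $\psi$ from these two witnesses, using $\mu$ as a common backbone and re-attaching the detail each witness retains. This construction and its verification are analogous to the constructive direction of \cref{lem:introduceBag}, with the reverse inclusion being \cref{lem:reconciliationToChildren}.

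The guiding observation is that, because $\mu$ is well-behaved, it contains no $\pleft$- or $\pright$-redundant arcs or vertices, so $\mu$ already retains every part of the intended solution that is ``mixed'': anything adjacent to $S$, any tree arc whose embedding path wanders between labels, and the whole $\future$-region. The only material missing from $\mu$ relative to a genuine $F$-partial solution is the \emph{purely}-$L$ detail (tree and network elements all of whose vertices, including the relevant embedding paths, carry the label $L$) and, symmetrically, the purely-$R$ detail; these are exactly the parts deleted when forming $\sigma_L$ from $\psi_L$ via $L\to\past$ (resp.\ $\sigma_R$ from $\psi_R$), hence recoverable from $\psi_L$ and $\psi_R$. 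I would therefore define $\psi$ by taking $\mu$, giving its $\pleft$- and $\pright$-vertices genuine identities in $L$ and $R$, and then re-attaching the purely-$L$ material of $\psi_L$ and the purely-$R$ material of $\psi_R$. The identities and attachment points are read off through the common restriction: since $\sigma_L$ is simultaneously the $(\pleft\to\past,\pright\to\future)$-restriction of $\mu$ and the $(L\to\past)$-restriction of $\psi_L$, and restrictions only delete, there is a bijection between the surviving $\pleft$-vertices of $\mu$ and the surviving $L$-vertices of $\psi_L$ (both mapping onto the $\past$-vertices of $\sigma_L$), and likewise for $\pright$/$R$ through $\sigma_R$. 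What makes the gluing well-defined is that $S$ separates $L$, $R$ and $F$ pairwise in $\NinTin$ (a tree-decomposition property of a Join bag, as already used in \cref{cor:valid-recon-WB}) together with redundancy-freeness of $\mu$, which forbids purely-$\pleft$ and purely-$\pright$ arcs in the backbone: consequently the re-attached purely-$L$ material meets the backbone only at its $\pleft$-vertices, the purely-$R$ material only at its $\pright$-vertices, and the two re-attached families are arc-disjoint from each other and from the backbone, with embedding paths that never leave their own region.

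It then remains to verify that $\psi$ is a well-behaved $(L\cup R\cup S,\{\future\})$-containment structure and that $\mu$ is its $(L\to\pleft,R\to\pright)$-restriction. For the first part one checks each clause in turn — that $D(N_0,T_0)$ is an acyclic display graph, that $\emb_0$ (the union of the embeddings of $\mu$, $\psi_L$, $\psi_R$) is an embedding function, that $\iota_0$ is an isolabelling, the two degree conditions of \cref{def:containment struct}, and finally the three well-behavedness conditions. Most of these follow by attributing each local configuration to its source: at a backbone vertex the incident arcs split into backbone arcs (inherited from $\mu$ with their $\mu$-embeddings) and re-attached purely-region arcs (inherited from a witness), and these families are arc-disjoint and embed into disjoint regions, so the embedding-function axioms combine cleanly; the degree condition matches $\NinTin$ because the backbone arcs at a $\pleft$-vertex are precisely the arcs surviving into $\sigma_L$, whose complement among that vertex's $L$-arcs in $\psi_L$ is exactly the re-attached purely-$L$ material. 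The separator property and the disjointness above rule out new cycles, degree violations and label-crossing arcs, while the separator-path condition \cref{def:wellBehaved}\eqref{it:WBSPaths} follows by decomposing any path into subpaths lying within a single source. For the second part, applying $L\to\pleft,R\to\pright$ renders precisely the re-attached material redundant (it becomes mono-labelled), so deleting it returns exactly $\mu$, while no backbone element becomes redundant since $\mu$ was redundancy-free and mixed structure survives relabelling; hence $\mu$ is valid. I expect the principal obstacle to be exactly this bookkeeping: setting up the bijective correspondences through $\sigma_L$ and $\sigma_R$ rigorously and confirming that the re-attached detail interacts with the backbone only through the interface vertices, so that each containment-structure and well-behavedness clause can be discharged by combining the two sources. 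As with the Introduce-bag argument, this is conceptually routine but notationally heavy.
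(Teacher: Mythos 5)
Your proposal is correct and follows essentially the same route as the paper: both construct a well-behaved $F$-partial solution $\psi$ by taking $\mu$ as the backbone and re-attaching exactly the arcs and vertices deleted from the witnesses $\psi_L$ and $\psi_R$ when forming $\sigma_L$ and $\sigma_R$ (the purely-$L$ and purely-$R$ material), defining $\emb$ and $\iota$ piecewise from $\mu$, $\psi_L$, $\psi_R$, and then verifying each containment-structure and well-behavedness clause by attributing every local configuration to a single source, before checking that the re-attached material is precisely what becomes redundant under $(L\to\pleft, R\to\pright)$. The key supporting facts you identify (the pairwise separation of $L$, $R$, $F$ by $S$, redundancy-freeness of $\mu$, and the correspondences through the common restrictions $\sigma_L$ and $\sigma_R$) are the same ones the paper relies on.
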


\begin{proof}%
Let $\psi_L$ be a well-behaved $F\cup R$-partial solution for which $\sigma_L$ is the $(L\to \past)$-restriction, and similarly
 let $\psi_R$ be a well-behaved $F \cup L$-partial solution for which $\sigma_R$ is the $(R\to \past)$-restriction. 
 Our strategy is to combine $\psi_L$ and $\psi_R$ into an $F$-partial solution $\psi$, and then show that $\mu$ is the $(L\to \pleft, R\to \pright)$-restriction of this $\psi$. 
 
 In what follows, for a given containment structure $\sigma$,  we will write the display graph, embedding function and isolabelling of $\sigma$ as $D(N_{\sigma}, T_{\sigma}), \emb_{\sigma}$ and $\iota_{\sigma}$.
 Thus we have  $\mu = (D(N_{\mu}, T_{\mu}), \emb_{\mu}, \iota_{\mu})$,
   $\sigma_L = (D(N_{\sigma_L}, T_{\sigma_L}), \emb_{\sigma_L}, \iota_{\sigma_L})$, etc.
  Recall that by construction, $D(N_{\sigma_L}, T_{\sigma_L})$ is a subgraph of both $D(N_{\mu}, T_{\mu})$ and $D(N_{\psi_L}, T_{\psi_L})$. 
  Similarly, 
  $D(N_{\sigma_R}, T_{\sigma_R})$ is a subgraph of both $D(N_{\mu}, T_{\mu})$ and $D(N_{\psi_R}, T_{\psi_R})$.

 Now let $A_L$ be the set of arcs, and $V_L$ the set of vertices, that become redundant and are therefore deleted from $D(N_{\psi_L}, T_{\psi_L})$ when deriving the $\{L\to \past\}$-restriction $\sigma_L$ from $\psi_L$. That is,  $A_L = A(D(N_{\psi_L},T_{\psi_L}))\setminus A(D(N_{\sigma_L},T_{\sigma_L}))$
 and $V_L = V(D(N_{\psi_L},T_{\psi_L}))\setminus V(D(N_{\sigma_L},T_{\sigma_L}))$.
 We note that by construction, $\iota_{\psi_L}(v) \in L$ for any $v \in V_L \cup V(A_L)$.
 Similarly, let $A_R,V_R$ be the sets of arcs and vertices that are deleted from $D(N_{\psi_R}, T_{\psi_R})$ in the construction of $\sigma_R$ from $\psi_R$. 
 Let $A_R', V_R'$ be the set of vertices that are deleted from $D(N_{\mu}, T_{\mu})$ in the construction of $\sigma_L$ from $\mu$, and let $A_L', V_L'$ be the set of vertices that are deleted from $D(N_{\mu}, T_{\mu})$ in the construction of $\sigma_R$ from $\mu$.
 (See \cref{fig:ReconciliationConstructiveDir.}.)

 Finally let $A_S$ be the arcs of $D(N_{\mu}, T_{\mu})$ that are not in $A_L'\cup A_R'$,
 and $V_S$ the vertices of $D(N_{\mu}, T_{\mu})$ not in $V_L' \cup V_R'$. Note that the arcs and vertices of $V_S,A_S$ appear in all of $D(N_{\mu}, T_{\mu})$,  $D(N_{\sigma_L}, T_{\sigma_L})$, $D(N_{\sigma_R}, T_{\sigma_R})$, $D(N_{\psi_L}, T_{\psi_L})$, $D(N_{\psi_R}, T_{\psi_R})$.

 Observe that if $v \in V_R'\cup V(A_R')$ then $\iota_{\mu}(v) \in \{\pright,\future\}$. Indeed, when deriving $\sigma_L$ the $\{\pleft \to \past, \pright \to \future\}$-restriction of $\mu$, no arcs or vertices will become $\past$-redundant, since they would previously have been $\pleft$-redundant in $\mu$ (here we use the fact that $\iota_{\mu}$ did not already label any vertices $\past$, unlike $\future$.) So the only vertices and arcs that are removed are ones that become $\future$-redundant, i.e. previously had their vertices labelled $\pright$ or $\future$.
 By a similar argument, if $v \in V_L'\cup V(A_L')$ then $\iota_{\mu}(v) \in \{\pleft,\future\}$.
 Furthermore, we can show that $A_L'\cap A_R' = \emptyset$ and $V_L'\cap V_R' = \emptyset$.
 Indeed, recall that for any arc $a$ in $D(N_{\mu}, T_{\mu})$, there is a set of vertices $Q_a$ such that $a$ is $y$-redundant w.r.t $(D(N_{\mu}, T_{\mu}), \emb_{\mu}, \iota')$ if and only if $\iota(Q_a) = \{y\}$, for any isolabelling $\iota$.  Note that we cannot have $\iota_{\mu}(Q_a) = \{\future\}$, as this implies that $a$ is redundant w.r.t $\mu$, a contradiction as $\mu$ is well-behaved. So if $a$ is $\future$-redundant after applying $\{L\to \past, R \to \future\}$ (i.e. if $a \in A_R'$) then there is at least one $z\in Q_a$ with $\iota_{\mu}(z) \in R$. But then this implies that $a$ is not $\future$-redundant after applying $\{R \to \past, L \to \future\}$, so $a$ is not in $A_L'$. A similar argument shows that $V_L'$ and $V_R'$ are disjoint.
 
 \begin{figure}[t]
    \centering
 \includegraphics[scale = 0.75]{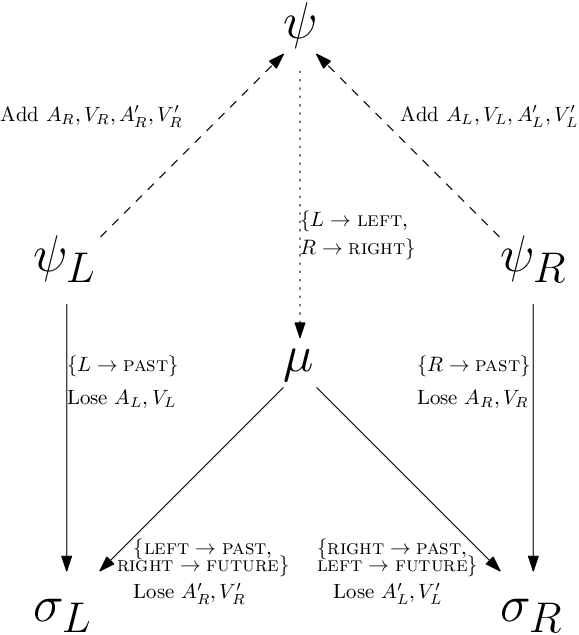}
     \caption{Illustration of proof for reconciliations on Join bags, validity of $\sigma_L$ and $\sigma_R$ implies validity of $\mu$. Solid lines are restriction relations that we may assume; the dashed lines show the construction of $\psi$ from $\psi_L$ or $\psi_R$ (we do not describe the construction of $\emb_0$ or $\iota_0$ in the figure, only the construction of $D(N_0,T_0)$ by adding arcs and vertices); the dotted line shows the relation we want to prove, that $\mu$ is the $\{L\to \pleft, R \to \pright\}$-restriction of $\psi$.}
     \label{fig:ReconciliationConstructiveDir.}
 \end{figure}

 \medskip
 
 We now describe the construction of an $F$-partial solution $\psi = (D(N,T), \emb, \iota)$.
 
 Let $D(N, T)$ be the display graph with vertex set $V_S\cup V_L' \cup V_R' \cup V_L \cup V_R$ and arc set $A_S \cup A_L' \cup A_R' \cup A_L \cup A_R$.
 (We keep the partition of these sets into network side and tree side the same as before.)
 That is, $D(N, T)$ is $D(N_{\psi_L}, T_{\psi_L})$ with the arcs of $A_R \cup A_R'$ and vertices of $V_R \cup V_R'$ added; equivalently we may say 
 $D(N, V)$ is $D(N_{\psi_R}, T_{\psi_R})$ with the arcs of $A_L \cup A_L'$ and vertices of $V_L \cup V_L'$ added, 
or that it is $D(N_{\mu}, T_{\mu})$ with the arcs of $A_L \cup A_R$ and vertices of $V_L \cup V_R$ added.

Let the embedding function $\emb$ be defined as follows. 
For a tree vertex $u$ in $T$, if $u \in V_L$ then let $\emb(u) = \emb_{\psi_L}(u)$, and similarly if $u\in V_R$ then let $\emb(u) = \emb_{\psi_R}(u)$.
If $u \in V_L'$, let $\emb(u) = \emb_{\mu}(u) = \emb_{\sigma_L}(u) = \emb_{\psi_L}(u)$ (note that $u$ is a vertex in $\Disp{\sigma_L}$ as $v \notin V_R'$, and so these terms are all well-defined and equal by construction).
Similarly if $u \in V_R'$, let $\emb(u) = \emb_{\mu}(u) = \emb_{\sigma_R}(u) = \emb_{\psi_R}(u)$.
Finally if $u\in V_S'$, let $\emb(u) = \emb_{\mu}(u) = \emb_{\sigma_L}(u) = \emb_{\sigma_R}(u) = \emb_{\psi_L}(u) = \emb_{\psi_R}(u)$.

For a tree arc $uv$ in $A_L$, let $\emb(uv) = \emb_{\psi_L}(uv)$, and similarly for $uv \in A_R$ let $\emb(uv) = \emb_{\psi_R}(uv)$.
For $uv \in A_L'$, let $\emb(uv) =  \emb_{\mu}(uv) = \emb_{\sigma_L}(uv) = \emb_{\psi_L}(uv)$ (note that $uv$ is an arc in $D(N_{\sigma_L}, T_{\sigma_L})$ as $uv \notin A_R'$).
Similarly for $uv \in A_R'$, let $\emb(uv) =  \emb_{\mu}(uv) = \emb_{\sigma_R}(uv) = \emb_{\psi_R}(uv)$. 
Finally for $uv \in A_S$, let $\emb(uv) =  \emb_{\mu}(uv) = \emb_{\sigma_L}(uv) = \emb_{\sigma_R}(uv) = \emb_{\psi_L}(uv) = \emb_{\psi_R}(uv)$.

Let the $(L \cup R \cup S, \{\future\})$-isolabelling $\iota$ be defined as follows.
For a vertex $v \in V_L$, let $\iota(v) = \iota_{\psi_L}(v)$. Similarly if $v \in V_R$, let $\iota(v) = \iota_{\psi_R}(v)$. Otherwise, $v$ is a vertex of $D(N_{\mu}, T_{\mu})$. If $\iota_{\mu}(v) = \pleft$ then let $\iota(v) = \iota_{\psi_L}(v)$ (recalling that $v \notin V_R'$, so $v$ is a vertex of $D(N_{\sigma_L}, T_{\sigma_L})$ and thus of $D(N_{\psi_L}, T_{\psi_L})$, with $\iota_{\sigma_L}(v) = \past$ and $\iota_{\psi_L}(v) \in L$).
Similarly if $\iota_{\mu}(v) = \pright$ then let $\iota(v) = \iota_{\psi_R}(v)$.
Finally if $\iota_{\mu}(v) \in S \cup \{\future\}$, then set $\iota(v) = \iota_{\mu}(v)$.

We will now show that $\psi$ is indeed a well-behaved $F$-partial solution; afterwards we will show that $\mu$ is the $\{L\to \pleft, R \to \pright\}$-restriction of $\psi$, from which it follows that $\mu$ is valid. (A similar argument can be used to show that $\psi_L$ is the $\{R \to \future\}$-restriction of $\psi$ and $\psi_R$ is the $\{L \to \future\}$-restriction of $\psi$, but we will not need these fact so we do not prove them here.)

\begin{claim}
$\psi$ is a well-behaved $F$-partial solution.
\end{claim}
\begin{claimproof}
    We first show that $\psi$ satisfies one of the properties of well-behaved $(L\cup R \cup S, \{\future\})$-containment structure.
    
\begin{itemize}    
    \item {\bf  For any $u,v \in V(D(N,T))$ with $\iota(u), \iota(v) \in L \cup R \cup S$, if $D(N,T)$ has a path from $u$ to $v$ then there is a path from $\iota(u)$ to $\iota(v)$ in  $\NinTin$:} To see this property, first consider the case that $uv$ is an arc in $D(N,T)$. 
            In this case it is sufficient to show that $\iota(u) = \iota_{\gecs}(u)$ and $\iota(v) = \iota_{\gecs}(v)$ for some $\gecs \in \{\mu, \psi_L, \psi_R\}$, as the property then follows from the fact that $\mu, \psi_L, \psi_R$ are all well-behaved.

            If $uv \in A_L$ then $u\in V(D(N_{\psi_L}, T_{\psi_L}))$ and either $u \in V_L$ or $\iota_{\sigma_L}(u) = \past$ and $\iota_{\mu}(u) = \pleft$; in either case $\iota(u) = \iota_{\psi_L}(u)$, and similarly $\iota(v) = \iota_{\psi_L}(v)$, so we let $\gecs = \psi_L$.
            Similarly if  $uv \in A_R$ then $\gecs = \psi_R$.
            For any arc $uv$ in $D(N_{\mu}, T_{\mu})$, we can let $\gecs = \psi_L$ if $\iota_{\mu}(u) = \iota_{\mu}(v) = \pleft$,  
            $\psi_R$ if $\iota_{\mu}(u) = \iota_{\mu}(v) = \pright$, and $\mu$ if $\iota_{\mu}(u), \iota_{\mu}(v) \in P$. If $\iota_{\mu}(u) = \future$ (resp. $\iota_{\mu}(v)$ = \future) then $\iota(u) = \future$ ($\iota(v) = \future$), so we do not need to consider this case. We also cannot have $\{\iota_{\mu}(u), \iota_{\mu}(v)\} = \{\pleft, \pright\}$ as $\mu$ is well-behaved. It remains to consider the case that one of $\iota_{\mu}(u), \iota_{\mu}(v)$ is $\pleft$ or $\pright$ and the other is in $S$; suppose w.l.o.g. that $\iota_{\mu}(u) = \pleft$ and $\iota_{\mu}(v) \in S$. In this case both $u$ and $v$ are vertices of $D(N_{\sigma_L}, T_{\sigma_L})$ and therefore $D(N_{\psi_L}, T_{\psi_L})$; moreover $\iota(u) = \iota_{\psi_L}(u)$ and $\iota_{\psi_L}(v) = \iota_{\sigma_L}(v) = \iota_{\mu}(v) = \iota(v)$. Thus we can let $\gecs = \sigma_L$.
            
            Now suppose for a contradiction that there exist $u,v \in V(D(N,T))$ with $\iota(u), \iota(v) \in L \cup R \cup S$ such that there is a path from $u$ to $v$ in $D(N,T)$, but no path from $\iota(u)$ to $\iota(v)$ exists in $\NinTin$.
            By the above discussion $uv$ cannot be an arc. Moreover every internal vertex $z$ on the shortest $u-v$ path must have $\iota(z) = \future$ (otherwise $\iota(z) \ \in L\cup R \cup S$ and $u,z$ forms a shorter path).
            Note that any such $z$ is also a vertex in $D(N_{\mu}, T_{\mu})$ with $\iota_{\mu}(z) = \future$ (no other possibility leads to $\iota(z) = \future$; in particular if $\iota_{\mu}(z)=\pleft$ then $\iota(z)=\iota_{\psi_L}(z)\in L$). Any arc incident to $z$ in $D(N,T)$ is also an arc in $D(N_{\mu}, T_{\mu})$ (such arcs cannot be in $A_L$ or $A_R$). So it follows that the path from $u$ to $v$ also exists in $D(N_{\mu}, T_{\mu})$. Finally, we must have $\iota_{\mu}(u), \iota_{\mu}(v) \in S$ (they cannot be in $\{\pleft, \pright\}$ as $u,v$ have neighbours labelled $\future$ by $\iota_{\mu}$ and $\mu$ is well-behaved). So it then follows that there is a path from $\iota(u)=\iota_{\mu}(u)$ to $\iota(v)=\iota_{\mu}(v)$ in $\NinTin$, as $\mu$ is well-behaved.

    \item {\bf $D(N,T)$ is a display graph:} 
        \begin{itemize}
            \item {\bf $D(N, T)$ is acyclic: }  
            Suppose for a contradiction that $D(N,T)$ contains a cycle.
            If $\iota(u) \in L \cup R \cup S$ for some vertex $u$ in this cycle, then as shown above the existence of a path from $u$ to $u$ in $D(N,T)$ implies the existence of a path from $\iota(u)$ to $\iota(u)$ in $\NinTin$, contradicting the fact that $\NinTin$ is acyclic.

            It remains to consider the case that every vertex $z$ in this cycle has $\iota(z) = \future$. However as argued previously, any $z$ with $\iota(z) = \future$ is also a vertex in $D(N_{\mu}, T_{\mu})$, and all its incident arcs in $D(N,T)$ are also arcs of $D(N_{\mu}, T_{\mu})$. Then we have that $D(N_{\mu}, T_{\mu})$ contains a cycle, a contradiction as $D(N_{\mu}, T_{\mu})$ is a display graph.

            \item {\bf $T$ is an out-forest:} As $D(N,T)$ is acyclic, it is remains to show that every vertex of $T$ has in-degree at most  $1$ in $T$.
            To do this, we will show something stronger: that for any vertex $v$ in $D(N,T)$, all of its incident arcs in $D(N,T)$ are arcs in $D(N_{\gecs}, T_{\gecs})$, for some $\gecs \in \{\mu, \psi_L, \psi_R\}$. Then the desired property immediately follows, as every tree vertex has at most one incoming tree-arc in $D(N_{\gecs}, T_{\gecs})$.
            
            So consider any vertex $v \in D(N,T)$. If $v \in V_L$, then the only incident arcs of $v$ are in $A_L$, and therefore all these arcs in $D(N_{\psi_L}, T_{\psi_L})$. Similarly if $v \in V_R$ then all incident arcs are in $D(N_{\psi_R}, T_{\psi_R})$.
            So now we may assume $v$ is a vertex of $\Disp{\mu}$.
            If $\iota_{\mu}(v) = \pleft$ then $v$ can have no incident arcs in $A_R'$ or $A_R$ (as any vertex incident to such an arc must be in $V_R$ or else have $\iota_{\sigma_R}(u) = \past$ and so $\iota_{\mu}(v) = \pright$). Then all incident arcs of $uv$ in $D(N,T)$ are also in $\Disp{\psi_L}$. Similarly if $\iota_{\mu}(v) = \pleft$ then all incident arcs are in $\Disp{\psi_R}$.
            Finally if $\iota_{\mu}(v) \in S \cup \{\future\}$, then again none of its incident arcs are in $A_L$ or $A_R$ (as this would require $\iota_{\sigma_L}(v) = \past$ or $\iota_{\sigma_R}(v) = \past$  and so $\iota_{\mu}(v) \in \{\pleft, \pright\}$), and so so all incident arcs are in $\Disp{\mu}$.
            
            \item {\bf Every vertex in $D(N, T)$ has in- and out-degree at most $2$ and total degree at most $3$:}
            This follows immediately from the previously-shown property that every vertex in $D(N,T)$ has all its incident arcs in one of $\Disp{\mu}, \Disp{\psi_L}, \Disp{\psi_R}$, and the fact that this constraint holds for each of these graphs.
            
            \item {\bf Any vertex in $V(N)\cap V(T)$ has out-degree $0$ and in-degree at most $1$ in each of $T_0$ and $N_0$:} 
            Again this follows from the fact that every vertex in $D(N,T)$ has all its incident arcs in one of $\Disp{\mu}, \Disp{\psi_L}, \Disp{\psi_R}$.
        \end{itemize}
    
    \item {\bf $\emb$ is an embedding function on $D(N,T)$:}
    \begin{itemize}
        \item {\bf For each $u\in V(T)$, $\emb(u)\in V(N)$ and,
        for each arc $uv \in A(T)$, $\emb(uv)$ is a directed $\emb(u)$-$\emb(v)$-path in $N$:}
        
        The fact that $\emb(u)\in V(N)$ follows immediately from the fact that $\emb(u) = \emb_{\gecs}(u)$ for some $\gecs\in \{\mu, \psi_L, \psi_R\}$, and so $\emb(u)$ is a network vertex.
        
        To see that $\emb(uv)$ is a path from $\emb(u)$ to $\emb(v)$, we will show that there is $\gecs \in \{\mu,\psi_L, \psi_R\}$ such that $\emb(uv) = \emb_{\gecs}(uv)$, $\emb(u) = \emb_{\gecs}(u)$ and $\emb(v) = \emb_{\gecs}(v)$. The result then follows from the fact that $\emb_{\gecs}$ is an embedding function.

     If $uv \in A_L$, then neither $u$ nor $v$ can be in $V_R$, 
     nor can they be in $V_R'$ (note that $u \in V_R'$ requires $\iota_{\mu}(u) \in \{\pright, \future\}$, but $uv \in A_L$ implies either $u \in V_L$ or $\iota_{\sigma_L}(u) = \past$ and thus $\iota_{\mu}(u) = \pleft$).
     It follows by construction of $\emb$ that $\emb(u) = \emb_{\psi_L}(u)$, $\emb(v) = \emb_{\psi_L}(u)$, and $\emb(uv) = \emb_{\psi_L}(uv)$, so we can let $\gecs = \psi_L$.
     Similarly if $uv \in A_R$ we can let $\gecs = \psi_R$.
     For $uv \in A_L'\cup A_R' \cup A_S$, we have that $u,v$ are vertices in $\Disp{\mu}$, and therefore not in $V_L$ or $V_R$. It follows by construction that $\emb(u) = \emb_{\mu}(u)$, $\emb(v) = \emb_{\mu}(u)$, and $\emb(uv) = \emb_{\mu}(uv)$, so we let $\gecs = \mu$.

        \item {\bf for any distinct $u,v \in V(T)$, $\emb(u)\neq \emb(v)$}: 
       We first show that for any $u \in V(T)$, $u \in V_L$ if and only if $\emb(u) \in V_L$. Recall that $V_L$ is the set of vertices in $\Disp{\psi_L}$ that become $\textsc{past}$-redundant after applying $\{L\to \past\}$, and observe that by the definition of $y$-redundancy, a tree vertex $u$ is $y$-redundant with respect to some containment structure $(D(N',T'), \emb', \iota')$ if and only if $\emb'(u)$ is $y$-redundant. Thus $u \in V_L$ if and only if $\emb(u) \in V_L$. By a similar argument, $u \in V_R$ if and only if $\emb(u) \in V_R$.

        Now suppose for a contradiction that there exist distinct $u,v \in V(T)$ with $\emb(u) = \emb(v)$. If $\emb(u) = \emb(v) \in V_L$, then also $u,v \in V_L$. Thus $u,v$ are distinct vertices in $\Disp{\psi_L}$ with $\emb_{\psi_L}(u) = \emb(u) = \emb(v) = \emb_{\psi_R}(v)$, a contradiction as $\emb_{\psi_L}$ is an embedding function.
        We get a similar contradiction if $\emb(u) \in V_R$.
        Finally if $\emb(u) \notin V_L \cup V_R$, then also $u,v\notin V_L\cup V_R$. Thus $u,v$ are distinct vertices in $\Disp{\mu}$ with $\emb_{\mu}(u) = \emb(u) = \emb(v) = \emb_{\mu}(v)$, again a contradiction as $\emb_{\mu}$ is an embedding function.
        
        \item {\bf for any $u \in V(T) \cap V(N)$, $\emb(u) = u$:} This follows immediately from the fact that $u$ is a vertex in $V(T_{\gecs})\cap V(N_{\gecs})$ for some $\gecs\in \{\mu, \psi_L, \psi_R\}$, and for such a $\gecs$ $\emb(u) = \emb_{\gecs}(u) = u$.

        \item {\bf the paths $\{\emb(uv) \mid uv\in A(T)\}$ are arc-disjoint:} 
        
        Similar to the proof that $\emb(u) \neq \emb(v)$ for $u \neq v$, we observe that if an arc $uv$ is $y$-redundant with respect to some $(D(N',T'), \emb', \iota')$, then so are all the arcs of $\emb'(uv)$.
        It follows by construction that if $uv \in A_L$ (resp. $A_R, AL',A_R', A_S$) then so are all arcs of $\emb(uv)$.
        So now suppose for a contradiction that there exist distinct tree arcs $uv$, $u'v'$ such that $\emb(uv), \emb(u'v')$ share an arc. Then $uv$ and $u'v'$ are both in the same set from $\{A_L, A_R, A_L', A_R', A_S\}$, and so in particular they are both in $\Disp{\gecs}$ for some $\gecs \in \{\mu, \psi_L, \psi_R\}$, with $\emb(uv) = \emb_{\gecs}(uv)$, $\emb(u'v') = \emb_{\gecs}(u'v')$. Then $\emb_{\gecs}(uv), \emb_{\gecs}(u'v')$ share an arc, a contradiction as $\emb_{\gecs}$ is an embedding function.

        \item {\bf for any distinct $p,q\in A(T)$, $\emb(p)$ and $\emb(q)$ share a vertex $z$
        only if $p$ and $q$ share a vertex $w$ with $z = \emb(w)$:}
        Here we use a couple of properties that have been proved earlier. Recall from the proof that $T$ is an out-forest, that for any vertex $v$ in $D(N,T)$, all its incident arcs in $D(N,T)$ are arcs in $\Disp{\gecs}$ for some $\gecs \in \{\mu, \psi_L, \psi_R\}$.
        Recall also, from the proof that the paths $\{\emb(uv) | uv \in A(T)\}$ are arc disjoint, that a tree arc $uv$ is in $A_L$ (resp. $A_R, A_L', A_R', A_S$) then so are all arcs in $\emb(uv)$.
        
        So now consider distinct arcs $p,q \in A(T)$ and suppose $\emb(p)$ and $\emb(q)$ share a vertex $z$. Suppose all incdent arcs of $z$ are in $\Disp{\mu}$ (the cases where all incident arcs are in $\Disp{\psi_L}$ or $\Disp{\psi_R}$ are similar). Then all incident arcs of $z$ from $\emb(p)$ and $\emb(q)$ are in $A_L'\cup A_R' \cup A_S$, which implies that $p,q$ are in $A_L'\cup A_R' \cup A_S$ as well.
        Thus $p,q$ are both arcs in $\Disp{\mu}$.
        It follows that $\emb(p) = \emb_{\mu}(p)$ and $\emb(q) = \emb_{\mu}(q)$ also share the vertex $z$, which implies that $p,q$ share a vertex $w$ with $z = \emb_{\mu}(w)$, as $\emb_{\mu}$ is an embedding function. Finally observe that $w \in V_L'\cup V_R'\cup V_S$ (as $w$ is in $\Disp{\mu}$) so $\emb(w) = \emb_{\mu}(v) = z$, as required.
    \end{itemize}

    \item {\bf $\iota$ is a $(L \cup R \cup S, \{\future\})$-isolabelling:}
    \begin{itemize}
        \item {\bf For $u \in V(D(N, T))$ with $\iota(u) \neq \future$, $\iota(u) \in V(\Nin)$ only if $u \in V(N)$ and $\iota(u) \in V(\Tin)$ only if $u \in V(T)$ :}
        This follows from the fact that $\iota(u) = \iota_{\gecs}(u)$ for some $\gecs \in \{\mu, \psi_L, \psi_R\}$.
        
        \item {\bf For $u,v \in V(D(N, T))$ with $\iota(u), \iota(v) \neq \future$, $\iota(u) = \iota(v)$ only if $u = v$ :}
        
        Consider some $u,v \in V(D(N,T))$ with $\iota(u) = \iota(v) \neq \future$.
        If $\iota(u) \in L$, then $\iota(u) = \iota_{\psi_L}(u)$, and $\iota(v)=\iota_{\psi_L}(v)$, as of the isolabelings $\iota_{\psi_L}, \iota_{\psi_R}, \iota_{\mu}$, only $\iota_{\psi_L}$ maps anything to $L$. Thus $\iota_{\psi_L}(u) = \iota_{\psi_L}(v) \in L$, from which it follows that $u=v$ since $\psi_L$ is an isolabelling.
        Similarly, if $\iota(u) \in R$ then $\iota_{\psi_R}(u) = \iota_{\psi_R}(v) \in R$ and so $u=v$.
        If $\iota(u) \in S$, then $u$ and $v$ must be vertices of $\Disp{\mu}$ with $\iota(u) = \iota_{\mu}(u)$ and  $\iota(v) = \iota_{\mu}(v)$ (other possibilities, such as $u \in V_L$ or $\iota_{\mu}(u)=\pright$, imply $\iota(u)$ or $\iota(v)$ are in in $L\cup R$). Then again as $\iota_{\mu}$ is an isolabelling, we have $u = v$.
        
        \item {\bf For $u,v \in V(D(N, T))$ with $\iota(u), \iota(v) \neq \future$, the arc $uv$ is in $D(N,T)$ if and only if $\iota(u)\iota(v)$ is in $\NinTin$ :}
        
        First suppose that $uv$ is an arc in $D(N,T)$, for $\iota(u), \iota(v) \neq \future$. We aim to show that there is $\gecs \in \{\psi_L, \psi_R, \mu\}$ such that $uv$ is an arc in $\Disp{\gecs}$ and $\iota(u) = \iota_{\gecs}(u)$, $\iota(v) = \iota_{\gecs}(v)$. Then $\iota(u)\iota(v) \in A(\NinTin)$ follows from the fact that $\gecs$ is well-behaved.
        
        If $uv \in A_L$ then $uv$ is an arc of $\Disp{\psi_L}$, and either $u \in V_L$ or $\iota_{\mu}(u) = \pleft$, which implies $\iota(u) = \iota_{\psi_L}(u)$. Similarly $\iota(v) = \iota_{\psi_L}(v)$, so we can let $\gecs = \psi_L$.
        By a similar argument, if $uv \in A_R$ then we let $\gecs = \psi_R$.
        For $uv \in A_L'$, we again have that $uv$ is an arc of $\Disp{\psi_L}$, and $\iota_{\mu}(u) = \pleft$, which implies $\iota(u)=\iota_{\psi_L}(u)$. Similarly $\iota(v) = \iota_{\psi_L}(v)$, so we can let $\gecs = \psi_L$. By a similar argument, if $uv \in A_R'$ then we let $\gecs = \psi_R$.
        For $uv \in A_S$, we make use of the fact that $\{\iota_{\mu}(u), \iota_{\mu}(v)\} \neq \{\pleft, \pright\}$ (as $\mu$ is well-behaved). So suppose w.l.o.g that $\iota_{\mu}(u), \iota_{\mu}(v) \neq \pright$. Then either $\iota_{\mu}(u) \in S$, in which case $\iota_{\psi_L}(u) = \iota_{\mu}(u) = \iota(u)$, or $\iota_{\mu}(u) = \pleft$, in which case $\iota(u) = \iota_{\psi_L}(u)$. Thus  in either case $\iota(u) = \iota_{\psi_L}(u)$, and similarly $\iota(v) = \iota_{\psi_L}(v)$. Note also that $uv \in A(\Disp{\psi_L})$. Thus we can let $\gecs = \psi_L$, as required.
        
        For the converse, suppose that $\iota(u)\iota(v)$ is an arc in $\NinTin$.
        By the properties of a tree decomposition, we cannot have $\iota(u) \in L$ and $\iota(v) \in R$ (or vice-versa).
        If $\iota(u), \iota(v) \in L$ then it must hold that $\iota(u) = \iota_{\psi_L}(u)$ and  $\iota(v) = \iota_{\psi_L}(v)$. Then as $\psi_L$ is well-behaved, $uv$ is an arc in $\Disp{\psi_L}$, which also implies that $uv$ is an arc in $D(N,T)$.
        A similar argument holds if $\iota(u), \iota(v) \in R$.
        If $\iota(u), \iota(v) \in S$ then it must hold that $\iota(u) = \iota_{\mu}(u)$ and $\iota(v) = \iota_{\mu}(v)$. Then again as $\mu$ is well-behaved, $uv$ is an arc in $\Disp{\mu}$ and thus $D(N,T)$.
        It remains to consider the case where one of $\iota(u), \iota(v)$ is in $S$ and the other is in $L$ or $R$. Suppose w.l.o.g. that $\iota(u) \in L, \iota(v) \in S$. Then $\iota(u) = \iota_{\psi_L}(u)$, and $\iota(v) = \iota_{\mu}(v)$, seemingly a problem. However, notice that as $\iota_{\mu}(v) \in S$, we also have that $v$ is a vertex of $\Disp{\sigma_L}$ and $\Disp{\psi_L}$, with $\iota_{\psi_L}(v) = \iota_{\sigma_L}(v) = \iota_{\mu}(v) = \iota(v)$.
        Thus $\iota_{\psi_L}(u)\iota_{\psi}(v)$ is an arc in $\NinTin$, from which it follows that $uv$ is an arc in $\Disp{\psi_L}$ and thus $D(N,T)$.

        \item {\bf $\iota(V(D(N,T))) \cap V(\NinTin) = L \cup R \cup S$ :}
        By construction, $\iota(v) \in L \cup R \cup S \cup \{\future\}$ for any $v \in V(D,T)$, so  $\iota(V(D(N,T))) \cap V(\NinTin) \subseteq L \cup R \cup S$. It remains to show that for any $z \in L \cup R \cup S$, there is $v \in V(D(N,T))$ for which $\iota(v) = z$.
        If $z \in L$, then as $\iota_{\psi_L}$ is a $(L\cup S, \{\future\})$-isolabelling, there is $v \in V(\Disp{\psi_L})$ such that $\iota_{\psi_L}(v) = z$. Then either $v \in V_L$, or $v$ is a vertex of $\Disp{\sigma_L}$ and  $\Disp{\mu}$ with $\iota_{\sigma_L}(v) = \past$, $\iota_{\mu}(v) = \pleft$ (since $\sigma_L$ is the $\{L\to \past\}$-restriction of $\psi_L$). In either case $\psi(v) = \psi_L(v)$, and so $\psi(v) = z$ as required.
        By a similar argument, if $z \in R$ then there is $v \in V(D(N,T))$ with $\iota(v) = \iota_{\psi_L}(v) = z$.
        Finally if $z \in S$, there is a vertex $v$ in $\Disp{\mu}$ with $\iota_{\mu}(v) = z$. As $\iota_{\mu}(v)\notin \{\pleft, \pright\}$, by construction we have $\iota(v) = \iota_{\mu}(v) = z$, as required.
        
    \end{itemize}
    
    \item {\bf Each vertex $u$ with $\iota(u) \in L \cup R \cup S$ has the same in- and out-degree in $D(N,T)$ as $\iota(u)$ in $\NinTin$:} 
    Recall, from the proof that $T$ is an out-forest, that any vertex $v$ in $D(N,T)$ has all its incident vertices contained in $\Disp{\gecs}$, for some $\gecs \in \{\psi_L, \psi_R, \mu\}$.
    Moreover if $u \in V_L$ or $\iota_{\mu}(u) = \pleft$ (in which case $\psi(u) = \psi_L(u)$ by construction) then all incident arcs of $u$ are in $\Disp{\psi_L}$. Thus $u$ has the same in- and out-degree in $D(N,T)$ as $\iota(u) = \iota_{\psi_L}(u)$ in $\NinTin$, since $\psi_L$ is a containment structure.
    A similar argument holds if $u \in V_R$ or $\iota_{\mu}(u) = \pright$.
    Finally if $\iota_{\mu}(u) \in S$, then all incident arcs of $u$ are in $\Disp{\mu}$, and so $u$ has the same degree in $D(N,T)$ as $\iota(u) = \iota_{\mu}(u)$ does in $\NinTin$.

    \item {\bf Each vertex $u$ of $T$ with $\iota(u) \neq \iota(\emb(u))$ has $2$ out-arcs in $D(N,T)$ :}
    
    In the case that $\iota(u) \in L \cup R \cup S$, this follows from previously-shown properties. In particular, $u$ is not in $V(T)\cap V(N)$, as this would imply $\emb(u) = u$ (because $\emb$ is an embedding function) and thus $\iota(u) = \iota(\emb(u))$.
    Then $u \notin V(N)$, which implies $\iota(u) \notin V(\Nin)$ as $\iota$ is an isolabelling. Thus we may assume $\iota(u)$ is an internal vertex of $\Tin$, which has out-degree $2$ as $\Tin$ is binary. As we have just shown that $u$ with $\iota(u) \in L \cup R \cup S$ has the same in- and out-degree in $D(N,T)$ as $\iota(u)$ in $\NinTin$, this implies that $u$ has out-degree $2$ in $D(N,T)$, as required.
    
    For the case that $\iota(u) = \future$, we observe that by construction, $\iota(u) = \iota_{\mu}(u)$.
    Furthermore $\iota_{\mu}(\emb_{\mu}(u)) \neq \future$ (as  $\emb_{\mu}(u) \in V_L' \cup V_R' \cup V_S$ and so $\emb(u) = \emb_{\mu}(u)$, and $\iota_{\mu}(\emb_{\mu}(u)) = \future$ would imply $\iota(\emb(u)) = \iota_{\mu}(\emb_{\mu}(u)) = \future = \iota(\emb(u))$, a contradiction).
    Thus as $\mu$ is a containment structure, we also have that $u$ has out-degree $2$ in $\Disp{\mu}$ and so $u$ has out-degree $2$ in $D(N,T)$ as well.
    
    \medskip 
    
    The above conditions show that $\psi$ is a $(L \cup R \cup S, \{\future\})$-containment structure, i.e. an $F$-partial solution. Next we show that $\psi$ is well-behaved:
    
    \item {\bf $D(N, T)$ contains no redundant arcs or vertices:} 
    Suppose for a contradiction that there is a redundant arc or vertex w.r.t. $\psi$. We will show that such an arc or vertex is also redundant w.r.t $\mu$, a contradiction as $\mu$ is well-behaved.
    
    Consider the case where a tree arc $uv$ is redundant w.r.t $\psi$. Then $uv$ is necessarily $\future$-redundant, which implies that $\iota(u) = \iota(v) = \future$.
    It follows by construction of $\iota$ that $u,v$ are vertices of $\Disp{\mu}$ with $\iota_{\mu}(u) = \iota_{\mu}(v) = \future$.
    In addition, we have that $\iota(z) = \future$ for any $z\in V(\emb(uv))$, which again implies $\iota_{\mu}(v) = \future$ for any such edge.
    Finally $\emb(uv) = \emb_{\psi}(uv)$ (as the arc $uv$ is in one of $A_L',A_R', A_S$) , and so $\iota_{\mu}(z) = \future$ for all $z \in \{u,v,\} \cup V(\emb_{\mu}(uv))$. It follows that $uv$ is $\future$-redundant w.r.t $\mu$, the desired contradiction.
    
    Essentially the same arguments can also be used for redundant network arcs and redundant tree or network vertices.

    \item {\bf For any $y,y' \in \{\future\}$ with $y \neq y$, there is no arc $uv$ in $D(N, T)$ for which $\iota(u) = y$ and $\iota(v) = y'$:} This follows immediately from the fact that  $|\{\future\}| = 1$, so there are no such $y,y'$.
    
    \item {\bf For any $u,v,$ in $V(D(N, T))$ with $\iota(v), \iota(v) \in L \cup R \cup S$, if $D(N, T)$ has a path from $u$ to $v$ then $\NinTin$ has a path from $\iota(u)$ to $\iota(v)$:} 
    This has already been shown, at the start of the proof for this claim.\claimqedhere
\end{itemize}
\end{claimproof}

We now have that $\psi$ satisfies all the conditions of a well-behaved $F$-partial solution.
Finally we need to show that $\mu$ is the $\{L \to \pleft, R \to \pright\}$-restriction of $\psi$. 
    
\begin{claim}
    $\mu$ is the $\{L \to \pleft, R \to \pright\}$-restriction of $\psi$.
\end{claim}    
\begin{claimproof} 
    Let $\mu'= (D(N_{\mu'},T_{\mu'}), \emb_{\mu'}, \iota_{\mu'})$ denote the $\{L \to \pleft, R \to \pright\}$-restriction of $\psi$. Then our aim is to show that $\mu = \mu'$.

\begin{itemize}
    \item {\bf $\Disp{\mu} = \Disp{\mu'}$:} To show this, it is enough to show that that $\Disp{\mu'}$ is $D(N,T)$ with the arcs of $A_L \cup A_R$ and vertices of $V_L \cup V_R$ removed.
    To do this, we will show that after applying $\{L \to \pleft, R \to \pright\}$, the $\pleft$-redundant arcs (resp. vertices) are exactly $A_L$ ($V_L$) and the $\pright$-redundant arcs (vertices) are exactly $A_R$ ($V_R$).
    
    Recall that for any arc or vertex $a$ in $D(N,T)$, $a$ is $y$-redundant w.r.t $(D(N,T), \emb, \iota')$ if and only if $\iota(Q_a) = \{y\}$, for some set $Q_a$ of vertices that depends only on $D(N,T)$ and $\emb$.
    Note that $Q_a$ always contains $a$ itself, if $a$ is a vertex, or the vertices of $a$, if $a$ is an arc
    
    So now let $Q_a$ be the set of vertices that determines whether $a$ is $y$-redundant for $(D(N,T), \emb)$, and let $Q_a'$ denote the set of vertices that determine whether $a$ is $y$-redundant for $(\Disp{\psi_L}, \emb_{\psi_L})$, where $a$ is any arc or vertex that in $\Disp{\psi_L}$.
    Note that $Q_a' \subseteq Q_a$. We will show first that $\iota(Q_a) \subseteq L$ if and only if $\iota_{\psi_L}(Q_a')\subseteq L$.

    Indeed, since $\iota(u) \in L$ implies $\iota_{\psi_L}(u) = \iota(u)$,
    if  $\iota(Q_a) \subseteq L$ then  $\iota_{\psi_L}(Q_a') \subseteq \iota(Q_a) \subseteq L$.
    For the converse, if $\iota_{\psi_L}(u) \in L$ then we also have $\iota(u) = \iota_{\psi_L}(u)$,
    so $\iota_{\psi_L}(Q_a') \subseteq L$ implies  $\iota(Q_a') \subseteq L$.
    It remains to consider the vertices of $Q_a\setminus Q_a'$.
    If $a=uv$ is a network arc, then
    $Q_a = \{u,v\} \cup Q_{a'}:a \in A(\emb(a')$ and
    $Q_a' = \{u,v\} \cup Q_{a'}':a \in A(\emb_{\psi_L}(a'))$.
    Then $Q_a = Q_a'$, unless there is an arc $a'$ in $D(N,T)$ with $a\in A(\emb(a'))$ and $a'$ not in $\Disp{\psi_L}$.
    Note however that this requires that $a' \in A_R' \cup A_R$,
    which as argued previously would imply that $A(\emb(a')) \subseteq A_R' \cup A_R$,
    a contradiction to the fact that $\iota(u), \iota(v) \in L$.
    If $a$ is a vertex, a similar argument applies: with $Q_a = Q_a'$
    unless $D(N,T)$ has arcs in $A_R\cup A_R'$ incident to $a$ or $a'$
    (with $a'$ a possible vertex such that $\emb(a)=a'$ or $\emb(a')=a$).
    But as $\iota(a), \iota(a') \in L$, this cannot happen. Thus in either case we have $Q_a'=Q_a \subseteq L$.

    We can now show that an arc or vertex $a$ in $D(N,T)$ is $\pleft$-redundant w.r.t $(D(N,T), \emb, \iota \circ g)$ if and only if $a$ is in $\Disp{\psi_L}$ and $a$ is $\past$-redundant w.r.t. $(\Disp{\psi_L}, \emb_{\psi_L}, \iota_{\psi_L}\circ g')$, where $g$ is the function $\{L \to \pleft, R \to \pright\}$, $g'$ is the function $\{L \to \past\}$. Indeed we may assume that all vertices in $a$ are mapped to elements of $L$, as otherwise neither side holds.
    Then it remains to observe that $a$  is $\pleft$-redundant w.r.t $(D(N,T), \emb, \iota \circ g)$ if and only if $g(\iota(Q_a)) = \{\pleft\} \Leftrightarrow \iota(Q_a) \subseteq L \Leftrightarrow \iota_{\psi_L}(Q_a') \subseteq L \Leftrightarrow g'(\iota_{\psi_L}(Q_a')) = \past \Leftrightarrow$ $a$ is $\past$-redundant w.r.t. $(\Disp{\psi_L}, \emb_{\psi_L}, \iota_{\psi_L}\circ g')$.
    
    As such, after applying $g = \{L \to \pleft, R \to \pright\}$ the $\pleft$-redundant arcs and vertices in $D(N,T)$ are exactly $A_L, V_L$.
    A similar argument shows that the $\pright$-redundant arcs and vertices are $A_R, V_R$.
    It follows that $\Disp{\mu'}$ is exactly $D(N,T)$ with $A_L,V_L,A_R,V_R$ removed, that is, $\Disp{\mu'} = \Disp{\mu}$.
    
    \item {\bf $\emb_{\mu} = \emb_{\mu'}$:} Here we use the fact that $\Disp{\mu} = \Disp{\mu'}$.
    Thus every tree arc $uv$ of $\Disp{\mu'}$ is in $A_L'\cup A_R' \cup A_S$, from which it follows by construction that $\emb(uv) = \emb_{\mu}(uv)$. Similarly as tree every vertex $u$ of $\Disp{\mu'}$ is in $V_L'\cup V_R' \cup V_S$, we have $\emb(u) = \emb_{\mu}(u)$.
    Then as $\emb_{\mu'}$ is the restriction of $\mu$ to the tree vertices of $V_L' \cup V_R' \cup V_S$ and the tree arcs of $A_L' \cup A_R' \cup A_S$, we have $\emb_{\mu'} = \emb_{\mu}$, as required.

    \item {\bf $\iota_{\mu} = \iota_{\mu'}$:} Consider any vertex $u$ in $\Disp{\mu}= \Disp{\mu}$.
    If $\iota_{\mu}(u) = \pleft$, then by construction $\iota(u) = \iota_{\psi_L}(u)$, which is in $L$ (as $\iota_{\sigma_L}(u) = \past$). Then by construction of $\mu'$, $\iota_{\mu'}(u) = \pleft$ as well.
    A similar argument shows that $\iota_{\mu'}(u) = \pright$ if $\iota_{\mu}(u) = \pright$.
    Finally if $\iota_{\mu}(u) \in S \cup \future$, then by construction $\iota(u) = \iota_{\mu}(u)$, and since $\iota(u) \notin L \cup R$, we  have  $\iota_{\mu'}(u) = \iota(u) = \iota_{\mu}(u)$.
    Thus in all case $\iota_{\mu'}(u) = \iota_{\mu}(u)$, and so $\mu = \mu$, as required.\claimqedhere
\end{itemize}    
\end{claimproof} 

As we have now shown that $\psi$ is a well-behaved $F$-partial solution and $\mu$ is the $\{L \to \pleft, R \to \pright\}$-restriction of $\psi$, we have that $\mu$ is valid, as required.
\end{proof}

From the three previous lemmas we immediately have \cref{lem:joinBagOverall}.

\subsection{Compressing signatures}\label{sec:compact}

So far, we have shown relations between valid signatures such that the validity of
any signature~$\sigma$ for a bag~$(P,S,F)$ in the tree decomposition of $\NinTin$
is determined by the validity of all signatures for the child bags of $(P,S,F)$.
The final step is to contract certain long paths that may occur in the 
signature.
This is necessary in order to bound the size, and thus the number of possible signatures, for a given bag.
This is summarized by the notion of \emph{compact} signatures, described below.

All our results relating the validity of well-behaved signatures also hold for compact signatures, that is, whether a compact signature is compact-valid can be determined by looking at the compact signatures for the child bags. (See~\cref{lem:compactLeafBag,lem:compactForgetBag,lem:compactIntroduceBag,lem:compactJoinBag}.)

\begin{definition}[compact form]
Let $\psi = (D(N,T), \emb, \iota)$ be a well-behaved $(S, \Y)$-containment structure.
Then the \emph{compact form of $\psi$}, denoted $c(\psi)$, is the $(S, \Y)$-containment signature derived from $\psi$ as follows:

For $y \in \Y$:
\begin{itemize}
    \item if there is a path $x_1,x_2,x_3$ in $N$ with $x_2$ having in-degree $1$ and out-degree $1$ in $N$, and $\iota(\{x_1,x_2,x_3\}) = \{y\},$ and $\emb(u) \neq x_2$ for any vertex $u$ in $T$, then delete vertex $x_2$ and arcs $x_1x_2$, $x_2x_3$ from $N$, and add the arc $x_1x_3$.
    For the arc $uv$ in $T$ for which $x_1x_2x_3$ is part of the path $\emb(uv)$, replace $x_1x_2x_3$ in $\emb(uv)$ with $x_1x_3$.
\end{itemize}

That is, we suppress a vertex $x$ in $N$ (and adjust $\emb$ as necessary), if $x$ has a single in-neighbour and out-neighbour, and all three are mapped to a label  $y\in Y$ by $\iota$, and no vertex of $T$ is mapped to $x$ by $\emb$.

We call such a path a \emph{long $y$-path} and refer to the above process as \emph{suppressing long $y$-paths}.

If $c(\psi) = \psi$ then we say $\psi$ is a \emph{compact} $(S, \Y)$-containment structure.

We call a compact signature $\sigma$ for $(P,S,F)$ \emph{compact-valid} if
there is a compact $F$-partial solution~$\psi$ such that
$\sigma = c(\sigma')$ for $\sigma'$ is the $(P\to \past)$-restriction of $\psi$.
\end{definition}

Note that by definition, any compact $(S,\Y)$-containment structure is also a well-behaved $(S,\Y)$-containment structure.

\begin{lemma}\label{lem:compactStructuresAreStructures}
If $\psi$ is a well-behaved $(S, \Y)$-containment structure, then  $c(\psi)$ is a compact $(S, \Y)$-containment structure.
\end{lemma}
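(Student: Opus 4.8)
The plan is to prove the statement by reducing to a single suppression step and inducting on $|V(N)|$. Write $\psi = (D(N,T),\emb,\iota)$ and suppose $x_1,x_2,x_3$ is a long $y$-path, so $x_2$ has in- and out-degree $1$ in $N$, we have $\iota(x_1)=\iota(x_2)=\iota(x_3)=y\in\Y$, and no tree vertex maps to $x_2$ under $\emb$. Let $\psi_1=(D(N_1,T_1),\emb_1,\iota_1)$ be the result of suppressing this one path. Since each suppression deletes exactly one vertex, the process terminates after at most $|V(N)|$ steps and $c(\psi)$ is the final structure; so it suffices to show that one step sends a well-behaved $(S,\Y)$-containment structure to a well-behaved $(S,\Y)$-containment structure, and that the final structure admits no long $y$-path for any $y$. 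First I would record two facts the rest of the argument leans on. Because every shared vertex has out-degree $0$ in $N$ (the display-graph definition) while $x_2$ has out-degree $1$, the vertex $x_2$ is a pure network vertex, so $T_1=T$. And because $\psi$ is well-behaved and hence free of redundant arcs, the $y$-labelled network arcs $x_1x_2$ and $x_2x_3$ cannot fail to lie on an embedding path: a $y$-labelled network arc on no embedding path is $y$-redundant by \cref{def:redundant}. Since $x_2$ is not a tree-vertex image, it is internal to any embedding path through it, which forces $x_1x_2$ and $x_2x_3$ to lie on the \emph{same} path $\emb(uv)$ for a unique tree arc $uv$.

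Next I would verify that $\psi_1$ is an $(S,\Y)$-containment structure. The suppression only deletes the arcs $x_1x_2,x_2x_3$ and the vertex $x_2$ and inserts $x_1x_3$, so the in- and out-degrees of every surviving vertex are unchanged; this immediately gives the degree bounds of a display graph, the degree-preservation property of \cref{def:containment struct} for $S$-labelled vertices (note $x_2\notin\iota^{-1}(S)$ since $\iota(x_2)=y$), and the two-out-arc property of \cref{def:containment struct} (as $T_1=T$ and $\emb_1,\iota_1$ agree with $\psi$ on tree vertices). Acyclicity of $D(N_1,T_1)$ follows because every directed walk in $D(N_1,T_1)$ lifts to one in $D(N,T)$ by expanding $x_1x_3$ back to $x_1x_2x_3$, and $T_1=T$ remains an out-forest. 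For the isolabelling $\iota_1$ (the restriction of $\iota$ to the surviving vertices), surjectivity onto $S$ is kept because only the $y$-labelled vertex $x_2$ is removed, and the isomorphism condition of \cref{def:isolabelling} on the $S$-part is untouched because all three of $x_1,x_2,x_3$ carry the label $y$, so none of the deleted or inserted arcs is incident to an $S$-labelled vertex.

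Then I would check that $\emb_1$ is an embedding function and that $\psi_1$ is well-behaved. For the embedding, the smoothing replaces $x_1x_2x_3$ inside the unique path $\emb(uv)$ by the arc $x_1x_3$ and leaves every other image unchanged; since $\emb_1$ and $\emb$ agree on all vertices (no image equals $x_2$), conditions \eqref{it:node&path}--\eqref{it:leaves} of \cref{def:embed} are immediate, arc-disjointness \eqref{it:paths disjoint} survives because the new arc $x_1x_3$ is used only by $\emb_1(uv)$, and the shared-vertex property \eqref{it:shared nodes} survives because we only deleted the internal, non-shared vertex $x_2$ and merged two arcs. For well-behavedness, \cref{def:wellBehaved}\eqref{it:WBNoLabelCrossings} holds since the only new arc $x_1x_3$ has both endpoints labelled $y$, and \cref{def:wellBehaved}\eqref{it:WBSPaths} holds by the same walk-lifting argument used for acyclicity together with well-behavedness of $\psi$. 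The crux is \cref{def:wellBehaved}\eqref{it:WBnoRedundancy}, that no arc or vertex becomes redundant. Here I would use \cref{obs:redundantQ}: redundancy of $a$ is governed solely by whether $\iota$ is constant (equal to some label) on a set $Q_a$ determined by the display graph and $\emb$. Deleting the $y$-labelled vertex $x_2$ only removes a $y$-labelled element from the relevant sets $Q_a$, so it cannot turn a previously non-constant $Q_a$ monochromatic; and the new arc $x_1x_3$, lying on $\emb_1(uv)$, inherits $Q_{x_1x_3}=Q_{uv}$, which was non-monochromatic because $uv$ was not redundant in $\psi$. Hence $\psi_1$ has no redundancy.

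Finally, iterating the step terminates (the vertex count strictly decreases) and, since the suppression is applied exhaustively, leaves a structure with no long $y$-path for any $y$ — any residual long path would simply be the next step — so $c(c(\psi))=c(\psi)$ and $c(\psi)$ is compact. I expect the main obstacle to be the two places where well-behavedness of $\psi$ is genuinely needed: first, establishing that $x_1x_2x_3$ must lie on a single embedding path, so that the smoothing is well defined and the inserted arc is not created ``for free'' as a redundant arc; and second, the redundancy check of \cref{def:wellBehaved}\eqref{it:WBnoRedundancy}, where \cref{obs:redundantQ} is the right tool to sidestep re-running the recursive definition of redundancy by hand.
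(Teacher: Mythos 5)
Your proof is correct and follows the same basic route as the paper's, namely a direct verification that every condition of a compact $(S,\Y)$-containment structure survives suppression; the paper's own proof simply asserts that all of this is ``clear from the construction'', so your single-step induction, the observation that $x_1x_2$ and $x_2x_3$ must lie on one embedding path $\emb(uv)$ (else they would be redundant), and the use of \cref{obs:redundantQ} supply exactly the details the paper omits. The only hair-thin spot is the claim that deleting a $y$-labelled element can never make a non-constant $Q_a$ monochromatic -- false in general, but true here because every $Q_a$ containing $x_2$ also contains $x_1$ and $x_3$ with the same label $y$, so the label set of $Q_a$ is unchanged.
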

\begin{proof}
Let $\psi = (D(N,T), \emb, \iota)$ and let $c(\psi) = (D'(N',T'), \emb', \iota')$.
It is clear from the construction of $c(\psi)$ that $D'(N',T')$ is a display graph and that $\emb'$ is an embedding of $T'$ into $N'$.
Moreover $\iota'(u) = \iota(u)$ for each vertex $u$ in $D'(N',T')$, and the only vertices of $D(N,T)$ that were suppressed in the construction of $D'(N,T')'$ were vertices $v$ for which $\iota(v) \in \{\past, \future\}$. As such $\iota'$ satisfies all the requirements for the isolabelling function in a $(S, \Y)$-containment structure, and so $(D'(N',T'), \emb', \iota')$ is a $(S, \Y)$-containment structure.

It is also clear that $c(\psi)$ is well-behaved, by construction and the fact that $\psi$ is well-behaved.

Finally observe that $D(N',T')$ contains no long $y$-paths for any $y \in \Y$, a all such paths are suppressed in the construction of $\psi'$.
\end{proof}

We will now show that for the purposes of our dynamic programming algorithm, it is enough to restrict our attention to compact signatures.
This allows us to get an FPT bound on the number of signatures we have to consider.
In order to do this, we define an analogue of ``restriction'' that allows us to derive compact $(S',\Y)$-containment structures
from $(S, \Y)$-containment structures.

\begin{definition}[compact restriction]\label{def:comprestrict}
Let~$\psi = (D(N,T), \emb, \iota)$ be an $(S, \Y)$-containment structure,
let~$S' \subseteq S$, and
let~$g:S \cup \Y \to S' \cup \Y$
be  a restriction function
(i.e. such that for all $v \in S \cup \Y$, $g(v)=v$ if $v \in S'$ and $g(v) \in \Y$ otherwise).
Then we define the \emph{compact-$g$-restriction} of $\psi$ to be the compact form of the $g$-restriction of $\psi$.

When $\sigma$ is a compact signature for $(P,S,F)$, we say that $\sigma$ is \emph{compact-valid} if there is a compact $F$-partial solution $\psi$ such that $\sigma$ is the compact-$\{P\to \past\}$-restriction of $\psi$.

Similarly, when $\mu$ is a compact reconciliation for $(L\cup R, S,F)$, we say $\mu$ is \emph{compact-valid} if there is a compact $F$-partial solution $\psi$ such that $\mu$ is the compact-$\{L\to \pleft, R\to \pright\}$-restriction of $\psi$.
\end{definition}

\begin{definition}
    Let $\sigma = (D(N,T),\emb, \iota)$ and $\sigma_0 = (D(N_0,T_0), \emb_0, \iota_0)$ be $(S,\Y)$-containment structures.
    We say $\sigma_0$ is a \emph{subdivision} of $\sigma$ if $\sigma_0$ can be derived from $\sigma$ as follows:
    \begin{itemize}
        \item For each network arc $uv$ in $D(N,T)$ with $\iota(u)=\iota(v)=y\in Y$, replace $uv$ with a path $u_1 = u, u_2, \dots, u_j=v$ for some $j \geq 2$,
        where $u_i$ is a new vertex for each $1<i<j$.
        \item For any new path $u_1=u,\dots, u_j=v$ created this way, if $uv$ is part of the path $\emb(u'v')$ for a tree arc $u'v'$ then replace $uv$ in this path with $u_1,\dots u_j$. 
        \item For any new path $u_1=u,\dots, u_j=v$ created this way, set $\iota(u_i) = \iota(u)$ for each $1<i<j$.
    \end{itemize}
\end{definition}

Observe that $\sigma = c(\sigma_0)$ if and only if $\sigma$ is compact and $\sigma_0$ is a subdivision of $\sigma$.

Observe also that for any $\sigma, \sigma_0$ such that $\sigma_0$ is a subdivision of $\sigma$, $c(\sigma) = c(\sigma_0)$.

\begin{lemma}\label{lem:restrictionPreservesSubdivision}
	Let $\sigma_0 = (D(N_0,T_0), \emb_0, \iota_0)$ be a subdivision of $\sigma = (D(N,T), \emb, \iota)$,
	and for each arc $uv$ in $A(D(N,T))$ let $P_{uv}$ denote the set of arcs in $D(N_0, T_0)$ on the path from $u$ to $v$ corresponding to the subdivision of $uv$. (Taking $P_{uv} = \{uv\}$ if $uv$ was not subdivided.)
	
	Let $\sigma_0' = (D(N_0', T_0'), \emb_0', \iota_0')$ be the $g$-restriction of $\sigma_0$ and let $\sigma' = (D(N',T'), \emb', \iota')$ be the $g$-restriction of $\sigma$, for some restriction function $g$.
	
	Then $A(D(N_0', T_0')) = \bigcup_{uv \in A(D(N',T'))}P_{uv}$. 
	Thus, $\sigma_0'$ is a subdivision of $\sigma'$.
\end{lemma}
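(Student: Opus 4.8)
The plan is to reduce the claim to an arc-by-arc and vertex-by-vertex comparison of which elements survive the two $g$-restrictions, using the non-recursive characterization of redundancy from \cref{obs:redundantQ}. Write $\iota_g := g\circ\iota$ and $\iota_{0,g} := g\circ\iota_0$, so that by \cref{def:restrict} the graph $D(N',T')$ is obtained from $D(N,T)$ by deleting every arc or vertex $a$ that is $y$-redundant with respect to $(D(N,T),\emb,\iota_g)$ for some $y\in\Y$, and similarly for $D(N_0',T_0')$. By \cref{obs:redundantQ}, redundancy of $a$ is governed by the label-image of a set $Q_a$ depending only on the display graph and the embedding; in particular the deletions need not be recomputed iteratively, which lets me compare the two graphs directly.

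The key observation I would establish first is a ``label-image'' lemma: for every arc or vertex $x$ present in both $D(N,T)$ and $D(N_0,T_0)$, we have $\iota_{0,g}(Q_x^{\sigma_0}) = \iota_g(Q_x^{\sigma})$, where $Q_x^{\sigma}$ and $Q_x^{\sigma_0}$ are the sets of \cref{obs:redundantQ} computed in $\sigma$ and $\sigma_0$ respectively. The point is that $Q_x^{\sigma_0}$ differs from $Q_x^{\sigma}$ only by the new vertices introduced along subdivided network arcs, and each such new vertex $w'$ subdividing an arc $ab$ satisfies $\iota_0(w') = \iota_0(a)$ by definition of a subdivision; moreover $w'$ can enter a $Q$-set only inside an embedding path $\emb_0(u'v')$, whose original endpoints $a,b\in V(\emb(u'v'))$ already contribute the same label-image. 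Hence the new vertices never enlarge the image under $\iota_{0,g}$. This identity immediately yields that any original arc or vertex is $y$-redundant in $\sigma_0$ exactly when it is $y$-redundant in $\sigma$.

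With the label-image lemma in hand, I would verify the arc equation by partitioning $A(D(N_0,T_0))$ according to the unique original arc $uv$ each arc lies below, and treating three cases. If $uv$ is a tree arc or an un-subdivided network arc, then $P_{uv}=\{uv\}$ and the previous paragraph gives survival-equivalence directly. If $uv$ is a subdivided network arc that lies on no embedding path, then both $uv$ (in $\sigma$) and every arc of $P_{uv}$ (in $\sigma_0$) have endpoints labelled by a single $y\in\Y$ and sit on no embedding path, so all are unconditionally deleted. If $uv$ is a subdivided network arc lying on $\emb(u'v')$ for a necessarily unique tree arc $u'v'$, then by the network-arc clause of \cref{def:redundant} together with \cref{obs:redundant tree}, both $uv$ and every arc of $P_{uv}$ are $y$-redundant precisely when $u'v'$ is $y$-redundant, and the label-image lemma (applied to the original tree arc $u'v'$) shows $u'v'$ has the same redundancy status in both structures. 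In all cases $uv$ survives in $D(N',T')$ iff every arc of $P_{uv}$ survives in $D(N_0',T_0')$, giving $A(D(N_0',T_0')) = \bigcup_{uv\in A(D(N',T'))}P_{uv}$. To conclude that $\sigma_0'$ is a subdivision of $\sigma'$, I would finally check the interior subdivision vertices: each such vertex $u_i$ is the image of no tree vertex under $\emb_0$, so by \cref{def:redundant} it survives exactly when its two incident path-arcs survive, i.e.\ exactly when $P_{uv}$ survives; together with the facts that $\iota_0'$ and $\emb_0'$ are the restrictions of $\iota_0$ and $\emb_0$, this shows $D(N_0',T_0')$ is $D(N',T')$ with each surviving arc $uv$ replaced by its path $P_{uv}$, carrying the matching labels and embedding, which is precisely a subdivision of $\sigma'$.

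The main obstacle I anticipate is the label-image lemma, specifically handling the recursive definition of the $Q$-sets for vertices (which refer to the $Q$-sets of incident arcs and to images under $\emb$) while tracking exactly which new subdivision vertices can appear and confirming their labels are always ``shadowed'' by an original vertex already present in the $\sigma$-side $Q$-set. The bookkeeping across the arc/vertex and tree/network cases is routine but must be done carefully so that no new vertex secretly contributes a label outside $\iota_g(Q_x^{\sigma})$.
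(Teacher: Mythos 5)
Your proposal is correct and follows essentially the same route as the paper: the paper's (very terse) proof likewise reduces the claim to showing that each arc or internal vertex of $P_a$ is $y$-redundant with respect to $(D(N_0,T_0),\emb_0,\iota_0\circ g)$ exactly when $a$ is $y$-redundant with respect to $(D(N,T),\emb,\iota\circ g)$, justified by the facts that $P_a$ lies on $\emb_0(u'v')$ iff $a$ lies on $\emb(u'v')$ and that every subdivision vertex carries the common label of the subdivided arc's endpoints. Your ``label-image lemma'' for the sets $Q_x$ of \cref{obs:redundantQ} is just a careful elaboration of that same observation, so no further comparison is needed.
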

\begin{proof}
	It is sufficient to show that for each $a \in A(D(N,T))$, any arc or internal vertex of $P_a$ in $\sigma_0$ is $y$-redundant w.r.t. $(D(N_0,T_0), \emb_0, \iota_0\circ g)$ if and only if $a$ is $y$-redundant w.r.t. $ (D(N,T), \emb, \iota \circ g)$
	
	This can be seen by construction of $\sigma_0$, in particular the fact that $P_a$ is part of the path $\emb_0(u'v')$ if and only if $a$ is part of the path $\emb(u'v')$, and the fact that $\iota_0(z) = \iota(u)=\iota(v)$ (and thus $g(\iota(u))) = g(\iota(u)) = g(\iota(v))$) for every vertex in $P_a$.
\end{proof}

\begin{lemma}\label{lem:compactRestriction}
	Let $\sigma_0$ be a $(S,\Y)$-containment structure, let $\sigma = c(\sigma_0)$, and let $\sigma_0'$ be the $g$-restriction of $\sigma_0$ for some restriction function $g:S \cup \Y \rightarrow S' \cup \Y$.
	Then $c(\sigma_0')$ is the compact-$g$-restriction of $\sigma$.
\end{lemma}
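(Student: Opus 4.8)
The plan is to reduce the statement to a short chain of facts already established just above the lemma, avoiding any fresh computation with redundancy. First I would fix notation: let $\sigma'$ denote the (ordinary) $g$-restriction of $\sigma$, so that by \cref{def:comprestrict} the compact-$g$-restriction of $\sigma$ is precisely $c(\sigma')$. The whole lemma then amounts to proving the single equality $c(\sigma_0') = c(\sigma')$, since $\sigma_0'$ is defined as the $g$-restriction of $\sigma_0$.

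The key is to track a subdivision relation through the restriction operation. From the hypothesis $\sigma = c(\sigma_0)$, together with the observation that $\sigma = c(\sigma_0)$ holds if and only if $\sigma$ is compact and $\sigma_0$ is a subdivision of $\sigma$, I would first conclude that $\sigma_0$ is a subdivision of $\sigma$. Now \cref{lem:restrictionPreservesSubdivision} applies directly: it says exactly that applying the same restriction function $g$ to a containment structure and to one of its subdivisions again yields a subdivision. Since $\sigma_0'$ is the $g$-restriction of $\sigma_0$ and $\sigma'$ is the $g$-restriction of $\sigma$, the lemma gives that $\sigma_0'$ is a subdivision of $\sigma'$.

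It then remains to invoke the observation that whenever one containment structure is a subdivision of another, the two have the same compact form. Applying this to the pair $(\sigma', \sigma_0')$ yields $c(\sigma') = c(\sigma_0')$. Combining this with the identification of $c(\sigma')$ as the compact-$g$-restriction of $\sigma$ completes the argument, establishing that $c(\sigma_0')$ is indeed the compact-$g$-restriction of $\sigma$.

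I do not expect a genuine obstacle here, as the substantive content has been isolated into \cref{lem:restrictionPreservesSubdivision}; the only care required is bookkeeping, namely making sure the subdivision relation flows correctly from $(\sigma_0,\sigma)$ to $(\sigma_0',\sigma')$ and that the definition of compact-$g$-restriction (compact form of the $g$-restriction) is applied to the right object. If anything, the one point worth stating explicitly is that the two unnumbered observations preceding the lemma are exactly the two hinges of the proof, so I would reference them carefully rather than re-deriving them.
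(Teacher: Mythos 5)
Your proposal is correct and follows essentially the same route as the paper's proof: identify $\sigma'$ as the $g$-restriction of $\sigma$, use the subdivision observation to get that $\sigma_0$ is a subdivision of $\sigma$, apply \cref{lem:restrictionPreservesSubdivision} to conclude $\sigma_0'$ is a subdivision of $\sigma'$, and then use the fact that subdivisions share a compact form. Your version is merely a bit more explicit in citing the two preceding observations, which the paper uses implicitly.
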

\begin{proof}
    Let $\sigma'$ be the $g$-restriction of $\sigma$.
    As $\sigma_0$ is a subdivision of $\sigma$ and $\sigma_0'$ is the $g$-restriction of $\sigma_0$,
    \cref{lem:restrictionPreservesSubdivision} implies that $\sigma_0'$ is a subdivision of $\sigma'$.
    It follows that $c(\sigma_0') = c(\sigma')$, which is the compact-$g$-restriction of $\sigma$ by construction.
\end{proof}

\begin{lemma}\label{lem:V-implies-CV}
    Let $\sigma$ be a well-behaved signature for a bag $(P,S,F)$.
	If $\sigma$ is valid then $c(\sigma)$ is compact-valid.
	
	Similarly if $\mu$ is a well-behaved reconciliation for a Join bag $(L\cup R, S, F)$ and $\mu$ is valid, then $c(\mu)$ is compact-valid.
\end{lemma}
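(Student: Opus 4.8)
The plan is to witness compact-validity of $c(\sigma)$ using the compact form of the very witness that certifies validity of $\sigma$. Since $\sigma$ is valid, \cref{def:partial-solution-signature} gives a well-behaved $F$-partial solution $\psi$ such that $\sigma$ is the $(P\to\past)$-restriction of $\psi$. I would propose $c(\psi)$ as the compact $F$-partial solution witnessing compact-validity of $c(\sigma)$.

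First I would check that $c(\psi)$ is a bona fide compact $F$-partial solution. This is immediate from \cref{lem:compactStructuresAreStructures}: because $\psi$ is well-behaved (being a witness), $c(\psi)$ is a compact $(P\cup S, \{\future\})$-containment structure, which is exactly a compact $F$-partial solution. Next I would verify that $c(\sigma)$ is the compact-$(P\to\past)$-restriction of $c(\psi)$. Here I invoke \cref{lem:compactRestriction} with $\sigma_0 := \psi$ and the restriction function $g := (P\to\past)$: the compact form of $\sigma_0$ is $c(\psi)$, and the $g$-restriction of $\sigma_0$ is exactly $\sigma$ by validity of $\sigma$. The lemma then yields that $c(\sigma)$ equals the compact form of this $g$-restriction, i.e. that $c(\sigma)$ is the compact-$g$-restriction of $c(\psi)$. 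By \cref{def:comprestrict}, this is precisely the assertion that $c(\sigma)$ is compact-valid.

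The reconciliation case is entirely analogous. A valid reconciliation $\mu$ is, by definition, the $(L\to\pleft, R\to\pright)$-restriction of some well-behaved $F$-partial solution $\psi$. Taking $g := (L\to\pleft, R\to\pright)$ in \cref{lem:compactRestriction}, with $\sigma_0 := \psi$ again, and using $c(\psi)$ (once more a compact $F$-partial solution by \cref{lem:compactStructuresAreStructures}) gives that $c(\mu)$ is the compact-$g$-restriction of $c(\psi)$, hence compact-valid.

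I do not anticipate a genuine obstacle, since all the real content — that the compact form of a well-behaved structure is a legitimate compact containment structure, and that restriction and compaction commute in the required sense — is already packaged into \cref{lem:compactStructuresAreStructures} and \cref{lem:compactRestriction} (the latter via the subdivision argument of \cref{lem:restrictionPreservesSubdivision}). The only point demanding care is bookkeeping of the definitions: confirming that the "compact $F$-partial solution" required by compact-validity is exactly what $c(\psi)$ provides, and that the compact-$g$-restriction delivered by \cref{lem:compactRestriction} matches verbatim the object named in \cref{def:comprestrict}.
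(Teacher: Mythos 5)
Your proposal is correct and follows exactly the paper's own argument: take the witness $\psi$ for $\sigma$, pass to $c(\psi)$, and apply \cref{lem:compactRestriction} (with \cref{lem:compactStructuresAreStructures} for the bookkeeping) to conclude that $c(\sigma)$ is the compact-$(P\to\past)$-restriction of $c(\psi)$. The reconciliation case is handled identically in both.
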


\begin{proof}
    Suppose a signature $\sigma$ is valid and let $\psi$ be the $F$-partial solution such that $\sigma$ is the $\{P\to \past\}$-restriction of $\psi$.
    Then $c(\psi)$ is compact $F$-partial solution, and by \cref{lem:compactRestriction}, $c(\sigma)$ is the compact- $\{P\to \past\}$-restriction of $c(\psi)$. Thus $c(\psi)$ is compact-valid, as required.
    
    For the case that a reconciliation $\mu$ is valid, a similar argument holds, using the restriction function $\{L\to \pleft, R \to \pright\}$ instead of $\{P\to \past\}$.
\end{proof}

\begin{lemma}\label{lem:CV-implies-exists-V}
    Let $\sigma$ be a compact signature for a bag $(P,S,F)$.
	If $\sigma$ is compact-valid then there is a valid signature $\sigma_0$ for $(P,S,F)$ such that $\sigma = c(\sigma_0)$ is compact-valid.
	
	Similarly if $\mu$ is a compact reconciliation for a Join bag $(L\cup R, S, F)$ and $\mu$ is compact-valid, then there is a valid reconciliation $\mu_0$ for$(L\cup R, S, F)$ such that $\mu = c(\mu_0)$.
\end{lemma}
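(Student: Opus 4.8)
The plan is to take the compact partial solution that witnesses the compact-validity of $\sigma$ and simply read off the desired $\sigma_0$ as the ordinary (non-compact) restriction of that partial solution. Concretely, since $\sigma$ is compact-valid there is a compact $F$-partial solution $\psi$ with $\sigma = c(\sigma')$, where $\sigma'$ is the $(P\to\past)$-restriction of $\psi$; note that $\psi$, being compact, is in particular well-behaved. I would set $\sigma_0 := \sigma'$. Then $\sigma = c(\sigma') = c(\sigma_0)$ holds immediately, and the clause ``$c(\sigma_0)$ is compact-valid'' is automatic because $\sigma$ is compact-valid by assumption. Thus the whole lemma (signature case) reduces to verifying that $\sigma_0 = \sigma'$ is a \emph{valid} signature. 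By \cref{cor:restrictionIsAStructure}, $\sigma'$ is an $(S,\{\past,\future\})$-containment structure, i.e.\ a signature; by \cref{def:partial-solution-signature} it is valid as soon as it is well-behaved, since it is by construction the $(P\to\past)$-restriction of the well-behaved partial solution $\psi$ (viewing $\psi$ as a $(P\cup S,\{\past,\future\})$-containment structure, which introduces no new labelled vertices).

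It remains to establish that $\sigma'$ is well-behaved, for which I would apply \cref{lem:WBrestriction} to $\psi$ with $g=\{P\to\past\}$. Its hypothesis requires that relabelling via $g$ creates no arc $uv$ with $g(\iota(u)),g(\iota(v))\in\{\past,\future\}$ and $g(\iota(u))\neq g(\iota(v))$. Since $\psi$ has no $\past$-labelled vertices, the only case to exclude is an arc $uv$ with $\iota(u)\in P$ and $\iota(v)=\future$. This is the one substantive step. Because $S$ separates $P$ from $F$ in $\NinTin$, every $\NinTin$-neighbour of $\iota(u)$ lies in $P\cup S$; by surjectivity of the isolabelling onto the present and \cref{def:isolabelling}\eqref{it:isomorph}, each such neighbour is realised by an arc from $u$ to a present vertex, and by the degree-preservation property \cref{def:containment struct}(a) the number of incident arcs of $u$ equals the degree of $\iota(u)$. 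Hence all of $u$'s incident arcs already go to present vertices, so an additional arc $uv$ to a $\future$-labelled $v$ would exceed this degree, a contradiction. Thus no such arc exists, \cref{lem:WBrestriction} applies, $\sigma'$ is well-behaved, and therefore valid, completing the signature case.

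For the reconciliation case the argument is shorter. Compact-validity of $\mu$ gives a compact (hence well-behaved) $F$-partial solution $\psi$ with $\mu = c(\mu')$, where $\mu'$ is the $\{L\to\pleft, R\to\pright\}$-restriction of $\psi$. By \cref{cor:restrictionIsAStructure}, $\mu'$ is an $(S,\{\pleft,\pright,\future\})$-containment structure, i.e.\ a reconciliation for $(L\cup R,S,F)$. Since $\psi$ is a well-behaved $F$-partial solution, $\mu'$ is a valid reconciliation directly from the definition of validity for reconciliations -- which, unlike the signature case, imposes no separate well-behavedness requirement on $\mu'$ itself. Setting $\mu_0 := \mu'$ then yields $\mu = c(\mu_0)$ with $\mu_0$ valid, as required. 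I expect the only genuine obstacle to be the degree-counting argument that rules out $P$-to-$\future$ arcs in the signature case; everything else is an appeal to the restriction and compaction machinery already in place.
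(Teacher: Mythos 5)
Your proposal is correct and takes essentially the same route as the paper: the paper's proof simply takes the compact partial solution $\psi$ witnessing compact-validity, sets $\sigma_0$ to be the $(P\to\past)$-restriction of $\psi$, and observes $\sigma=c(\sigma_0)$ with $\sigma_0$ valid (and analogously for reconciliations). Your additional degree-counting verification that no $P$-to-$\future$ arc arises — needed so that \cref{lem:WBrestriction} yields well-behavedness of $\sigma_0$ and hence validity in the sense of \cref{def:partial-solution-signature} — is a detail the paper leaves implicit here (it appears explicitly in the proof of \cref{cor:valid-recon-WB}), and you supply it correctly.
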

\begin{proof}
   Suppose a compact signature $\sigma$ is compact-valid and let $\psi$ be the compact $F$-partial solution such that $\sigma$ is the compact-$\{P\to \past\}$-restriction of $\psi$.
   Let $\sigma_0$ be the $\{P\to \past\}$-restriction of $\psi$. Then $\sigma_0$ is a valid signature for $(P,S,F)$, and by construction $\sigma = c(\sigma_0)$.
   
    For the case that a compact reconciliation $\mu$ is compact-valid, a similar argument holds.
\end{proof}

\begin{lemma}\label{lem:longPathLabels}
Let $\sigma'$ be the $(S_1\to y_1, \dots, S_j \to y_j)$-restriction of some $(S,\Y)$-containment structure $\sigma$.
If $\sigma$ is compact, then $\sigma'$ has a long $y$-path only if $y \in \{y_1, \dots, y_j$\}.
\end{lemma}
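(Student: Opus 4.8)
The plan is to prove the contrapositive: assuming $y \notin \{y_1,\dots,y_j\}$, I will show that $\sigma'$ has no long $y$-path. Write $g$ for the restriction function $(S_1\to y_1,\dots,S_j\to y_j)$ and let $\sigma=(D(N,T),\emb,\iota)$, $\sigma'=(D(N',T'),\emb',\iota')$, where $\iota'$ is $\iota_g$ restricted to $D(N',T')$. First I would pin down the preimage $g^{-1}(y)$: since $g(v)=y_i$ for $v\in S_i$ and $g(v)=v$ otherwise, and $y\neq y_i$ for all $i$, any $v$ with $g(v)=y$ must lie outside $\bigcup_i S_i$ and hence satisfy $v=y$. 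Consequently, if $y\in\bigcup_i S_i$ then $g^{-1}(y)=\emptyset$, so no vertex of $D(N',T')$ carries the label $y$ and $\sigma'$ trivially has no long $y$-path. This disposes of one case immediately, leaving the case $y\notin\bigcup_i S_i$, where $g^{-1}(y)=\{y\}$ and $g(y)=y$.

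In this remaining case, the key observation is that relabelling by $g$ neither creates nor destroys $y$-redundancy. Using \cref{obs:redundantQ}, an arc or vertex $a$ is $y$-redundant with respect to $(D(N,T),\emb,\iota_g)$ if and only if $\iota_g(Q_a)=\{y\}$, which, as $\iota_g=g\circ\iota$ and $g^{-1}(y)=\{y\}$, is equivalent to $\iota(Q_a)=\{y\}$, i.e.\ to $a$ being $y$-redundant with respect to $\sigma$ (here $Q_a$ depends only on $D(N,T)$ and $\emb$, so it is the same set in both structures). Since $\sigma$ is compact and hence well-behaved, it has no redundant arcs or vertices (\cref{def:wellBehaved}\eqref{it:WBnoRedundancy}); therefore $(D(N,T),\emb,\iota_g)$ has no $y$-redundant arcs or vertices either.

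Now suppose, for contradiction, that $\sigma'$ contains a long $y$-path $x_1,x_2,x_3$. Then $x_1,x_2,x_3\in V(N')\subseteq V(N)$ and $x_1x_2,x_2x_3\in A(N')\subseteq A(N)$, and $\iota'(x_i)=y$ forces $\iota(x_i)=y$ for each $i$. I would then verify that $x_1,x_2,x_3$ is already a long $y$-path in $\sigma$, contradicting compactness of $\sigma$. For the degree condition, any arc of $N$ incident to $x_2$ has the endpoint $x_2$ labelled $y$ by $\iota_g$ (as $\iota_g(x_2)=g(y)=y$); were it deleted in forming $N'$ it would be $y''$-redundant for some $y''$, but network-arc redundancy forces both endpoints to share the label $y''$, whence $y''=y$ — impossible by the previous paragraph. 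Hence no incident arc of $x_2$ is removed, so $x_2$ retains in-degree and out-degree $1$ in $N$. For the condition that no tree vertex maps to $x_2$, note that $x_2$ is itself not $y$-redundant with respect to $\iota_g$, and a network vertex labelled $y$ by $\iota_g$ can only ever be $y$-redundant; hence $x_2$ is not redundant at all. By \cref{obs:redundant tree}, if some tree vertex $u$ of $T$ had $\emb(u)=x_2$, then $u$ would be redundant exactly when $x_2$ is; as $x_2$ is not, $u$ survives into $T'$ with $\emb'(u)=x_2$, contradicting that no tree vertex of $T'$ maps to $x_2$ in the long $y$-path of $\sigma'$. Thus no tree vertex of $T$ maps to $x_2$.

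Combining these, $x_1,x_2,x_3$ is a directed path in $N$ with $\iota(\{x_1,x_2,x_3\})=\{y\}$, with $x_2$ of in- and out-degree $1$ in $N$ and no tree vertex mapping to $x_2$ under $\emb$ — that is, a long $y$-path in $\sigma$, contradicting that $\sigma$ is compact. This completes the contrapositive. I expect the main obstacle to be the degree bookkeeping in the last step: one must rule out that $x_2$ lost incident arcs in passing from $N$ to $N'$ (which could have artificially placed its degree above $1$ in $N$), and the clean way to do this is exactly the redundancy-equivalence of the second paragraph, which hinges on $\sigma$ being well-behaved and on $g$ fixing the label $y$.
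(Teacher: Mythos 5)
Your proof is correct and follows essentially the same route as the paper's: lift the alleged long $y$-path of $\sigma'$ back to $\sigma$ using the fact that $g$ fixes the label $y$ (so relabelling by $g$ cannot create any new $y$-redundancy), and contradict the compactness of $\sigma$. Your write-up is in fact somewhat more complete than the paper's, which leaves the degree bookkeeping terse and does not explicitly verify the condition that no tree vertex is embedded at $x_2$.
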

\begin{proof}
Let $\sigma = (D(N,T), \emb, \iota)$ and $\sigma' = D(N',T'), \emb', \iota')$.
    Suppose for a contradiction that $\sigma$ has a long $y$-path for some $y \notin \{y_1, \dots, y_j\}$.
    Thus in particular, there is a path $x_1,x_2,x_3$ in $N'$ with $x_2$ having in-degree and out-degree $1$, and $\iota'(x_1) = \iota(x_2) = \iota'(x_3) = y$.
    Then by construction, $x_1,x_2,x_3$ is also a path in $N$, and $\iota(x_1) = \iota(x_2) = \iota(x_3) = y$ (since the restriction function  $(S_1\to y_1, \dots, S_j \to y_j)$ does not assign $y$ to any vertex that was not already labelled $y$ by $\iota$).
    Furthermore $x_2$ cannot have any other incident arcs beside $x_1x_2$ and $x_2x_3$, as such arcs would not become redundant after applying $(S_1\to y_1, \dots, S_j \to y_j)$, and so such arcs would be in $N'$ as well.
    It follows that $\sigma$ has a long $y$-path, contradicting the assumption that $\sigma$ is compact.
\end{proof}

\begin{corollary}\label{cor:top-level-compact-sig}
$(\Nin,\Tin)$ is a \textsc{Yes}-instance of {\sc Tree Containment}
if and only if
there is a compact-valid signature $\sigma = (D(N,T), \emb, \iota)$ for $(V(\NinTin), \emptyset, \emptyset)$  with ${\iota}^{-1}(\future) = \emptyset$.
\end{corollary}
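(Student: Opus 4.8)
The plan is to build on \cref{cor:top-level-sig}, which already characterizes \textsc{Yes}-instances of \textsc{Tree Containment} by the existence of a \emph{valid} signature for the root bag $(V(\NinTin), \emptyset, \emptyset)$ with $\iota^{-1}(\future) = \emptyset$, together with the two lemmas \cref{lem:V-implies-CV} and \cref{lem:CV-implies-exists-V} that transfer information between valid and compact-valid signatures. Thus it suffices to prove that a valid signature for the root bag with $\iota^{-1}(\future) = \emptyset$ exists if and only if a compact-valid one does, and I would do this by treating the two implications separately.

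For the forward implication I would take a valid signature $\sigma$ for the root bag with $\iota^{-1}(\future) = \emptyset$ and apply \cref{lem:V-implies-CV} to obtain that $c(\sigma)$ is compact-valid. It then remains to check that $c(\sigma)$ still labels no vertex $\future$. This is immediate from the definition of the compact form: passing from $\sigma$ to $c(\sigma)$ only suppresses in-degree-$1$, out-degree-$1$ network vertices and never relabels or creates a vertex, so the vertices of $c(\sigma)$ labelled $\future$ form a subset of those of $\sigma$, which is empty.

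For the reverse implication I would start from a compact-valid signature $\sigma$ for the root bag with $\iota^{-1}(\future) = \emptyset$ and invoke \cref{lem:CV-implies-exists-V} to obtain a valid signature $\sigma_0$ with $\sigma = c(\sigma_0)$, with the aim of feeding $\sigma_0$ back into \cref{cor:top-level-sig}. The step I expect to be delicate is precisely that \cref{cor:top-level-sig} demands $\iota^{-1}(\future) = \emptyset$ for $\sigma_0$, whereas in general $\sigma_0$ could carry $\future$-labelled degree-two network vertices that are exactly the ones suppressed in forming $c(\sigma_0) = \sigma$ and hence invisible in $\sigma$. The way I would close this gap is to exploit that the bag is the root: here $P = \emptyset$, so the restriction function $P \to \past$ is the identity, and a compact $F$-partial solution, being well-behaved, has no redundant arcs or vertices and is therefore unchanged both by this restriction and by taking the compact form. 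Consequently the $\sigma_0$ produced by \cref{lem:CV-implies-exists-V} is already compact and equals $\sigma$, so $\sigma_0$ inherits $\iota^{-1}(\future) = \emptyset$ and \cref{cor:top-level-sig} applies.

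A cleaner route for the reverse direction, which I would mention, bypasses \cref{cor:top-level-sig} and appeals directly to \cref{lem:TC equiv}: unfolding compact-validity at the root bag ($P = F = \emptyset$) shows that $\sigma$ is itself a compact $F$-partial solution, i.e.\ a $(V(\NinTin), \{\future\})$-containment structure, so it uses no $\past$ label; combined with the hypothesis $\iota^{-1}(\future) = \emptyset$ this gives $\iota^{-1}(\{\past,\future\}) = \emptyset$, and \cref{lem:TC equiv} then yields that $(\Nin,\Tin)$ is a \textsc{Yes}-instance. The main obstacle throughout is the bookkeeping of $\future$-labels across the compaction operation, and the crucial simplification is that the degenerate structure of the root bag ($P=F=\emptyset$) makes both the past-restriction and the compaction act trivially, which is what lets the two notions of validity coincide at the top level.
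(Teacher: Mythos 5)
Your forward direction is fine and matches the paper's. The reverse direction, however, rests on a false premise: you assert twice that the root bag has $P=\emptyset$ (and later ``$P=F=\emptyset$''), but the root bag is $(V(\NinTin),\emptyset,\emptyset)$ read as $(P,S,F)$, so in fact $P=V(\NinTin)$ and $S=F=\emptyset$. Consequently the restriction $P\to\past$ is very far from the identity (it relabels \emph{every} vertex $\past$ and, as the paper notes in the proof of \cref{cor:top-level-sig}, makes the entire display graph redundant), and your claim that the $\sigma_0$ produced by \cref{lem:CV-implies-exists-V} ``is already compact and equals $\sigma$'' does not follow. Your ``cleaner route'' fails for the same reason: a signature for the root bag is an $(\emptyset,\{\past,\future\})$-containment structure, not a $(V(\NinTin),\{\future\})$-containment structure (it is not surjective onto $V(\NinTin)$), so you cannot feed it directly into \cref{lem:TC equiv}.

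The conclusion you need --- that $\iota_0^{-1}(\future)=\emptyset$ for the valid $\sigma_0$ with $c(\sigma_0)=\sigma$ --- is nonetheless true, and the paper closes the gap without any appeal to the structure of the root bag: $\sigma_0$ is a \emph{subdivision} of $\sigma$, and subdividing only inserts new vertices on network arcs $uv$ with $\iota(u)=\iota(v)=y\in\Y$, giving each new vertex that same label $y$. Since $\sigma$ has no $\future$-labelled vertex, no subdivision vertex can receive the label $\future$, so $\iota_0^{-1}(\future)=\iota^{-1}(\future)=\emptyset$ and \cref{cor:top-level-sig} applies. Note that $\sigma_0$ need not equal $\sigma$ even at the root: by \cref{lem:longPathLabels} it may contain long $\past$-paths; it only cannot contain long $\future$-paths, which is exactly what the subdivision argument exploits.
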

\begin{proof}
 By \cref{cor:top-level-sig}, we have that $(\Nin,\Tin)$ is a \textsc{Yes}-instance of {\sc Tree Containment}
if and only if
there is a valid signature $\sigma_0 = (D(N_0,T_0), \emb_0, \iota_0)$ for $(V(\NinTin), \emptyset, \emptyset)$  with ${\iota_0}^{-1}(\future) = \emptyset$.

So suppose that such a signature $\sigma_0$ exists.
Let $\sigma = c(\sigma_0)$, and observe that $\sigma$ also satisfies $\iota^{-1}(\future) = \emptyset$. Furthermore by \cref{lem:V-implies-CV}, $\sigma$ is compact-valid.

Conversely suppose there is a compact-valid signature $\sigma = (D(N,T), \emb, \iota)$ for $(V(\NinTin), \emptyset, \emptyset)$  with ${\iota}^{-1}(\future) = \emptyset$.
Then by \cref{lem:CV-implies-exists-V} there is a valid signature $\sigma_0$ for $(V(\NinTin), \emptyset, \emptyset)$ with $\sigma = c(\sigma_0)$. Then as $\sigma_0$ is a subdivision of $\sigma$, we also have ${\iota_0}^{-1}(\future) = \emptyset$. Thus by \cref{cor:top-level-sig}, $(\Nin,\Tin)$ is a \textsc{Yes}-instance of {\sc Tree Containment}.
\end{proof}

We are now ready to prove the compact equivalents of the main lemmas for Forget, Introduce and Join bags

\begin{lemma}[Leaf bag]\label{lem:compactLeafBag}
Let $(P,S,F)$ correspond to a Leaf bag in the tree decomposition i.e. $P =  S = \emptyset, F = V(\NinTin)$ and $(P,S,F)$ has  no children.
Then a compact signature $\sigma = (D(N,T), \emb, \iota)$ for $(P,S,F)$ is compact-valid if and only if $\iota^{-1}(\past) = \emptyset$.
\end{lemma}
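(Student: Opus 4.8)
The plan is to deduce this compact statement directly from its non-compact counterpart, \cref{lem:leafBag}, together with the two bridging lemmas \cref{lem:V-implies-CV} and \cref{lem:CV-implies-exists-V} that relate validity and compact-validity. The crucial facts I would lean on are that a compact signature is in particular well-behaved (so \cref{lem:leafBag} applies to it), that a compact signature $\sigma$ satisfies $c(\sigma)=\sigma$, and that the subdivision relation underlying the compact form preserves the labels of all surviving vertices.

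For the forward direction, I would assume $\sigma$ is compact-valid and apply \cref{lem:CV-implies-exists-V} to obtain a valid signature $\sigma_0 = (D(N_0,T_0),\emb_0,\iota_0)$ for $(P,S,F)$ with $\sigma = c(\sigma_0)$. Since valid signatures are well-behaved, \cref{lem:leafBag} applied to $\sigma_0$ yields $\iota_0^{-1}(\past)=\emptyset$. Now $\sigma=c(\sigma_0)$ means, by the observation following the definition of subdivision, that $\sigma_0$ is a subdivision of $\sigma$; hence every vertex of $D(N,T)$ also appears in $D(N_0,T_0)$ with $\iota(u)=\iota_0(u)$. Consequently, any vertex labelled $\past$ by $\iota$ would be labelled $\past$ by $\iota_0$, contradicting $\iota_0^{-1}(\past)=\emptyset$; therefore $\iota^{-1}(\past)=\emptyset$.

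For the converse, I would assume $\iota^{-1}(\past)=\emptyset$. Since $\sigma$ is compact, it is well-behaved, so \cref{lem:leafBag} tells us that $\sigma$ is valid. Applying \cref{lem:V-implies-CV} then shows that $c(\sigma)$ is compact-valid, and because $\sigma$ is compact we have $c(\sigma)=\sigma$, so $\sigma$ itself is compact-valid.

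I do not expect any genuine obstacle here, as the argument is essentially bookkeeping on top of \cref{lem:leafBag}; the only point needing care is the transfer of the label condition between $\sigma$ and $\sigma_0$ in the forward direction. There one must use that the subdivision operation, by construction, inserts new vertices only on network arcs whose endpoints share a label $y\in\Y$ and assigns each such new vertex that same label $y$, while leaving all original vertices and their labels untouched. Thus no vertex can gain or lose the label $\past$ when passing between $\sigma$ and $\sigma_0$, which is exactly what allows $\iota_0^{-1}(\past)=\emptyset$ to be pushed down to $\iota^{-1}(\past)=\emptyset$.
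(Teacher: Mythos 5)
Your proof is correct, but it takes a different route from the paper's. The paper proves \cref{lem:compactLeafBag} directly, mirroring the proof of \cref{lem:leafBag}: for the forward direction it unwinds the definition of compact-validity (there is a compact $F$-partial solution $\psi$ of which $\sigma$ is the compact-$(P\to\past)$-restriction) and observes that since $P=\emptyset$ no vertex acquires the label $\past$; for the converse it shows that $\sigma$ is itself a compact $F$-partial solution and its own witness. You instead reduce to \cref{lem:leafBag} via the two bridging lemmas \cref{lem:V-implies-CV} and \cref{lem:CV-implies-exists-V}, plus the observation that the subdivision relation underlying $c(\cdot)$ preserves labels of surviving vertices (so $\iota_0^{-1}(\past)=\emptyset$ transfers to $\iota^{-1}(\past)=\emptyset$) and that $c(\sigma)=\sigma$ for compact $\sigma$. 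Your argument is sound — valid signatures are well-behaved by definition, so \cref{lem:leafBag} applies to $\sigma_0$, and the label-transfer step is exactly as you describe. What your approach buys is uniformity: it handles the Leaf bag by the same reduction pattern the paper uses for its compact Forget, Introduce and Join lemmas (\cref{lem:compactForgetBag,lem:compactIntroduceBag,lem:compactJoinBag}), at the cost of invoking machinery (subdivisions and the V/CV bridging lemmas) that the paper's two-line direct argument avoids.
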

\begin{proof}
Similar to the proof of \cref{lem:leafBag}, we have that if $\psi = (D(N',T'), \emb', \iota')$ is a compact $F$-partial solution and $\sigma$ is the compact-$\{P\to \past\}$-restriction of $\psi$, then because $P = \emptyset$ we must have $\iota(u) \neq \past$ for all $u \in V(D(N,T))$.
Conversely if $\iota^{-1}(\past) = \emptyset$ then $\sigma$ is a compact $F$-partial solution and also the compact-$\{P\to \past\}$-restriction of itself.
\end{proof}

\begin{lemma}[Forget bag]\label{lem:compactForgetBag}
Let $(P,S,F)$ correspond to a Forget bag in the tree decomposition with child bag $(P',S',F')$, i.e. $P = P' \cup \{z\}$, $S = S'\setminus \{z\}$ and $F = F'$.

Then a compact signature $\sigma$ for $(P,S,F)$ is compact-valid if and only if there is a compact-valid signature $\sigma'$ for $(P',S',F')$ such that $\sigma$ is the compact-$\{\{z\}\to \past\}$-restriction of $\sigma'$. 
\end{lemma}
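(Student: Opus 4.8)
The plan is to reduce the compact Forget-bag characterization to the already-established non-compact version (\cref{lem:forgetBag}), using the bridging lemmas \cref{lem:CV-implies-exists-V}, \cref{lem:V-implies-CV} and \cref{lem:compactRestriction} that translate between validity and compact-validity. The fact I would lean on throughout is that the compact-$g$-restriction of a structure is uniquely determined: it is the compact form $c(\cdot)$ of the deterministic $g$-restriction of \cref{def:restrict}, and both operations are functions. Consequently, whenever two structures are each shown to be the compact-$(z\to\past)$-restriction of the same $\sigma'$, they must be equal. I would also first check that the bag data line up, i.e.\ that the child bag is $(P',S',F')$ with $F'=F$, so that \cref{lem:forgetBag} applies verbatim.

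For the forward direction, suppose $\sigma$ is compact-valid. First I would apply \cref{lem:CV-implies-exists-V} to obtain a valid signature $\sigma_0$ for $(P,S,F)$ with $\sigma = c(\sigma_0)$. By the non-compact \cref{lem:forgetBag}, $\sigma_0$ is then the $(z\to\past)$-restriction of some valid signature $\sigma_0'$ for $(P',S',F')$. I would set $\sigma' := c(\sigma_0')$, which is compact-valid by \cref{lem:V-implies-CV}. Finally I would invoke \cref{lem:compactRestriction} with $\sigma_0'$ in the role of the structure being compacted, its $(z\to\past)$-restriction $\sigma_0$ in the role of the ``$\sigma_0'$'' of that lemma, and $g = (z\to\past)$; this yields that $c(\sigma_0)$ is the compact-$(z\to\past)$-restriction of $\sigma' = c(\sigma_0')$. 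Since $c(\sigma_0)=\sigma$, this is exactly the required relation, so a compact-valid $\sigma'$ of the desired form exists.

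For the converse, suppose there is a compact-valid $\sigma'$ for $(P',S',F')$ with $\sigma$ equal to the compact-$(z\to\past)$-restriction of $\sigma'$. Symmetrically, I would use \cref{lem:CV-implies-exists-V} to get a valid $\sigma_0'$ for $(P',S',F')$ with $\sigma' = c(\sigma_0')$, let $\sigma_0$ be the $(z\to\past)$-restriction of $\sigma_0'$ (which is valid for $(P,S,F)$ by \cref{lem:forgetBag}), and again apply \cref{lem:compactRestriction} to conclude that $c(\sigma_0)$ is the compact-$(z\to\past)$-restriction of $\sigma'$. By the uniqueness of the compact-$g$-restriction noted above, $c(\sigma_0) = \sigma$. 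Then \cref{lem:V-implies-CV}, applied to the valid signature $\sigma_0$, gives that $\sigma = c(\sigma_0)$ is compact-valid, completing the proof.

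The argument is essentially a diagram chase between the four bridging lemmas, so I do not anticipate any deep obstacle. The one point that needs care is the bookkeeping when applying \cref{lem:compactRestriction}, since its statement names the pre-compaction object ``$\sigma_0$'' and the restricted object ``$\sigma_0'$'', whereas here the pre-compaction object is the child-bag signature $\sigma_0'$ and its restriction is the parent-bag signature $\sigma_0$; I would state the role assignment explicitly to avoid confusion. The only other thing I would justify carefully is the uniqueness claim, namely that the determinism of the $g$-restriction together with the well-definedness of $c(\cdot)$ makes ``the compact-$(z\to\past)$-restriction of $\sigma'$'' a single object, which is what lets me identify $\sigma$ with $c(\sigma_0)$ in both directions.
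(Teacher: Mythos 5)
Your proposal is correct and follows essentially the same route as the paper's own proof: both directions use \cref{lem:CV-implies-exists-V} to lift to a valid (non-compact) signature, apply \cref{lem:forgetBag}, and then transfer back via \cref{lem:compactRestriction} and \cref{lem:V-implies-CV}. The extra care you take over the role assignment in \cref{lem:compactRestriction} and the uniqueness of the compact-$g$-restriction is sound and only makes explicit what the paper leaves implicit.
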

c.f. \cref{lem:forgetBag}

\begin{proof}
Suppose first that $\sigma$ is compact-valid. Then exists a valid signature $\sigma_0$ for $(P,S,F)$ with $c(\sigma_0)=\sigma$ (\cref{lem:CV-implies-exists-V}). Then by \cref{lem:forgetBag}, there is a valid signature $\sigma_0'$ for $(P',S',F')$ such that $\sigma_0$ is the $\{\{z\} \to \past\}$-restriction of $\sigma_0'$. Then let $\sigma' = c(\sigma_0')$ and observe that $\sigma'$ is compact-valid (\cref{lem:V-implies-CV}). Furthermore by \cref{lem:compactRestriction}, $\sigma$ is the compact-$\{\{z\} \to \past\}$-restriction of $\sigma'$, as required.

Conversely, suppose there is a compact-valid signature $\sigma'$ for $(P',S',F')$ such that $\sigma$ is the compact-$\{\{z\}\to \past\}$-restriction of $\sigma'$. Then there is a valid signature $\sigma_0'$ for $(P',S',F')$ with $\sigma'=c(\sigma_0')$  (\cref{lem:CV-implies-exists-V}).
Let $\sigma_0$ be the $\{\{z\}\to \past\}$-restriction of $\sigma_0'$. Then by \cref{lem:forgetBag}, $\sigma_0$ is valid. Furthermore by \cref{lem:compactRestriction}, $c(\sigma_0)$ is the compact-$\{\{z\}\to \past\}$-restriction of $c(\sigma_0') = \sigma'$. That is $c(\sigma_0) = \sigma$. Then as $\sigma_0$ is valid, $\sigma$ is compact-valid (\cref{lem:V-implies-CV}).
\end{proof}

\begin{lemma}\label{lem:compactIntroduceBag}
Let $(P,S,F)$ correspond to an Introduce bag in the tree decomposition with child bag $(P',S',F')$, i.e. $P' = P$, $S' = S\setminus \{z\}$ and $F' = F \cup \{z\}$.

Then a compact signature $\sigma$ for $(P,S,F)$ is valid if and only if $\sigma'$ is a compact-valid signature for $(P',S',F')$, where $\sigma'$ is the compact-$\{\{z\}\to \future\}$-restriction of $\sigma$.
\end{lemma}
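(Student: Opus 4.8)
The plan is to prove both directions by reducing to the non-compact Introduce lemma (\cref{lem:introduceBag}), using the bridges \cref{lem:CV-implies-exists-V}, \cref{lem:V-implies-CV} and \cref{lem:compactRestriction}, in the same spirit as the proof of \cref{lem:compactForgetBag}. (I read the left-hand side ``valid'' as ``compact-valid'', matching \cref{lem:compactLeafBag,lem:compactForgetBag}.) I will freely use the stated observation that $\tau = c(\tau_0)$ holds exactly when $\tau$ is compact and $\tau_0$ is a subdivision of $\tau$, together with \cref{lem:longPathLabels} and \cref{lem:restrictionPreservesSubdivision}.

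For the forward direction I would argue cleanly. Assuming $\sigma$ is compact-valid, \cref{lem:CV-implies-exists-V} yields a valid (hence well-behaved) signature $\sigma_0$ for $(P,S,F)$ with $c(\sigma_0)=\sigma$, so \cref{lem:introduceBag} applies and the $\{z\to\future\}$-restriction $\sigma_0'$ of $\sigma_0$ is a valid signature for $(P,S',F')$. Then \cref{lem:V-implies-CV} makes $c(\sigma_0')$ compact-valid, and \cref{lem:compactRestriction} (with $g=\{z\to\future\}$ and $\sigma=c(\sigma_0)$) identifies $c(\sigma_0')$ with the compact-$\{z\to\future\}$-restriction of $\sigma$, namely $\sigma'$. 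Hence $\sigma'$ is compact-valid.

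The converse is where the real work lies, because \cref{lem:introduceBag} determines $\sigma'$ from $\sigma$ rather than conversely, so I cannot simply lift-and-restrict as in \cref{lem:compactForgetBag}. Starting from a compact-valid $\sigma'$, its definition gives a compact $F'$-partial solution $\psi'$ with $\sigma'$ equal to the compact-$\{P\to\past\}$-restriction of $\psi'$; let $\sigma_0'$ be the (non-compact) $\{P\to\past\}$-restriction of $\psi'$, which is a valid signature for $(P,S',F')$ with $c(\sigma_0')=\sigma'$. Let $\sigma'_{\mathrm{nc}}$ be the (non-compact) $\{z\to\future\}$-restriction of $\sigma$, so $c(\sigma'_{\mathrm{nc}})=\sigma'$. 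The key point is that $\sigma_0'$ and $\sigma'_{\mathrm{nc}}$ are \emph{complementary} subdivisions of $\sigma'$: by \cref{lem:longPathLabels}, $\sigma'_{\mathrm{nc}}$ (a $z\to\future$ restriction of the compact $\sigma$) subdivides only $\future$-arcs of $\sigma'$, while $\sigma_0'$ (a $P\to\past$ restriction of the compact $\psi'$) subdivides only $\past$-arcs; since well-behavedness forbids an arc whose endpoints carry different labels of $\Y$ (\cref{def:wellBehaved}\eqref{it:WBNoLabelCrossings}), these two arc sets are disjoint. I would then form the common refinement $\hat\sigma'$ of $\sigma'$ applying the $\future$-subdivisions of $\sigma'_{\mathrm{nc}}$ and the $\past$-subdivisions of $\sigma_0'$ simultaneously, and establish two facts. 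First, $\hat\sigma'$ is valid: it is a subdivision of $\sigma_0'$ subdividing only $\future$-arcs, and the corresponding $\future$-arcs survive unchanged in the witness $\psi'$, so subdividing them there yields (via \cref{lem:restrictionPreservesSubdivision}) a well-behaved witness for $\hat\sigma'$. Second, $\hat\sigma'$ is the restriction of a subdivision of $\sigma$: it is a subdivision of $\sigma'_{\mathrm{nc}}$ subdividing $\past$-arcs, which descend from non-redundant $\past$-arcs of the compact $\sigma$; subdividing those arcs of $\sigma$ gives a subdivision $\sigma_0$ of $\sigma$ whose $\{z\to\future\}$-restriction is exactly $\hat\sigma'$, again by \cref{lem:restrictionPreservesSubdivision}. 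Finally, $\sigma_0$ is well-behaved (a subdivision of the compact $\sigma$) with valid $\{z\to\future\}$-restriction $\hat\sigma'$, so \cref{lem:introduceBag} gives that $\sigma_0$ is valid; as $c(\sigma_0)=\sigma$, \cref{lem:V-implies-CV} concludes that $\sigma$ is compact-valid.

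The main obstacle is precisely this converse, and within it the \cref{lem:longPathLabels} step that pins down the two restrictions as subdividing disjoint ($\future$- versus $\past$-labelled) arc sets, so that a common refinement even exists. The remaining bookkeeping—that subdividing $\future$-arcs preserves well-behavedness of the witness $\psi'$, and that the relevant $\past$-arcs of $\sigma$ are non-redundant and hence persist through the $\{z\to\future\}$-restriction (both following from $\sigma$ and $\psi'$ being compact, hence well-behaved)—is routine but is where the care must be taken.
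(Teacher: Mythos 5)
Your proposal is correct and follows essentially the same route as the paper's proof: the forward direction via \cref{lem:CV-implies-exists-V}, \cref{lem:introduceBag}, \cref{lem:V-implies-CV} and \cref{lem:compactRestriction}, and the converse by forming the common refinement of the two subdivisions of $\sigma'$ (your $\hat\sigma'$ is the paper's $\sigma^*$, your $\sigma_0$ its $\sigma_0^*$), with \cref{lem:longPathLabels} guaranteeing that the $\past$- and $\future$-subdivided arc sets are disjoint and \cref{lem:restrictionPreservesSubdivision} transferring the restrictions. Your reading of ``valid'' as ``compact-valid'' in the statement also matches the paper's intent.
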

c.f. \cref{lem:introduceBag}

\begin{proof}
Suppose first that $\sigma$ is compact-valid.
Then there is a valid signature $\sigma_0$ for $(P,S,F)$ such that $c(\sigma_0) = \sigma$ (\cref{lem:CV-implies-exists-V}).
Then let $\sigma_0'$ be the $\{\{z\}\to \future\}$-restriction of $\sigma_0$, and observe that $\sigma_0'$ is valid by  \cref{lem:introduceBag}.
Then $c(\sigma_0')$ is compact-valid (\cref{lem:V-implies-CV}).
Furthermore by \cref{lem:compactRestriction}, $c(\sigma_0')$ is the compact-$\{\{z\}\to \future\}$-restriction of $c(\sigma_0) = \sigma$.
Then letting $\sigma' = c(\sigma_0')$ we have that $\sigma'$ is compact-valid and the compact-$\{\{z\}\to \future\}$-restriction  restriction of $\sigma$, as required.

For the converse, 
assume $\sigma' = (D(N',T), \emb', \iota')$ is compact-valid, and
let $\psi'$ be the compact $F'$-partial solution such that $\sigma'$ is the compact-$\{P\to \past\}$-restriction of $\psi'$.
	Let $\sigma_0'$ be the $\{P'\to \past\}$-restriction of $\psi'$, so $\sigma_0'$ is a subdivision of $\sigma'$.
	In addition let $\sigma''$ be the $\{\{z\} \to \future\}$-restriction of $\sigma$, so $\sigma''$ is also a subdivision of $\sigma'$.
	Let $A_p$ be the subset of arcs in $N'$ that are subdivided by one or more additional vertices to produce $\sigma_0'$.
	Similarly let $A_f$ be the subset of arcs in $N'$ that are subdivided by one or more additional vertices to produce $\sigma''$.
	
	By  \cref{lem:longPathLabels},  
	$\sigma''$ has no long $\past$-paths and $\sigma_0'$ has no long $\future$-paths.
	Thus the sets of arcs $A_p$ and $A_f$ are disjoint (as $\iota'(u)=\iota'(v)=\past$ for any $uv \in A_p$, and $\iota'(u)=\iota'(v)=\future$ for any $uv \in A_f$).
	So now for each arc $uv$ in $N'$, define $P_{uv}$ to be the path corresponding to $uv$ in $\sigma_0'$ if $uv \in A_p$, let $P_{uv}$ be the path corresponding to $uv$ in $\sigma''$ if $uv \in A_f$, and let $P_{uv}$ be the single arc $uv$ otherwise.
	
	Now define $\sigma^*$ to be the subdivision of $\sigma'$ derived by replacing every arc $uv$ in $N$ by $P_{uv}$.
	Now we have that $\sigma^*$ is also a subdivision of $\sigma_0'$ (by subdividing arcs of $A_p$) and also of $\sigma''$ (by subdividing arcs of $A_f$, and also that $\sigma' = c(\sigma^*$) (as $\sigma'$ is compact).
	(See \cref{fig:CompactIntroduceBag}.)

  \begin{figure}[t]
     \centering
\includegraphics[scale = 0.75]{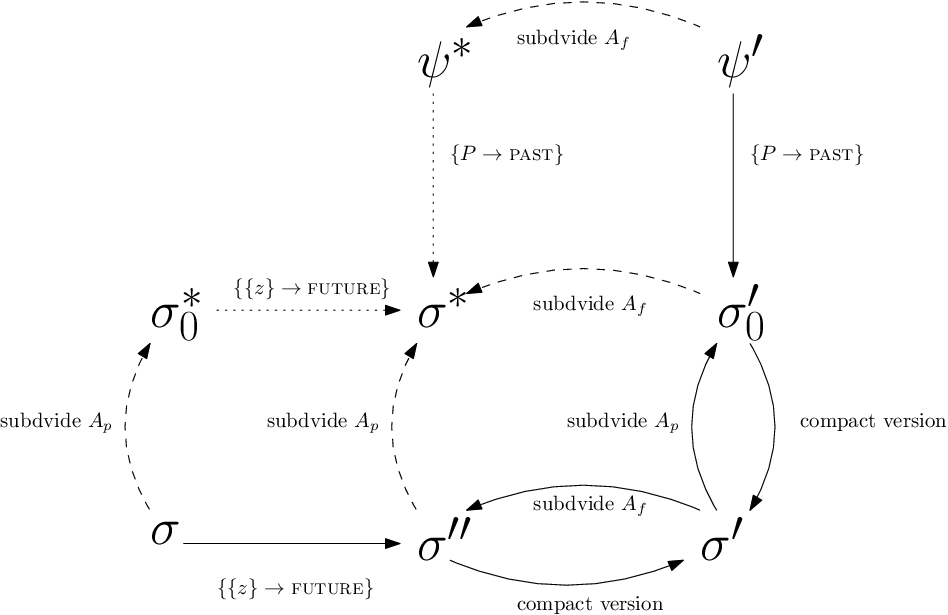}
     \caption{Illustration of proof for Introduce bags, compact-validity of $\sigma'$ implies compact-validity of $\sigma'$. Solid lines are  relations that we may assume; the dashed line shows the construction of $\sigma^*,\psi^*$ and $\sigma_0'$; dotted lines are restriction relations we can show using \cref{lem:restrictionPreservesSubdivision}.}
     \label{fig:CompactIntroduceBag}
 \end{figure}
	
	We now claim that $\sigma^*$ is a valid signature for $(P',S',F')$.
	To see this,
	Let $\psi^*$ be the subdivision of $\psi'$ derived by replacing $uv$ with $P_{uv}$ for any $uv \in A_f$ (observe that all arcs of $A_f$ appear in the display graph of $\psi$, as they are all arcs in $\sigma'$ and were not subdivided to create $\sigma_0'$).
	Then $\psi^*$ is also a well-behaved $F'$-partial solution. 
	Then using \cref{lem:restrictionPreservesSubdivision} and the fact that $\sigma_0'$ is the $\{P'\to \past\}$-restriction of $\psi'$, we have that the $\{P'\to \past\}$-restriction $\psi^*$ can be derived from $\sigma_0'$ by replacing the $uv$ with $P_{uv}$ for all $uv \in A_f$.
	That is, the $\{P'\to \past\}$-restriction $\psi^*$ is exactly $\sigma^*$. Thus $\sigma^*$ is indeed valid.
	
    Next, let $\sigma_0^*$ be the subdivision of $\sigma$ derived by replacing $uv$ with $P_{uv}$ for any $uv \in A_p$.
    Similar to the case with $\psi^*$, we can show using \cref{lem:restrictionPreservesSubdivision} that the $\{\{z\}\to \future\}$-restriction of $\sigma_0^*$ is $\sigma''$ with $uv$ replaced by $P_{uv}$ for all $uv \in A_p$, i.e. $\sigma^*$.
    So we now have that $\sigma^*$ is the $\{\{z\}\to \future\}$-restriction of $\sigma_0^*$ and also a valid signature for $(P',S',F')$. By \cref{lem:introduceBag}, this implies that $\sigma_0^*$ is valid. 
    Then $\sigma$, which is the compact version of $\sigma_0^*$, is compact-valid by \cref{lem:V-implies-CV}.
\end{proof}

\begin{lemma}\label{lem:compactJoinBag}
  Let $(L\cup R, S, F)$ be a Join bag with child bags $(L, S, F\cup R)$ and $(R,S,F \cup L)$ and
  let $\sigma$ be a compact signature for $(L\cup R, S, F)$.
  Then, $\sigma$ is compact-valid
  if and only if
  there is a compact-valid reconciliation $\mu$ for $(L\cup R, S, F)$
  such that $\sigma$ is the compact-$\{\{\pleft, \pright\} \to \past\}$-restriction of $\mu$.
\end{lemma}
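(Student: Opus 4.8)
The plan is to prove this exactly as the compact analogue of \cref{lem:joinBag} was anticipated, mirroring the template already used for \cref{lem:compactForgetBag} and \cref{lem:compactIntroduceBag}. The whole argument reduces to ``lifting'' the non-compact characterization (\cref{lem:joinBag}) through the translation lemmas between validity and compact-validity, namely \cref{lem:CV-implies-exists-V}, \cref{lem:V-implies-CV} (both of which, crucially, include reconciliation versions in their ``similarly'' clauses) and \cref{lem:compactRestriction}. No fresh structural work on display graphs, embeddings or isolabellings is needed; all of that is already packaged inside \cref{lem:joinBag}.

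For the forward direction, I would start from a compact-valid compact signature $\sigma$ for $(L\cup R,S,F)$. By \cref{lem:CV-implies-exists-V} there is a valid signature $\sigma_0$ for $(L\cup R,S,F)$ with $c(\sigma_0)=\sigma$. Applying \cref{lem:joinBag} to $\sigma_0$ yields a valid reconciliation $\mu_0$ for $(L\cup R,S,F)$ such that $\sigma_0$ is the $\{\{\pleft,\pright\}\to\past\}$-restriction of $\mu_0$. Setting $\mu:=c(\mu_0)$, the reconciliation form of \cref{lem:V-implies-CV} gives that $\mu$ is compact-valid, and \cref{lem:compactRestriction} gives that $c(\sigma_0)$ is the compact-$\{\{\pleft,\pright\}\to\past\}$-restriction of $c(\mu_0)$; that is, $\sigma$ is the compact-$\{\{\pleft,\pright\}\to\past\}$-restriction of $\mu$, as required.

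For the converse, I would begin with a compact-valid reconciliation $\mu$ for $(L\cup R,S,F)$ such that $\sigma$ is the compact-$\{\{\pleft,\pright\}\to\past\}$-restriction of $\mu$. The reconciliation form of \cref{lem:CV-implies-exists-V} supplies a valid reconciliation $\mu_0$ with $c(\mu_0)=\mu$. Let $\sigma_0$ be the (ordinary) $\{\{\pleft,\pright\}\to\past\}$-restriction of $\mu_0$; then \cref{lem:joinBag} shows $\sigma_0$ is valid. By \cref{lem:compactRestriction}, $c(\sigma_0)$ is the compact-$\{\{\pleft,\pright\}\to\past\}$-restriction of $c(\mu_0)=\mu$, which by hypothesis equals $\sigma$, so $c(\sigma_0)=\sigma$. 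Finally \cref{lem:V-implies-CV}, applied to the valid signature $\sigma_0$, yields that $\sigma=c(\sigma_0)$ is compact-valid.

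The proof is essentially bookkeeping, so the main obstacle is not any single step but making sure the restriction functions and the objects line up across the two layers: one must confirm that \cref{lem:joinBag} is invoked on the genuinely \emph{valid} (non-compact) witnesses $\sigma_0$ and $\mu_0$ rather than on their compact forms, and that the compact-restriction identities from \cref{lem:compactRestriction} are applied with the single restriction function $\{\{\pleft,\pright\}\to\past\}$ consistently in both directions. The only subtlety worth double-checking is that the reconciliation ``similarly'' clauses of \cref{lem:CV-implies-exists-V} and \cref{lem:V-implies-CV} are stated for the restriction $\{\pleft,\pright\}$ splitting used in reconciliations, so that composing them with \cref{lem:joinBag} (which itself already handles the merge of $\pleft,\pright$ into $\past$) produces precisely $\sigma$; granting these, the equivalence follows immediately.
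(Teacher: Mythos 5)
Your proposal is correct and follows essentially the same route as the paper: both directions are obtained by translating through \cref{lem:CV-implies-exists-V}, \cref{lem:joinBag}, \cref{lem:compactRestriction} and \cref{lem:V-implies-CV} in exactly the order you describe. If anything, your converse is slightly cleaner than the paper's, since you explicitly take $\sigma_0$ to be the restriction of the valid witness $\mu_0$ (rather than of $\mu$), which is what is needed to invoke \cref{lem:joinBag}.
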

c.f. \cref{lem:joinBag}.

\begin{proof}
 Suppose first that $\sigma$ is compact-valid. Then there is a valid signature $\sigma_0$ for $(L\cup R, S, F)$ such that $\sigma = c(\sigma_0)$ (\cref{lem:CV-implies-exists-V}.)
 By \cref{lem:joinBag}, there is a valid reconciliation $\mu_0$ for $(L \cup R, S, F)$ such that $\sigma_0$ is the $\{\{\pleft, \pright\} \to \past\}$-restriction of $\mu_0$.
 Then let $\mu = c(\mu_0)$, so $\mu$ is a compact-valid reconciliation (\cref{lem:V-implies-CV}). By \cref{lem:compactRestriction}, the compact-$\{\{\pleft, \pright\} \to \past\}$-restriction of $\mu = c(\mu_0)$ is $c(\sigma_0) = \sigma$, as required.
 
 Conversely, suppose that $\mu$ is a compact reconciliation for $(L\cup R, S, F)$ such that $\sigma$ is the compact-$\{\{\pleft, \pright\} \to \past\}$-restriction of $\mu$. As $\mu$ is compact-valid, there is a valid reconciliation $\mu_0$ for $(L \cup R, S, F)$ such that $\mu = c(\mu_0)$ (\cref{lem:CV-implies-exists-V}). Let $\sigma_0$ be the $\{\{\pleft, \pright\} \to \past\}$-restriction of $\mu$. Then by \cref{lem:joinBag}, $\sigma_0$ is a valid signature for $(L\cup R, S, F)$. Moreover by \cref{lem:compactRestriction}, $c(\sigma_0)$ is the compact-$\{\{\pleft, \pright\} \to \past\}$-restriction of $\mu = c(\mu_0)$, i.e. $c(\sigma_0) = \sigma$. Then as $\sigma_0$ is valid, $\sigma$ is compact-valid (\cref{lem:V-implies-CV}).
\end{proof}

\begin{lemma}\label{lem:compactReconciliationToChildren}
Let $(L\cup R, S, F)$ be a Join bag with child bags $(L, S, F\cup R)$ and $(R,S,F \cup L)$, and let $\mu$ be a compact reconciliation for $(L\cup R, S, F)$.
Let  $\sigma_L$ be the compact-$\{\pleft \to \past, \pright \to \future\}$-restriction of $\mu$, and $\sigma_R$ the compact-$\{\pright \to \past, \pleft \to \future\}$-restriction of $\mu$.
If $\mu$ is compact-valid, then $\sigma_L$ is a compact-valid signature for $(L, S, F \cup R)$ and 
$\sigma_R$  is a compact-valid signature for $(R, S, F \cup L)$.
\end{lemma}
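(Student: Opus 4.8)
The plan is to mirror the proofs of the other compact bag lemmas (for instance \cref{lem:compactForgetBag}): lift the compact objects to non-compact ones, apply the already-established non-compact statement (\cref{lem:reconciliationToChildren}), and then push everything back down through the compaction operator using \cref{lem:compactRestriction} and \cref{lem:V-implies-CV}. The entire content of this lemma is really already carried by \cref{lem:reconciliationToChildren}; the only new work is checking that compaction commutes with the two restrictions in the right way.

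Concretely, I would first invoke \cref{lem:CV-implies-exists-V} for reconciliations: since $\mu$ is compact-valid, there is a valid reconciliation $\mu_0$ for $(L\cup R, S, F)$ with $\mu = c(\mu_0)$. By \cref{cor:valid-recon-WB}, $\mu_0$ is well-behaved, so \cref{lem:reconciliationToChildren} applies. Writing $\sigma_{0,L}$ for the $\{\pleft \to \past, \pright \to \future\}$-restriction of $\mu_0$ and $\sigma_{0,R}$ for the $\{\pright \to \past, \pleft \to \future\}$-restriction of $\mu_0$, this yields that $\sigma_{0,L}$ is a valid signature for $(L,S,F\cup R)$ and $\sigma_{0,R}$ is a valid signature for $(R,S,F\cup L)$.

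It then remains to identify $c(\sigma_{0,L})$ with $\sigma_L$ and $c(\sigma_{0,R})$ with $\sigma_R$. For the left child I would apply \cref{lem:compactRestriction} with $\sigma_0 := \mu_0$, $\sigma := c(\mu_0) = \mu$, and $g := \{\pleft \to \past, \pright \to \future\}$; this says precisely that $c(\sigma_{0,L})$ is the compact-$\{\pleft \to \past, \pright \to \future\}$-restriction of $\mu$, which is $\sigma_L$ by definition. Since $\sigma_{0,L}$ is valid (and hence well-behaved), \cref{lem:V-implies-CV} turns this into the compact-validity of $c(\sigma_{0,L}) = \sigma_L$ for $(L,S,F\cup R)$. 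The argument for $\sigma_R$ is entirely symmetric, using $g := \{\pright \to \past, \pleft \to \future\}$.

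I do not expect a genuine obstacle: the proof is a short chain through four previously-proved lemmas. The only points that require care are the bookkeeping justifications of each application's hypotheses — in particular, confirming that $\mu_0$ is well-behaved (via \cref{cor:valid-recon-WB}) before invoking \cref{lem:reconciliationToChildren}, and that $\sigma_{0,L}$ and $\sigma_{0,R}$ are well-behaved signatures (which holds automatically, as valid signatures are well-behaved by definition) before invoking \cref{lem:V-implies-CV}. Matching up the two different restriction functions for the left and right children, and keeping straight that $\mu = c(\mu_0)$ so that \cref{lem:compactRestriction} produces exactly the compact-restrictions named in the statement, is the one place where a misstep could occur, but it is purely mechanical.
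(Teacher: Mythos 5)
Your proposal is correct and follows essentially the same route as the paper's proof: lift $\mu$ to a valid $\mu_0$ with $\mu = c(\mu_0)$ via \cref{lem:CV-implies-exists-V}, apply \cref{lem:reconciliationToChildren}, then push back down with \cref{lem:compactRestriction} and \cref{lem:V-implies-CV}. Your explicit check that $\mu_0$ is well-behaved (via \cref{cor:valid-recon-WB}) before invoking \cref{lem:reconciliationToChildren} is a small piece of bookkeeping the paper leaves implicit, but otherwise the arguments coincide.
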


c.f. \cref{lem:reconciliationToChildren}

\begin{proof}
	Suppose $\mu$ is compact-valid. 
	Then there is a valid reconciliation $\mu_0$ for $(L \cup R, S, F)$ such that $\mu = c(\mu_0)$ (\cref{lem:CV-implies-exists-V}).
	Now let $\sigma_L'$ be the $\{\pleft \to \past, \pright \to \future\}$-restriction of $\mu_0$. Then by \cref{lem:reconciliationToChildren}, $\sigma_L'$ is a valid signature for  $(L, S, F \cup R)$.
	Furthermore by \cref{lem:compactRestriction}, $c(\sigma_L')$ is the compact-$\{\pleft \to \past, \pright \to \future\}$-restriction of $c(\mu_0) = \mu$. That is, $c(\sigma_L) = \sigma_L$. It follows by \cref{lem:V-implies-CV} that $\sigma_L$ is compact-valid, as required.
	As similar argument shows that $\sigma_R$ is a compact-valid signature for  $(R, S, F \cup L)$.
\end{proof}

\begin{lemma}\label{lem:compactChildrentoReconciliation}
Let $(L\cup R, S, F)$ be a Join bag with child bags $(L, S, F\cup R)$ and $(R,S,F \cup L)$, and let $\mu$ be a compact reconciliation for $(L\cup R, S, F)$.
Let  $\sigma_L$ be the compact-$\{\pleft \to \past, \pright \to \future\}$-restriction of $\mu$, and $\sigma_R$ the compact-$\{\pright \to \past, \pleft \to \future\}$-restriction of $\mu$.
If  $\sigma_L$ is a compact-valid signature for $(L, S, F \cup R)$ and 
$\sigma_R$  is a compact-valid signature for $(R, S, F \cup L)$, then $\mu$ is compact-valid.
\end{lemma}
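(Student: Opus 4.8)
The plan is to mirror the converse direction of \cref{lem:compactIntroduceBag}: I will build a (non-compact) well-behaved reconciliation~$\mu^*$ that is a subdivision of~$\mu$, show that its two child-restrictions are \emph{valid} signatures, invoke \cref{lem:childrentoReconciliation} to conclude that~$\mu^*$ is valid, and finally combine $c(\mu^*)=\mu$ (which holds since $\mu^*$ is a subdivision of the compact~$\mu$) with the reconciliation version of \cref{lem:V-implies-CV} to deduce that $\mu=c(\mu^*)$ is compact-valid.

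First I would replace the compact hypotheses by non-compact witnesses. Applying \cref{lem:CV-implies-exists-V} (and its proof) to the compact-valid signatures $\sigma_L$ and $\sigma_R$ produces \emph{compact} partial solutions $\psi_L$ and $\psi_R$ for which the valid signatures $\sigma_{L,0}$ (the $\{L\to\past\}$-restriction of $\psi_L$) and $\sigma_{R,0}$ (the $\{R\to\past\}$-restriction of $\psi_R$) satisfy $c(\sigma_{L,0})=\sigma_L$ and $c(\sigma_{R,0})=\sigma_R$. The key observation is that, since $\psi_L$ is compact and $\{L\to\past\}$ maps only into~$\past$, \cref{lem:longPathLabels} forces $\sigma_{L,0}$ to have long paths only in the label~$\past$; symmetrically for $\sigma_{R,0}$. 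In $\sigma_L$ the label~$\past$ descends from~$\pleft$ in~$\mu$, while in $\sigma_R$ it descends from~$\pright$; hence the network arcs of~$\mu$ that must be subdivided to realise $\sigma_{L,0}$ carry $\pleft$-labelled endpoints and those needed for $\sigma_{R,0}$ carry $\pright$-labelled endpoints. Because $\mu$ is well-behaved, every $\Y$-labelled arc of~$\mu$ is monochromatic (\cref{def:wellBehaved}\eqref{it:WBNoLabelCrossings}), so these two families of arcs, call them $A_L$ and $A_R$, are disjoint.

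I would then let $\mu^*$ be the subdivision of~$\mu$ obtained by subdividing each arc of $A_L$ as prescribed by $\sigma_{L,0}$ and each arc of $A_R$ as prescribed by $\sigma_{R,0}$, using \cref{lem:restrictionPreservesSubdivision} to track the correspondence between arcs of~$\mu$ and arcs of~$\sigma_L,\sigma_R$, and leaving all remaining arcs untouched; subdividing a well-behaved structure along monochromatic paths preserves well-behavedness, so $\mu^*$ is a well-behaved reconciliation. The heart of the argument is checking that the $\{\pleft\to\past,\pright\to\future\}$-restriction of $\mu^*$ is valid (the other side being symmetric). Under this restriction the $A_L$-subdivisions survive as $\past$-paths and reproduce $\sigma_{L,0}$ exactly, while the $A_R$-subdivisions (with $\pright$-endpoints) survive as $\future$-paths; so the restriction is $\sigma_{L,0}$ with some extra long $\future$-paths inserted. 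Since $\future$-labelled vertices are ``forgotten'', those extra paths are flexible: I can exhibit a witness by subdividing the matching $\future$-arcs of $\psi_L$ (the $R$-part of $\psi_L$ is labelled $\future$), obtaining a well-behaved partial solution $\psi_L^*$ whose $\{L\to\past\}$-restriction is precisely this structure, again using \cref{lem:restrictionPreservesSubdivision}. Thus both restrictions of~$\mu^*$ are valid, and \cref{lem:childrentoReconciliation} certifies that $\mu^*$ is valid.

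I expect the main obstacle to be exactly this simultaneous realisation of the two subdivision patterns. The danger is a clash where the same arc of~$\mu$ is forced to be subdivided by \emph{both} sides into paths of different lengths. Compactness of the witnesses $\psi_L,\psi_R$ is what rules this out: it confines the ``rigid'' subdivisions (those pinned to genuine $L$- or $R$-vertices) to $\past$-paths, which live on the disjoint arc-sets $A_L$ and $A_R$, so that every subdivision appearing on the opposite side is a $\future$-path that can be absorbed by reshaping the forgotten part of the witness. With $\mu^*$ shown valid, the compact-validity of $\mu=c(\mu^*)$ follows at once from the reconciliation version of \cref{lem:V-implies-CV}.
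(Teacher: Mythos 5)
Your proposal follows essentially the same route as the paper's proof: extract non-compact witnesses via \cref{lem:CV-implies-exists-V}, use \cref{lem:longPathLabels} to confine the forced subdivisions to $\past$-paths sitting on disjoint arc sets $A_L$ and $A_R$ (disjoint because they descend from $\pleft$- and $\pright$-labelled arcs of the well-behaved $\mu$), build the common subdivision $\mu^*$ together with subdivided witnesses $\psi_L^*,\psi_R^*$ tracked through \cref{lem:restrictionPreservesSubdivision}, conclude validity of $\mu^*$ from \cref{lem:childrentoReconciliation}, and finish with \cref{lem:V-implies-CV}. The only step you gloss over is the paper's preliminary observation that the plain (non-compact) $\{\pleft\to\past,\pright\to\future\}$-restriction of $\mu$ is already compact and hence equals $\sigma_L$; this is what licenses your identification of arcs of $\sigma_L$ with arcs of $\mu$ when defining $A_L$ and $A_R$, and it does require a short argument using the compactness and well-behavedness of $\mu$.
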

c.f.\ \cref{lem:childrentoReconciliation}

\begin{proof}
  In what follows, let $\sigma = (\Disp{\sigma}, \emb_{\sigma}, \iota_{\sigma})$, for any given containment structure $\sigma$.
  
  The proof is along similar lines to \cref{lem:compactIntroduceBag}, but with more containment structures. Before we can proceed, we first show that the the $\{\pleft \to \past, \pright \to \future\}$-restriction of $\mu$ is in fact compact, and therefore $\sigma_L$ is the $\{\pleft \to \past, \pright \to \future\}$-restriction of $\mu$, as well as the compact-$\{\pleft \to \past, \pright \to \future\}$-restriction of $\mu$. 
  
  Indeed, let $\sigma_L''$ denote the $\{\pleft \to \past, \pright \to \future\}$-restriction of $\mu$, and suppose for a contradiction that $\sigma_L''$ contains a long $y$-path for some $y \in \{\past, \future\}$. Thus there is a path $x_1,x_2,x_3$ in $N_{\sigma_L''}$ where $x_2$ has in-degree and out-degree $1$ and $\iota_{\sigma_L''}(x_1)= \iota_{\sigma_L''}(x_2) = \iota_{\sigma_L''}(x_3) = y$.
  If $y = \past$ then  $\iota_{\mu}(x_1)= \iota_{\mu}(x_2) = \iota_{\mu}(x_3) = \pleft$.
  Otherwise $y = \future$, and $\iota_{\mu}(x_1), \iota_{\mu}(x_2), \iota_{\mu}(x_3)$ are all in $\{\pright, \future\}$. But note that all three of $\iota_{\mu}(x_1), \iota_{\mu}(x_2), \iota_{\mu}(x_3)$ must be the same value, otherwise $\mu$ has an arc $uv$ with $\iota_{\mu}(\{u,v\}) = \{\pright, \future\}$ and $\mu$ is not well-behaved. 
 So if $x_2$ has in-degree and out-degree $1$ in $N_{\mu}$, then $\mu$ has a long $y$-path for $y=\iota_{\mu}(u)$, a contradiction as $\mu$ is compact.
 Otherwise, suppose $x_2$ has an incident arc $x_2z$ for $z\neq x_3$ (the case of an incident arc $zx_2$ is similar). As $x_2z$ is not an arc in $N_{\sigma_L''}$, it must hold that $\iota_{\mu}(z)=\iota_{\mu}(x_2)$ (otherwise the arc would not become redundant or $\mu$ is not well-behaved). Then since $x_2z$ is not redundant w.r.t $\mu$, there is some tree arc $uv$ in $T_{\mu}$ such that $x_2z$ is an arc of $\emb_{\mu}(uv)$. But this implies also that $x_1x_2$ is also arc of $\emb_{\mu}(uv)$, as $x_2 \neq \iota_{\mu}(u)$. Thus $x_2z$ is deleted in the construction of $\sigma_L''$ only if $uv$ is, which would in turn imply that $x_1x_2$ is deleted, again a contradiction.
  
  So we may assume that $\sigma_L''$ is compact and thus $\sigma_L'' = c(\sigma_L'') = \sigma_L$, so $\sigma_L$ is the $\{\pleft \to \past, \pright \to \future\}$-restriction of $\mu$.
  A similar argument shows that $\sigma_R$ is the $\{\pright \to \past, \pleft \to \future\}$-restriction of $\mu$.

  So now assume that $\sigma_L$ and $\sigma_R$ are both  compact-valid.
  Let $\psi_L$ denote the compact $F\cup R$-partial solution for which $\sigma_L$ is the compact $\{L \to \past\}$-restriction, and let $\sigma_L'$ be the $\{L \to \past\}$-restriction of $\psi_L$, so that $\sigma_L = c(\sigma_L')$.
  Similarly let  $\psi_R$ denote the compact $F\cup L$-partial solution for which $\sigma_R$ is the compact $\{R \to \past\}$-restriction, and let $\sigma_R'$ be the $\{R \to \past\}$-restriction of $\psi_R$, so that $\sigma_R = c(\sigma_R')$.
 (See \cref{fig:CompactReconciliation})

 \begin{figure}[t]
     \centering
\includegraphics[scale = 0.8]{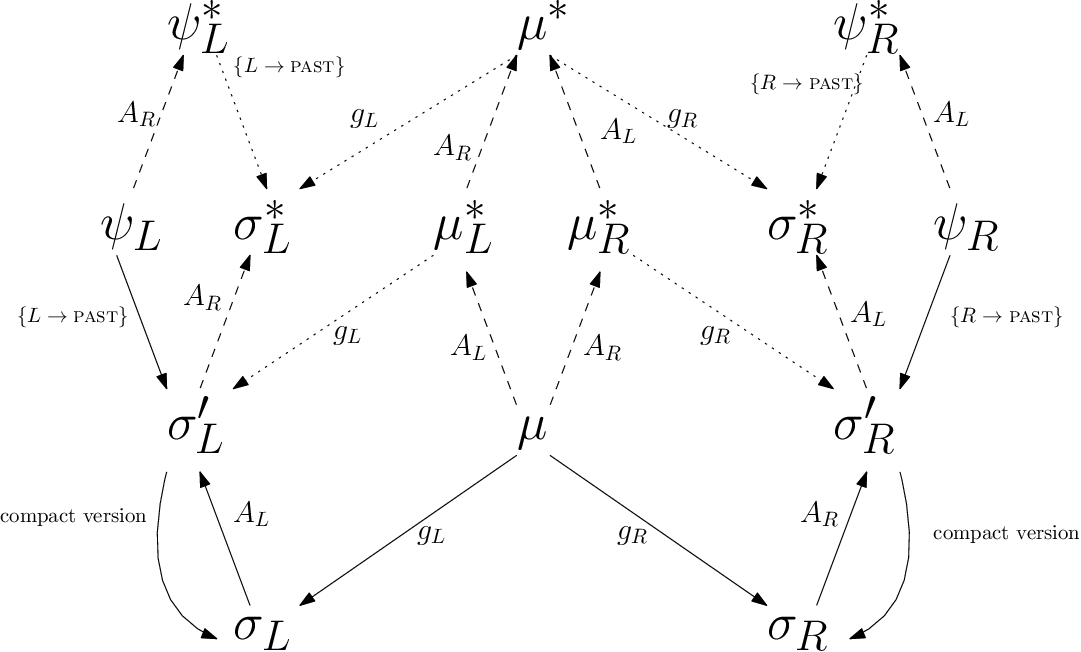}
     \caption{Illustration of proof of \cref{lem:compactChildrentoReconciliation}. Solid lines are  relations that we may assume; the dashed lines show the construction of $\sigma_L^*,\sigma_R^*,\psi_L^*,\psi_R^*,\mu_L^*, \mu_R^*, \mu^*$; dotted lines are restriction relations we can show using \cref{lem:restrictionPreservesSubdivision}.
     Here $g_L$ denotes the restriction function $\{\pleft \to \past, \pright \to \future\}$ and $g_R$ denotes $\{\pright \to \past, \pleft \to \future\}$; labels $A_L$ or $A_R$ indicate a subdivision of those arcs}
     \label{fig:CompactReconciliation}
 \end{figure}
 
 By \cref{lem:longPathLabels}, the only long-$y$-paths in $\sigma_L'$ are for $y = \past$.
 So $\sigma_L'$ is a subdivision of $\sigma_L$ in which the only subdivided arc are those $uv$ for which $\iota_{\sigma_L}(u) = \iota_{\sigma_L}(v) = \past$. So let $A_L$ the set of these arcs, and for each $uv \in A_L$ let $P_{uv}$ be the corresponding path in $N_{\sigma_L'}$.
 Similarly let $A_R$ be the set of arcs in $N_{\sigma_R}$ that are subdivided to produce $\sigma_R'$ from $\sigma_R$ (noting that $\iota_{\sigma_R}(u) = \iota_{\sigma_R}(v) = \past$ for any $uv \in A_R$), and let $P_{uv}$ be the corresponding path in $N_{\sigma_R'}$ for each $uv \in A_R$.
 Observe that since $\iota_{\sigma_L}(u) = \past$ implies $\iota_{\mu}(u) = \pleft$ and $\iota_{\sigma_R}(u) = \past$ implies $\iota_{\mu}(u) = \pright$, the arc sets $A_L$ and $A_R$ are disjoint.
 
 Now let $\sigma_L^*$ be the subdivision of $\sigma_L'$ (not $\sigma_L$) by replacing $uv$ with $P_{uv}$ for all $uv \in A_R\cap A(N_{\sigma_L'})$. Equivalently, $\sigma_L^*$ is the subdivision of $\sigma_L$ derived by replacing $uv$ with $P_{uv}$ for all arcs $uv$ in $A_L \cup (A_R \cap A(N_{\sigma_L}))$.
 Thus $c(\sigma_L^*) = \sigma_L$.
 Define $\sigma_R^*$ analogously.
 
 We claim that $\sigma_L^*$ is a valid signature for $(L, S, R\cup F)$.
 To see this, let $\psi_L^*$ be the subdivision of $\psi_L$ derived by replacing $uv$ with $P_{uv}$ for every 
 $uv$ in $A_L \cup (A_R \cap A(N_{\psi_L}))$.
 Then by \cref{lem:restrictionPreservesSubdivision} and the fact that $\sigma_L'$ is  the $\{L \to \past\}$-restriction
 of $\psi_L$, we have that $\sigma_L^*$ is  the $\{L \to \past\}$-restriction of $\psi_L^*$.
 Thus $\psi_L^*$ is valid, and a similar argument shows that $\psi_R^*$ is valid.
 
 Now let $\mu^*$ (respectively $\mu_L^*, \mu_R^*$) be the subdivision of $\mu$ derived by replacing $uv$ with $P_{uv}$ for all $uv$ in $A_L\cup A_R$ (respectively $A_L$, $A_R$).
 Note that $\mu^*$ is also a subdivision of both $\mu_L^*$ and $\mu_R^*$.
 
 Using a similar approach to before, by \cref{lem:restrictionPreservesSubdivision} and the fact that $\sigma_L$ is the $\{\pleft\to \past, \pright \to \future\}$-restriction of $\mu$, we can show that $\sigma_L'$ is the $\{\pleft\to \past, \pright \to \future\}$-restriction of $\mu_L^*$.
 Using this fact in turn together with \cref{lem:restrictionPreservesSubdivision}, we can show that $\sigma_L^*$ is the $\{\pleft\to \past, \pright \to \future\}$-restriction of $\mu^*$.
 Similarly we can show that $\sigma_R^*$ is the $\{\pright\to \past, \pleft \to \future\}$-restriction of $\mu^*$.
 
If  $\sigma_L^*$ is a valid signature for $(L, S, F \cup R)$ which is the $\{\pleft\to \past, \pright \to \future\}$-restriction of $\mu^*$,
and 
$\sigma_R^*$  is a valid signature for $(R, S, F \cup L)$ which is the $\{\pright\to \past, \pleft \to \future\}$-restriction of $\mu^*$, we can now apply \cref{lem:joinBag} to see that $\mu^*$ is a valid reconciliation for $(L\cup R, S, F)$.

It remains to observe that as $\mu^*$ is valid and $\mu = c(\mu^*)$, \cref{lem:V-implies-CV} implies that $\mu^*$ is compact-valid.
\end{proof}

From \cref{lem:compactJoinBag,lem:compactReconciliationToChildren,lem:compactChildrentoReconciliation} we have the following:
 
\begin{corollary}\label{cor:compactJoinBagOverall}
Let $(L\cup R, S, F)$ be a Join bag with child bags $(L, S, F\cup R)$ and $(R,S,F \cup L)$, and let $\sigma$ be a compact signature for $(L\cup R, S, F)$.
Then $\sigma$ is compact-valid if and only if there is a 
compact reconciliation $\mu$ for $(L\cup R, S, F)$, 
and compact-valid signatures $\sigma_L$ for $(L, S, F \cup R)$ and 
$\sigma_R$ for $(R, S, F \cup L)$, 
such that $\sigma$ is the compact-$\{\{\pleft, \pright\} \to \past\}$-restriction of $\mu$,
$\sigma_L$ is the compact-$\{\pleft \to \past, \pright \to \future\}$-restriction of $\mu$, and 
$\sigma_R$ is the compact-$\{\pright \to \past, \pleft \to \future\}$-restriction of $\mu$.
\end{corollary}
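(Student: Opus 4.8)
The plan is to prove both directions of the biconditional by chaining the three preceding lemmas, observing that the corollary merely bundles the characterization of $\sigma$ through a reconciliation (\cref{lem:compactJoinBag}) with the two-way characterization of that reconciliation through the child signatures (\cref{lem:compactReconciliationToChildren} and \cref{lem:compactChildrentoReconciliation}). No fresh structural argument is needed; the work is a direct composition in each direction.

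For the forward direction, I would assume $\sigma$ is compact-valid and first apply \cref{lem:compactJoinBag} to obtain a compact-valid reconciliation $\mu$ for $(L\cup R, S, F)$ such that $\sigma$ is the compact-$\{\{\pleft, \pright\}\to\past\}$-restriction of $\mu$. I would note that, by definition, a compact-valid reconciliation is in particular a compact reconciliation, so $\mu$ meets the hypotheses of the next lemma. I would then \emph{define} $\sigma_L$ and $\sigma_R$ to be the compact-$\{\pleft\to\past,\pright\to\future\}$-restriction and the compact-$\{\pright\to\past,\pleft\to\future\}$-restriction of $\mu$, respectively. Since $\mu$ is compact-valid, \cref{lem:compactReconciliationToChildren} immediately yields that $\sigma_L$ and $\sigma_R$ are compact-valid signatures for $(L, S, F\cup R)$ and $(R, S, F\cup L)$. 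Combined with the restriction relation for $\sigma$ already in hand, this supplies exactly the triple $\mu,\sigma_L,\sigma_R$ the statement demands.

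For the converse, I would assume we are given a compact reconciliation $\mu$ together with compact-valid signatures $\sigma_L,\sigma_R$ satisfying the three stated restriction relations. The two relations defining $\sigma_L$ and $\sigma_R$ as restrictions of $\mu$, together with their compact-validity, are precisely the hypotheses of \cref{lem:compactChildrentoReconciliation}, which therefore gives that $\mu$ is compact-valid. Finally, because $\sigma$ is the compact-$\{\{\pleft, \pright\}\to\past\}$-restriction of $\mu$ and $\mu$ is now known to be compact-valid, the converse direction of \cref{lem:compactJoinBag} shows that $\sigma$ is compact-valid.

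I do not expect a genuine obstacle, since all the heavy lifting is already done in the three cited lemmas; the only points requiring care are bookkeeping. Specifically, I would verify that the labelled restriction functions named in the corollary match verbatim those appearing in the lemmas, so that no extra compaction step is introduced, and I would be careful to invoke the implicit fact that a compact-valid reconciliation is a compact reconciliation when feeding the $\mu$ produced by \cref{lem:compactJoinBag} into \cref{lem:compactReconciliationToChildren}. Beyond these small checks, the corollary follows by a two-step composition in each direction.
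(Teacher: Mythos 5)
Your proposal is correct and is essentially the paper's own argument: the paper derives this corollary immediately by composing \cref{lem:compactJoinBag}, \cref{lem:compactReconciliationToChildren} and \cref{lem:compactChildrentoReconciliation} in exactly the two-step manner you describe. Your bookkeeping remarks (matching the restriction functions verbatim, and noting that a compact-valid reconciliation is by definition a compact one) are the right things to check and present no difficulty.
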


\section{Algorithm and running time}\label{sec:algorithm}

\cref{alg:treeContainment} gives a summary of our algorithm for \textsc{Tree Containment}.
\mjrev{To summarise:} we compute for each bag~$x = (P,S,F)$ in the tree decomposition, a set $CV_x$ of \mjnew{compact-valid signatures} for~$x$ \mjrev{- that is, compact signatures for which there exists a corresponding $F$-partial solution.}  
The algorithm uses \leojournal{compact-restrictions} to convert a compact signature of one bag into a compact signature for a different bag. \leojournal{Recall that} such a restriction works by mapping certain vertices to a label $\past$ or $\future$, removing redundant parts of the display graph and collapsing long paths, similarly to the process described in Section~\ref{sec:signatureOverview}. See Sections~\ref{sec:restriction} \mjrev{and~\ref{sec:compact}} for the formal \mjrev{definitions}. For join bags, the algorithm uses \leojournal{reconciliations},  3-way analogues of signatures, using labels $\{\pleft,\pright,\future\}$ instead of $\{\past, \future\}$, see Section~\ref{sec:joinBags}.

The correctness of the computation of the sets~$CV_x$ follows from 
\mjrev{\cref{lem:compactLeafBag,lem:compactForgetBag,lem:compactIntroduceBag,cor:compactJoinBagOverall}.}
The correctness of the last three lines, in which we return \textsc{true} if and only if there is a compact-valid signature $(D(N,T),\emb, \iota)$ for the root bag with $\iota^{-1}(\future) = \emptyset$, is a consequence of 
\mjrev{\cref{cor:top-level-compact-sig}.}

To show that the running time is bounded in a function in the treewidth of $N$, the main challenge is to
bound the number of compact signatures for a bag~$(P,S,F)$ by a function of $|S|$
(which, by \cref{thm:tw bound}, we may assume is at most $2tw(N)+1$).
In order to do this, we first bound the size of the display graph $D(N,T)$ in a signature by a function of $|S|$.
\mjrev{We will then use this to bound}
the number of possible display graphs, embedding functions and isolabellings, \mjrev{and thus the number of compact signatures}.

\begin{algorithm}[t!] 
  Compute $tw(\Nin)$\;
  \lIf{$tw(\NinTin) > 2 tw(\Nin)+1$}{\Return{\false}}
  Compute a nice minimum-width tree decomposition $\T$ of $\NinTin$\;
  
  \ForEach{bag $x=(P,S,F)$ of $\T$ in a bottom-up traversal}{
    \If{$x$ is a Leaf bag}{
      $CV_x\gets$ set of all compact signatures  $\sigma = (D(N,T), \emb, \iota)$  for $x$ with $\iota^{-1}(\textsc{past}) = \emptyset$\;
    }\ElseIf{$x$ is a Forget bag with child $y=(P\setminus\{z\},S\cup\{z\},F)$ in $\T$}{
          \ForEach{$\sigma \in CV_y$}{
            add the compact-$\{z\to\past\}$-restriction of $\sigma$ to $CV_x$
          }
    }\ElseIf{$x$ is an Introduce bag with child $y=(P,S\setminus\{z\},F\cup\{z\})$ in $\T$}{
      \ForEach{compact signature $\sigma$ of $x$}{
        \If{$CV_y$ contains the compact-$\{z\to\future\}$-restriction of $\sigma$}{
          add $\sigma$ to $CV_x$\;
        }
      }
    }\ElseIf{$x$ is a Join bag with children $y_L=(L,S,R\cup F)$ {\rm\&} $y_R=(R,S,L\cup F)$}{
      \ForEach{compact reconciliation $\mu$ for $x$}{
        $\sigma_L\gets\text{the compact-$\{\textsc{left}\to\past,\textsc{right}\to\future\}$-restriction of $\mu$}$\;
        $\sigma_R\gets\text{the compact-$\{\textsc{right}\to\past,\textsc{left}\to\future\}$-restriction of $\mu$}$\;
        \If{$\sigma_L\in CV_{y_L}$ and $\sigma_R\in CV_{y_R}$}{
          add the compact-$\{\{\textsc{left},\textsc{right}\}\to\past\}$-restriction of $\mu$ to $CV_x$\;
        }
      }
    }
  }
  \ForEach{$(D(N,T),\phi,\iota)\in CV_{\operatorname{root}(\T)}$}{
    \lIf{$\iota^{-1}(\future)=\emptyset$}{\Return{\true}}
  }
  \Return{\false}
  \caption{Tree Containment ($\NinTin$)}\label{alg:treeContainment}
\end{algorithm}

\begin{lemma}\label{lem:compactSignatureSize}
Any compact signature $\sigma = (D(N,T), \emb, \iota)$ for a bag $(P, S, F)$ satisfies\\ ${|V(D(N,T)))| \in O(|S|)}$.
Any compact reconciliation $\mu = (D(N,T), \emb, \iota)$ satisfies\\ $|V(D(N,T)))| \in O(|S|)$.
\end{lemma}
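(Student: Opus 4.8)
The plan is to bound $|V(D(N,T))|$ by first bounding the size of the tree side $T$, and then using that to bound the size of the network side $N$, exactly as foreshadowed in the ``Bounding the number of signatures'' subsection. For the tree side, I would start from the observation already stated in the excerpt: since $\sigma$ is compact (hence well-behaved, hence contains no redundant vertices or arcs), every vertex $u \in V(T)$ must satisfy at least one of the five non-redundancy conditions, namely (1) $\iota(u) \in S$, (2) $\iota(\emb(u)) \in S$, (3) $u$ is incident to a vertex $v$ with $\iota(v) \in S$, (4) for some incident arc $a$ of $u$ the path $\emb(a)$ meets a vertex mapped into $S$, or (5) $u$ is ``time-travelling'', i.e. $\iota(u)$ and $\iota(\emb(u))$ are distinct elements of $\Y$. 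The first step is to bound the number of vertices of each of the first four types directly in terms of $|S|$: types (1) and (2) contribute at most $|S|$ each (using that $\iota$ restricted to the $S$-part is injective, \cref{def:isolabelling}\eqref{it:injectiveisol}); type (3) contributes at most $3|S|$ since every vertex has total degree at most $3$; and type (4) is more delicate but still $O(|S|)$, because the embedding paths through a given $S$-vertex are arc-disjoint (\cref{def:embed}\eqref{it:paths disjoint}) and each $S$-vertex has bounded degree, so only $O(|S|)$ tree arcs can have an embedding path touching $S$.

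The main obstacle, as the informal overview anticipates, is bounding the number of time-travelling vertices of type (5). Here I would reduce to bounding the number of \emph{lowest} time-travelling vertices, since a time-travelling vertex whose label-pattern agrees with that of a time-travelling descendant lies on a monotone ``past/future'' path and can be absorbed into the count via the bounded-degree/arc-disjointness bookkeeping of types (1)--(4); more carefully, the number of time-travelling vertices is at most the number of lowest ones plus the number of branching vertices in $T$, and branching can be charged to the $O(|S|)$ bound established above. To bound the lowest time-travelling vertices I would adapt the path argument sketched in the overview, but recast it \emph{inside the signature} rather than inside $\NinTin$: for a lowest time-travelling $u$ (say $\iota(u)=\past$, $\iota(\emb(u))=\future$), I would trace a directed path in $T$ from $u$ down to some vertex whose label or incident structure forces contact with $S$, using property \cref{def:wellBehaved}\eqref{it:WBNoLabelCrossings} (no arc joins two distinct $\Y$-labels) together with the fact that $u$ is \emph{lowest}, so the labels along the descent must change in a way that, by well-behavedness, can only happen through an $S$-vertex. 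Each such $u$ yields a distinct $S$-contact, and the bounded degree of $S$-vertices makes the map from lowest time-travellers to $S$-contacts at most $O(1)$-to-one, giving $O(|S|)$ lowest time-travelling vertices and hence $|V(T)| \in O(|S|)$.

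Once $|V(T)| = O(|S|)$ is in hand, bounding $|V(N)|$ is comparatively routine, and I would argue as the overview indicates. A network vertex survives in a compact signature only if it is either (i) mapped from $S$ (at most $|S|$ such vertices), (ii) an image $\emb(u)$ of a tree vertex $u$ (at most $|V(T)| = O(|S|)$), (iii) incident to an $S$-vertex of $N$ (at most $3|S|$, by degree $3$), or (iv) an internal vertex of an embedding path $\emb(a)$. For case (iv) I would use compactness crucially: a maximal run of in/out-degree-$1$ path-internal vertices all carrying the same $\Y$-label and containing no $\emb(u)$ would be a long $y$-path and thus suppressed, so between consecutive ``anchor'' vertices (images of tree vertices, or vertices mapped to $S$, or degree-$\geq 2$ vertices, or points where the label changes) there is at most a bounded number of surviving internal vertices; the number of anchors is $O(|V(T)| + |S|) = O(|S|)$, and label changes along a path are again controlled by \cref{def:wellBehaved}\eqref{it:WBNoLabelCrossings}. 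Summing the four cases gives $|V(N)| = O(|S|)$, and therefore $|V(D(N,T))| = O(|S|)$. Finally, the reconciliation case is identical: a compact reconciliation is a compact $(S,\{\pleft,\pright,\future\})$-containment structure, and every step above uses only that $\Y$ is a fixed finite label set together with the well-behavedness and compactness properties, all of which hold verbatim for $\Y = \{\pleft,\pright,\future\}$, so the same bound $|V(D(N,T))| \in O(|S|)$ follows.
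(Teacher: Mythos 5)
Your overall strategy mirrors the paper's own informal overview (bound the tree side via the non-redundancy case analysis, then the network side via anchors and long-path suppression), and your network-side argument is essentially the one the paper uses. However, your treatment of the time-travelling vertices --- the crux of the whole lemma --- contains a genuine gap. You claim that the number of time-travelling vertices is at most the number of lowest ones plus the number of branching vertices of $T$, and that ``branching can be charged to the $O(|S|)$ bound established above.'' But the bound established above covers only vertices of types (1)--(4), and branching vertices are not among them: by the out-degree condition of \cref{def:containment struct}, \emph{every} time-travelling vertex has out-degree $2$, i.e.\ is itself a branching vertex, so this charge is circular. Bounding the branching vertices of $T$ amounts to bounding its leaves, and that is precisely what you still owe. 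Your descent argument for lowest time-travellers is also misstated: the labels in $T$ need not ``change along the descent'' at all (all $T$-descendants of a $\past$-labelled time-traveller may be labelled $\past$ without violating \cref{def:wellBehaved}\eqref{it:WBNoLabelCrossings}), and in a signature the subtree below a vertex may have been pruned, so one cannot walk down to a labelled leaf of $\NinTin$ as in the overview's intuition.

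The step that actually closes the gap (and is what the paper does, phrased at the level of arcs) is a propagation argument. Let $A_S$ be the at most $3|S|$ arcs incident to $\iota^{-1}(S)$; by \cref{def:embed}\eqref{it:paths disjoint} at most $|A_S|$ tree arcs $uv$ have an element of $\iota^{-1}(S)$ in $\{u,v\}\cup V(\emb(uv))$. For every \emph{other} tree arc $uv$, all of these vertices carry $\Y$-labels, so by \cref{def:wellBehaved}\eqref{it:WBNoLabelCrossings} $\iota(u)=\iota(v)$ and the path $\emb(uv)$ is monochromatic; non-redundancy then forces the path's label to differ from the endpoints' label, so $\iota(v)\neq\iota(\emb(v))$ and $v$ has out-degree $2$. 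Hence no such arc is a lowest arc of $T$: every lowest arc touches $S$, the number of leaves of $T$ is $O(|S|)$, and since $T$ is a binary out-forest $|A(T)|\le 2|A_S|=O(|S|)$. This single count subsumes both your type-(5) vertices and your branching vertices. (A smaller, fixable issue on the network side: you take ``degree-$\ge 2$ vertices'' of $N$ as anchors without bounding their number; the paper instead shows that an internal vertex of an embedding path not incident to $A_S$ would either be suppressed as a long $y$-path or have a third incident arc that is necessarily redundant, using \cref{def:embed}\eqref{it:shared nodes}.)
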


\begin{proof}
Let $\sigma = (D(N,T), \emb, \iota)$ be a compact signature for $(P,S,F)$.

We first bound the number of arcs in $D(N,T)$. To do this, let $A_S$ denote the subset of arcs in $D(N,T)$ incident to a vertex in $\iota^{-1}(S)$.
As there is only one vertex $u$ with $\iota(u) = s$ for each $s\in S$ and every vertex in $D(N,T)$ has total degree at most $3$, we have that $|A_S| \leq 3|S|$.
As $\emb(uv)$ and $\emb(u'v')$ are arc-disjoint for any distinct tree arcs $uv, u'v'$, it follows that there are at most $|A_S\cap A(N)|$ arcs $uv$ of $T$ for which $\emb(uv)$ contains an arc from $A_S$. There are at most $|A_S\cap A(T)|$ arcs in $T$ incident to a vertex from $\iota^{-1}(S)$. Thus there are at most $|A_S|$ arcs $uv$ in $T$
for which $\{u,v\}\cup V(\emb(uv))$ contains a vertex from $\iota^{-1}(S)$.

Every remaining arc $uv$ in $T$ has 
 $\iota(\{u,v\}\cup V(\emb(uv))) \subseteq \{\past, \future\}$.
 Furthermore as $\sigma$ is well-behaved, we must have that $\iota(u) = \iota(v)$, and $\iota(u')=\iota(v')$ for every arc $u'v'$ in the path $\emb(uv)$.
It follows that  for all but at most $|A_S|\leq 3|S|$ arcs of $T$, we have $\iota(u) = \iota(v) \in \{\past, \future\}$ and $\iota(V(\emb(uv))) = \{\past\}$ or $\iota(V(\emb(uv))) = \{\future\}$.

If $\iota(u) = \iota(v) =  \past$ and $\iota(V(\emb(uv))) = \{\past\}$, then $uv$ is redundant w.r.t $\{\past\}$, a contradiction as we may assume no valid signature has a $\{\past\}$-redundant arc.
Similarly we have a contradiction if $\iota(u) = \iota(v) =  \future$ and $\iota(V(\emb(uv))) = \{\future\}$.
So it must be the case that for all but at most $3|S|$ tree arcs $uv$, either $\iota(u) = \iota(v) =  \past$ and $\iota(\emb(uv)) = \{\future\}$, or $\iota(u) = \iota(v) =  \future$ and $\iota(\emb(uv)) = \{\past\}$.
In particular, we have that $\iota(\emb(v)) \neq \iota(v)$ and, so $v$ has out-degree $2$

It follows any lowest arc $uv$ in $T$ is one of the at most $|A_S|$  arcs $\{u,v\}\cup V(\emb(uv))$ that contains a vertex from $\iota^{-1}(S)$. Thus in total $T$ has at most $2|A_S| \leq 6|S|$ arcs.

To bound the arcs of $N$, observe that any arc $uv \in A(N)$ not in $A_S$ satisfies $\iota(u)=\iota(v) = \past$ or $\iota(u)=\iota(v) = \future$ (it cannot be that $\{\iota(u), \iota(v)\}= \{\past, \future\}$ as $\sigma$ is well-behaved). Then $uv$ must be part of the path $\emb(u'v')$ for some tree arc $u'v'$ (otherwise $uv$ is redundant w.r.t $\sigma$, a contradiction).
So now let $A_N'$ denote the set of arcs in $N$ that are part of a path $\emb(uv)$ for some tree arc $uv$ (note that $A_N'$ and $A_S$ are not necessarily disjoint but $A(N) \subseteq A_N' \cup  A_S$).

For any internal vertex $z$ on a path $\emb(uv)$, we must have that $z$ is incident to an arc from $A_S$. Indeed suppose this is not the case, then $z$ and its neighbours in $\emb(uv)$ form a path $x_1,x_2,x_3$ with $\iota(x_1)=\iota(x_2)=\iota(x_3) = y\in \{\past, \future\}$. This forms a long-$y$-path (contradicting the fact that $\sigma$ is compact), unless $z=x_2$ is incident to another arc in $A(N)$. But such an arc cannot be in $A_S$ by assumption, and also cannot be part of a path $\emb(u'v')$ for any tree arc $u'v'$ (as $\emb(uv)$ and $\emb(u'v')$ share a vertex $z$ only if $uv, u'v'$ share a vertex $w$ with $\emb(w)=z$).  Thus there is no other arc incident to $z$, and we have that $\sigma$ is not compact, a contradiction.

Thus we now have that every internal vertex of a a path $\emb(uv)$ is incident to an arc from $A_S$, and thus there are at most $2|A_S| \leq 6|S|$ such vertices. As there at most $|A(T)| \leq 2|A_S|$ paths $\emb(uv)$, we have that $|A_N'| \leq 2|A_S| + 2|A_S| \leq 12|S|$.

Thus in total, the number of arcs in $D(N,T)$ is at most $|A(T)| + |A_S| + |A_N'| \leq 2|A_S| + |A_S| + 2 |A_S| = 5|A_S| \leq 15|S|$.
It follows that the number of non-isolated vertices in $D(N,T)$ is at most $30|S|$. It remains to bound the number of isolated vertices.
 
There are at most $|S|$ isolated vertices $u$ in $V(D(N,T))$ for which $\iota(u) \in S$.
For the rest,
If $u \in V(T)$ and $\emb(u) \in V(N)$ are both isolated vertices then we have $\iota(u) = \iota(\emb(u) \in \{\past, \future\}$ (as $\iota(u)\neq \iota(\emb(u))$ would imply $u$ has out-degree $2$ and so $u$ and $\emb(u)$ are both redundant w.r.t $\sigma$.
Similarly if $v$ is an isolated network vertex with no $u\in V(T)$ for which $\emb(u) = v$, then $v$ is redundant w.r.t $\sigma$. As $\sigma$ has no redundant vertices (since $\sigma$ is well-behaved), it follows that every isolated vertex in $D(N,T)$ is either a tree vertex $u$ with $\emb(u)$ not isolated, a network vertex $v = \emb(u)$ for which $u$ is not isolated, or a vertex of $\iota^{-1}(S)$. As there are at most $30|S|$ non-isolated vertices, it follows that there are at most $30|S| + |S|$ isolated vertices.

Thus in total, $|V(D(N,T))| \leq 30|S| + 30|S| + |S| = 61|S| \in O(|S|)$. 

An identical argument holds for a reconciliation $\mu$.
\end{proof}

\begin{lemma}\label{lem:numCompactSignatures}
  Let $k$ be the width of the tree decomposition of $\NinTin$.
  Then, the number of compact signatures for a bag $(P,S,F)$
  and the number of compact reconciliations for a Join bag
  can be upper-bounded by $2^{O(k^2)}$.
\end{lemma}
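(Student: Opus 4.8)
The plan is to combine the size bound from \cref{lem:compactSignatureSize} with crude counting of each of the three ingredients of a signature. Write $m := |V(D(N,T))|$ and recall that \cref{lem:compactSignatureSize} gives $m \in O(|S|)$; since $|S| \leq k+1$ by definition of the width, we have $m \in O(k)$. Because a signature is only ever used up to isomorphism, it suffices to bound the number of labeled triples $(D(N,T), \emb, \iota)$ whose vertex set is a subset of a fixed set of $O(k)$ elements; this is an upper bound on the number of isomorphism classes, hence on the quantity we care about.

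First I would bound the number of display graphs. Each vertex is assigned to the tree side, the network side, or both, and between each of the $\binom{m}{2}$ unordered pairs of vertices there are only a constant number of possible arc configurations (no arc, or a single arc in either direction). Hence the number of display graphs on (a subset of) $m$ vertices is at most $2^{O(m^2)} = 2^{O(k^2)}$. This term will dominate the final bound.

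Next I would count isolabellings and embeddings, showing both are only $2^{O(k\log k)}$. An isolabelling is a function $\iota : V(D(N,T)) \to S \cup \Y$, and since $|\Y| \le 3$ there are at most $(|S| + 3)^m = 2^{O(k \log k)}$ of them. For the embedding function, the vertex part $\emb|_{V(T)}$ is a map from a set of size $O(k)$ into $V(N)$, contributing $m^{O(k)} = 2^{O(k\log k)}$ choices. For the arc part, I would exploit arc-disjointness (\cref{def:embed}\eqref{it:paths disjoint}): since the paths $\{\emb(uv)\}$ are pairwise arc-disjoint, each arc of $N$ lies on at most one such path, so the arc part of $\emb$ is encoded by a coloring $c : A(N) \to A(T) \cup \{\bot\}$ that assigns each network arc the tree arc whose embedding-path contains it (or $\bot$). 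Together with $\emb|_{V(T)}$ this coloring determines $\emb$, because each path $\emb(uv)$ is recovered as the unique directed path in the acyclic $N$ through the arcs colored $uv$ from $\emb(u)$ to $\emb(v)$. The number of such colorings is at most $(|A(T)|+1)^{|A(N)|} = 2^{O(k\log k)}$, since $|A(T)|, |A(N)| \in O(k)$.

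Multiplying the three counts gives $2^{O(k^2)} \cdot 2^{O(k\log k)} \cdot 2^{O(k\log k)} = 2^{O(k^2)}$ compact signatures, as required. The identical argument applies to compact reconciliations: the only change is that $\Y = \{\pleft, \pright, \future\}$ still has constant size, affecting only the constants. The main obstacle is ensuring the embedding count does not exceed $2^{O(k^2)}$: naively bounding the number of directed paths in $N$ and then selecting one per tree arc independently would yield $2^{O(k^2 \log k)}$, so it is essential to use arc-disjointness and count via the single arc-coloring $c$ instead.
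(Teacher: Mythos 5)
Your proof is correct and follows essentially the same route as the paper: bound $|V(D(N,T))|$ by $O(k)$ via \cref{lem:compactSignatureSize}, then multiply crude counts of display graphs ($2^{O(k^2)}$, the dominant term), isolabellings, and embeddings. The only difference is cosmetic: for the arc part of $\emb$ the paper records the \emph{set} of network arcs used and argues the paths are recoverable, whereas you record a colouring of network arcs by tree arcs; your encoding is slightly more careful (it sidesteps the issue of paths sharing vertices) and gives $2^{O(k\log k)}$ instead of the paper's $2^{O(k^2)}$ for that factor, but the final bound is the same.
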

\begin{proof}
Let $\sigma = (D(N,T), \emb, \iota)$ denote a compact signature for $(P,S,F)$
(The arguments for a reconciliation $\mu$ are similar).
By \cref{lem:compactSignatureSize}, $|V(D(N,T))| \in O(|S|)$ and,
by the properties of a tree decomposition, $|S|\leq k+1$, implying $|V(D(N,T))| \in O(k)$.
As such, an upper bound for the number of possible graphs $D(N,T)$ is $2^{O(k^2)} =: f_1(k)$.

For each vertex $u \in V(D(N,T))$, there are at most $(|S|-1)+2 \leq k+2$ possibilities for $\iota(u)$,
as $\iota(u)$ is either a vertex in $S$ other than~$u$ or one of the labels $\past$ and $\future$ ($\pleft$, $\pright$, and $\future$ for a reconciliations).
Thus, the number of possible isolabelings for a given display graph $D(N,T)$ is
$(k+3)^{O(k)} =: f_2(k)$

Now, to bound the number of possible embedding functions~$\emb$, observe that $\emb$ is fixed by
\begin{inparaenum}[(a)]
  \item specifying $\emb(u)$ for every \emph{tree vertex}~$u$ ($|V(N)|^{|V(T)|}\in k^{O(k)}$~possibilities) and
  \item the set of arcs in $N$ that appear in some path $\emb(uv)$ ($2^{|A(N)|}\in 2^{O(k^2)}$~possibilities) --
    indeed, if this set of arcs is chosen correctly, then it contains only one path from $\emb(u)$ to $\emb(v)$, which must be the path $\emb(uv)$.
\end{inparaenum}
Thus, for any fixed display graph $D(N,T)$,
the number of possible embedding functions is upper-bounded by $2^{O(k^2)}=:f_3(k)$.

Now, for any bag $(P,S,F)$, the number of possible choices for $\sigma = (D(N,T), \emb, \iota)$ is bounded by
$f(k) := f_1(k)f_2(k)f_3(k) = 2^{O(k^2)}\cdot 2^{O(k\log k)}\cdot 2^{O(k^2)}=2^{O(k^2)}$.
\end{proof}

\begin{lemma}\label{lem:algAnalysis}
  \cref{alg:treeContainment} is correct and runs in $2^{O(k^2)}\cdot |A(\Nin)|$~time, where $k = tw(\Nin)$.
\end{lemma}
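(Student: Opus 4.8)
The plan is to prove correctness by a bottom-up structural induction over the nice tree decomposition~$\T$, and then to bound the running time as (number of bags)~$\times$~(per-bag cost).

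\textbf{Correctness.} I would show by induction on the bottom-up traversal that, for every bag $x=(P,S,F)$, the set $CV_x$ computed by \cref{alg:treeContainment} equals exactly the set of compact-valid signatures for~$x$. The base case is a Leaf bag, settled by \cref{lem:compactLeafBag}: the algorithm stores precisely the compact signatures with $\iota^{-1}(\past)=\emptyset$. For the inductive step I invoke \cref{lem:compactForgetBag} for Forget bags (a compact signature for~$x$ is compact-valid iff it is the compact-$\{z\to\past\}$-restriction of some compact-valid signature of the child), \cref{lem:compactIntroduceBag} for Introduce bags (iff its compact-$\{z\to\future\}$-restriction lies in the child's set), and \cref{cor:compactJoinBagOverall} for Join bags (iff there is a compact reconciliation~$\mu$ whose three restrictions equal $\sigma$, $\sigma_L\in CV_{y_L}$ and $\sigma_R\in CV_{y_R}$). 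In each case the corresponding loop enumerates exactly the objects quantified in the cited statement, so $CV_x$ is computed correctly given correct child sets. Finally, \cref{cor:top-level-compact-sig} states that $(\Nin,\Tin)$ is a \textsc{Yes}-instance iff $CV_{\operatorname{root}(\T)}$ contains a compact-valid signature $(D(N,T),\emb,\iota)$ with $\iota^{-1}(\future)=\emptyset$, which is precisely what the last three lines test.

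\textbf{Per-bag cost.} Writing $k=tw(\Nin)$ and working with a decomposition of width $O(k)$, \cref{lem:numCompactSignatures} bounds the number of compact signatures and reconciliations per bag by $2^{O(k^2)}$, and \cref{lem:compactSignatureSize} bounds each display graph to $O(k)$ vertices. Hence every such object has an encoding of size $O(k\log k)$, and each primitive used in the loops — constructing a compact signature, computing a compact-$g$-restriction (relabel, delete redundant parts, suppress long paths), and producing a canonical form — costs time polynomial in~$k$. The only mildly delicate operation is testing equality of two signatures, which is isomorphism of $O(k)$-vertex labelled graphs and is therefore computable in $2^{O(k\log k)}$ time (e.g.\ by brute force, or by canonicalisation so that $CV_x$ is stored in a dictionary keyed by the canonical encoding). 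Summing over the $2^{O(k^2)}$ loop iterations and the membership queries into child sets of size $2^{O(k^2)}$, the total work at each bag is $2^{O(k^2)}$.

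\textbf{Decomposition and totals.} The remaining ingredient is the decomposition step, and this is the point requiring the most care: computing an \emph{exact} minimum-width decomposition via \citep{bodlaender1996linear} would cost $2^{O(k^3)}|A(\Nin)|$, exceeding the target. I would instead use the single-exponential constant-factor approximation of \citep{cygan2015parameterized,bodlaender2016c,Kor21}, with iterative doubling of the target width so that no prior knowledge of~$k$ is needed. This yields, in time $2^{O(k)}|A(\Nin)|$, a decomposition of $\Nin$ of width $w=O(tw(\Nin))$ with $w\geq tw(\Nin)$; applying the same algorithm to $\NinTin$ with threshold $2w+1$ then either returns a width-$O(k)$ decomposition or reports $tw(\NinTin)>2w+1\geq 2tw(\Nin)+1$, in which case \cref{thm:tw bound} lets us return \false. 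Thus the "$tw(\NinTin)>2tw(\Nin)+1$" test of the algorithm is folded into the approximation's own rejection guarantee, and no exact treewidth is ever computed. A width-$O(k)$ decomposition of $\NinTin$ is made nice of the same asymptotic width in linear time \citep{Kloks1994Treewidth}, yielding $O(k\cdot|V(\NinTin)|)=O(k\cdot|A(\Nin)|)$ bags (using that $|V(\NinTin)|,|A(\NinTin)|=O(|A(\Nin)|)$, as both sides are binary). Multiplying the $O(k\,|A(\Nin)|)$ bags by the $2^{O(k^2)}$ per-bag cost, and absorbing the factor~$k$ into $2^{O(k^2)}$, the dynamic programming runs in $2^{O(k^2)}|A(\Nin)|$; together with the decomposition step and the $2^{O(k^2)}$ final scan of the root bag, the whole algorithm runs in $2^{O(k^2)}|A(\Nin)|$ time, as claimed. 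The main obstacle is precisely this budgeting of the decomposition and of signature-equality; both are resolved by exploiting that \cref{lem:compactSignatureSize} keeps all graphs of size $O(k)$.
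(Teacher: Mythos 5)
Your correctness argument is the same as the paper's: induction over the nice tree decomposition, with \cref{lem:compactLeafBag}, \cref{lem:compactForgetBag}, \cref{lem:compactIntroduceBag} and \cref{cor:compactJoinBagOverall} handling the four bag types and \cref{cor:top-level-compact-sig} justifying the final test, and your per-bag cost analysis likewise rests on \cref{lem:numCompactSignatures} and \cref{lem:compactSignatureSize} exactly as the paper's does. Where you genuinely diverge is in the construction of the decomposition. The paper simply asserts, citing \citep{bodlaender1996linear} and \citep{Kloks1994Treewidth}, that a nice minimum-width tree decomposition of $\NinTin$ can be found in $2^{O(tw(\NinTin)^2)}|A(\Nin)|$ time; you instead observe that the exact algorithm's dependence on the width is worse than the target budget and substitute the single-exponential constant-factor approximation with iterative doubling, folding the ``$tw(\NinTin)>2tw(\Nin)+1$'' rejection test into the approximation's own failure guarantee. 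This is sound: returning \false{} when $tw(\NinTin)>2w+1$ for some $w\geq tw(\Nin)$ is still justified by \cref{thm:tw bound}, and a width-$O(k)$ (rather than exactly minimum-width) decomposition only perturbs constants inside the $2^{O(k^2)}$ bounds, since \cref{lem:compactSignatureSize} and \cref{lem:numCompactSignatures} are stated in terms of $|S|$. Your version is therefore the more defensible running-time accounting; you also make explicit the signature-equality/canonicalisation cost that the paper leaves implicit in ``computing any compact restriction can be done in polynomial time.'' What the paper's route buys is brevity; what yours buys is a decomposition step that demonstrably fits inside the $2^{O(k^2)}\cdot|A(\Nin)|$ budget.
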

\begin{proof}
First, note that, by \cref{thm:tw bound}, $\Nin$ does not display $\Tin$ unless $tw(\NinTin) \leq 2k+1$,
so we are safe to return \textsc{false} if $tw(\NinTin) \leq 2tw(\Nin)+1$ in line 2 of \cref{alg:treeContainment}.
Otherwise, for each bag $x$ in a nice tree decomposition of $\NinTin$, \cref{alg:treeContainment} calculates the set $CV_x$ of compact-valid signatures for~$x$.
In each case the set $CV_x$ is calculated using the previously-calculated set $CV_y$ for each child $y$ of $x$.
The correctness of this construction follows from \cref{lem:compactLeafBag} (for the Leaf bags), \cref{lem:compactForgetBag} (for Forget bags), \cref{lem:compactIntroduceBag} (for Introduce bags), and \cref{cor:compactJoinBagOverall} (for Join bags). 
Finally, the algorithm returns \textsc{true} if and only if there is a valid compact signature $(D(N,T), \emb, \iota)$ for the root bag of the tree decomposition, such that $\iota^{-1}(\future) = \emptyset$. The correctness of this follows from \cref{cor:top-level-compact-sig}.

To see the running time, first note that
a nice, minimum-width tree decomposition of $\NinTin$ with $O(|V(\NinTin)|) = O(|A(\Nin)|)$ bags
can be found in $2^{O(tw(\NinTin)^2)}|A(\Nin)|$, that is, $2^{O(k^2)}|A(\Nin)|$~time~\citep{bodlaender1996linear,Kloks1994Treewidth}.
By \cref{thm:tw bound}, we may assume $\NinTin$ has treewidth at most $2k+1$ and, thus,
$|S|\leq 2k+1$ for every bag $(P,S,F)$ in the decomposition.
Note that, computing any compact restriction of a signature~$\sigma$ can be done in polynomial time and,
by \cref{lem:numCompactSignatures}, the number of such signatures~$|CV_x|$ for a bag~$x$ is bounded by $2^{O(k^2)}$.
It is, thus, evident that, for any bag~$x=(P,S,F)$, the set $CV_x$ can be computed in $2^{O(k^2)}\cdot k^{O(1)}$ =$2^{O(k^2)}$~time (see \cref{alg:treeContainment}).
\end{proof}

\cref{lem:algAnalysis} immediately implies the following theorem.

\begin{theorem}\label{thm:TC-fpt}
  \textsc{Tree Containment} can be solved in $2^{O(tw(\Nin)^2)}\cdot|A(\Nin)|$~time.
\end{theorem}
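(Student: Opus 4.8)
The plan is to obtain \cref{thm:TC-fpt} as an immediate consequence of \cref{lem:algAnalysis}. That lemma already asserts that \cref{alg:treeContainment} correctly decides \textsc{Tree Containment} in $2^{O(k^2)}\cdot|A(\Nin)|$ time, where $k=tw(\Nin)$. Substituting $k=tw(\Nin)$ into the exponent directly yields the stated bound $2^{O(tw(\Nin)^2)}\cdot|A(\Nin)|$. So the genuine content lies in establishing \cref{lem:algAnalysis}, which decomposes into a correctness argument and a running-time analysis; I would set up the proof of the theorem to lean entirely on these two components.

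For correctness, I would argue by a bottom-up structural induction over the nice tree decomposition that, for every bag $x=(P,S,F)$, the algorithm computes exactly the set $CV_x$ of compact-valid signatures for $x$. The inductive step is handled by the case analysis already prepared: \cref{lem:compactLeafBag} for Leaf bags, \cref{lem:compactForgetBag} for Forget bags, \cref{lem:compactIntroduceBag} for Introduce bags, and \cref{cor:compactJoinBagOverall} for Join bags (the latter combining a compact reconciliation with the compact-valid signatures of both children). The final \textsc{true}/\textsc{false} decision is then justified by \cref{cor:top-level-compact-sig}, which characterises \textsc{Yes}-instances precisely by the existence of a compact-valid signature $(D(N,T),\emb,\iota)$ for the root bag with $\iota^{-1}(\future)=\emptyset$.

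For the running time, the strategy is first to apply \cref{thm:tw bound}: if $tw(\NinTin)>2tw(\Nin)+1$ we may safely reject, and otherwise we may assume $|S|\le 2k+1$ for every bag. I would then compute a nice minimum-width tree decomposition with $O(|A(\Nin)|)$ bags in $2^{O(k^2)}\cdot|A(\Nin)|$ time. The crux is to bound the work per bag by $2^{O(k^2)}$, which reduces to bounding the number of compact signatures and reconciliations per bag by $2^{O(k^2)}$ via \cref{lem:numCompactSignatures}, together with the observation that each compact restriction is computable in polynomial time in the (now bounded) signature size. Multiplying the per-bag cost by the number of bags gives the overall $2^{O(k^2)}\cdot|A(\Nin)|$ bound.

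The hard part, and the real engine of the whole argument, will be the size bound of \cref{lem:compactSignatureSize}: that every compact signature has $|V(D(N,T))|\in O(|S|)$. Once $|V(D(N,T))|\in O(k)$ is in hand, the three counts feeding \cref{lem:numCompactSignatures} follow mechanically — roughly $2^{O(k^2)}$ display graphs, $(k+3)^{O(k)}$ isolabellings, and $2^{O(k^2)}$ embedding functions. To prove the size bound I would charge each surviving (non-redundant) tree vertex to one of a few categories each bounded in $|S|$; the delicate category is the ``time-travelling'' vertices, whose $\iota$-label disagrees with that of their $\emb$-image. These I would control using the separator argument sketched in \cref{sec:approach}: each lowest time-travelling vertex $u$ forces, via a leaf descendant, a distinct undirected $u$-to-$\emb(u)$ path in $\NinTin$ joining a \past-vertex to a \future-vertex, which must cross $S$; distinctness of these crossings bounds their number by $O(|S|)$. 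I expect this charging-and-separator step to require the most care, whereas the reduction of the theorem to \cref{lem:algAnalysis} and the bag-by-bag correctness are essentially bookkeeping over the lemmas already established.
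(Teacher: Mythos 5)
Your proposal is correct and follows essentially the same route as the paper: \cref{thm:TC-fpt} is obtained as an immediate consequence of \cref{lem:algAnalysis}, whose correctness part rests on \cref{lem:compactLeafBag,lem:compactForgetBag,lem:compactIntroduceBag,cor:compactJoinBagOverall,cor:top-level-compact-sig} and whose running-time part rests on \cref{thm:tw bound,lem:numCompactSignatures,lem:compactSignatureSize}. The only minor divergence is in your sketch of the size bound, where you lean on the separator argument from \cref{sec:approach}, whereas the paper's proof of \cref{lem:compactSignatureSize} charges tree arcs to the arc set $A_S$ and uses the out-degree-$2$ property of time-travelling vertices; both amount to the same bookkeeping and neither affects the derivation of the theorem itself.
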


\section{Future work}

Before implementing our dynamic programming algorithm, one should first try to reduce the constant in the bound on the number of possible signatures as much as possible. 
Such reductions may be possible for instance by imposing further
structural constraints on the signatures that need to be considered.
It would also be important to find ways of generating valid signatures for one bag directly from the valid signatures if its child bag(s), rather than generating all possible signatures and then removing the invalid ones.

From a theoretical point of view, there are many opportunities for future work.
First, there are multiple variants and generalizations of \textsc{Tree Containment} that deserve attention,
including non-binary inputs, and inputs consisting of two networks (i.e. where the task is to decide if a network is contained in a second network). For the latter problem our approach would have to be extensively modified, since our size-bound on the signatures heavily relies on $\Tin$ being a tree. In the case of non-binary inputs, it is likely that a similar approach to the one in this paper would allow us to get a size-bound on the tree side of each signature. However, more work would be needed in order to prove a bound on the network side, and the number of possible embeddings. Note in particular that, under our current approach for deriving signatures from (partial) solutions, all neighbours of $S$ are preserved. Without a bound on the degrees, the number of such vertices can be much larger than the treewidth of  $\NinTin)$. 
This can lead to an explosion in the length of `compact' paths and the number of possible embeddings that one may need to consider for a single bag.
Such explosions may be avoidable through clever bookkeeping, or it may be that they are unavoidable and can be used to force a $W[1]$-hardness reduction.

Second, a major open problem is whether the \textsc{Hybridization Number} problem is FPT with respect to the treewidth of the output network. Again there are different variants: rooted and unrooted, binary and non-binary, a fixed or unbounded number of input trees.
For some applications, the definition of an embedding has to be relaxed (allowing, for example, multiple tree arcs embedded into the same network arc)~\citep{huber2016folding,huber2021rigid}.
Other interesting candidate problems for treewidth-based algorithms include
phylogenetic network drawing~\citep{klawitter2020drawing}, orienting phylogenetic networks~\citep{huber2019rooting} and phylogenetic tree inference with duplications~\citep{van2019polynomial}.

\mjnew{Finally, we believe that the approach taken in this paper (applying dynamic programming techniques on a tree decomposition of single graph representing all the input data, with careful attention given to the interaction between past and future) could potentially have applications outside of phylogenetics, in any context where the input to a problem consists of two or more distinct partially-labelled graphs that need to be reconciled.}

\medskip

{\bf Acknowledgements:} Research of Leo van Iersel and Mark Jones was partially funded by Netherlands Organization for Scientific Research (NWO) Vidi grant 639.072.602 and KLEIN grant OCENW.KLEIN.125.
\bibliography{main}

%

\end{document}